\documentclass[a4paper,UKenglish,cleveref, autoref, thm-restate]{lipics-v2021}

\usepackage{tikz}
\usepackage{tikz-cd}

\def\squareforqed{$\Box$}
\def\QED{\ifmmode\squareforqed\else{\unskip\nobreak\hfil
\penalty50\hskip1em\null\nobreak\hfil\squareforqed
\parfillskip = 0pt\finalhyphendemerits = 0\endgraf}\fi}

\newcommand{\ignore}[1]{ }

\newcommand{\bst}[1]{\mbox{\boldmath \scriptsize$#1$}}

\newcommand{\cm}[1]{\mbox{\boldmath \scriptsize$#1$}}
\newcommand{\mm}[1]{\mbox{\boldmath \tiny$#1$}}

\newcommand{\id}{\mbox{\boldmath$id$}}

\newcommand{\bA}{\mbox{\boldmath$A$}}
\newcommand{\bB}{\mbox{\boldmath$B$}}
\newcommand{\bC}{\mbox{\boldmath$C$}}

\newcommand{\bG}{\mbox{\boldmath$G$}}
\newcommand{\bX}{\mbox{\boldmath$X$}}
\newcommand{\bY}{\mbox{\boldmath$Y$}}

\newcommand{\N}{{\mathbb N}}

\newcommand{\br}{\mbox{\boldmath$r$}}

\newcommand{\sbr}{{\cm r}}

\newcommand{\bsigma}{\mbox{\boldmath$\sigma$}}
\newcommand{\bgamma}{\mbox{\boldmath$\gamma$}}
\newcommand{\btau}{\mbox{\boldmath$\tau$}}

\newcommand{\bphi}{\mbox{\boldmath$\varphi$}}
\newcommand{\btheta}{\mbox{\boldmath$\theta$}}
\newcommand{\bpsi}{\mbox{\boldmath$\psi$}}
\newcommand{\bomega}{\mbox{\boldmath$\omega$}}
\newcommand{\brho}{\mbox{\boldmath$\rho$}}

\newcommand{\sbsigma}{{\bst \bsigma}}

\newcommand{\sbphi}{{\mm \varphi}}
\newcommand{\sbpsi}{{\mm \psi}}
\newcommand{\sbtau}{{\mm \tau}}

\newcommand{\sbomega}{{\mm \omega}}
\newcommand{\sbrho}{{\mm \rho}}

\renewcommand{\lambda}{\mu}

\newcommand{\nullcontrol}{\mathsf{NULL}}
\newcommand{\leftcontrol}{\mathsf{LC}}
\newcommand{\rightcontrol}{\mathsf{RC}}
\newcommand{\fullcontrol}{\mathsf{FC}}
\newcommand{\unrestrainedcontrol}{\mathsf{URC}}

\newcommand{\atoms}{\text{\sc Atoms}}
\newcommand{\Gr}{\text{\sc Perm}}
\newcommand{\Chi}{\text{\sc Chi}}
\newcommand{\type}{\text{\sc type}}
\newcommand{\setof}[2]{\set{#1 \,  : \, #2}}
\newcommand{\set}[1]{\{ #1 \}}
\newcommand{\supp}[1]{\text{\sc supp}(#1)}

\newcommand{\tuple}[1]{( #1 )}
\newcommand{\letter}[1]{\langle #1 \rangle}
\newcommand{\dom}[1]{\text{\sc dom}(#1)}

\newcommand{\size}[1]{|#1|}
\newcommand{\card}[1]{\parallel #1\parallel}
\newcommand{\Par}[1]{\text{\sc Par}\left(#1\right)}
\newcommand{\Parinv}[1]{\text{\sc Par}^{-1}\left(#1\right)}
\newcommand{\zerovector}{\mathbf{0}}

\newcommand{\conf}[2]{[#1,#2]            }
\newcommand{\trans}[1]{ \xrightarrow{#1}  }
\newcommand{\transword}[1]{ \xrightarrow{#1}\mathrel{\vphantom{\to}^*}}
\newcommand{\langof}[1]{L(#1)}
\newcommand{\langofconf}[3]{L_{#1,#2}(#3)}
\newcommand{\langoffull}[5]{L_{\conf {#2}{#3}\, \conf{#4}{#5}}(#1)}

\newcommand{\Reg}{\mathcal{R}}

\hideLIPIcs  

\title{Star Complexity of Parikh Images of
Languages over Infinite Alphabets} 

\titlerunning{Star Height of Parikh Images} 

\author{Yoav {Danieli}}{Technion -- Israel Institute of Technology, Faculty of Computer Science, Haifa, Israel.}
{yoavd@campus.technion.ac.il}{https://orcid.org/0000-0001-8774-064X}
{}

\authorrunning{Y. Danieli} 

\Copyright{Yoav Danieli} 

\ccsdesc[500]{Theory of computation~Grammars and context-free languages}
\ccsdesc[500]{Theory of computation~Automata over infinite objects}

\keywords{infinite alphabets, Parikh image, rational sets, star-height.}

\category{} 

\relatedversion{} 

\nolinenumbers 

\EventEditors{Claudia Faggian and Joost-Pieter Katoen}
\EventNoEds{2}
\EventLongTitle{41st Annual Symposium on Logic in Computer Science (LICS 2026)}
\EventShortTitle{LICS 2026}
\EventAcronym{LICS}
\EventYear{2026}
\EventDate{July 20--23, 2026}
\EventLocation{Lisbon, Portugal}
\EventLogo{}
\SeriesVolume{380}
\ArticleNo{4}

\begin{document}

\maketitle

\begin{abstract}
It has been conjectured that the Parikh (commutative) image
of every language
over an infinite alphabet recognized by an automaton with registers 
is defined by a rational expression.
This conjecture is known to hold
for all languages recognized by one-register automata.
We refine this result by proving that the star-height
of the Parikh image of any language recognized by a one-register automaton
is universally bounded by two.
Furthermore, we show that one-register context-free languages
have rational commutative images of arbitrarily high star height.
We then disprove the conjecture for multiple registers,
as well as disprove the equivalence of commutative expressive power
between context-free grammars and automata over infinite alphabets.
In other words, we show that Parikh's theorem
fails for infinite alphabets.
\end{abstract}

\newpage

\tableofcontents

\newpage

\section{Introduction}
\label{s: introduction}

{\em Finite-memory automata}~\cite{KaminskiF94} generalize
classical Rabin-Scott finite-state automata~\cite{RabinS59} to
{\em infinite} alphabets.
By equipping the automata with a finite set of registers,
which store letters from the infinite alphabet
during computation and
restricting the power of the automaton to
comparing the input letters with register contents
and copying input letters to registers only,
the automaton retains a fixed finite memory of input letters.
Consequently, the languages accepted by
finite-memory automata possess properties
similar to regular languages and are,
thus, termed {\em quasi-regular} languages.

Automata over infinite alphabets have gained increasing importance
in computer science as they provide formal models 
for analyzing systems that operate over unbounded data domains.
Such models are essential for specifying and verifying properties
of XML documents, database queries,
and programs with variables over infinite domains. 
The ability to reason about infinite alphabets while maintaining 
decidability of key properties makes these models particularly 
valuable for the formal verification of data-aware systems.

Over the years, 
many models of automata over infinite alphabets 
have been proposed
(see surveys in~\cite{Segoufin06,Kara16,ChenSW16}),
though most of these models are incomparable.
In the absence of a definitive model
for managing infinite alphabets,
evaluating a formalism requires consideration
of its desirable properties, including:
expressive power, closure and regular properties, 
decidability and complexity of classical problems, and
applicability of the model.

Finite-memory automata provide a structured way to reason
about such systems by focusing on patterns and repetitions
of data values rather than their specific identities.
This perspective is particularly useful in applications
where the exact values of data are less important 
than their relative behavior over time.
In particular,
quasi-regular languages form a subclass of
nominal languages
(also known as sets with atoms or Fraenkel-Mostowski sets)~\cite{Bojanczyk19},
and finite-memory automata themselves are expressively
equivalent to nominal automata
(also known as orbit-finite automata)~\cite{BojanczykKL11,BojanczykKL14}.

Notions of context-free grammars and pushdown automata
were also extended
to infinite alphabets by equipping them
with registers~\cite{ChengK98,BojanczykKL14}.
They are known to be equivalent in expressive power,
which is strictly greater than that of quasi-regular languages.

Over finite alphabets, Parikh's theorem~\cite{Parikh66} states
that the commutative image of any context-free language
coincides with that of some regular language
and is, in particular, a semi-linear set.
For example, the language $\{a^nb^n:n\geq 0\}$
is known for being context-free but not regular,
however, its commutative image is identical to
the commutative image of the regular language $\set{(ab)^n:n\geq 0}$.

A recent line of work seeks to generalize Parikh's theorem
to infinite alphabets~\cite{FigueiraL22,HagueJL24}.
Hofman et al.~\cite{HofmanJLP21}
introduced a natural extension of rational expressions
onto infinite alphabets,
that differs from the classical one
only by allowing \emph{orbit-finite unions}.
These rational expressions are associated
with rational data languages and rational data vector sets.
The authors then continue to establish
that all rational data languages are recognized by finite-memory automata.
They proposed a program for proving
an analogue of Parikh's theorem, i.e.,
to show that context-free grammars have
Parikh images which admit rational expressions.
However, the approach in~\cite{HofmanJLP21} encounters substantial obstacles.
Namely, it is only shown that
one-register finite-memory automata and
\textit{binary} branching one-register context-free grammars
have rational Parikh images.
Later, in~\cite{LasotaP21}, this result was extended to a richer model,
called \emph{hierarchical register automata},
a disciplined subclass of finite-memory automata.

In this paper, we extend the above objective
and resolve the remaining open problems.
In particular, our contribution is as follows:
\begin{enumerate}
\item
We refine the methods of~\cite{HofmanJLP21}
to obtain a tight universal bound $2$ on the star-height
of the Parikh image of any quasi-regular language
recognized by a one-register finite-memory automaton.
\item
These refinements also allow us to
establish the rationality of Parikh images for one-register
context-free grammars of arbitrary branching degree.
At the same time, we construct a family
of languages generated by one-register context-free grammars
whose Parikh images have arbitrarily large star-height,
demonstrating an essential divergence from the automata
case.
\item
We refute the conjecture in~\cite{HofmanJLP21,LasotaP21} stating that
all quasi-regular languages have rational Parikh images,
by exhibiting a language recognized by a three-register finite-memory automaton 
whose Parikh image is not rational.
\item
Finally, we show that the infinite alphabet counterpart
of Parikh's theorem fails by constructing
a three-register context-free grammar generating
a language whose Parikh image does not coincide
with the Parikh image of any quasi-regular language.
\end{enumerate}

\section{Preliminaries}
\label{s: of sets}

Throughout this paper, we employ the following conventions.
\begin{itemize}
\item
$\atoms$ denotes an infinite set
whose elements, called \emph{atoms},
are denoted by $a,b,c$ sometimes indexed or primed.
\item
Infinite alphabets are denoted by uppercase Greek letters,
$\Sigma,\Gamma,\Theta$ and
finite alphabets are denoted by uppercase Latin letters
$D,H,K$.
\item
Words over infinite alphabets 
are written in bold lowercase Greek letters $\bsigma$,\,$\bgamma$,\,$\btheta$, etc.,
also sometimes indexed or primed,
and range over $\Sigma,\Gamma,\Theta$, respectively.
\item
Symbols occurring in a word denoted by a boldface letter
are written using
the same \emph{non-boldface} letter with an appropriate subscript.
For example,
the letters that occur in $\bsigma^\prime$
are denoted by~$\sigma_i^\prime$.
\item
For a word
$\bsigma = \sigma_1 \sigma_2 \cdots \sigma_n \in \Sigma^\ast$,
we write $[\bsigma]$
for the set of all letters occurring in $\bsigma$:
\[
[\bsigma] = \{ \sigma_1 , \sigma_2 , \ldots , \sigma_n \}
\, ,
\]
and refer to this set as the {\em contents} of $\bsigma$.
\item
For a subset $A\subseteq\atoms$ and a positive integer $r$,
we write $A^{r_{\neq}}$ for the set of all $r$-tuples
of pairwise distinct elements of $A$
and $A^{(r)}=\binom{A}{r}$
for the set of all $r$-elements subsets of $A$.
\item
Variables $x,y,z$, sometimes indexed or primed,
range over $\atoms$.
\item The formula asserting that variables $x_1,x_2,\ldots,x_n$
are pairwise distinct is written
${\neq(x_1,x_2,\ldots,x_n)}$,
and abbreviates
$\bigwedge_{i<j}x_i\neq x_j$.
Similarly, for a set of variables $Y$, the formula $x\notin Y$
abbreviates $\bigwedge_{y\in Y}x\neq y$.
For finite sets
of variables $X$ and $Y$, the expression $X \cap Y = \emptyset$
is interpreted as $\bigwedge_{x\in X} x\notin Y$.
\item The length of a word $\bsigma\in \Sigma^\ast$
is denoted by $|\bsigma|$ and the cardinality of a finite set $X$
is denoted by $\card{X}$.
\end{itemize}

\subsection{Orbit-finite sets}

In this section,
we provide the necessary definitions
and propositions regarding orbit-finite sets
needed for the definition of rational sets.
The notations we use are mainly from~\cite{HofmanJLP21}.
For a comprehensive presentation of
sets with atoms, we refer the reader
to the 'atom book'~\cite{Bojanczyk19}.

Informally, a \emph{set with atoms} is a set
whose elements may be atoms or
other sets with atoms.
Formally, we construct the universe of sets with atoms
using an adapted cumulative hierarchy
by transfinite recursion:
the only set of \emph{rank} $0$ is the empty set, and 
for an ordinal $\gamma$,
sets of rank $\gamma$ contain atoms as well
as sets of smaller rank.
In particular, any
nonempty subset $X\subseteq\atoms$ has rank one.\footnote{
In fact, for the purpose of this paper,
the ordinary recursion on natural numbers suffices.}

Let $\Gr$ be the group of all permutations of $\atoms$.
Each permutation $\pi: \atoms\to\atoms$
acts on sets with atoms by consistently renaming their elements.
More precisely, by another recursion, we define
$\pi(X) = \setof{\pi(x)}{x\in X}$.

For a finite set of atoms $S \subseteq \atoms$,
an \emph{$S$-permutation} is a permutation $\pi\in\Gr$
fixing $S$, i.e., $\pi(s)=s$, for all $s\in S$.
The set of all $S$-permutations is denoted by $\Gr_S$.
The set $S$ is a \emph{support} of a set $X$ if $\pi(X)=X$,
for every $S$-permutation $\pi$.

Supports are closed under intersection;
hence, every set $X$ has a unique {\em least support},
denoted $\supp{X}$ and called \emph{the support} of $X$.
Sets supported by $\emptyset$
(i.e., invariant under all permutations)
are called \emph{equivariant}.

In this paper, we consider only
\emph{hereditarily finitely supported} sets,
which are sets with finite support
such that all elements of their transitive closure
also have finite support (not necessarily the same).

The $S$-orbit of an element $x\in X$
is the set of all elements $\pi(x)$,
which can be obtained by applying some
$S$-permutation $\pi$ to $x$:
$\textit{orbit}_S(x)=\set{\pi(x):\pi\in\Gr_S} $.
Clearly, $x,y\in X$ are in the same $S$-orbit
if and only if there is $\pi\in\Gr_S$ such that $x=\pi(y)$.

A set $X$ is \emph{orbit-finite}
if it is a finite union of $S$-orbits,
for some finite subset $S$ of $\atoms$.

\begin{example}
\label{e: atoms star}
Examples of orbit-finite sets are:
the set of all atoms, $\atoms$ is a
single equivariant orbit;
for any atom $a$,
the set ${\atoms\setminus\set{a}}$ is an $\set{a}$-orbit;
the set of all pairs
of atoms $\atoms^2$ has two equivariant orbits,
diagonal $\set{\tuple{a,a}:a\in\atoms}$
and non-diagonal $\set{\tuple{a,b}:a,b\in\atoms,\,\,a\neq b}$;\footnote{
As usual, an ordered pair $(x,y)$ is the set $\set{\set{x},\set{x,y}}$.}
for any nonnegative integer $k$, the set $\atoms^k$
has finitely many equivariant orbits,
corresponding to the equality types of a $k$-tuple.
In contrast, the set $\atoms^\ast$ is not orbit-finite,
because words of different lengths
cannot lie on the same orbit.
\end{example}

In general,
increasing the support $S$ may refine the
orbit partition of $X$,
but the finiteness of the number of orbits is preserved.

\begin{lemma}[{\cite[Lemma 3.16]{Bojanczyk19}}]
A finite union of $S$-orbits
is also a finite union of\, $S^\prime$-orbits
for every $S\subseteq S^\prime$.
\end{lemma}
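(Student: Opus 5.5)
It suffices to treat a single $S$-orbit $O=\textit{orbit}_S(x)$, since a finite union of sets each of which is a finite union of $S'$-orbits is again such a union. Every $S'$-orbit is contained in some $S$-orbit, because $\Gr_{S'}\subseteq\Gr_S$. Hence $O$ is a disjoint union of $S'$-orbits, namely $\textit{orbit}_{S'}(\pi(x))$ for $\pi\in\Gr_S$. The task is to show that only finitely many distinct ones occur.

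The plan is to use the finite support $T=\supp{x}$, assuming as in the paper that all sets considered are (hereditarily) finitely supported. The key observation is that the $S'$-orbit of $\pi(x)$ is determined by how $\pi$ acts on the finite set $T$, up to $S'$-permutations. Suppose $\pi,\sigma\in\Gr_S$ and there is $\tau\in\Gr_{S'}$ with $\tau\pi|_T=\sigma|_T$. Then $\sigma^{-1}\tau\pi$ fixes $T$ pointwise, so it fixes $x$, and therefore $\tau(\pi(x))=\sigma(x)$. Thus the two images lie in the same $S'$-orbit.

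It then remains to count the classes of injections $\pi|_T\colon T\to\atoms$ with $\pi\in\Gr_S$, where two injections are identified when they differ by post-composition with an $S'$-permutation. Such an injection is determined, up to $\Gr_{S'}$, by its ``type''. For each $t\in T$, the type records whether $\pi(t)$ equals some specific element of $S'$, and if so which one, or whether it lies outside $S'$. Together with the equalities among the values $\pi(t)$ (which are automatically distinct, since $\pi$ is injective), this is all the data. Two injections of the same type differ by a partial bijection that fixes $S'$ and maps the finitely many values outside $S'$ to each other. Such a partial bijection extends to a permutation in $\Gr_{S'}$, because $\atoms\setminus S'$ is infinite and any finite partial bijection on it extends to a bijection. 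The number of types is at most $(\card{S'}+1)^{\card{T}}$, which is finite.

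The main obstacle is the extension step: one must check that a finite partial injection fixing $S'$ extends to a full permutation of $\atoms$ fixing $S'$. This is a routine back-and-forth completion, or simply a swap of finitely many points, and it works because the partial map is finite. Apart from that step, the argument is straightforward.
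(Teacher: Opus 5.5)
Your proof is correct. The paper gives no argument of its own and only cites the atom book. Your route is the standard proof of this fact: factor $\pi\mapsto\pi(x)$ through the restriction of $\pi$ to a finite support of $x$, then count equality types of that restriction relative to $S'$. You also rightly note that finite support of $x$ is essential.
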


For instance, $\atoms^{2_{\neq}}$
(the set of all pairs of distinct atoms)
is a single equivariant orbit
that, under $S=\set{a}$,
splits into three orbits
$\set{(a,b):b\in\atoms\setminus\set{a}},
\set{(b,a):b\in\atoms\setminus\set{a}}$,
and $\set{(b,c):b,c\in\atoms\setminus\set{a},\,\,b\neq c}$.

\begin{proposition}[{\cite[Lemma 3.24]{Bojanczyk19}}]
\label{p: of closure}
Orbit-finite sets are closed under union,
intersection, Cartesian product, and projection.
\end{proposition}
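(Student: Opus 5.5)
We would reduce every operation to the fact that an orbit-finite set is exactly a finite union of $S$-orbits for some finite support $S$, and we would use the preceding lemma (\cite[Lemma 3.16]{Bojanczyk19}) to bring all sets under a common support. So given orbit-finite $X$ and $Y$, written as finite unions of $S_X$-orbits and $S_Y$-orbits respectively, we put $S=S_X\cup S_Y$. By that lemma both $X$ and $Y$ are then finite unions of $S$-orbits.

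\textbf{Union and intersection.} The $S$-orbits of elements form a partition of the universe: two $S$-orbits are either equal or disjoint. Hence $X\cup Y$ is the union of the finitely many $S$-orbits that occur in either decomposition. Likewise $X\cap Y$ is the union of the $S$-orbits common to both decompositions, since an $S$-orbit that meets both $X$ and $Y$ is contained in each of them.

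\textbf{Projection.} The projection $p(x,y)=x$ commutes with every permutation, because $\pi(x,y)=(\pi(x),\pi(y))$. So the image of the $S$-orbit of a pair $(x,y)$ is exactly the $S$-orbit of $x$. A projection of a finite union of $S$-orbits is therefore a finite union of $S$-orbits.

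\textbf{Cartesian product (the main obstacle).} It suffices to show that $O_1\times O_2$ is a finite union of $S$-orbits, for single $S$-orbits $O_1=\textit{orbit}_S(x)$ and $O_2=\textit{orbit}_S(y)$. The plan runs in four steps.
\begin{enumerate}
\item Fix finite supports $T_x\supseteq S$ of $x$ and $T_y\supseteq S$ of $y$; these exist because we work with hereditarily finitely supported sets. Every pair in $O_1\times O_2$ has the form $(\pi x,\sigma y)$ with $\pi,\sigma\in\Gr_S$.
\item Applying $\pi^{-1}$ moves this pair into the same $S$-orbit as $(x,\pi^{-1}\sigma y)$. So the orbits of pairs are determined by the $S$-orbit under the stabiliser of $x$, more precisely by the $\Gr_{T_x}$-orbits of elements of $O_2$.
\item The position of $\tau(T_y)$ relative to $T_x$ is a finite combinatorial datum. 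It consists of the equality pattern between the finite tuples enumerating $\tau(T_y)$ and $T_x$, with $\tau\in\Gr_S$.
\item Two elements $\tau y$ and $\tau' y$ with the same pattern differ by a $T_x$-permutation, since a permutation realizing the pattern can be chosen to fix $T_x$. They therefore give the same $S$-orbit of pairs.
\end{enumerate}
There are only finitely many such patterns, since $T_x$ and $T_y$ are finite. This yields finitely many orbits, which is precisely the content of \cite[Lemma 3.24]{Bojanczyk19}.

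The delicate point is step 4: we must check that an element is determined up to its stabiliser by the image of its support, i.e.\ that $\tau|_{T_y}=\tau'|_{T_y}$ implies $\tau y=\tau' y$. This is the standard fact that $T_y$ supports $y$, and it is the step where finite support is genuinely used.
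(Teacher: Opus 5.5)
Your proof is correct. The paper gives no argument of its own here; it imports the statement from \cite[Lemma 3.24]{Bojanczyk19}, so your write-up is a self-contained replacement for a citation rather than a variant of a proof in the paper.

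Union, intersection and projection are handled correctly under the common support $S=S_X\cup S_Y$.

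For the product, step~2 already does the real work. Every pair lies in the $S$-orbit of some $(x,z)$ with $z\in O_2$, and every $T_x$-permutation is an $S$-permutation fixing $x$. Hence the number of $S$-orbits in $O_1\times O_2$ is at most the number of $T_x$-orbits in $O_2$. Since $O_2$ is a single $S$-orbit and $S\subseteq T_x$, the preceding lemma \cite[Lemma 3.16]{Bojanczyk19} finishes the argument immediately. Your steps~3--4 amount to reproving that lemma for $O_2$.

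They do so correctly, provided the pattern is read as a datum of $\tau$ rather than of the element $\tau y$. That is, take the equality pattern of $(\tau t_1,\ldots,\tau t_m)$ against $T_x$ for a \emph{fixed} enumeration $t_1,\ldots,t_m$ of $T_y$. Then $\tau t_i\mapsto\tau' t_i$ together with the identity on $T_x$ is a finite partial injection. It extends to some $\rho\in\Gr_{T_x}$, and $\rho\tau y=\tau' y$ because $\rho\tau$ and $\tau'$ agree on the support $T_y$. The assumption $T_y\supseteq S$ is never used.

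A representation-based alternative writes each orbit as a finitely supported image of non-repeating atom tuples. Then $O_1\times O_2$ becomes an image of a subset of some $\atoms^{n+m}$, which has finitely many equality types. Your orbit-counting argument needs no representation theorem and stays within the notions the paper introduces: supports, $S$-orbits, and refinement of orbits. The representation route instead makes the finiteness explicit as a count of equality types. Both rest on the same homogeneity fact, that finite partial bijections of atoms extend to permutations.
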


A function $f:X\to Y$ is supported by $S$ if its graph
$\{(x,f(x)):x\in X\}$ is supported by $S$,
equivalently, for every $\pi\in \Gr_S$,
$f(\pi\cdot x)=\pi\cdot f(x)$
for all $x\in X$.
We then say $f$ is \emph{finitely supported}.
Furthermore, if $X$ is an orbit-finite set,
the union $\bigcup_{x\in X} f(x)$ is called
an \emph{orbit-finite union}.

\subsection{Data words and vectors}

From now on, all alphabets under consideration are
orbit-finite sets, which will be called orbit-finite alphabets.
Let $\Sigma$ be an orbit-finite alphabet.
Words $w\in\Sigma^\ast$
are traditionally called \emph{data words} and 
languages over $\Sigma$ are called \emph{data languages}.

A \emph{data vector} $v$ of $\Sigma$ 
is a function $v : \Sigma \to \N$
such that $v(\sigma) = 0$ for all
$\sigma\in \Sigma$ except finitely many.
We define:
\begin{itemize}
\item the \emph{domain} of $v$ as
$\dom{v} = \setof{\sigma\in\Sigma}{v(\sigma)>0}$,
\item the \emph{size} of $v$
as $\size v = \sum_{\sigma\in \dom v} v(\sigma)$.
\end{itemize}

That is,
for a data vector $v$ and a letter $\sigma\in\Sigma$,
$v(\sigma)$ is the \emph{value} of $v$ at $\sigma$
and if $v$ is clear from context,
just the value of $\sigma$.

We represent a data vector $v$ as a formal sum
\[
v_1 \sigma_1+v_2 \sigma_2+\cdots
+v_n \sigma_n\, ,
\]
where $\dom{v}=\{\sigma_1,\sigma_2,\ldots,\sigma_n\}$,
and $v_i=v(\sigma_i)$ for $i=1,2, \ldots,n$.
For $v_i = 1$, we write $\sigma_i$ for $1 \sigma_i$.
Sometimes, we group atoms into disjoint subsets and write
\[
v_1 K_1 + v_2 K_2+\cdots + v_n K_n \, ,
\]
where $\dom{v}=\bigcup_{i=1}^n  K_i$ is a disjoint union
and $v$ is constant on each $K_i$,
i.e., $v(\sigma)=v_i$ for all $i=1,2,\ldots,n$ and $\sigma \in K_i$.

The \emph{Parikh vector} (or \emph{commutative vector})
of a word $\bsigma\in \Sigma^\ast$
is the data vector $\Par{\bsigma} : \Sigma\to\N$, where
$\Par{\bsigma}(\sigma)$ is the number
of appearances of a letter $\sigma\in\Sigma$ in $\bsigma$.
Since we write $\size {\bsigma}$ for the \emph{length} of $\bsigma$,
by definition, $\size{\Par{\bsigma}} = \size {\bsigma}$.

The zero data vector $\zerovector$ satisfies
$\zerovector(\sigma) = 0$ for all $\sigma\in\Sigma$.
A singleton, denoted $\set{\sigma}$,
maps $\sigma\mapsto 1$ and all other letters to $0$,
written simply as $1 \sigma$ or $\sigma$
when it is clear from
the context that it is a data vector.
The set of all singleton data vectors,
$\bigcup_{\sigma\in \Sigma}\sigma$,
is an orbit-finite union, because $\Sigma$ is so,
and this set is, naturally, denoted $\Sigma$.

Addition of data vectors is point-wise:
$(v + v^\prime)(\sigma) =
v(\sigma) + v^\prime(\sigma)$
for every $\sigma\in\Sigma$. 
The order on data vectors is also point-wise,
i.e., $v_1\leq v_2$
if and only if for every $\sigma\in\Sigma$,
$v_1(\sigma)\leq v_2(\sigma)$.
Subtraction is defined whenever $v^\prime\leq v$ by
$(v-v^\prime)(\sigma)=v(\sigma)-v^\prime(\sigma)$.
If $v^\prime\leq v$ and $\sigma\in\Sigma$
is a letter such that $v^\prime(\sigma)=v(\sigma)$,
we say that $\sigma$ is \textit{saturated} in the pair $v^\prime\leq v$.

Orbit-finiteness of sets of data vectors coincides
exactly with boundness of sizes:
\begin{lemma}[{\cite[Lemma 1]{HofmanJLP21}}]
\label{l: b-o-f}
A set $X$ of data vectors is orbit-finite if
and only if
${\setof{\size v}{v\in X}}\subseteq \N$ is bounded.
\end{lemma}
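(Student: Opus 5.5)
We prove the two directions separately, working throughout with a set $X$ of data vectors over an orbit-finite alphabet $\Sigma$ that is hereditarily finitely supported. Fix a finite support $S$ of $X$ such that $\Sigma$ is a finite union of $S$-orbits; such an $S$ exists by enlarging $\supp{X}$ and using the lemma that $S$-orbits refine to $S^\prime$-orbits when $S\subseteq S^\prime$.

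\emph{Only if.} Suppose $X$ is orbit-finite. Then $X$ is a finite union of $S^\prime$-orbits for some finite $S^\prime$. Any $S^\prime$-permutation $\pi$ acts on a data vector $v$ by renaming the letters in $\dom{v}$, and it preserves the values. Hence $\size{\pi(v)}=\size{v}$, so size is constant on each orbit. A finite union of orbits therefore takes only finitely many sizes, and $\setof{\size v}{v\in X}$ is bounded.

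\emph{If.} Suppose $\size v\le n$ for every $v\in X$. Every such $v$ is a formal sum $v_1\sigma_1+\cdots+v_k\sigma_k$ with $k\le n$, $v_i\ge 1$ and $\sum v_i\le n$. There are only finitely many such coefficient patterns $(k,v_1,\ldots,v_k)$. For each pattern, the vectors of that shape are the image of $\Sigma^{k}$ under the finitely supported map $(\sigma_1,\ldots,\sigma_k)\mapsto\sum v_i\sigma_i$, restricted to tuples of pairwise distinct letters. By Proposition~\ref{p: of closure}, $\Sigma^k$ is orbit-finite, and so is its subset of tuples with pairwise distinct entries, since that subset is an intersection with an equivariant set of finitely many orbits. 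The image of an orbit-finite set under a finitely supported function is orbit-finite, because each orbit maps onto a single orbit; this is a form of the projection closure. Thus the set $V_n$ of all data vectors of size at most $n$ is a finite union of orbit-finite sets, hence orbit-finite.

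It remains to pass from $V_n$ to the subset $X\subseteq V_n$. An arbitrary subset of an orbit-finite set need not be orbit-finite, so this step uses that $X$ is finitely supported. Enlarge $S$ so that it supports $X$ and $V_n$ is a finite union of $S$-orbits, again by the refinement lemma. Each $S$-orbit of $V_n$ is then either contained in $X$ or disjoint from it, because $X$ is $S$-invariant. Hence $X$ is a union of finitely many of these $S$-orbits, and so $X$ is orbit-finite.

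The main obstacle is this final step: it depends entirely on the finite support of $X$, together with a careful check that images under finitely supported maps preserve orbit-finiteness, which may need its own short argument via Proposition~\ref{p: of closure}.
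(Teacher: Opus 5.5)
Your proof is correct. The paper does not prove this lemma itself; it is quoted from \cite[Lemma 1]{HofmanJLP21}. Your argument is the standard one. Size is invariant under atom permutations, so it is constant on each orbit. Conversely, the set $V_n$ of all data vectors of size at most $n$ is orbit-finite, and a finitely supported $X\subseteq V_n$ is a union of some of the finitely many $S$-orbits of $V_n$. You are right that this direction genuinely relies on the paper's standing assumption that all sets are finitely supported.

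A minor simplification: $V_n=\bigcup_{m\le n}\Par{\Sigma^m}$, the image of the orbit-finite sets $\Sigma^m$ under the Parikh map, which is supported by $\supp{\Sigma}$. So the coefficient patterns and the restriction to pairwise-distinct tuples are unnecessary.

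This also removes the one loosely justified step in your write-up. Orbit-finiteness of the pairwise-distinct tuples in $\Sigma^k$ should be justified exactly as in your final paragraph: it is a $\supp{\Sigma}$-supported subset of the orbit-finite set $\Sigma^k$. Intersecting with an unspecified ``equivariant set of finitely many orbits'' does not do the job.
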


The \emph{Parikh image} (or \emph{commutative image}) $\Par L$
of a language $L \subseteq \Sigma^\ast$
is
$\Par L = \setof{ \Par w }{w\in L}$.
Two languages $L, L^\prime\subseteq \Sigma^\ast$
are \emph{Parikh-equivalent} if they have the same Parikh images:
$\Par{L} = \Par{L^\prime}$.
For a set of data vectors $X$,
define $\Parinv{X}=\{w\in\Sigma^\ast:\Par{w}\in X\}$.
Then $\Par{\Parinv{X}}=X$,
but in general $\Parinv{\Par{L}}\neq L$,
because the Parikh map is not injective.

\subsection{Rational sets of data vectors}
\label{ss: rat set dv}

We consider sets of data vectors over
a fixed orbit-finite alphabet $\Sigma$.
For two sets $X,Y$ of data vectors,
their \emph{Minkowski sum} (or addition) is
\[
X + Y = \setof{x+y}{x\in X, y\in Y}.
\]
For a set of data vectors $X$, its \emph{Kleene star} is
\[
X^\ast = \setof{x_1 + \ldots + x_n}{n\geq 0, x_1, \ldots, x_n\in X}\, ,
\]
i.e., $X^\ast$ contains all finite sums of elements of $X$.

We define \emph{rational sets} of data vectors
as the smallest class of 
sets of data vectors that contains zero $\zerovector$,
all singletons
$\set{\sigma}$, and is closed under addition, the Kleene star,
and orbit-finite unions.
In particular, the empty set, all finite sets, and all
orbit-finite sets of data vectors are rational.
Furthermore, for an orbit-finite set of data vectors $P$,
the set $P^\ast$ is rational.

The \emph{star-height} of
a rational set of data vectors
is the smallest nesting depth of the
Kleene star in any rational expression that defines it.
A rational set of star-height at most $1$
is called a \emph{semi-linear} set.

\begin{remark}
\label{r: rat fin ab}
Over finite alphabets, every
rational set has star-height at most $1$,
and hence rationality
coincides with semi-linearity~\cite{EilenbergS69}.
This equivalence fails for
infinite alphabets~\cite[Lemma 10]{HofmanJLP21}.
\end{remark}

\section{Finite-memory models and statements of our main theorems}
\label{s: fma cfg prelim}

\subsection{Finite-memory automata}
We recall the definition of finite-memory automata
from~\cite{KaminskiF94} in which
a fixed finite number of distinct letters
can be stored in the automaton memory during a computation.
Each transition is equipped with a constraint describing
the relationship between the current register contents,
the input symbol,
and the register contents after the transition.
Throughout this paper, input alphabets are of the form
$\Sigma=H\times\atoms$, where $H$ is a finite set,
called the \emph{labels} set.
A letter $\sigma\in \Sigma$ is written
$\letter{h,a}$, where $h\in H$ is the \emph{label} of $\sigma$
and $a\in\atoms$ is the \emph{atom} component.

\begin{definition}
\label{d: fma}
An \emph{$r$-register finite-memory automaton}
is a system
$\bA =
\langle S,s_0,F,H,D,\br_0,\mu \rangle$
whose components are as follows.
\begin{itemize}
\item $S,s_0\in S,$ and $ F\subseteq S$
are the finite set of \emph{states},
the \emph{initial state},
and the subset
of \emph{accepting states}, respectively.
\item $H$ is a finite set of \emph{labels}.
\item $D \subset \atoms$ is a finite set
of {\em distinguished} atoms (constants).
\item $\br_0=
\tuple{r_{0,1},r_{0,2},\ldots,r_{0,r}}\in \atoms^{r_{\neq}}$
specifies the initial register assignment,
where $r_{0,i}$ is the initial value of register $i$,
for $i=1,2,\ldots,r$.
\item $\mu$ is a finite set of transition rules of the form
\begin{equation}
\label{eq: trans}
s\trans{h,\varphi}s^\prime \, ,
\end{equation}
where $s,s^\prime\in S$, $h\in H$,
and the \emph{transition constraint}
$\varphi$
is a Boolean combination of equalities involving the variables
\[
x_1,x_2,\ldots,x_r,\quad y, \quad x_1^\prime,x_2^\prime,\ldots,x_r^\prime,
\]
and symbols from $D$.
\end{itemize}
\end{definition}

Intuitively, $\varphi$ must
be satisfied by the current register contents
$x_1,x_2,\ldots,x_r$,
the input atom $y$,
and the next register contents
$x_1^\prime,x_2^\prime,\ldots,x_r^\prime$.

The set $\Reg=\atoms^{r_{\neq}}$
contains all possible register assignments,
i.e., all $r$-tuples of pairwise distinct atoms.

A \emph{configuration} $\conf s {\br} \in S\times \Reg$ of
$\bA$,
consists of a state $s\in S$ and register values
$\br = (a_1,a_2,\ldots, a_r) \in \Reg$,
that means that the letter stored in the $i$-th register is $a_i$,
for $i=1,2,\ldots,r$.
If $s\in F$, it is called an \emph{accepting} configuration.
The configuration $\conf{s_0}{\br_0}$
is called the initial configuration.

A transition rule~\eqref{eq: trans} and
atoms $a_1,a_2,\ldots,a_r$,
$b$,
and $a_1^\prime,a_2^\prime,\ldots,a_r^\prime$
that satisfy $\varphi$
induce a transition
\begin{equation}
\label{eq: tran letter}
\conf{s}{\tuple{a_1,a_2,\ldots, a_r}}
\trans{\letter{h,b}}
\conf{s^\prime}{\tuple{a_1^\prime,a_2^\prime,\ldots, a_r^\prime}}
\, .
\end{equation}

A \emph{run} of $\bA$ on a word
$\bsigma=\letter{h_1,b_1}\letter{h_2,b_2}\cdots \letter{h_n,b_n}
\in \Sigma^\ast$ is any sequence of configurations
\begin{align} \label{eq:run}
{\conf {s_0} {\br_0}}
\trans{\letter{h_1,b_1}}
{\conf {s_1}{\br_1}}
\trans{\letter{h_2,b_2}}
\cdots 
\trans{\letter{h_n,b_n}} 
\conf{s_n}{\br_n}.
\end{align}
In such a case,
we shall also write
$\conf {s_0} {\br_0}
\transword{\sbsigma} \conf{s_n}{\br_n}$.

A run is accepting
if it ends in an accepting configuration.
In such a case, we say that $\bsigma$ is accepted by $\bA$.
The set of all accepted words is denoted by $\langof{\bA}$
and is called the language of $\bA$.
Languages of finite-memory automata
are called
\emph{quasi-regular languages}.

For configurations
$c,c^\prime$,
we write $\langofconf{c}{c^\prime}\bA$
for the set of all words admitting a run starting
in $c$ and ending in
$c^\prime$.
In particular,
\begin{equation}
\label{eq: langof as langofconf}
\langof \bA =
\bigcup_{s \in F,\sbr\in\Reg}
\langofconf {\conf{s_0}{\sbr_0}} {\conf{s}{\sbr}} \bA
\, .
\end{equation}

\begin{proposition}[Invariance of FMA]
\label{p: permutation word fma}
Let $\alpha$ be a $D$-permutation of $\atoms$,
$\bsigma \in \Sigma^\ast$, and
let $c$ and $c^\prime$ be configurations of $\bA$ such that
$c \transword{\sbsigma} c^\prime$.
Then
$\alpha(c)
\transword{\alpha(\sbsigma)}
\alpha(c^\prime)$.
\end{proposition}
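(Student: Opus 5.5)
The plan is to induct on the length $n$ of the word $\bsigma$. The base case is immediate, and the only substance lies in a single transition.

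\textbf{Single transition.} Suppose the step
$\conf{s}{\br}\trans{\letter{h,b}}\conf{s^\prime}{\br^\prime}$ is induced by a rule $s\trans{h,\varphi}s^\prime$, with $\br=(a_1,\ldots,a_r)$ and $\br^\prime=(a_1^\prime,\ldots,a_r^\prime)$. Then the assignment
\[
x_i\mapsto a_i,\qquad y\mapsto b,\qquad x_i^\prime\mapsto a_i^\prime
\]
satisfies $\varphi$. We want to show that $\varphi$ is also satisfied by
\[
x_i\mapsto \alpha(a_i),\qquad y\mapsto \alpha(b),\qquad x_i^\prime\mapsto\alpha(a_i^\prime).
\]
The argument runs by structural induction on the Boolean combination $\varphi$, so it suffices to check atomic formulas.

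An atomic formula is an equality $t_1=t_2$, where each $t_j$ is either a variable or a constant $d\in D$.
\begin{itemize}
\item Since $\alpha$ is a bijection, $u=v$ holds if and only if $\alpha(u)=\alpha(v)$.
\item Since $\alpha$ fixes $D$ pointwise, $u=d$ holds if and only if $\alpha(u)=d=\alpha(d)$.
\end{itemize}
So every atomic formula keeps its truth value under $\alpha$. Negation and conjunction then preserve this equivalence, and hence $\varphi$ holds under the $\alpha$-image assignment.

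Two further checks complete the step.
\begin{itemize}
\item $\alpha(\br)$ is again a tuple of pairwise distinct atoms, because $\alpha$ is injective. Hence $\alpha(\br)\in\Reg$, and likewise for $\br^\prime$.
\item The state and the label are untouched, since $\alpha$ acts only on the atom components. In particular, $\alpha(\letter{h,b})=\letter{h,\alpha(b)}$.
\end{itemize}
Therefore the same rule induces $\alpha(\conf{s}{\br})\trans{\alpha(\letter{h,b})}\alpha(\conf{s^\prime}{\br^\prime})$.

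\textbf{Induction.} Take a run $c=c_0\trans{\sigma_1}c_1\cdots\trans{\sigma_n}c_n=c^\prime$. Applying the single-transition step to each transition gives
\[
\alpha(c_0)\trans{\alpha(\sigma_1)}\alpha(c_1)\cdots\trans{\alpha(\sigma_n)}\alpha(c_n).
\]
Since $\alpha$ acts on words letterwise, $\alpha(\bsigma)=\alpha(\sigma_1)\cdots\alpha(\sigma_n)$, and this is the required run.

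\textbf{Main obstacle.} There is no real obstacle. The only point needing care is that $\alpha$ must fix the constants in $D$: an arbitrary permutation could move some $d\in D$ and falsify an atom such as $y=d$.
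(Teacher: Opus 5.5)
Your proof is correct and takes essentially the same approach the paper relies on. The paper states this proposition without proof as a standard fact, and the justification it implicitly uses is exactly yours: a bijection fixing $D$ pointwise preserves the truth value of every equality atom, hence of every Boolean transition constraint; it maps $\Reg$ into $\Reg$; and so each transition, and therefore the whole run, carries over letter by letter.
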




\begin{example}[{\cite[Example 1]{KaminskiF94}}]
\label{e: two}
Consider a one-register finite-memory automaton $\bA$,
over an input alphabet of the atoms $\Sigma=\atoms$
(the set of labels $H$ is treated as a singleton),
a self-explanatory diagram of which is shown
below,
where the symbol $\top$ denotes
the truth formula.

\begin{figure}[h]
\begin{center}
\begin{tikzpicture}[scale=0.1]
\tikzstyle{every node}+=[inner sep=0pt]
\draw [black] (6.4,-28.21) -- (15.35,-28.29);
\fill [black] (15.35,-28.29) -- (14.55,-27.78) -- (14.55,-28.78);
\draw [black] (18.4,-28.2) circle (3);
\draw (18.4,-28.2) node {$s_0$};
\draw [black] (39.2,-28.3) circle (3);
\draw (39.2,-28.3) node {$s$};
\draw [black] (60.3,-28.2) circle (3);
\draw (60.3,-28.2) node {$f$};
\draw [black] (60.3,-28.2) circle (2.4);
\draw [black] (17.077,-25.52) arc (234:-54:2.25);
\draw (18.4,-20.95) node [above] {$\top$};
\fill [black] (19.72,-25.52) -- (20.6,-25.17) -- (19.79,-24.58);
\draw [black] (21.4,-28.21) -- (36.2,-28.29);
\fill [black] (36.2,-28.29) -- (35.4,-27.78) -- (35.4,-28.78);
\draw (28.8,-28.85) node [below] {$y=x_1^\prime$};
\draw [black] (42.2,-28.29) -- (57.3,-28.21);
\fill [black] (57.3,-28.21) -- (56.5,-27.72) -- (56.5,-28.72);
\draw (49.75,-29.55) node [below] {$x_1=y$};
\draw [black] (37.877,-25.62) arc (234:-54:2.25);
\draw (39.2,-21.05) node [above] {$x_1=x_1^\prime$};
\fill [black] (40.52,-25.62) -- (41.4,-25.27) -- (40.59,-24.68);
\draw [black] (58.977,-25.52) arc (234:-54:2.25);
\draw (60.3,-20.95) node [above] {$\top$};
\fill [black] (61.62,-25.52) -- (62.5,-25.17) -- (61.69,-24.58);
\end{tikzpicture}
\end{center}
\centering
\label{fig: aut l diff}
\end{figure}
It is straightforward to verify that the language of $\bA$ consists
precisely of
all words over $\Sigma$
in which some letter appears more than once:
\[
\langof{\bA} = 
\{ \sigma_1 \sigma_2 \cdots \sigma_n : \ \mbox{there exist} \
1 \leq i < j \leq n \ \mbox{such that} \ \sigma_i = \sigma_j \}\, .
\]
Its Parikh image, $\Par{\langof{\bA}}$, consists
of all data vectors which have a value of at least $2$
for some atom:
\[
\Par{\langof{\bA}}=
\set{v:
\mbox{there is an atom }a
\ \mbox{such that} \
v(a)\geq 2}\, ,
\]
it is also equivalent to the rational expression
\[
\Par{\langof{\bA}}= \bigcup_{a\in\atoms}2a+\atoms^\ast\, .
\]
\end{example}

\begin{example}
\label{e: l first}
The language
\[
L_{\mathrm{first}}=
\{ \sigma_1 \sigma_2 \cdots \sigma_n : n\geq 2,\ \mbox{and for all} \
1 < i  \leq n, \ \sigma_1 \neq \sigma_i \}\, ,
\]
consists of all words whose first letter does not repeat.
Its reversal is the language that consists
of all words in which the last letter does not appear
earlier, namely,
\[
L_{\mathrm{last}}=
\{ \sigma_1 \sigma_2 \cdots \sigma_n : n\geq 2,\ \mbox{and for all} \
1 \leq i  < n, \ \sigma_n \neq \sigma_i \}\, .
\]
These languages
are accepted by automata $\bB$ and $\bC$
shown
below.
Furthermore, these languages are Parikh-equivalent;
their Parikh image consists of all data vectors $v$
such that for some atom $a\in\atoms$, $v(a)=1$,
\[
\Par{L_{\mathrm{first}}}=\Par{L_{\mathrm{last}}}=
\bigcup_{a\in\atoms}a+(\atoms\setminus\set{a})^\ast\, .
\]

\begin{figure}[h]
\label{fig: auto first last}
\centering
\begin{subfigure}[t]{0.42\textwidth}
\centering
\begin{tikzpicture}[scale=0.1]
\tikzstyle{every node}+=[inner sep=0pt]
\draw [black] (6.4,-28.21) -- (15.35,-28.29);
\fill [black] (15.35,-28.29) -- (14.55,-27.78) -- (14.55,-28.78);
\draw [black] (18.4,-28.2) circle (3);
\draw (18.4,-28.2) node {$s_0$};
\draw [black] (39.2,-28.3) circle (3);
\draw (39.2,-28.3) node {$s$};
\draw [black] (60.3,-28.2) circle (3);
\draw (60.3,-28.2) node {$f$};
\draw [black] (60.3,-28.2) circle (2.4);
\draw [black] (21.4,-28.21) -- (36.2,-28.29);
\fill [black] (36.2,-28.29) -- (35.4,-27.78) -- (35.4,-28.78);
\draw (28.8,-30.75) node [below] {$y= x_1^\prime$};
\draw [black] (42.2,-28.29) -- (57.3,-28.21);
\fill [black] (57.3,-28.21) -- (56.5,-27.72) -- (56.5,-28.72);
\draw (49.75,-30.75) node [below] {$x_1=x_1^\prime\neq y$};
\draw [black] (58.977,-25.62) arc (234:-54:2.25);
\draw (60.3,-21.05) node [above] {$x_1=x_1^\prime\neq y$};
\fill [black] (61.62,-25.62) -- (62.5,-25.27) -- (61.69,-24.68);
\end{tikzpicture}
\caption{Automaton $\bB$.}
\label{fig: aut l first}
\end{subfigure}
\hfill
\hfill
\begin{subfigure}[t]{0.42\textwidth}
\centering
\begin{tikzpicture}[scale=0.1]
\tikzstyle{every node}+=[inner sep=0pt]
\draw [black] (6.4,-28.21) -- (15.35,-28.29);
\fill [black] (15.35,-28.29) -- (14.55,-27.78) -- (14.55,-28.78);
\draw [black] (18.4,-28.2) circle (3);
\draw (18.4,-28.2) node {$s_0$};
\draw [black] (39.2,-28.3) circle (3);
\draw (39.2,-28.3) node {$s$};
\draw [black] (60.3,-28.2) circle (3);
\draw (60.3,-28.2) node {$f$};
\draw [black] (60.3,-28.2) circle (2.4);
\draw [black] (21.4,-28.21) -- (36.2,-28.29);
\fill [black] (36.2,-28.29) -- (35.4,-27.78) -- (35.4,-28.78);
\draw (28.8,-30.75) node [below] {$y\neq x_1^\prime$};
\draw [black] (42.2,-28.29) -- (57.3,-28.21);
\fill [black] (57.3,-28.21) -- (56.5,-27.72) -- (56.5,-28.72);
\draw (49.75,-31.75) node [below] {$x_1=y$};
\draw [black] (37.877,-25.62) arc (234:-54:2.25);
\draw (39.2,-21.05) node [above] {$x_1=x_1^\prime\neq y$};
\fill [black] (40.52,-25.62) -- (41.4,-25.27) -- (40.59,-24.68);
\end{tikzpicture}
\caption{Automaton $\bC$.}
\label{fig: aut l last}
\end{subfigure}

\label{fig: aut l}
\end{figure}

\end{example}

\begin{remark}
\label{r: guessing}
By Definition~\ref{d: fma},
an automaton may reassign register values 
non-deterministically.
Such transitions, often called \emph{guesses},
strictly increase the expressive
power of finite-memory automata.
For instance,
the language $L_{\mathrm{last}}$ from
\Cref{e: l first} cannot be recognized
without guessing
(see~\cite{KaminskiZ10,ChengK98}).
\end{remark}

\begin{remark}
\label{r: variants}
Definition~\ref{d: fma} additionally stipulates that registers
have predetermined initial values,
that there is a single initial state,
and that at every configuration all
registers store distinct atoms.

These design choices are not unique;
several alternative formulations
appear in the literature.
For instance, register values
may be initialized non-deterministically,
or may start empty until assigned during a run,
or they may be erased and rewritten~\cite{MurawskiRT15}.
Some variants allow
the same atom to appear in multiple registers simultaneously~\cite{KaminskiF94,NevenSV04}.
All these variants have the same expressive power.
\end{remark}

One may question the need
to include both $H$ (the finite set of labels)
and $D$, the finite set of constants
in the automaton description.

On the one hand, any data word
$\letter{h_1,a_1}\letter{h_2,a_2}\cdots\letter{h_n,a_n}$
can be equivalently viewed as a word over atoms augmented
with constants from $H$, namely
$h_1 a_1 h_2 a_2 \cdots h_n a_n$.
On the other hand,
the constants from $D$ can be stored and maintained
in registers via the initial register assignment.

Nevertheless, in order to remain consistent
with previous works~\cite{KaminskiF94,NevenSV04,HofmanJLP21}
and to preserve the generality and robustness
of the model for future applications,
we adopt the most general definition.

\subsection{Block finite-memory automata}
\label{s: bfma}

In this section, we introduce a new model that generalizes finite-memory
automata by allowing transitions to be labeled by finite strings
(blocks) rather than single letters.
Over finite alphabets,
this model is known as a \emph{generalized automaton}
\cite[Chapter 7, Section 10]{Eilenberg74},
\cite[Definition 2.2.1]{LewisP81}.

\begin{definition}
\label{d: bfma}
An \emph{$r$-register $k$-block finite-memory automaton}
is a system
$\bA =
\langle S,s_0,F,H,D,\br_0,\mu \rangle$
where the components are as in \Cref{d: fma}
except that transition rules in $\mu$
are of the following form,
\begin{equation}
\label{eq: btrans}
s\trans{h_1h_2\cdots h_p,\varphi}s^\prime \, ,
\end{equation}
where $s,s^\prime\in S$,
$p$ is a non-negative integer less than or equal to $k$,
$h_1,h_2,\ldots,h_p\in H$,
and the transition constraint
$\varphi$
is a Boolean combination of equalities involving the variables
$x_1,x_2,\ldots,x_r$ (current register values),
$y_1,y_2,\ldots,y_p$ (input atoms in the block),
$x_1^\prime,x_2^\prime,\ldots,x_r^\prime$ (next register values),
and symbols from $D$.
\end{definition}


The class of $r$-register block finite-memory automata
includes all $r$-register $k$-block finite-memory automata,
for every positive integer $k$.

A configuration of $\bA$ is a pair $\conf{s}{\br}$
as in standard finite-memory automata.
For atoms
$a_1\ldots,a_r$,
$b_1,b_2,\ldots,b_p$,
$a_1^\prime,\ldots,a_r^\prime$
such that
\[
(a_1\ldots,a_r,b_1,b_2,\ldots,b_p
,a_1^\prime,\ldots,a_r^\prime)
\models \varphi
\, ,
\]
rule~\eqref{eq: btrans} induces a transition
from the automaton configuration $\conf{s}{\tuple{a_1,a_2,\ldots a_r}}$
over the string
$\letter{h_1,b_1}\letter{h_2,b_2}\cdots \letter{h_p,b_p}$
to configuration
$\conf{s^\prime}{\tuple{a_1^\prime,a_2^\prime,\ldots a_r^\prime}}$:

\begin{equation}
\label{eq: bfma conf trans}
{\conf{s}{\tuple{a_1,a_2,\ldots a_r}}}
\trans{\letter{h_1,b_1}\letter{h_2,b_2}\cdots \letter{h_p,b_p}}
{\conf{s^\prime}{\tuple{a_1^\prime,a_2^\prime,\ldots a_r^\prime}}}\, .
\end{equation}
If $p=0$,~\eqref{eq: bfma conf trans} is an $\varepsilon$-transition.

A \emph{run} of $\bA$ over a word
$\bsigma \in \Sigma^\ast$ is
a sequence
of configurations $c_0,c_1,\ldots,c_m$
over a decomposition of $\bsigma$,
$\bsigma = \btau_1 \btau_2 \cdots \btau_m$,
such that
\begin{align} \label{eq:brun}
c_0 \trans{\btau_1} c_1 \trans{\btau_2}\cdots 
\trans{\btau_m} c_m.
\end{align}
Accepting runs, the language of the automaton,
and the language between configurations
are defined exactly as for standard finite-memory automata.

\begin{example}
\label{e: l four}
Block transitions allow
simultaneous comparison
of several values.
For instance,
the language
\[
L_{\mathrm{four}}=\{abcd\in\atoms^4:\ \neq(a,b,c,d)\}
\]
is not accepted by a one-register finite-memory automaton,
but is accepted by a one-register
two-block finite-memory automaton,
depicted below.
\end{example}
%
\begin{figure}[h]
\begin{center}
\begin{tikzpicture}[scale=0.1]
\tikzstyle{every node}+=[inner sep=0pt]
\draw [black] (15.4,-24.8) circle (3);
\draw (15.4,-24.8) node {$s_0$};
\draw [black] (30.5,-24.8) circle (3);
\draw (30.5,-24.8) node {$s_1$};
\draw [black] (57.6,-24.8) circle (3);
\draw (57.6,-24.8) node {$s_2$};
\draw [black] (72.4,-24.8) circle (3);
\draw [black] (72.4,-24.8) circle (2.4);
\draw (72.4,-24.8) node {$f$};
\draw [black] (7.4,-24.8) -- (12.4,-24.8);
\fill [black] (12.4,-24.8) -- (11.6,-24.3) -- (11.6,-25.3);
\draw [black] (18.4,-24.8) -- (27.5,-24.8);
\fill [black] (27.5,-24.8) -- (26.7,-24.3) -- (26.7,-25.3);
\draw (22.95,-27.6) node [below] {$y_1=x_1^\prime$};
\draw [black] (33.5,-24.8) -- (54.6,-24.8);
\fill [black] (54.6,-24.8) -- (53.8,-24.3) -- (53.8,-25.3);
\draw (43.55,-27.6) node [below] {$\neq(x_1,y_1,y_2,x_1^\prime)$};
\draw [black] (60.6,-24.8) -- (69.4,-24.8);
\fill [black] (69.4,-24.8) -- (68.6,-24.3) -- (68.6,-25.3);
\draw (65,-28.4) node [below] {$x_1=y_1$};
\end{tikzpicture}
\end{center}
\centering
\label{fig: aut four diff}
\end{figure}

\begin{remark}
\label{r: bfma one geq fma one}
Every $r$-register
finite-memory automaton
can be seen as an $r$-register
$1$-block finite-memory automaton.
However, by Example~\ref{e: l four},
the class of languages recognized
by one-register block finite-memory automata
is strictly larger than the class recognized
by one-register finite-memory automata.
\end{remark}

In general, block comparisons
can be simulated
by standard letter-by-letter transitions
using additional intermediate registers.
More specifically, each block transition
depends on the atoms of the block (at most $k$ atoms) 
and the atoms
in the registers
before and after the transition
(at most $2r$ distinct atoms).
Consequently, block finite-memory automata do
not increase expressive power;
they recognize exactly the class
of quasi-regular languages.

\subsection{Finite-memory context-free grammars}

Context-free grammars over infinite alphabets
were introduced in~\cite{ChengK98},
where they were shown to be expressively equivalent to
finite-memory pushdown automata.

\begin{definition}
\label{def: cfg gen}
An \emph{$r$-register finite-memory context-free grammar} is a system
$\bG=\langle V,H,K,P,S,\br_0\rangle$, where
\begin{itemize}
\item $V$ is a finite set of \emph{variables} (nonterminals).
\item $H$ is a finite set of terminals, also
called \emph{labels}, and $V\cap H=\emptyset$.
\item $K\subset \Sigma$ is a finite set of distinguished symbols.
\item $P$ is a finite set of production rules of the form
\begin{equation}
\label{eq: cf prod}
A(X)\trans{\varphi} B_1(Y_1)B_2(Y_2)\cdots B_m(Y_m)\, ,
\end{equation}
where $A\in V$, $m$ is a non-negative integer,
and 

$B_1,B_2,\ldots,B_m\in H\cup K\cup V$.
The transition constraint
$\varphi$
is a Boolean combination of equalities involving:
\begin{itemize}
\item the register variables
$X=\set{x_1,x_2,\ldots,x_r}$
(the register contents before the production),
\item for each $i$ such that $B_i\in H$, a single variable
$Y_i=\set{y_i}$
representing the atom produced,
\item for each $i$ such that $B_i\in V$, a set of $r$ variables
$Y_i={\set{y_{i,1},y_{i,2},\ldots,y_{i,r}}}$
representing the registers contents associated
with the nonterminal $B_i$ after the application of the production,\footnote{
When $B_i \in K$, $Y_i$ is irrelevant.
We set it $\emptyset$, to keep uniformity of notation.}
\end{itemize}
\item $\br_0\in\Reg$
is the initial register assignment.
\item $S\in V$ is the start symbol.
\end{itemize}
\end{definition}

The \emph{branching degree} (or \emph{arity})
of finite-memory context-free grammar $\bG$
is the maximal integer $m$
that appears in production rules in $P$,
that is, the maximal size of productions.
A grammar of branching degree $2$
is also called a \emph{binary} grammar.

A \emph{variable configuration} of $\bG$ is a pair
$\conf A {\br} \in V\times \Reg$,
consisting of a variable $A\in V$ and a register valuation
$\br \in \Reg$.
The variable configuration $\conf{S}{\br_0}$
is called the initial configuration.

A productionrule of the form \eqref{eq: cf prod},
atoms $a_1,a_2,\ldots,a_r$,
and $b_i$ for each $B_i\in H$
and $c_{i,1},c_{i,2},\ldots,c_{i,r}$
for each $B_i\in V$,
that satisfy $\varphi$,
induces a one-step derivation
\begin{equation}
\label{eq: one step}
\conf{A}{\br}\Longrightarrow X_{1}X_{2}\cdots X_{m}
\,
,
\end{equation}
such that $X_i=\letter{B_i,b_i}\in\Sigma$ if $B_i\in H$,
$X_i=B_i\in \Sigma$ if $B_i\in K$,
and $X_i=\conf{B_i}{\tuple{c_{i,1},c_{i,2},\ldots,c_{i,r}}}$
if $B_i\in V$.

For words $\bX$ and $\bY$ over 
$\left(\Sigma\cup
\left(V\times\Reg
\right)\right)^\ast$,
we write
$\bX\Longrightarrow \bY$
if there exist words $\bX_{1},\bX_{2},\bX_{3}$
over the same alphabet
and a configuration
$\conf{A}{\br}\in V\times \Reg$
such that
\[
\bX=\bX_{1}\conf{A}{\br} \bX_{3}, \qquad \bY=\bX_{1}\bX_{2}\bX_{3}\, ,
\]
and
$\conf{A}{\br} \Longrightarrow \bX_{2}$. 

As usual, the reflexive and transitive closure of
$\Longrightarrow$ is denoted by
$\Longrightarrow^\ast$.
The language $\langof{\bG}$ generated by
$\bG$ is defined by 
\[
\langof{\bG}=
\left\{
\bsigma\in\Sigma^{*}:
\conf{S}{\br_0}\Longrightarrow^\ast \bsigma\right\}
\,
.
\]
Any such language is called a \textit{quasi-context-free language}.

\begin{example}
There is a direct translation
from a block finite-memory automaton
to a \textit{linear} finite-memory grammar
with the set of variables being the set of states of the automaton and
the following productions,
\[
s\trans{h_1h_2\cdots h_p,\varphi}s^\prime
\mapsto
s(X)\trans{\varphi}h_1(y_1)h_2(y_2)\cdots h_p(y_p) s^\prime(X^\prime)
\, .
\]
\end{example}

For simplicity, we sometimes use
a smaller set of variables whenever possible.
For instance, the production rule
$A(x)\rightarrow B(x)C(y)D(y)$
is an abbreviation for the production rule 
\[
A(x_1)\trans{x_1=y_1,y_2=y_3}B(y_1)C(y_2)D(y_3)\, .
\]

\begin{example}[{\cite[Example 3]{HofmanJLP21}}]
\label{ex: l mirror}
Consider the alphabet $\Sigma=\{l,r\}\times\atoms$.
Let $\bG_{\mathrm{mirror}}$ be
a one-register context-free grammar with $V=\{S\}$
and production rules
\[
P=\{S(x)\rightarrow l(x)S(y)r(x) ,
\quad S(x)\rightarrow l(x)r(x)\}\,.
\]
Then $\langof{\bG_{\mathrm{mirror}}}$
consists of all words of the following form,
\[
\letter{l,a_1}\letter{l,a_2}
\cdots \letter{l,a_n}
\letter{r,a_n}
\letter{r,a_{n-1}}
\cdots\letter{r,a_1}
\,
,
\]
for atoms $a_1,a_2,\ldots,a_n\in\atoms$.

Observe that this language is Parikh-equivalent
to the language $L_{\mathrm{mirror.com}}$
which consists of all words of the form,
\[
\letter{l,a_1}\letter{r,a_1}
\letter{l,a_2}\letter{r,a_2}
\cdots 
\letter{l,a_n}\letter{r,a_n}
\,
.
\]
The latter is recognizable by a finite-memory automaton.

Namely, the Parikh image of both languages is
the following rational set,
\[
\left(\bigcup_{a\in\atoms} \letter{l,a}+\letter{r,a}\right)^\ast\, .
\]
\end{example}

\subsection{Parikh's theorem and main results}

Recall Parikh's theorem~\cite{Parikh66}
stating that the commutative image
of every context-free language over a \emph{finite} alphabet
coincides with the commutative image
of some regular language.
Over finite alphabets,
rational sets of data vectors coincide with
semi-linear sets (cf. Remark~\ref{r: rat fin ab}).
Parikh's proof shows that commutative
images of context-free languages
are exactly the semi-linear sets
and that regular languages
are expressive enough to capture every semi-linear set.

Guided by examples such as Example~\ref{ex: l mirror},
it is natural to conjecture an analogue of Parikh's theorem
for infinite alphabets.
One might hope to identify an appropriate notion of semi-linear sets
for languages over infinite alphabets,
show that commutative images of quasi-context-free languages
satisfy this notion,
and then prove that quasi-regular languages
are expressive enough to capture
every such set.

This program was initiated in~\cite{HofmanJLP21},
where the authors introduced \emph{rational sets of data vectors}.
They showed that every rational set of data vectors
is the commutative image of
a rational data language
which is necessarily quasi-regular.
Thus, to obtain a Parikh-type correspondence,
it would suffice to show that the commutative
image of every quasi-context-free language is rational.
However, even for finite-memory automata,
this rationality was nontrivial,
and the authors established it only
for one-register automata;
later, in~\cite{LasotaP21} this result was extended
to the more expressive model of \emph{hierarchical register automata},
which still recognize a proper subclass of the quasi-regular languages.

In retrospect, this strategy was doomed.
We show that finite-memory context-free grammars
with three registers can generate
commutative images that do not arise from \emph{any}
finite-memory automaton.

\begin{theorem}
\label{t: par cf neq par fma}
Commutative images of quasi-context-free
languages form a strictly larger class than
the class of
commutative images of quasi-regular languages.
\end{theorem}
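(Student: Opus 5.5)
The inclusion is routine. By the Example following Definition~\ref{def: cf prod}'s introduction (the translation of a block finite-memory automaton into a \emph{linear} finite-memory grammar), every quasi-regular language is quasi-context-free with the same number of registers. Hence every commutative image of a quasi-regular language is also one of a quasi-context-free language. The whole content is strictness. I plan to exhibit an explicit three-register grammar $\bG$ and a combinatorial invariant that Parikh images of finite-memory automata always satisfy but $\Par{\langof{\bG}}$ violates.

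\textbf{The invariant (pumping lemma for Parikh images).} Let $\bA$ be an $r$-register automaton with state set $S$, and let $v=\Par{\bsigma}$ for an accepted $\bsigma$ whose number of distinct atoms greatly exceeds $r$, $\card S$ and $\card D$.
\begin{enumerate}
\item Cut the run at positions where the register contents restricted to a fixed orbit type repeat up to a $D$-permutation, and the state repeats. This gives a factor $\btau$ and a permutation $\alpha$ with $\conf{s}{\br}\transword{\sbtau}\conf{s}{\alpha(\br)}$.
\item By Proposition~\ref{p: permutation word fma}, applying suitable $D\cup[\br]$-permutations to the inner part of $\btau$ lets us either delete $\btau$ or insert further copies of it, each with fresh atoms outside the support of the rest of the word.
\end{enumerate}
At the level of data vectors, $v$ decomposes as $u+w$, where $w$ uses at most $r$ atoms shared with $u$. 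Then $u+\beta_1(w)+\dots+\beta_k(w)$ lies in the image for all $k$, where the $\beta_j$ fix the interface atoms and send the rest to fresh atoms. The key quantitative point is that along a linear run the interface between the part before a cut and the part after it has size at most $r$. So the atoms of $v$, ordered by first occurrence, admit a path-like decomposition of bounded width, and this decomposition is respected by the count profiles $a\mapsto(v(\letter{h,a}))_{h\in H}$.

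\textbf{The grammar.} I would use the branching of $\bG$ to correlate the multiplicities of unboundedly many atoms along a tree. Take labels $H=\set{h_1,h_2,h_3}$ and a production of the shape
\[
A(x_1,x_2,x_3)\rightarrow h_1(x_1)\,A(x_1,y,z)\,A(x_2,y^\prime,z^\prime)\,h_2(x_3)\ \ldots
\]
arranged so that each atom's profile records its depth or parent relation in a full binary tree. Every vector in $\Par{\langof{\bG}}$ then encodes a tree whose shape is forced by the profiles. For instance, each internal atom carries $h_1$-count $1$ and is "referenced" by exactly two children through $h_2$-occurrences sharing a registered atom, and balancedness is forced by requiring that the number of atoms with each given profile doubles. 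Three registers are needed to pass a parent atom together with two fresh child atoms to the subderivations.

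\textbf{The main obstacle.} The hard step is deriving a contradiction from the pumping property. Suppose $\Par{\langof{\bG}}=\Par{\langof{\bA}}$ for an $r$-register automaton $\bA$, and take a complete tree of depth much larger than $r$ and $\card S$. Pumping a factor whose interface has at most $r$ atoms inserts a fresh copy of a sub-configuration hanging off at most $r$ existing atoms. I would try to show that any such insertion or deletion breaks either the doubling condition or the exact reference counts. The subtle part is that the automaton may order the atoms arbitrarily, so the cut need not correspond to a subtree. I would first handle this with a counting argument: a complete binary tree of large depth has no ordering of its vertices in which all prefix/suffix boundaries are small, since it has large pathwidth. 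Hence some factor with a small interface must contain a non-subtree-closed set of atoms, and duplicating it yields an atom with three or more referencing children, a vector outside $\Par{\langof{\bG}}$.
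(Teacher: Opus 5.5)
Your inclusion argument is fine, and your overall picture (a run holds at most $r$ atoms at a time, while a deep complete binary tree has large pathwidth) matches the endgame of the paper's proof. But both bridges between the automaton and the tree are missing, and the bound you put in their place is false.

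The first missing bridge concerns what is actually bounded. Along a run, $r$ bounds the \emph{memory} interface, not the set of atoms shared by a prefix and the rest of the word (or by a factor $\btau$ and its context). An automaton may read $a$, forget it, and read $a$ again later. So neither ``$w$ uses at most $r$ atoms shared with $u$'' nor a bounded-width decomposition of the atoms ordered by first occurrence follows. (Also, deleting $\btau$ requires applying $\alpha^{-1}$ to the suffix, which renames its atoms.) To turn memory bounds into statements about atoms you must make forgetting costly, and the paper builds this into the language. Every $v\in\Par{\langof{\bG_3}}$ satisfies $\card{\dom{v}}\leq 1+v(\#)$, and one takes a maximal-distinct word attaining equality. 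Suppose $a$ is absent from the registers at some position strictly inside $I_a$, the span from the first to the last occurrence of $a$ in $\bsigma$. Then renaming all later occurrences of $a$ to a fresh atom gives an accepted word whose vector violates this bound; hence $a$ is stored throughout $I_a$. Your grammar has no such counting device. Nothing forces the automaton to store anything, and the pathwidth argument has nothing to act on.

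The second bridge is precisely the step you leave open: atoms adjacent in the tree must be stored simultaneously, i.e.\ $I_a\cap I_c\neq\emptyset$ for every node $(a,b,c)$. Your proposed mechanism is that duplicating a factor creates an atom with three or more referencing children, hence a vector outside the image. This presupposes that the tree can be read off a Parikh vector and stays readable after perturbation. But a vector has many witnesses, and a context-free grammar typically references an atom unboundedly often; in your own sketched production, $x_1$ is passed to the first child and then propagates down an entire leftmost chain. Nor can balancedness be enforced by ``doubling profiles'': subderivations from the same configuration are interchangeable, so a context-free grammar cannot impose such a global constraint. The paper never needs it, since it works with one chosen full-binary-tree word.

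What the paper uses instead is commutative stability. With counters $n_x=C\cdot M^{\mathrm{slex}(x)+1}$ for $x\in\{0,1\}^\ast$, every perturbation $w\leq v$ with $w(\#)=0$ and $\size{w}<C$ is witnessed by the same tree up to reordering. So increasing an anchor $b$ forces its neighbours $a$ and $c$ to increase. This is combined with a pumping lemma that inserts a short word built only from atoms already in $\bsigma$, via permutations compatible with $[\bsigma]$; fresh copies as in your step~2 would violate $w\leq v$. Applied inside a $\#$-free infix that avoids the endpoints of $I_a$ and $I_c$, this gives $I_a\cap I_c\neq\emptyset$. Only then does the interval lemma on binary trees close the argument: at least $\sqrt{d-2}$ intervals pass through a common point, so more than $r$ atoms are stored at once.
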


Moreover, the paradigm itself breaks down:
even within the class of quasi-regular languages,
commutative images need not be rational,
even for automata with three registers.

\begin{theorem}
\label{t: par fma neq rat}
Commutative images of quasi-regular languages
are not always rational.
\end{theorem}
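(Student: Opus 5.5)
We will exhibit an explicit three-register finite-memory automaton $\bA$ whose Parikh image is not rational. The obstruction we want to exploit is \emph{unbounded chaining of distinct atoms}. Rational sets are built from orbit-finite unions, sums and stars. In a rational expression, every orbit-finite union binds only boundedly many atoms (its support together with a bounded-size element), and a starred subexpression $X^\ast$ adds independent copies of elements of $X$. The correlations that a rational expression can impose between atoms are therefore ``tree-like with bounded width''.

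For the automaton, we plan to use three registers to walk along a path of distinct atoms. A natural candidate is the language of words built from blocks $\letter{h,a_i}\letter{h,a_{i+1}}$ in which consecutive blocks share an atom and all atoms $a_1,\ldots,a_n$ are pairwise distinct. The distinctness cannot be checked with finite memory, so we instead exploit the Parikh image. Each atom of the path gets value exactly $2$ except the endpoints, which get value $1$, possibly refined with labels so that the image encodes the path structure. The three registers keep the previous atom, the current atom and a guessed next atom, so that the image consists of data vectors whose ``graph'' (atoms joined when they co-occur in a block, recovered through labels) is a simple path, or a cycle. Concretely, we will pick labels $l,r$ and a language whose Parikh vectors are exactly $\sum_i (\letter{l,a_i}+\letter{r,a_{i+1}})$ over cyclic sequences of pairwise distinct atoms. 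Each atom then has one $l$ and one $r$ occurrence, and the bipartite pairing forms a single Hamiltonian cycle. Guessing with registers lets the automaton close the cycle by storing $a_1$.

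The main step, and the main obstacle, is the non-rationality proof. We will first prove a normal form for rational sets: every rational set is a finite union, or an orbit-finite union, of sets of the form $v+X_1^\ast+\cdots+X_k^\ast$ with each $X_j$ rational of lower star height, all with bounded supports. From this we derive a pumping/decomposition property. Let $R$ be rational with support $S$, and let $v\in R$ be large with $\dom v$ far from $S$. Then $v$ decomposes as $v=u+w_1+\cdots+w_m$ with $u$ bounded and each $w_j$ in a starred set. For every $w_j$ and every permutation $\pi$ fixing $S$ and the atoms of the other summands, the vector obtained by replacing $w_j$ by a fresh renamed copy $\pi(w_j)$ is again in $R$. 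Since $w_j$ is in a starred set, we can also duplicate it, adding $\pi(w_j)$ with fresh atoms.

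Applied to a long Hamiltonian cycle vector, any nonzero starred summand either has its $l/r$ pairing closed under itself, or uses atoms linked to the rest of the vector. Adding a fresh disjoint copy of a closed piece produces a vector whose pairing graph has two components, i.e.\ is not a single cycle, so it lies outside the image. Hence all starred contributions must be absorbed into the bounded part, which contradicts unboundedness of sizes via \Cref{l: b-o-f}. The delicate point is the induction over star height: inner stars may correlate boundedly many atoms with outer ones. We handle this by an ``independent fresh copy'' lemma proved by induction on the expression. It states that if $w\in X^\ast$ with $X$ supported by $S$, then for some nonzero summand $x\in X$ of $w$ and any $S$-permutation $\pi$ moving $\dom x\setminus S$ to fresh atoms, $w+\pi(x)\in X^\ast$. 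Only this lemma, combined with the connectivity of the cycle, is needed.
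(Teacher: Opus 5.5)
There is a genuine gap, and it lies in the witness itself rather than in the non-rationality argument.

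Over the alphabet $H\times\atoms$ each letter carries a single atom. So the Parikh vector of $\letter{l,a_1}\letter{r,a_2}\letter{l,a_2}\letter{r,a_3}\cdots\letter{l,a_n}\letter{r,a_1}$ is simply $\sum_{i}(\letter{l,a_i}+\letter{r,a_i})$. The pairing $a_i\mapsto a_{i+1}$ is not recorded, and no finite set of labels can record which of unboundedly many atoms is the partner of a given one. The set you describe is therefore $\{\sum_{a\in A}(\letter{l,a}+\letter{r,a}) : A\subseteq\atoms \mbox{ finite}\}$. Each of its elements is realised by a single cycle through $A$ in any order, so adding a freshly renamed copy of a summand gives another element of the same shape. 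Your ``two components'' contradiction therefore never arises.

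Worse, this set is not the Parikh image of any finite-memory automaton. Every word with such a Parikh vector has pairwise distinct letters, and the lengths are unbounded. A standard pumping argument then applies: take two positions at distance greater than $2r_A|H|$ with the same state and the same $D$-type of register valuation, a finite-order $D$-permutation $\pi$ mapping one valuation to the other, and the iterate $u\,\pi(u)\cdots\pi^{d}(u)$. This yields an accepted word in which some letter of $u$ occurs twice, since at most $2r_A|H|$ letters have atoms moved by $\pi$.

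You note that the automaton cannot check distinctness, but then its image must contain vectors with repeated atoms. For instance, take the natural automaton that checks consecutive atoms are distinct and stores the first atom to close the cycle. Its image is $\{\sum_a m_a(\letter{l,a}+\letter{r,a}) : \sum_a m_a\geq 2,\ \max_a m_a\le\frac12\sum_a m_a\}$. This set is semi-linear, because such multisets split into pairs of distinct atoms plus at most one triple.

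What is missing are the two devices the paper uses to make distinctness and adjacency visible commutatively.

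\begin{itemize}
\item Distinctness is replaced by a counting surrogate. The separator $\#$ gives $\card{\dom{v}}\leq v(\#)+2$ for every $v\in\Par{L}$, with equality exactly for words with pairwise distinct atoms. The automaton never checks distinctness, yet pruning a sub-tree, or duplicating it with freshly renamed atoms, is detected by comparing domain size with the $\#$-count.
\item Adjacency is encoded in multiplicities. Consecutive blocks share one atom, and the exponents $n_i=C4^{i+1}$ are so separated that every witness of $v+w$ (with $w\le v$, $w(\#)=0$, $\size{w}<C$) has the same block structure up to reversal. This is commutative stability.
\end{itemize}

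Commutative stability yields the propagation property: an increased odd atom, or an atom saturated in a sub-tree, forces its neighbours into the same sub-tree. The counting over $n$ children with positive $\#$-value then places more than $M_0$ atoms into the support of a single star sub-expression, which is the contradiction.

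Your ``independent fresh copy'' lemma is essentially the paper's duplicate-and-permute operation on parsing trees, and it is fine, provided $\pi$ also fixes the support of the chosen union element. But without a witness whose vectors carry recoverable chain structure, it has nothing to act on.
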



So, the prospect of a Parikh-type theorem
for infinite alphabets collapses in general.
Nevertheless, for the important
restricted case of one register,
the situation is more favorable.

The restriction to a single register,
while still allowing a finite, unbounded
set of distinguished symbols, is significant.
One-register automata are known to enjoy
much better algorithmic and structural properties
than their multi-register counterparts.

For instance, consider finite-memory automata without guessing, where
universality is decidable for
one-register automata but becomes undecidable already
for two registers;
the alternating one-register model has decidable
nonemptiness~\cite{DemriL09,GenkinKP14,FrankHMSU25};
the latter is tightly connected to linear temporal logic
with the freeze quantifier~\cite{DemriL09}.
Furthermore,
for nondeterministic one-register automata,
determinisation is decidable~\cite{ClementeLP22},
and over ordered
alphabets, the intersection of nondeterministic
and co-nondeterministic one-register languages
lies within the deterministic class~\cite{KlinlT21}.

Regarding Parikh images,
it is known that the commutative image
of any one-register quasi-regular language
is rational and,
therefore, of finite star-height~\cite[Theorem 6]{HofmanJLP21}.
We strengthen this result
by establishing a tight universal upper bound,

\begin{theorem}
\label{t: sh two}
Commutative images of one-register quasi-regular languages
are of star-height at most two.
\end{theorem}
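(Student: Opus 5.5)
We will show that for a one-register automaton $\bA$, each set $\Par{\langofconf{c}{c^\prime}{\bA}}$ has star-height at most two. By \eqref{eq: langof as langofconf}, $\Par{\langof{\bA}}$ is then an orbit-finite union of such sets, and orbit-finite unions do not raise star-height. Since only equality to the single register and to $D$ matters, we may regard the constants of $D$ as fixed extra atoms and work with $D$-supported sets. The underlying idea is the standard one of~\cite{HofmanJLP21}. A run of a one-register automaton splits into \emph{phases}, which are maximal segments during which the register holds the same atom $a$. Within a phase, the letters are either $\letter{h,a}$ for that $a$, or letters whose atoms are \emph{fresh} with respect to $a$ (and to $D$). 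Between phases the register value changes to a new atom. The difficulty in~\cite{HofmanJLP21} is that a later phase may reuse an earlier register atom, and this nesting is what drives the star-height up. We aim to flatten this nesting to depth two.

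\textbf{Step 1 (single-register-value segments).} Fix states $s,s^\prime$ and an atom $a$. Consider the runs from $\conf{s}{a}$ to $\conf{s^\prime}{a}$ that keep $a$ in the register throughout. Along such a run, the letters not carrying $a$ are read under constraints of the form $y\neq x_1$ (together with constraints on $D$). Hence the run is governed by a finite automaton over labels, in which each non-$a$ letter may carry an arbitrary atom different from $a$. By Parikh's theorem over finite alphabets, the label counts form a semilinear set. Substituting the atoms, the image is a finite union of sets of the form $u+\bigcup(\ldots)+P^\ast$, where $P$ is an orbit-finite set of vectors of the form $k\letter{h,a}+\sum\letter{h_i,b_i}$ with $b_i\neq a$ arbitrary. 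This gives star-height one for each fixed $a$, and hence also for the orbit-finite union over $a$.

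\textbf{Step 2 (combining phases).} A general run visits register values $a_1,a_2,\ldots$, possibly with repetitions. The key observation we plan to prove is a \emph{commutative reorganisation lemma}: the Parikh image of any run equals the Parikh image of a run in which all phases with the same register atom $a$ are grouped into one ``cycle-closed'' block. We then reduce to a finite graph on states, in which we guess for each atom value a shape (entry state, exit state, and which state-loops occur). Concretely, we write the image as
\[
\bigcup \; u + \Big(\bigcup_{a\in\atoms\setminus D} Q_a\Big)^\ast ,
\]
where each $Q_a$ has star-height one by Step 1. The outer star accounts for the unboundedly many distinct register atoms, and $u$ is an orbit-finite part encoding the skeleton of the run. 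The skeleton consists of the finitely many ``state-type'' transitions, handled as in the proof of Parikh's theorem via a bounded spanning structure plus cycles. Letters read fresh with respect to the current register but equal to some other atom are still covered: their atoms are arbitrary and distinct from the current one, and the freedom in $Q_a$ allows them to coincide with atoms occurring elsewhere.

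\textbf{Main obstacle.} The hard step is the reorganisation in Step 2. We must show that interleaving phases of different atoms, and returning to an earlier atom after visiting others, can be commutatively simulated by the nested form $u+(\bigcup_a Q_a)^\ast$ without losing connectivity constraints. This is the analogue of the ``connected subgraph plus Euler cycles'' characterisation in Parikh's theorem. We would first try to prove it through a loop-decomposition argument. Each run is decomposed into a bounded-length simple path, together with loops attached at configurations; loops at configuration $\conf{s}{a}$ either fix $a$ (falling into $Q_a$) or return to $a$ after excursions through other atoms. Excursions are then cut and re-glued, using \Cref{p: permutation word fma} to rename atoms so that the excursions become independent star summands. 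The delicate point is handling excursions that revisit $a$ itself. We expect this to be resolved by a pumping and shortening argument that bounds the number of ``active'' atoms by the number of states. Tightness of the bound two is not needed here.
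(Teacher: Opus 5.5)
Your reduction to the sets $\Par{\langofconf{c}{c^\prime}{\bA}}$ and your Step~1 are fine and coincide with the paper (\Cref{l:Lqap}). The gap is Step~2. It is the whole difficulty, and you leave it as an unproved ``reorganisation lemma'' whose stated form does not work.

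Phases with the same register atom cannot in general be merged. Take states $p,q$, an updating rule from $p$ to $q$ reading $g$ under $x_1\neq x_1^\prime\wedge y\notin\set{x_1,x_1^\prime}$, and a preserving rule from $q$ to $p$ reading $h$ under $y=x_1=x_1^\prime$. Every phase then reads at most one $\letter{h,a}$. So two copies of $\letter{h,a}$ are always separated by another register atom and by update letters constrained by both neighbours.

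More fundamentally, a run has unboundedly many register updates. After passing to restricted form (\Cref{l: par res}), each update block must avoid both the old and the new register value. These letters cannot live in a bounded $u$. Your star generators must therefore be short cycles of the phase skeleton involving several register atoms plus update letters. Such generators are not indexed by a single atom, and Step~1 says nothing about their star-height.

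Your fallback, loops at $\conf{s}{a}$ with excursions returning to $a$, is close to the altering-loop reduction of~\cite{HofmanJLP21}. The paper avoids that route precisely because it nests and can raise the star-height. The ``pumping and shortening argument'' you hope for is exactly what has to be proved.

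The missing idea is to separate the skeleton from the phase contents. Treat each phase as one letter $\tuple{s,a,t}$ and consider the language of altering paths. Abstract it further to altering sets, where an update block becomes a set $B$ with $B\cap\set{a_i,a_{i+1}}=\emptyset$. If this skeleton language has a \emph{semi-linear} Parikh image, then substituting the height-one sets $\Par{L_{s,a,t}}$ for the letters $\tuple{s,a,t}$ gives height $1+1=2$ by \Cref{l:subst}.

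Semi-linearity of the skeleton is the real work. One must show that short cycles (remainders) can be cut out of every long word and inserted into every word of the same class. Insertion genuinely fails for restrained words; this is the unrestrained/restrained case split already needed for anti-paths. The paper therefore:
\begin{itemize}
\item partitions words by a control profile $\chi$ recording good controls of the sets $A^{t,s}_{\sbsigma}\subseteq\Theta_p$;
\item uses star generators $R^\chi$ that depend on $\chi$;
\item obtains insertion points from \Cref{l: insertion place} and splice points for shortening from \Cref{l: shorten ss};
\item concludes with \Cref{t: par qchi}.
\end{itemize}
Nothing in your sketch plays this role, so nothing in it actually bounds the nesting depth by two.
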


This bound is optimal:~\cite[Lemma 10]{HofmanJLP21}
exhibits a one-register quasi-regular
language whose commutative
image has star-height at least two.
Moreover, we extend
this result to the strictly larger class
of languages recognized by one-register
block finite-memory automata
(cf. \Cref{r: bfma one geq fma one}).

\begin{theorem}
\label{t: sh two block}
Commutative images of one-register block finite-memory automata
are of star-height at most two.
\end{theorem}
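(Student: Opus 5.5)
We will prove Theorem~\ref{t: sh two block} by reducing it to the same decomposition used for Theorem~\ref{t: sh two}. The key observation is that the one-register analysis depends only on the configuration graph. A one-register block automaton still has configurations $\conf{s}{a}$ with $a\in\atoms$. Each block transition reads at most $k$ atoms, whose relationship to the old register value $a$, the new value $a'$ and the constants $D$ is fixed by the constraint $\varphi$.

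The first step is a normalisation. We split every block transition according to the complete equality type of $(x_1,y_1,\ldots,y_p,x_1',D)$. After this, each rule either keeps the register ($x_1=x_1'$) or changes it ($x_1\neq x_1'$). In both cases the Parikh vector of the block is a fixed shape: a bounded multiset over $\letter{h,x_1}$, $\letter{h,x_1'}$, constants, and at most $k$ pairwise distinct fresh atoms outside $\set{x_1,x_1'}\cup D$.

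Next, for each pair of states $s,s'$ and each equality type, we define the languages $L_{s,s'}^{=}$ of runs that keep the register constant at a value $a\notin D$. These are runs that only use register-preserving rules. Their Parikh image is a set of the form $\bigcup_{a} \big(\text{finite combination of } \letter{h,a}\big) + \text{(fresh part)}$. The fresh part is generated by loops of fixed shapes, each contributing boundedly many fresh atoms. So the image has the shape
\[
\bigcup_{a\in\atoms\setminus D} \Big( u_a + \big(\textstyle\bigcup_{\text{shapes}} P_{a}\big)^\ast \Big),
\]
up to the usual finite-alphabet semilinearity over the counts of $a$-labelled letters. The important point is that fresh atoms in different blocks may coincide or not, as long as they avoid $a$. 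Hence the fresh contribution is $(P_a)^\ast$ with $P_a$ orbit-finite, of star-height one.

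The global run is a sequence of such constant-register segments joined by register-changing blocks. Each junction contributes letters in the old atom, the new atom and fresh atoms. The construction from the one-register automaton case then builds, via a finite graph whose vertices are state pairs, an expression with one outer star over orbit-finite unions of segment images. The only difficulty is atoms reused across segments. Following \cite{HofmanJLP21}, in the one-register case the register may return to an earlier value. However, every occurrence of an atom that is not currently in the register is unconstrained except for inequality with the register. We will show that, in the Parikh image, each letter carrying a revisited atom can be reattributed to "fresh" positions. So only three things must be tracked: the current register atom, the atoms that occupy the register during some segment (each with its own bounded semilinear count pattern), and everything else. This yields an expression of the form $\bigcup (\ldots + (\bigcup_{a}(\ldots + P_a^\ast))^\ast)$, which has star-height two.

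The main obstacle is this reattribution step in the presence of blocks. A block may demand that up to $k$ fresh atoms be pairwise distinct, as in Example~\ref{e: l four}. Merging fresh atoms with atoms from other segments, or with revisited register values, must therefore preserve these distinctness requirements. We plan to handle it as follows. A finite, bounded number of "witness" distinct atoms is needed per block, and the requirement is upward closed: adding more distinct fresh atoms never hurts. So each shape's fresh part can be described as an orbit-finite set of vectors over $\le k$ distinct atoms avoiding the register. Taking a star over the union of these sets captures exactly the achievable vectors, because distinctness constraints are imposed only within a single block. With this lemma in place, the proof of Theorem~\ref{t: sh two} goes through verbatim with single letters replaced by block shapes.
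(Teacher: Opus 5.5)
There is a genuine gap. Your overall architecture matches the paper's: normalise, give register-preserving segments star-height-one images, then combine. But the combining step is where essentially all the difficulty lies, and your proposal does not supply it.

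You assert that ``the construction from the one-register automaton case'' yields one outer star over orbit-finite unions of segment images, and that revisited register atoms can be reattributed to fresh positions. Neither holds as stated. Consecutive segments are joined by updating transitions that force $a_i\neq a_{i+1}$ and force the block atoms to avoid both $a_i$ and $a_{i+1}$. These are global constraints on the Parikh vector, not per-atom ones.

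Already a tiny restricted automaton shows this. Let it alternate a $\mathsf{PresEq}$ transition reading $\letter{h,x}$ with an $\varepsilon$-transition of type $\mathsf{UpDiff}$. Its language consists of the words $\letter{h,a_1}\cdots\letter{h,a_n}$ with $a_1=r_0$ and $a_i\neq a_{i+1}$. Any star over segment images, such as $\letter{h,r_0}+(\bigcup_a\letter{h,a})^\ast$, contains $\letter{h,r_0}+2\letter{h,b}$, which is not in the image. Moreover, there are no fresh positions to which $b$ could be reattributed.

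Also, the known construction of \cite{HofmanJLP21} (altering loops, anti-paths, a Hamiltonicity argument) gives rationality only with a large star-height bound. So there is no star-height-two proof of the unblocked case to transfer ``verbatim''; in the paper, Theorem~\ref{t: sh two} is itself obtained as a special case of the block statement.

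What is missing is the paper's main technical content: a proof that the skeleton of register updates has a semi-linear Parikh image. The paper proceeds in four steps.
\begin{enumerate}
\item It abstracts the skeleton into altering paths and then altering sets, replacing each updating block by a $k$-set $B_i$. The constraints then become pairs $(a_i,B_i\cup\{a_{i+1}\})\in\Theta_{k+1}$, one family for each pair of states.
\item It partitions the words by control profiles, i.e.\ good controls of type left, right, full or unrestrained.
\item It proves $\Par{Q^\chi}=\Par{Q^\chi_{N_0}}+\left(\Par{R^\chi_{N_1}}\right)^\ast$ (\Cref{t: par qchi}). This needs an insertion argument: a bounded remainder can always be spliced in while preserving the good control, which is \Cref{l: insertion place}, generalising \Cref{l: unrest simple}. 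It also needs a shortening argument: a long word contains a removable bounded cycle (\Cref{l: shorten ss}) that avoids a bounded witness subset (\Cref{l: compact ss}), which is kept intact.
\item Only then does \Cref{l:subst} give $1+1=2$: the star-height-one segment images of \Cref{l:Lqap} are substituted into a star-height-one skeleton.
\end{enumerate}

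Conversely, the obstacle you single out, distinctness of fresh atoms within a block, is the easy part. After restricting to $\mathsf{PresEq}/\mathsf{PresDiff}/\mathsf{UpDiff}$, it is absorbed by a star-height-zero substitution $B\mapsto\bigcup_r I(r,B)$. Your claim that distinctness constraints are ``imposed only within a single block'' is also false for updating blocks: their atoms must avoid the register values of both adjacent segments. That coupling is exactly what the $\Theta_{k+1}$-pairs record and what the control-profile machinery has to handle.
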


For one-register context-free grammars,
commutative images are known to be rational
for binary grammars~\cite[Theorem 7]{HofmanJLP21}.
Using one-register block finite-memory automata,
we generalize this result to
one-register finite-memory context-free
grammars of arbitrary branching degree.

\begin{theorem}
\label{t: rat one cfg}
Commutative images of one-register
finite-memory context-free languages
are rational.
\end{theorem}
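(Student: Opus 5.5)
The plan is to reduce Theorem~\ref{t: rat one cfg} to Theorem~\ref{t: sh two block}, or at least to the rationality of Parikh images of one-register block finite-memory automata. The paper announces this route: ``using one-register block finite-memory automata''. The classical template is the Esparza/Ganty style proof of Parikh's theorem. A context-free grammar is Parikh-equivalent to the language of its bounded-index derivations, and bounded-index derivations can be simulated by a finite automaton whose states hold the current sentential form restricted to variables. Here the stored sentential form must also carry register contents. That can exceed one register, so the argument has to be arranged more carefully.

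\textbf{Step 1 (normalisation).} Transform $\bG$ into a normal form with the same Parikh image. Each production should emit its terminal atoms in a block, with the child nonterminals listed afterwards, and constants from $K$ should be handled as labels. Commutativity lets us reorder the right-hand side freely. The constraint $\varphi$ determines an equality type among the parent register $x$, the produced atoms $y_i$, and the child registers $y_{i,1}$. Since there is a single register, each child receives one atom. That atom is either $x$, one of the emitted atoms, a constant, or a fresh atom.

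\textbf{Step 2 (linearisation).} Use the key observation that in a one-register grammar, the subtrees spawned by a node interact with the rest of the derivation only through their own register atom. Subtrees spawned with the \emph{same} atom as the parent can be ``flattened'' in Parikh image. Replace a derivation tree by a traversal that processes one child subtree after another, each as a block transition of an automaton. The automaton's single register holds the current atom. Children with a different atom $b$ must also relate to atoms outside the subtree, but the only such relation is via equalities with $x$ and the block atoms. So the pending siblings can be stored in the finite control as pairs (variable, equality type relative to the current register), provided the tree is traversed in a Strahler/bounded-index order. The stack of pending items then has bounded height, at most the index (logarithmic in tree size per Esparza et al.), uniformly bounded after the standard bounded-index Parikh-equivalence argument.

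The main obstacle is exactly this point. Pending siblings may have register atoms different from the current one, and a one-register automaton cannot remember several atoms. What I would try first is to exploit that, within one production, all children with a fresh atom are unconstrained relative to everything except the parent's atoms and each other. Thus a pending child only needs to remember whether its atom equals \emph{one specific} earlier atom, namely the parent's $x$ or a block atom. The plan is to process first all children whose atom differs from $x$. Each such child is a ``fresh-rooted'' subtree and is Parikh-independent modulo the disequalities with $x$. Then the $x$-children are processed while $x$ stays in the register. Disequalities of distinct-subtree atoms that span subtrees are problematic. If Step 2 cannot avoid them, I would fall back to the approach of~\cite[Theorem 7]{HofmanJLP21}: binarise, and keep the arity-induced extra constraints (``these $m$ atoms are pairwise distinct'') inside block transitions of width $\le m$. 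Block automata compare a whole block at once, which is exactly what Example~\ref{e: l four} shows is needed.

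\textbf{Step 3.} Conclude that $\Par{\langof{\bG}}$ equals $\Par{\langof{\bA}}$ for a one-register block finite-memory automaton $\bA$, or a finite union over equality types of such images. Theorem~\ref{t: sh two block} then gives rationality; a finite union of rational sets is rational. We expect only rationality, not star-height two. The nested processing of fresh-rooted subtrees may require iterating the construction, with the star-height growing with nesting. That is consistent with the paper's claim that star-height is unbounded for grammars, so in that case the argument would proceed by induction on the nesting structure of variables rather than by a single automaton.
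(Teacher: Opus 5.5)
\textbf{Gap: Step 3 aims at something that cannot exist.} The target of your Step 3 is unattainable in general, so no construction along the lines of Step 2 can be completed, however the pending siblings are encoded. Suppose $\Par{\langof{\bG}}$ were the Parikh image of a single one-register block automaton, or a finite union of such images. Then \Cref{t: sh two block} would bound its star-height by two. But \Cref{t: high sh cfg} gives one-register grammars, already $\bG_{\mathrm{sh},3}$, whose images have star-height three and more.

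Your closing remark anticipates this, but the ``induction on the nesting structure of variables'' you defer to is the actual proof. The proposal does not say:
\begin{itemize}
\item what is inducted on;
\item what the automaton reads;
\item how the recursion is made well founded, given that subtrees hanging off a path may contain the same nonterminals again.
\end{itemize}
As you note yourself, the bounded-index simulation needs one stored atom per pending nonterminal, i.e.\ many registers. Binarising is precisely what fails for an $m$-ary $\mathsf{AllDiff}$ production with a single register. Block transitions replace binarisation; they do not repair it.

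\textbf{What the paper does instead.} The paper's argument never simulates a whole derivation with one automaton. It first normalises to $\mathsf{AllEq}$/$\mathsf{AllDiff}$ productions (\Cref{l: par res cfg}). It then follows the inductive procedure of \cite{HofmanJLP21}, changing only the automaton model: a one-register block automaton is used solely to model the side effects of traversing a derivation path. Under this natural reading of the paper's sketch, one production step is read as a single block:
\begin{itemize}
\item the block consists of the production's remaining children, as letters carrying a nonterminal and an atom, together with its terminals;
\item an $\mathsf{AllEq}$ step becomes a register-preserving block;
\item an $\mathsf{AllDiff}$ step becomes an updating block whose atoms are pairwise distinct and avoid both the old and the new register value.
\end{itemize}
All constraints of a wide production are therefore checked inside one transition, and sibling atoms never need to be remembered later.

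The Parikh images of the hanging subtrees are not traversed at all. They come from the induction hypothesis and are plugged in by substitution (\Cref{l:subst}), using the rationality of block-automaton images (\Cref{l: two_sh}) at each level. Star-height accumulates over the levels of this induction. That is exactly why the argument yields rationality but no uniform bound, which is also the phenomenon your single-automaton plan cannot accommodate.
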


However, we show that, unlike in
the automata case, there is no
universal bound on their star-height.
Specifically, we demonstrate
the existence of grammars with arbitrarily
high star-height.

\begin{theorem}
\label{t: high sh cfg}
For every $n\in\N$,
there is a one-register context-free grammar
generating a language whose commutative
image has star-height $n$.
\end{theorem}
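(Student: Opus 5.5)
We need two things for each $n\in\N$: a one-register grammar $\bG_n$ whose Parikh image has a rational expression of star-height at most $n$, and a proof that no expression of smaller star-height defines it. The natural family iterates the construction of \cite[Lemma 10]{HofmanJLP21}, where a single nesting of "fresh atom, repeated freely" already forces star-height two.

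\textbf{The family.} Take one-register variables $S_1,\dots,S_n$ over $\Sigma=\atoms$ (or with a few labels for bookkeeping), defined by
\[
S_1(x)\rightarrow \varepsilon \mid a(y)\,S_1(x)\quad\text{with } y\neq x ,
\]
\[
S_{k}(x)\rightarrow \varepsilon \mid a(y)\,S_{k-1}(y)\,S_{k}(x)\quad\text{with } y\neq x .
\]
The register holds the atom "forbidden" at the current level. Each unfolding of $S_k$ emits a new atom $y$ and opens a level-$(k-1)$ subtree that forbids $y$.

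Since register values pass to children, the Parikh image satisfies $X_k(x)=\bigl(\bigcup_{y\neq x} y+X_{k-1}(y)\bigr)^\ast$, which gives an expression of star-height $k$ by induction. The upper bound is therefore immediate; if needed I would also cite \Cref{t: rat one cfg} for rationality. Branching degree $3$ is essential here: the nested, non-tail recursion is precisely what a block automaton cannot simulate, which is consistent with \Cref{t: sh two block}. If the bare atom family turns out to collapse, I would add labels so that the level structure is recoverable from multiplicities, for instance by emitting $\letter{h_k,y}$ at level $k$.

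\textbf{The lower bound.} This is the main obstacle. The plan is to prove by induction on $k$ a normal form for sets of star-height $\le k-1$: every such set is a finite union of sets $v+\sum_i P_i^\ast$ where, roughly, each generator orbit $P_i$ has star-height $\le k-2$ and bounded "dependency depth". The invariant to extract is a combinatorial parameter of data vectors, the depth of nested constraints "atom $b$ is used with unbounded multiplicity only within a region avoiding atom $a$". Following the technique of \cite[Lemma 10]{HofmanJLP21}, I would show:
\begin{enumerate}
\item rational expressions of star-height $h$ can only realise sets in which this nesting depth is bounded by $h$ on a suitable pumping family of vectors;
\item $X_n$ contains vectors realising depth $n$ with arbitrarily large multiplicities.
\end{enumerate}
The concrete tool is a pumping lemma. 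If $X_n$ had star-height $n-1$, take a vector with a chain of atoms $a_1,\dots,a_n$ and huge multiplicities. A star of depth $\le n-1$ generating it must use generators with orbit-finite support (\Cref{l: b-o-f}). A counting argument then shows that some generator contains two levels "merged". Replacing its atoms by a permutation (by invariance under $\Gr$ of equivariant expressions) then produces a vector outside $X_n$, for example one in which an atom appears in a subtree that forbids it. This is a contradiction.

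The difficulty lies entirely in formulating the invariant precisely enough to survive orbit-finite unions and Minkowski sums. I would first try defining it via the ideal/wqo structure of the set under the ordering of data vectors, and proving it is subadditive under $+$ and increases by at most one under $^\ast$.
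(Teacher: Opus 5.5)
There is a genuine gap, and it starts with your choice of family: its Parikh image does not have star-height $n$. Write $X_k(x)$ for the image of $S_k(x)$ in your bare-atom grammar. Then $X_1(x)=(\atoms\setminus\set{x})^\ast$, and
\[
X_k(x)=\set{\zerovector}\cup\bigcup_{y\neq x}\bigl(y+\atoms^\ast\bigr)\qquad\text{for every }k\geq 2 .
\]
The inclusion $\subseteq$ is clear, because every nonzero element contains a head $y\neq x$. For $\supseteq$ at $k=2$, let $v(y)=p\geq 1$ with $y\neq x$.
If $\dom{v}\subseteq\set{x,y}$, then $v$ is a sum of $p$ generators with head $y$, one of which carries all copies of $x$.
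Otherwise, pick $z\in\dom{v}\setminus\set{x,y}$ with $v(z)=q$, and write $v=\bigl(z+(p-1)y+u\bigr)+\bigl(y+(q-1)z\bigr)$, where $u$ is the rest of $v$; both summands are generators.
For $k>2$, argue by induction using $X_{k-1}(y)\supseteq(\atoms\setminus\set{y})^\ast$.

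So the image has star-height $1$ for every $n\geq 1$. The reason is structural. The only link between a level and its parent is the disequality $y\neq x$, and a disequality is absorbed as soon as two generators with distinct heads are available. In particular, the contradiction you aim for, a permuted vector ``in which an atom appears in a subtree that forbids it'', is not visible in the Parikh image at all.

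Adding labels $\letter{h_k,y}$ does not cure this. For example, the labelled $X_2(x)$ consists of $\zerovector$ together with the vectors whose $h_2$-part is a nonzero vector over $\atoms\setminus\set{x}$ and which either have two distinct $h_2$-atoms, or have a single $h_2$-atom $y$ and no occurrence of $\letter{h_1,y}$. This is again semilinear. (Incidentally, branching degree $3$ is not essential: the paper's grammar is binary.)

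What is missing is a \emph{positive} coupling between levels. In the paper's $\bG_{\mathrm{sh},3}$, the register atom $b$ of a level is emitted once as $\letter{u_1,b}$, and once as $\letter{d_1,b}$ in each iteration of the inner star. Each iteration spawns a fresh $c\neq b$, which is treated the same way one level down with $u_2,d_2$. Thus a single marker occurrence is tied to unboundedly many occurrences of $\letter{d_1,b}$, and ``$\letter{u_1,b}\in\dom{v}$ iff $\letter{d_1,b}\in\dom{v}$'' holds throughout the image.

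With such a family, the lower bound needs no invariant that is subadditive under $+$ and grows by one under $^\ast$; your proposal leaves that invariant unformulated in any case. The paper's argument runs as follows.
\begin{enumerate}
\item Pass to a linear form (\Cref{p: lin form}) and take the bounds $N,M$ of \Cref{p: const pars}.
\item Pick a hierarchical witness with branching $2N+2M+1$ at each level.
\item An atom $b$ outside the root's support has its single $\letter{u_1,b}$ in one child.
\item The permutation operation of \Cref{p: tree op} forces all copies of $\letter{d_1,b}$ into that same child.
\item Duplication forces the $d_1$/$u_2$ balance inside that child.
\item Repeating the argument one level down forces a leaf to contain more than $N$ copies of some $\letter{d_2,c}$, a contradiction.
\end{enumerate}
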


\subsection{Organization of the paper}

The rest of the paper
is organized into two parts.

\paragraph*{Universal upper bound}
The proofs of Theorems
\protect{\ref{t: sh two},~\ref{t: sh two block}, and~\ref{t: rat one cfg}}
are composed of several reductions
and simplifications.
First, in Section~\ref{a: restricted}, we
introduce restricted variants of the finite-memory models
that are easier to analyze.
Nevertheless, we show that the commutative expressive
power of these restricted models is preserved.
Next, in Section~\ref{a: ap as}, we identify a
family of canonical languages,
called \emph{altering paths} and \emph{altering sets},
with the property that if they have Parikh images
of star-height one,
then the Parikh image of every one-register
language has star-height at most two.
Then, in Section~\ref{a: ur ap},
we introduce the notion of \emph{unrestrained sets}.
As a motivating result for this notion,
we show that the language of anti-paths
from~\cite{HofmanJLP21},
which is known to have a rational Parikh image,
actually has a semi-linear Parikh image.
Later, in Section~\ref{a: con as},
we introduce another notion called \emph{controlled sets}
that classifies sets based on their unrestrainedness.
Finally, in Section~\ref{a: as sl},
we use controlled sets
to show that the language of altering sets
also has a semi-linear Parikh image.
Thereby, we complete the proof of Theorems
\ref{t: sh two} and ~\ref{t: sh two block}.

The proof of~\Cref{t: rat one cfg}
follows the same procedure as in~\cite{HofmanJLP21},
where the only difference is the
use of one-register block finite-memory
automata to model the side-effects of derivation traversals.
The key additional input is that one-register
block automata have rational Parikh images,
which yield rationality of Parikh images
for all
one-register context-free grammars.

\paragraph*{Separations and lower bounds.}
We introduce the notion of \emph{linear forms}
in~\Cref{s: lin form},
which serves as our main technical tool.
As a preliminary use of this tool,
in Section~\ref{s: cfg high sh},
we show how to obtain lower bounds
for one-register context-free grammars
and also for hierarchical register automata.
Then, in~\Cref{s: par fma neq rat},
we introduce the novel notion of \emph{commutative stability},
which leads to the proof of~\Cref{t: par fma neq rat}.
Finally, in Section~\ref{a: par cfg neq par fma},
we extend commutative stability
for trees to prove~\Cref{t: par cf neq par fma}.

\section{Parikh-equivalent restricted models}
\label{a: restricted}

In this section,
we restrict our attention
to automata with restricted transitions,
that will be useful for the analysis
of the star-height of Parikh images.

\subsection{Restricted block automata}
\label{a: rbfma}

\begin{definition}
An \emph{orbit-defining constraint} of length $p$
is a constraint in $p$ variables $y_1,y_2,\ldots,y_p$,
whose set of satisfying assignments forms exactly one orbit.
\end{definition}

Every constraint can be expressed as a disjunction
of finitely many orbit-defining formulas.
Each orbit-defining formula partitions the variables
into equivalence classes.
For example, the partition $\{\{y_1,y_2\},\{y_3\}\}$ corresponds
to the orbit-defining formula $y_1=y_2\neq y_3$.
There are finitely many possible partitions
of equivalence classes
(at most the $p$-th Bell number).
Let $T_p$ denote a set of representative formulas,
one for each orbit-defining formula.
We write $ODF(X)$ for an orbit-defining formula over variables in $X$.

We now restrict our attention to constraints
in which the block atoms are either equal to the
register value or are all distinct from it.

\begin{definition}
A constraint $\varphi(x,y_1,y_2,\ldots,y_p,x^\prime)$
is \emph{restricted}
if it is equivalent to one of the following forms:
\begin{itemize}
\item[]$\mathsf{PresEq}$ (\emph{Preserve Equal}):
\[
x=y_1=y_2=\cdots=y_p=x^\prime\, .
\]
\item[]$\mathsf{PresDiff}$ (\emph{Preserve Different}):
\[
x=x^\prime\wedge \set{y_1,y_2,\ldots,y_p}\cap\set{x,x^\prime}=
\emptyset\wedge ODF(y_1,y_2,\ldots,y_p)\, .
\]
\item[]$\mathsf{UpDiff}$ (\emph{Update Different}):
\[
x\neq x^\prime\wedge \set{y_1,y_2,\ldots,y_p}\cap\set{x,x^\prime}=
\emptyset\wedge ODF(y_1,y_2,\ldots,y_p)\, .
\]
\end{itemize}
The type of a restricted constraint is one of
$\set{\mathsf{PresEq,PresDiff,UpDiff}}$.
\end{definition}

Constraints of type $\mathsf{PresEq}$ and
$\mathsf{PresDiff}$ are called
\emph{register-preserving} constraints,
while constraints of type $\mathsf{UpDiff}$
are called \emph{register-updating} constraints.
Accordingly, a transition is said to be preserving or updating
based on the type of its constraint.

\begin{definition}
A one-register block finite-memory automaton
is in \emph{restricted form} if the constraints
appearing in its transition rules are restricted.
\end{definition}

\begin{example}
Consider the transition
$s\trans{h_1h_2h_3,\varphi}s^\prime$
with
\[
\varphi:\, (x=x^\prime=y_3)\land x\notin\set{y_1,y_2}\, .
\]
This constraint is not restricted: for atoms $a,b,c$
with $a,b\neq c$,
the assignment
$x=x^\prime=y_3=c$,
$y_1=a$,
and $y_2=b$
satisfies it,
but the block atoms are $[y_1y_2y_3]=\set{a,b,c}$;
they contain both the register value and distinct atoms.

This transition can be simulated
using two consecutive restricted transitions,
\[
s\trans{h_3,x=y_1=x^\prime}s_1
\trans{h_1h_2,\neq(x,y_1,y_2,x^\prime)}s^\prime\, .
\]
The corresponding run 
over the input $\letter{h_1,a}\letter{h_2,b}\letter{h_3,c}$
is now over the
Parikh-equivalent
input $\letter{h_3,c}\letter{h_1,a}\letter{h_2,b}$:
\[
\conf{s}{c}\trans{\letter{h_1,a}\letter{h_2,b}\letter{h_3,c}}
\conf{s^\prime}{c}
\quad \mapsto \quad
\conf{s}{c}\trans{\letter{h_3,c}}
\conf{s_1}{c}\trans{\letter{h_1,a}\letter{h_2,b}}
\conf{s^\prime}{c}
\, .
\]
\end{example}

\begin{example}
Consider the transition $s\trans{h_1h_2h_3,\varphi}s^\prime$
with
\[
\varphi:\, (x=y_1)\land\neq(y_1,y_2,y_3)\land (y_2=x^\prime)\, .
\]
This transition is not restricted: for distinct atoms $a,b,c$,
the assignment
$x=a$,
$y_1=a,y_2=b,y_3=c$, and
$x^\prime=b$
satisfies it,
but the block atoms are $[y_1y_2y_3]=\{a,b,c\}$;
they contain both the pre-value, the post-value, and a third
distinct atom.

However, this transition can be simulated
using three consecutive restricted transitions,
\[
s\trans{h_1,x=y_1=x^\prime}s_1
\trans{h_3,\neq(x,y_1,x^\prime)}s_2
\trans{h_2,x=y_1=x^\prime}s^\prime\, .
\]
The corresponding run 
over the input $\letter{h_1,a}\letter{h_2,b}\letter{h_3,c}$
is now over the
Parikh-equivalent
input $\letter{h_1,a}\letter{h_3,c}\letter{h_2,b}$:

\[
\conf{s}{a}\trans{\letter{h_1,a}\letter{h_2,b}\letter{h_3,c}}
\conf{s^\prime}{b}
\quad \mapsto \quad
\conf{s}{a}\trans{\letter{h_1,a}}
\conf{s_1}{a}\trans{\letter{h_3,c}}
\conf{s_2}{b}\trans{\letter{h_2,b}}
\conf{s^\prime}{b}
\, .
\]

\end{example}

\begin{lemma}
\label{l: par res}
One-register block finite-memory automata
are Parikh-equivalent to
restricted block finite-memory automata.
\end{lemma}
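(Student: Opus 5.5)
We will prove the lemma by a transition-by-transition construction. Given a one-register block finite-memory automaton $\bA$, we build a restricted automaton $\bA'$ with the same states, plus fresh intermediate states for each original transition. The goal is that, for all configurations $c,c^\prime$ of $\bA$, the Parikh images $\Par{\langofconf{c}{c^\prime}{\bA}}$ and $\Par{\langofconf{c}{c^\prime}{\bA'}}$ coincide. Taking the union~\eqref{eq: langof as langofconf} then gives the lemma. Because we only need Parikh equivalence, we may permute the letters inside a block freely, exactly as in the two examples preceding the lemma.

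\textbf{Step 1: reduce to orbit-defining constraints.} Each constraint $\varphi(x,y_1,\ldots,y_p,x^\prime)$ is rewritten as a finite disjunction of orbit-defining formulas over all $p+2$ variables, together with the constants in $D$. We then split the rule $s\trans{h_1\cdots h_p,\varphi}s^\prime$ into one rule per disjunct. This does not change the language.

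\textbf{Step 2: partition the block positions.} Fix one such orbit-defining formula $\psi$. It determines which positions $i$ satisfy $y_i=x$ (the set $I_x$), which satisfy $y_i=x^\prime$ (the set $I_{x^\prime}$), and the remaining positions $I_0$. It also determines whether $x=x^\prime$. If $x=x^\prime$, then $I_x=I_{x^\prime}$. We replace the rule by a chain of restricted transitions through fresh states, depending on this case:
\begin{itemize}
\item If $x=x^\prime$: first a $\mathsf{PresEq}$ transition reading the labels $(h_i)_{i\in I_x}$, then a $\mathsf{PresDiff}$ transition reading $(h_i)_{i\in I_0}$. The latter's $ODF$ is the restriction of $\psi$ to the variables $y_i$, $i\in I_0$.
\item If $x\neq x^\prime$: first a $\mathsf{PresEq}$ transition on $I_x$, then an $\mathsf{UpDiff}$ transition on $I_0$ that moves the register from $x$ to $x^\prime$, then a $\mathsf{PresEq}$ transition on $I_{x^\prime}$.
\end{itemize}
Empty blocks are allowed ($p=0$), so empty index sets cause no trouble. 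Each new block has length at most $k$, so $\bA'$ is again a $k$-block automaton.

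\textbf{Step 3: handle the constants in $D$.} Constraints of $\bA$ may mention the constants $D$, but restricted forms as stated do not. If the restricted forms are meant to speak only about $x$ and the $y$'s, we have two options:
\begin{itemize}
\item Allow constant atoms inside the $ODF$ parts. These are just specific atoms fixed by $\psi$, which orbit-defining formulas over $D$ can state.
\item Emit letters equal to a constant $d$ through a separate gadget.
\end{itemize}
I would adopt the first option and check that the restricted definitions tolerate $D$-constants within $ODF$.

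\textbf{Step 4: prove the correspondence.} We show both inclusions of Parikh images.
\begin{itemize}
\item \emph{Soundness.} Suppose a transition $\conf{s}{a}\trans{\btau}\conf{s^\prime}{a^\prime}$ of $\bA$ is induced by $\psi$. Then the chain in $\bA'$ reads a permutation $\btau^\prime$ of $\btau$, passing through register values $a,a,\ldots,a^\prime$. All constraints are satisfied because they are restrictions of $\psi$.
\item \emph{Completeness.} Conversely, any run of the chain from $\conf{s}{a}$ to $\conf{s^\prime}{a^\prime}$ reads atoms that can be reassembled into a block satisfying $\psi$.
\end{itemize}
By induction on run length, every run of $\bA$ maps to a run of $\bA'$ on a Parikh-equivalent word, and conversely.

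The main obstacle is the completeness direction in the $\mathsf{UpDiff}$ case. The three separate restricted transitions impose only local constraints, so we must verify that their conjunction already implies the global $\psi$. For the middle block this means the $y_i$, $i\in I_0$, avoid both $x$ and $x^\prime$; the $\mathsf{UpDiff}$ form enforces exactly this, and $x\neq x^\prime$ is also enforced there. The equalities on $I_x$ and $I_{x^\prime}$ are forced by the $\mathsf{PresEq}$ steps. Since the chain's intermediate states are fresh and used only by this rule, no unintended run can enter or leave midway. Together these ensure the chain realizes exactly the orbit defined by $\psi$.
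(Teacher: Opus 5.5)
Your core construction is the paper's proof. You split each constraint into orbit-defining disjuncts, separate $x=x^\prime$ from $x\neq x^\prime$, and replace each disjunct by a $\mathsf{PresEq}$--$\mathsf{PresDiff}$ chain or a $\mathsf{PresEq}$--$\mathsf{UpDiff}$--$\mathsf{PresEq}$ chain through fresh states. Your completeness check (fresh intermediate states, with the local constraints jointly recovering the equality type of $\psi$) is more explicit than the paper's, and it is correct when no constants are involved. That is tacitly the paper's setting: its orbit-defining formulas are pure equality types, and its proof never mentions $D$.

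Your Step~3, however, does not close the gap it raises. A $D$-orbit $\psi$ also fixes, for each $d\in D$, whether $x=d$ and whether $x^\prime=d$. No restricted form can express this, even with constants allowed inside the $ODF$ of the block variables. For a counterexample, take $D=\set{d}$, initial value $e\neq d$, and rules $s_0\trans{h,\,y_1=x^\prime\neq d}s_1\trans{h,\,x=y_1=x^\prime}f$, so that $\Par{\langof{\bA}}=\setof{2\letter{h,b}}{b\neq d}$. For the disjunct $x\neq x^\prime=y_1$, your chain reads nothing in its first two steps and then reads $h$ in a $\mathsf{PresEq}$ step. Its $\mathsf{UpDiff}$ step may move the register from $e$ to $d$, so $\bA^\prime$ accepts $\letter{h,d}\letter{h,d}$ and the Parikh images differ. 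Thus your claim that the chain ``realizes exactly the orbit defined by $\psi$'' fails once $D\neq\emptyset$. To repair this, you would have to let restricted constraints also record the $D$-type of the register values: the type of $x$ on the first step, and that of $x^\prime$ on the $\mathsf{UpDiff}$ step. You would then need to check that the later sections tolerate this enlargement. Otherwise, you should state, as the paper implicitly does, that you assume $D=\emptyset$.
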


\begin{proof}
Each constraint
appearing in a transition rule
can be expressed as a disjunction
of finitely many orbit-defining formulas.
It suffices to replace a single orbit-defining formula $\psi$
in a transition $s\trans{h_1h_2\cdots h_p,\varphi}s^\prime\in \mu$.

Split $\psi$ as follows,
\[
\psi_{\mathrm{pres}}=\psi\wedge (x=x^\prime)
\qquad\mbox{and}\qquad \psi_{\mathrm{up}}=\psi\wedge(x\neq x^\prime)
\, ,
\]
and handle each formula separately.

In a preserving constraint $\psi_{\mathrm{pres}}$,
some variables form the equivalence class of $x$ (including $x^\prime$)
and the rest are all different from $x$.
Let $C_x$ be the set of block variables ($y_1,y_2,\ldots,y_p$)
in the equivalence class of $x$, and let $C_{\mathrm{rest}}$
be the remaining variables.
Let $\psi_1$ be a formula of type $\mathsf{PresEq}$
with a length of $\card{C_x}$.
Let $\psi_2$ be a formula of type $\mathsf{PresDiff}$
with a length of $\card{C_{\mathrm{rest}}}$
using the same internal equivalence structure.
Replace the original transition rule with the
following transition rules,
\[
s\trans{H(C_x),\psi_1}s_1
\trans{H(C_{\mathrm{rest}}),\psi_2}s^\prime\, .
\]
where $H(C)$ lists the labels corresponding to variables in $C$.

In an updating constraint $\psi_{\mathrm{up}}$,
some variables form the equivalence class of $x$,
some form the equivalence class of $x^\prime$,
and the rest are all different from $x$ and $x^\prime$.
Let $C_x,C_{x^\prime}$ be the set of block variables ($y_1,y_2,\ldots,y_p$)
in the equivalence class of $x$, $x^\prime$ (accordingly),
and let $C_{\mathrm{rest}}$
be the remaining variables.
Let $\psi_1$ be a formula of type $\mathsf{PresEq}$
with a length of $\card{C_x}$.
Let $\psi_2$ be a formula of type $\mathsf{UpDiff}$
with a length of $\card{C_{\mathrm{rest}}}$
using the same internal equivalence structure.
Let $\psi_3$ be a formula of type $\mathsf{PresEq}$
of length $\card{C_{x^\prime}}$.
Replace the original transition rule with the
following transition rules,
\[
s\trans{H(C_x),\psi_1}s_1
\trans{H(C_{\mathrm{rest}}),\psi_2}s_2
\trans{H(C_{x^\prime}),\psi_3}s^\prime\, .
\]

In both cases, the new transitions are restricted
and preserve the Parikh image.
\end{proof}

\begin{remark}
Restricted block finite-memory automata form a strict
subclass of one-register block finite-memory automata,
in terms of expressive power.
For instance, the language $\set{abab:a,b\in\atoms}$
cannot be recognized by restricted block finite-memory automata.
\end{remark}

From this point, one may follow the proofs
in~\cite{HofmanJLP21}, from altering
paths to altering loops, to anti-paths,
within the block automaton framework.
However, altering loops may increase the star-height
of the resulting rational expression.
To avoid this, we introduce a new automata-independent
language called \emph{altering sets}.

We then show that the star-height
of the Parikh image of restricted block automata
is at most one more than the star-height
of the corresponding altering sets.
Finally, we show that altering sets (and anti-paths)
have semi-linear Parikh images to complete
the proof of \Cref{t: sh two block}.

\subsection{Restricted context-free grammars}
\label{a: rest cfg}

For one-register context-free grammars,
restricted transitions either
preserve the register for all non-terminals,
or updating the values of all new variables
to be distinct from each other.

\begin{definition}
A constraint $\varphi(x,y_1,y_2,\ldots,y_m)$
is \emph{restricted}
if it is equivalent to a formula below,
\begin{itemize}
\item[]$\mathsf{AllEq}$ (All Equal):
$x=y_1=y_2=\cdots=y_m$.
\item[]$\mathsf{AllDiff}$ (All Different):
$\neq(x,y_1,y_2,\ldots,y_m)$.
\end{itemize}
In each case, we say that $\varphi$ is a restricted constraint
of the associated type
$\in\set{\mathsf{AllEq,AllDiff}}$.
\end{definition}

\begin{definition}
A one-register context-free grammar
is in \textit{restricted form} if all constraints
appearing in its production rules are restricted.
\end{definition}

Similar to automata,
constraints of type $\mathsf{AllEq}$
are called register-preserving,
while constraints of type $\mathsf{AllDiff}$
are called register-updating.
Production rules inherit the same classification.

\begin{lemma}
\label{l: par res cfg}
One-register context-free grammars
are Parikh-equivalent to
restricted context-free grammars.
\end{lemma}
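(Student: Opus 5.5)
The proof will mirror that of \Cref{l: par res}. We rewrite a single production rule while preserving the Parikh image of every derivation tree. Every constraint is a finite disjunction of orbit-defining formulas. So we first replace each production $A(x)\trans{\varphi}B_1(Y_1)\cdots B_m(Y_m)$ by finitely many productions, one per orbit-defining disjunct $\psi$. It then suffices to simulate one production whose constraint $\psi$ fixes a complete equality type on $x$, the produced atoms $y_i$ (for $B_i\in H$), and the register variables $y_i$ of the children (for $B_i\in V$). Distinguished symbols from $K$ carry no variables and can be attached to any of the new productions.

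Given such a $\psi$, partition the right-hand-side positions by equivalence class. Let $C_x$ be the set of positions whose variable equals $x$. The remaining positions split into classes $E_1,\ldots,E_q$, each consisting of positions sharing a common atom different from $x$, with distinct classes carrying distinct atoms. Introduce fresh variables $A_\psi$ and $E^{(1)},\ldots,E^{(q)}$, and replace the production by the following rules:
\[
A(x)\rightarrow \bigl(\textstyle\prod_{i\in C_x} B_i(x)\bigr)\,A_\psi(x),
\qquad
A_\psi(x)\rightarrow E^{(1)}(z_1)\cdots E^{(q)}(z_q)\ \text{with}\ \neq(x,z_1,\ldots,z_q),
\qquad
E^{(j)}(z)\rightarrow \textstyle\prod_{i\in E_j} B_i(z).
\]
The first and third rules are of type $\mathsf{AllEq}$ (a terminal $B_i(x)$ just means its produced atom equals the register), and the second is of type $\mathsf{AllDiff}$. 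Each simulated derivation step rearranges the order of the right-hand side. This is harmless: the Parikh image of a derivation tree depends only on the multiset of its leaves, and the subtrees hanging off each child $B_i(y_i)$ are unaffected. Conversely, every derivation in the new grammar through $A_\psi$ and the $E^{(j)}$ corresponds to an application of the original production with the same multiset of leaves and child configurations. Hence the Parikh images coincide.

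The main point needing care is the $\mathsf{AllEq}$ form for rules with terminals. In the rule for $E^{(j)}$, the terminal atoms and the children's registers are all equal to $z$, so the constraint is $z=y_{i_1}=\cdots$, which is exactly $\mathsf{AllEq}$. The same holds for $C_x$ in the first rule. A second point is the case $q=0$ or $C_x=\emptyset$, where we simply omit the corresponding part. A third is the case where the original production has $m=0$, which trivially gives a restricted $\mathsf{AllEq}$ rule with empty right-hand side. The register values of the auxiliary variables always lie in $\Reg$, since one register is trivially pairwise distinct. This completes the plan.
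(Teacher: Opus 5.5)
Your proposal is correct and matches the paper's proof: you split each production into its orbit-defining disjuncts, keep the class of $x$ under an $\mathsf{AllEq}$ rule, branch to the remaining classes with a single $\mathsf{AllDiff}$ rule, and expand each class with an $\mathsf{AllEq}$ rule. The differences are cosmetic: you merge the paper's $A\to D_xD_{\neg x}$ and $D_x\to B(C_x)$ into one $\mathsf{AllEq}$ rule, and you explicitly place the distinguished symbols from $K$, which the paper's $B(C)$ does not cover.
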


\begin{proof}
Every constraint
appearing in production rule in $P$
is equivalent to a disjunction
of finitely many orbit-defining formulas.
It therefore suffices to eliminate a single orbit-defining formula.
Fix such a formula $\psi$
and a production rule $p=A\trans{\psi}B_1B_2\cdots B_m\in P$.

The formula $\psi$
induces an equivalence relation on the variables
$x,y_1,y_2,\ldots,y_m$.
Let $C_x$ be the equivalence class of $x$,
and let $C_1,C_2,\ldots,C_r$ be the remaining classes.
For each equivalence class $C$,
let $B(C)$ denote the subsequence
of nonterminals and labels $B_i$ whose variables $y_i$
belongs to $C$.

We replace the production $p$
by a finite set of production
introducing fresh nonterminals
$D_x,D_{\neg x},E_1,E_2,\ldots,E_r$,
defined as follows,

\begin{eqnarray*}
A & \trans{\mathsf{AllEq}_2} & D_xD_{\neg x}, \\
D_x & \trans{\mathsf{AllEq}_{\card{C_x}}} & B(C_x), \\
D_{\neg x} & \trans{\mathsf{AllDiff}_{r}} & E_1E_2\cdots E_r, \\
E_i & \trans{\mathsf{AllEq}_{\card{C_i}}} & B(C_i) \qquad (i=1,2,\ldots,r).
\end{eqnarray*}

All newly introduced constraints are restricted.
The production of $D_x$
generates all the labels of variables in $C_x$.
The production of $D_{\neg x}$
introduces pairwise distinct atoms which are also
distinct from $x$, for each class $C_i$,
then each $E_i$ generates all
the labels of variables in $C_i$
for $i=1,2,\ldots,r$.
\end{proof}

From this point,
the techniques of~\cite{HofmanJLP21}
can be adapted by using block finite-memory
automata to model
the side-effects of traversing derivation paths
in context-free grammars of arbitrary branching degree.
The rest of the methods have
straightforward generalizations.

Bounding the star-height, however,
encounters a fundamental obstacle.
The family of languages $H_n$
induces an increase in star-height
as $n$ grows,
where the relevant parameter $n$
depends on the number of production rules in the grammar.
Consequently, there is no uniform upper
bound on star-height,
it necessarily grows with the size of the grammar.
In Section~\ref{s: cfg high sh},
we construct such grammars explicitly,
thereby establishing the unboundedness result.

\section{Altering paths and sets}
\label{a: ap as}

Fix $\bA$ be a restricted block finite-memory automaton
of block size $k$ for this and subsequent sections.

For technical reasons,
we prove a refined version of \Cref{t: sh two block} stated
in \Cref{l: two_sh} below, which,
due to \eqref{eq: langof as langofconf},
implies \Cref{t: sh two block}.

\begin{lemma}
\label{l: two_sh}
For every pair of configurations $c,c^\prime$,
the language $\langofconf c {c^\prime} \bA$
has a rational Parikh image
of star-height at most two.
\end{lemma}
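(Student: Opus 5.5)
We propose to prove Lemma~\ref{l: two_sh} by decomposing runs of the restricted automaton $\bA$ according to their register-updating transitions. We then reduce the problem to canonical languages, the altering paths and sets announced in the organization of the paper, whose Parikh images we will show to be semi-linear.

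\textbf{Step 1: the preserving part is semi-linear.} Fix a register value $a$. Runs consisting solely of register-preserving transitions ($\mathsf{PresEq}$ and $\mathsf{PresDiff}$) between $\conf{s}{a}$ and $\conf{s'}{a}$ behave like a finite automaton over an alphabet that is finite up to orbit-finite choices. A $\mathsf{PresEq}$ transition contributes a fixed multiple of letters $\letter{h,a}$. A $\mathsf{PresDiff}$ transition contributes a block whose atoms form one orbit avoiding $a$; these atoms may coincide with atoms used elsewhere in the run, since only $a$ is excluded. The Parikh image is therefore obtained by the classical finite-alphabet argument, treating each transition rule as one abstract letter and taking its semi-linear image. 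We then substitute for each $\mathsf{PresDiff}$ letter the orbit-finite union of its block vectors with atoms in $\atoms\setminus\set{a}$, and for each $\mathsf{PresEq}$ letter its unique vector. Since substituting orbit-finite sets into a star-height-one expression keeps star-height one, each preserving language $P_{s,s'}(a)$ has a semi-linear image, uniformly supported by $a$ and $D$.

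\textbf{Step 2: reduction to altering paths and sets.} A general run from $c=\conf{s}{a_0}$ to $c'=\conf{s'}{a_n}$ has the form
\[
P(a_0)\,u_1\,P(a_1)\,u_2\cdots u_n\,P(a_n),
\]
where each $u_i$ is an $\mathsf{UpDiff}$ step with $a_{i-1}\neq a_i$. The sequence of register values $a_0,a_1,\ldots,a_n$ may revisit atoms, but only the set of visited atoms matters commutatively. The contribution of each visit to an atom $b$ is a vector from $P(b)$, and each update contributes an orbit-finite block avoiding its two endpoints. We therefore intend to show that $\Par{\langofconf{c}{c'}\bA}$ is a finite union, over abstract skeleta (the sequences of states at update points), of sets of the form
\[
\bigcup_{\text{visited sets}} \Bigl(\text{altering-set vector} + \sum_{b} Q(b)^\ast\Bigr).
\]
Here $Q(b)$ collects the semi-linear loop contributions available at $b$, and the altering-set part records which atoms are visited, in which state classes, and the mandatory update blocks. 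The key observation is that whenever an atom is visited once, loops at that atom and in any reachable state class can be pumped. This is a Parikh-level, automaton-independent form of the altering-loop argument of~\cite{HofmanJLP21}. Because each $Q(b)$ has star-height one, once the altering-set part is semi-linear the outer combination adds at most one star level, giving star-height at most two. We will verify this bookkeeping carefully by guessing the set of state classes occurring at each atom.

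\textbf{Step 3 (main obstacle): altering sets are semi-linear.} The hardest part is showing that the canonical language of altering sets, and the anti-paths of~\cite{HofmanJLP21}, have star-height-one Parikh images. The difficulty is that the update blocks avoid both endpoints, yet atoms may be reused globally. Our first attempt would be to characterize the image via unrestrained and controlled sets. A vector is realizable iff it meets finitely many local constraints: each atom's value is bounded away from being forced by the endpoint exclusions, up to a bounded number of exceptional, "controlled" atoms. Realizability would then be witnessed by a bounded-size orbit-finite core plus an unrestricted star $\bigl(\bigcup_{\text{orbits}} \text{block}\bigr)^\ast$. Proving that large unrestrained vectors can always be rearranged into a valid alternating sequence, for instance by a greedy or Hall-type matching argument, is where we expect the real work to lie.
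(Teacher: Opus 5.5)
The genuine gap is Step 3. That step carries the whole content of the lemma, and you leave it as a plan rather than an argument. Your Step 1 agrees with the paper's lemma on register-preserving runs, and your decomposition of a run into preserving segments separated by $\mathsf{UpDiff}$ steps is the right one.

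\textbf{What Step 3 actually requires.} The paper does not get semi-linearity of altering sets from a characterization of realizable vectors followed by a Hall-type rearrangement. It uses a pumping-style decomposition relative to a \emph{control profile}. For each state pair $(t,s)$, the set $A^{t,s}_{\bsigma}=\set{(a_i,B_i\cup\set{a_{i+1}}) : t_i=t,\ s_{i+1}=s}\subseteq\Theta_{k+1}$ has a good control $(X,Y)$ with $\card{X}\le p$ and $\card{Y}\le\binom{p+1}{2}$. Its type is left, right, full, or unrestrained after reduction. Hence $Q$ is an orbit-finite union of classes $Q^\chi$, and for each class
$\Par{Q^\chi}=\Par{Q^\chi_{N_0}}+\bigl(\Par{R^\chi_{N_1}}\bigr)^\ast$,
where $R^\chi$ consists of short cycle-closing infixes compatible with $\chi$.
\begin{itemize}
\item The inclusion $\supseteq$ uses an insertion lemma generalizing the swap $(a,B),(c,D)\mapsto(a,D),(c,B)$ for unrestrained sets.
\item The inclusion $\subseteq$ marks a bounded witnessing subset of each $A^{t,s}_{\bsigma}$, together with the first and last letter. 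In a long unmarked infix it then uses pigeonhole on state pairs and the matching lemma ($2p+2$ elements of $\Theta_p$ contain a swappable pair). This excises a short compatible remainder while staying in $Q^\chi$.
\end{itemize}
Nothing in your sketch plays these roles; you only name ``unrestrained'' and ``controlled''.

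\textbf{Why an unrestricted star fails.} Take $\bsigma=(s_0,a,t)(t,B,s)(s,a',s')$ and the cycle $\brho=(t,E,s)(s,a,u)(u,G,u')(u',d,t)$, with $a\notin E$, $d\notin E\cup\set{a}$, $G\cap\set{a,d}=\emptyset$ and fresh states $u,u'$.
\begin{itemize}
\item $\bsigma$ contains no cycle, so any decomposition into a core plus a fixed star of short cycles must put $\Par{\bsigma}$ in the core.
\item Then $\Par{\bsigma}+\Par{\brho}$ would be generated.
\item But it is not the image of any altering set from $[s_0,a]$ to $[s',a']$. After $(s_0,a,t)(t,\cdot,s)$ the only continuations are $(s,a',s')$, which ends the word, or $(s,a,u)$, which repeats $a$.
\end{itemize}
So the admissible star must depend on the class. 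A Hall-type characterization might be a genuine alternative, but you would have to supply it, including the restrained cases and the fixed endpoints.

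\textbf{Step 2 also needs repair.} As written, $\bigcup_{\text{visited sets}}\bigl(\text{altering-set vector}+\sum_b Q(b)^\ast\bigr)$ is not a rational expression. The set of visited atoms is unbounded, so the union is not orbit-finite and the sum has no bounded length. Your star-height count works only if $Q(b)^\ast$ itself is semi-linear. If $Q(b)$ already has star-height one, as you write, then $Q(b)^\ast$ has star-height two, and substituting it into a semi-linear expression gives three.

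The paper needs none of the loop pumping or state-class guessing. $\langofconf{c}{c'}{\bA}$ is literally the substitution $(s,a,s')\mapsto L_{s,a,s'}$, $(\btau)\mapsto\set{\btau}$ into the altering-path language. The substitution lemma then gives star-height at most $1+1$ once altering paths are semi-linear. Altering paths in turn arise from altering sets by the star-height-zero substitution $(t,B,s)\mapsto\bigcup_r I(r,B)$. The paper deliberately bypasses the altering-loop reduction of Hofman et al., which you invoke, because that route can increase the star-height.
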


\subsection{Substitution}

We recall a useful technique called
\emph{substitution} from~\cite{HofmanJLP21}.

Let $L$ be a language over an orbit-finite alphabet $\Sigma$,
and let $K = (K_\sigma)_{\sigma\in\Sigma}$ be a family
of languages over an alphabet $\Gamma$,
indexed by $\Sigma$,
such that the mapping $\sigma\mapsto K_\sigma$
is finitely supported.

The \emph{substitution} $L(K)$
is the language over $\Gamma$ obtained
by replacing each letter $\sigma_i$ of a word
$\sigma_1\sigma_2\cdots\sigma_n\in L$
with a word from $K_{\sigma_i}$:
\[
L(K) \ = \ \bigcup_{\sigma_1 \sigma_2 \ldots \sigma_n \in L}
K_{\sigma_1} K_{\sigma_2} \ldots K_{\sigma_n}.
\]
\begin{lemma}[{\cite[Lemma 5]{HofmanJLP21}}]
\label{l:subst}
Suppose that $\Par{L}$ has star-height $k$,
and that for every $\sigma\in\Sigma$, $\Par{K_\sigma}$
has star-height at most $r$.
Then $ \Par{L(K)}$ has star-height at most $k+r$.
\end{lemma}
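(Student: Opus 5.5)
The plan is to show that $\Par{L(K)}$ depends only on $\Par{L}$ through an additive "substitution" operator on sets of data vectors. Then run an induction on a rational expression for $\Par{L}$ of star-height $k$, replacing every singleton $\sigma$ by an expression for $\Par{K_\sigma}$ of star-height at most $r$.

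\textbf{Step 1: the substitution operator.} For a data vector $v$ over $\Sigma$ put
\[
f(v)=\sum_{\sigma\in\dom{v}} v(\sigma)\cdot\Par{K_\sigma},
\]
where $n\cdot X$ denotes the $n$-fold Minkowski sum of $X$. For a set $X$ put $f(X)=\bigcup_{v\in X}f(v)$.

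Since letters of a word can be permuted without changing its Parikh vector, $\Par{K_{\sigma_1}\cdots K_{\sigma_n}}=f(\Par{\sigma_1\cdots\sigma_n})$. Hence $\Par{L(K)}=f(\Par{L})$.

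Directly from the definition we get $f(\zerovector)=\{\zerovector\}$, $f(\{\sigma\})=\Par{K_\sigma}$ and $f(v+w)=f(v)+f(w)$. From these the identities
\[
f(X+Y)=f(X)+f(Y),\qquad f(X^\ast)=f(X)^\ast,\qquad f\Bigl(\bigcup_{i\in I}X_i\Bigr)=\bigcup_{i\in I}f(X_i)
\]
follow. For the star, $f(x_1+\cdots+x_n)=f(x_1)+\cdots+f(x_n)$, and taking the union over all choices and all $n$ gives $f(X)^\ast$.

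Because $\sigma\mapsto K_\sigma$ is supported by some finite $S$, the operator $f$ commutes with $S$-permutations. Consequently, if $i\mapsto X_i$ is finitely supported, so is $i\mapsto f(X_i)$, and orbit-finite unions are mapped to orbit-finite unions.

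\textbf{Step 2: induction.} Fix an expression $E$ of star-height $k$ denoting $\Par{L}$. By structural induction on subexpressions $E'$, I will produce an expression for $f(\llbracket E'\rrbracket)$ of star-height at most $h(E')+r$, where $h$ is the star-height. The cases are as follows.
\begin{itemize}
\item For $\zerovector$ the result is $\zerovector$.
\item For a singleton $\sigma$, use the given expression for $\Par{K_\sigma}$, of height at most $r$.
\item Sums and orbit-finite unions do not increase the height, by the identities of Step 1.
\item A star adds exactly one to both sides, since $f(X^\ast)=f(X)^\ast$.
\end{itemize}
This yields height at most $k+r$ at the root.

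\textbf{Main obstacle: uniformity inside orbit-finite unions.} A singleton occurring under a binder $\bigcup_{i\in I}$ is really a parametrised letter $\sigma(i)$. We therefore need the chosen expressions $e_\sigma$ for $\Par{K_\sigma}$ to depend on $\sigma$ in a finitely supported way, so that the rewritten union is still a legitimate orbit-finite union.

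I would first try the following. Choose one representative $\sigma_0$ in each $S$-orbit of $\Sigma$; there are finitely many. Fix an expression $e_{\sigma_0}$ of height at most $r$, and define $e_{\pi(\sigma_0)}=\pi(e_{\sigma_0})$ for $\pi\in\Gr_S$.

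This is well defined at the level of denoted sets, because $\pi(\Par{K_{\sigma_0}})=\Par{K_{\pi(\sigma_0)}}$. It may fail at the level of syntax if $e_{\sigma_0}$ mentions atoms outside $S\cup\supp{\sigma_0}$. If so, I would argue semantically instead: replace the expression $e_{\sigma}$ by the orbit-finite union $\bigcup_{\pi}\pi(e_{\sigma_0})$ over the finitely many relevant $\pi$-images. This is an orbit-finite union of expressions all denoting the same set, so it does not raise the star-height. I expect this bookkeeping, rather than the algebra of Step 1, to be where the care is needed.
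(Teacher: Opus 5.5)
Your proof is correct. The paper gives no proof of this lemma and simply imports it from \cite{HofmanJLP21}; your argument (writing $\Par{L(K)}$ as the image of $\Par{L}$ under the additive, union- and star-preserving operator $f$, then substituting an expression for $\Par{K_\sigma}$ for each singleton in an expression for $\Par{L}$) is the standard proof of such a statement. One correction to your last paragraph: the symmetrized expression $e'_{\sigma_0}=\bigcup_{\pi\in\Gr_{S\cup\supp{\sigma_0}}}\pi(e_{\sigma_0})$ ranges over a single orbit of expressions that is infinite, not finite, exactly when $\supp{e_{\sigma_0}}\not\subseteq S\cup\supp{\sigma_0}$, which is the case where you need it; it is nevertheless orbit-finite, every member denotes $\Par{K_{\sigma_0}}$, and it is supported by $S\cup\supp{\sigma_0}$, so setting $e_{\pi(\sigma_0)}:=\pi(e'_{\sigma_0})$ is well defined and your induction goes through.
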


\subsection{Register-preserving transitions}

For states $s_0, s_1\in S$ of $\bA$ and an atom $a\in\atoms$,
let $L_{s_0,a, s_1}$ be the language of all words
read by a run from configuration
$\conf {s_0} a$ to 
$\conf {s_1} a$
that use register-preserving transitions only
(thus, the register stores $a$ along the entire run).
\begin{lemma}[{cf.~\cite[Lemma 16]{HofmanJLP21}}]
\label{l:Lqap}
The set $\Par{L_{s_0,a,s_1}}$ is a semi-linear set,
i.e., of star-height at most one.
\end{lemma}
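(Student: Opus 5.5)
Fix states $s_0,s_1$ and an atom $a$. The plan is to turn the register-preserving part of $\bA$ into a finite automaton over a finite alphabet. Then we take its classical Parikh image and substitute orbit-finite sets of data vectors for the letters. Since the register is preserved, only $\mathsf{PresEq}$ and $\mathsf{PresDiff}$ transitions can occur on such a run, and the register value stays $a$ throughout. So $L_{s_0,a,s_1}$ is accepted from $s_0$ to $s_1$ by the finite-state system $\bA_{\mathrm{pres}}$, whose states are $S$ and whose transitions are the register-preserving rules of $\bA$. Let $M$ be the finite set of these rules; treat each rule $t\in M$ as a letter, and let $R\subseteq M^\ast$ be the regular language of sequences of rules forming a path from $s_0$ to $s_1$ in this graph.

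Next, for each rule $t$ define the set $K_t$ of blocks it may read when the register holds $a$. If $t$ is of type $\mathsf{PresEq}$ with labels $h_1\cdots h_p$, then $K_t=\{\letter{h_1,a}\cdots\letter{h_p,a}\}$, a single word. If $t$ is of type $\mathsf{PresDiff}$, then $K_t$ is the set of words $\letter{h_1,b_1}\cdots\letter{h_p,b_p}$ whose atoms avoid $a$ and satisfy the orbit-defining formula. This set is a single $\{a\}\cup D$-supported orbit-finite set of words of length $p\le k$. The map $t\mapsto K_t$ is supported by $\{a\}\cup D$ (distinguished constants in the constraints only refine it into finitely many orbits). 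We then have $L_{s_0,a,s_1}=R(K)$ as a substitution.

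By Parikh's theorem over finite alphabets (or directly, since $R$ is regular), $\Par{R}$ is a semi-linear subset of $\N^M$, i.e.\ it has star-height at most one. Each $\Par{K_t}$ is an orbit-finite set of data vectors, hence definable as an orbit-finite union of finite sums of singletons, i.e.\ of star-height $0$. Lemma~\ref{l:subst} would then give star-height at most $1+0=1$, as required. Strictly, Lemma~\ref{l:subst} is phrased for an orbit-finite alphabet $\Sigma$; the finite alphabet $M$ is a special case.

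The only point needing care is justifying that $\Par{K_t}$ has star-height $0$. Its elements have size $p\le k$, so Lemma~\ref{l: b-o-f} gives orbit-finiteness. An orbit-finite set $X$ of data vectors is the orbit-finite union $\bigcup_{v\in X}\{v\}$, and each $\{v\}$ is a finite sum of singletons. Everything else is bookkeeping, e.g.\ checking that $\varepsilon$-rules ($p=0$) cause no issue, since for them $K_t=\{\varepsilon\}$.
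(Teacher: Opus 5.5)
Your proposal is correct and follows the paper's proof essentially step for step. You abstract the register-preserving rules into a finite-state automaton, take the semi-linear Parikh image of its regular path language, and substitute star-height-zero orbit-finite block sets via Lemma~\ref{l:subst}; the only cosmetic difference is that you use the rules themselves as letters, whereas the paper uses the coarser finite alphabet of symbols (type, label string, orbit-defining formula).
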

\begin{proof}
We only need to consider register-preserving transitions.
Let $K$ be a finite alphabet
consisting of the symbols $(\mathsf{PresEq},h_1h_2\cdots h_p)$ and
$(\mathsf{PresDiff},h_1h_2\cdots h_p,\psi_p)$,
where $\psi_p\in T_p$.
Consider a finite-state automaton
over the alphabet $K$
which has the same set of states $S$
where
every preserving transition rule 
$s\trans{h_1h_2\cdots h_p,\varphi}s^\prime$
is replaced with $s\trans{k}s^\prime$
where $k$ is composed of
the type of $\varphi$,
the block labels,
and the orbit-defining formula for the block variables.

Let $E_{s_0,s_1}$ be the classical regular expression over 
$K$
of all words over $K$ which has a run from $s_0$ to $s_1$.
In particular, $\Par{E_{s_0,s_1}}$ is a semi-linear set
(i.e., of star-height at most one).
Then, the language $L_{s_0,a,s_1}$ is
obtained from ${E_{s_0,s_1}}$
via the substitution,
\begin{align*}
&(\mathsf{PresEq},h_1h_2\cdots h_p) && \mapsto \quad 
\letter{h_1,a}\letter{h_2,a}\cdots \letter{h_p,a},\\
&(\mathsf{PresDiff},h_1h_2\cdots h_p,\psi_p) && \mapsto \bigcup_{\bar{b}\in I({\psi_p.\lnot a})}
\letter{h_1,b_1}\letter{h_2,b_2}\cdots \letter{h_p,b_p},
\end{align*}
where
$I(\psi_p,\lnot a)=\{(b_1,b_2,\ldots,b_p)
\models \psi_p : a\notin \{b_1,b_2,\ldots,b_p\} \}$,
is indeed an orbit-finite set because
it is the intersection
of two orbit-finite sets.

Since $\Par{E_{s_0,s_1}}$ has star-height at most one,
and the substitutions have star-height zero,
we obtain that $\Par{L_{s_0,a,s_1}}$ has
star-height one, i.e., semi-linear.
\end{proof}

\subsection{Altering paths}

Define the language $P$ over the alphabet
$(S\times\atoms\times S)\ \cup \ \Sigma^{\leq k}$  
containing words of the form ($n \geq 1$):
\begin{align} 
\begin{aligned} \label{eq:ap}
\tuple{s_1, a_1, t_1}
\tuple{\btau_1}
\tuple{s_2, a_2, t_2}
\tuple{\btau_2}
\tuple{s_3, a_3, t_3}
\ldots
\tuple{s_n, a_n, t_n} 
\end{aligned}
\end{align}
such that 
$\conf {t_i} {a_i}
\trans {\btau_i}
\conf {s_{i+1}}{a_{i+1}}$
is a register-updating transition for $i = 1, \ldots, {n-1}$
(in particular, $a_i \neq a_{i+1}$
and the atoms appearing in $\btau_i$
are neither $a_i$ nor $a_{i+1}$,
for $i=1,2,\ldots,{n-1}$).\footnote{Since it is a
restricted automaton which has restricted constraints.}
Words in $P$ are called \emph{altering paths}.
Furthermore, define the subsets
$P_{\conf s a \,
\conf {s^\prime} {a^\prime}} \subseteq P$
of those altering paths
as in~\eqref{eq:ap} where
$\conf s a = \conf {s_1} {a_1}$ and
$\conf {s^\prime}{a^\prime} = \conf {t_{n}}{a_n}$.

\begin{lemma}
\label{l:LF}
The Parikh image of an altering path language
$P_{\conf s a \,
\conf {s^\prime} {a^\prime}}$ is semi-linear.
\end{lemma}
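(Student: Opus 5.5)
The idea is to separate the finite-control part of an altering path from the atom constraints, and then show that the atom constraints are essentially a \emph{majority condition}, which a single level of Kleene star can express. For the first part, I would project every word \eqref{eq:ap} in $P_{\conf s a \, \conf {s^\prime} {a^\prime}}$ onto a word over a finite alphabet. Each triple $\tuple{s_i,a_i,t_i}$ maps to $(s_i,t_i)$, and each block $\tuple{\btau_i}$ maps to the pair (block labels, orbit-defining formula in $T_p$) of the updating transition used. The projections form a regular language $R$, recognized by a finite automaton that checks that consecutive letters are compatible with the transitions of $\bA$. By the argument of Lemma~\ref{l:Lqap}, $\Par{R}$ is semi-linear and is a finite union of sets $u_0 + \{u_1,\ldots,u_m\}^\ast$.

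For the second part, I would describe exactly which atom assignments are compatible with a fixed projected word. The constraints are purely local: $a_i \neq a_{i+1}$, and the atoms of $\btau_i$ avoid $a_i$ and $a_{i+1}$ and follow the orbit type of the block. Atoms are otherwise free and may repeat at non-adjacent positions. The central step is a rearrangement lemma. It should say that, up to the two fixed endpoint atoms $a,a^\prime$ and a bounded boundary correction, a data vector arises from such an assignment if and only if no single atom is ``too heavy''. Concretely, an atom $b$ may occupy at most about half of the triple positions, plus the block positions not adjacent to its occurrences.

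I would prove this greedily: interleave atoms in order of decreasing multiplicity, as in the classical criterion for arranging a multiset with no two equal neighbours. I would then check that, once the configuration is long enough, the blocks can absorb the remaining atoms. The regular structure of $R$ makes these rearrangements harmless, because permuting atoms among positions of the same orbit type leaves the projected word unchanged.

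Given the lemma, I would write the image as a finite union, ranging over a bounded number of \emph{dominant} atoms (at most one heavy atom plus $a,a^\prime$), of orbit-finite unions of star-height-one expressions. A typical term has the form
\[
\bigcup_{b\in\atoms} \Bigl( w_b + \bigl(\textstyle\bigcup_{\bar c} g_{b,\bar c}\bigr)^\ast \Bigr),
\]
where each generator $g_{b,\bar c}$ is the Parikh vector of a short pumpable altering segment. Such a segment either alternates $b$ with fresh atoms $\bar c\neq b$, which is the heavy-atom case, or uses only fresh atoms, which is the balanced case. The length of each segment is bounded by the cycle lengths of the automaton for $R$. Every generator set is orbit-finite by Lemma~\ref{l: b-o-f}, the stars are never nested, and orbit-finite unions do not raise star-height, so the resulting expression is semi-linear.

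I expect the main obstacle to be the rearrangement lemma in the presence of blocks, together with its converse. One has to show that every vector of the proposed form is realized. In the other direction, one has to show that an arbitrary long altering path decomposes into a bounded prefix plus segments whose Parikh vectors are among the generators, even though atoms repeat non-locally across the whole word. I would first try a pumping argument on the projected word combined with a swapping argument: exchanging two atoms that occur at non-adjacent positions preserves all constraints. The aim is to normalize any path so that non-dominant atoms appear in short disjoint segments, and to charge each segment to one generator.
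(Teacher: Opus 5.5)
There is a genuine gap. Your whole argument rests on the ``rearrangement lemma'', which you do not prove and which is false as you state it. The Parikh vector of an altering path does not record bare atoms. It records each triple atom together with its state pair $(s_i,t_i)$, and the projected word in $R$ dictates which state pairs are adjacent. Realizability therefore depends on how each atom is distributed over the state pairs, not on its total multiplicity. For the same reason the greedy ``interleave by decreasing multiplicity'' step does not apply, since positions are not interchangeable.

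Here is a concrete counterexample. Take four distinct states and let the only updating rules be $t^\alpha\to s^\beta$ and $t^\beta\to s^\alpha$, both with empty blocks. Consider paths whose triples use only the state pairs $\alpha=(s^\alpha,t^\alpha)$ and $\beta=(s^\beta,t^\beta)$ and only the atoms $b,c$. The state pairs must alternate, and consecutive triple atoms must differ. Hence $b$ and $c$ alternate, and each atom is confined to a single state-pair class. So
\[
2m\tuple{s^\alpha,b,t^\alpha}+2m\tuple{s^\beta,c,t^\beta}
\]
is realized, while
\[
m\tuple{s^\alpha,b,t^\alpha}+m\tuple{s^\beta,b,t^\beta}+m\tuple{s^\alpha,c,t^\alpha}+m\tuple{s^\beta,c,t^\beta}
\]
is not realized for any $m\geq 1$. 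Both vectors also carry the same $4m-1$ empty block letters, and both give each atom exactly half of the triple positions. A colour-blind majority criterion accepts the second vector. A per-class ``at most half'' criterion rejects the first.

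Blocks add a further layer. The obstruction to splicing in a segment is a restraint $(z,W)\in\Theta_{k+1}$, which can involve several atoms at once, for instance many blocks all containing the same $k$ atoms. This is why the paper's controls carry up to $p$ atoms in $X$ and $\binom{p+1}{2}$ in $Y$ per state pair, rather than a single dominant atom. Your proposed swap step also fails as stated. Moving an atom to a position not adjacent to its old one can place it next to another occurrence of itself, or next to a block that contains it.

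The paper never characterizes the image arithmetically. It first reduces altering paths to altering sets by a star-height-zero substitution, which replaces blocks by $k$-sets (\Cref{l:LF2}). Each step is then encoded as a pair $(a_i,B_i\cup\set{a_{i+1}})\in\Theta_{k+1}$, grouped by state pair. Next, $Q$ is partitioned by control profiles, i.e., a choice of one bounded good control per state pair (\Cref{l: good control}). For each profile the paper proves $\Par{Q^\chi}=\Par{Q^\chi_{N_0}}+(\Par{R^\chi_{N_1}})^\ast$ using two local moves:
\begin{itemize}
\item[(i)] Any compatible remainder can be spliced in at a place supplied by \Cref{l: insertion place}.
\item[(ii)] Any long word contains a short removable remainder that avoids a bounded set of marked witnesses (\Cref{l: compact ss}, \Cref{l: shorten ss}); the control type is preserved by \Cref{l: sandwich good} and \Cref{l: good cup}.
\end{itemize}
Restraint and controls are the missing idea. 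They replace your heaviness condition, and they are what makes the converse (shortening) direction, which you left open, go through.
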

Before proving this lemma we use it to complete
the proof of our main theorem (cf. {\cite[Lemma 17]{HofmanJLP21}}).

\begin{proof}[Proof of \Cref{l: two_sh}]
Indeed, $\langoffull \bA s a {s^\prime} {a^\prime}$
is obtained from the altering path language
$P_{\conf s a \, \conf {s^\prime} {a^\prime}}$
using the equivariant substitution:
\begin{align*}
&\tuple{s, a, s^\prime} &&
\mapsto \quad L_{s,a,s^\prime}\, ,\\
&\tuple{\btau} &&
\mapsto \quad \set{\btau}\, .
\end{align*}
These substitutions are of star-height at most one,
therefore, the result is of star-height
at most two.
\end{proof}

\subsection{Altering sets}

Define the language $Q$ over the alphabet
\[
\Gamma=(S\times\atoms\times S)\ \cup
\ (S\times\atoms^{(k)}\times S)\, ,
\]
containing words of the form ($n \geq 1$):
\begin{align} 
\begin{aligned} \label{eq:as}
\tuple{s_1, a_1, t_1}
\tuple{t_1,B_1,s_2}
\tuple{s_2, a_2, t_2}
\tuple{t_2,B_2,s_3}
\tuple{s_3, a_3, t_3}
\ldots
\tuple{s_n, a_n, t_n} 
\end{aligned}
\end{align}
such that 
$a_i \neq a_{i+1}$
and $B_i\cap \{a_i,a_{i+1}\}=\emptyset$,
for $i=1,2,\ldots,{n-1}$.
Words in $Q$ we call $k$-\emph{altering sets}
or simply altering sets, when $k$ is clear from context.
Similarly, define the subsets
$Q_{\conf s a \,
\conf {s^\prime} {a^\prime}} \subseteq Q$.

\begin{lemma}
\label{l:LF2}
The Parikh image of an altering sets language
$Q_{\conf s a \,
\conf {s^\prime} {a^\prime}}$ is semi-linear.
\end{lemma}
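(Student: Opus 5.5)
We need a plan for proving that the Parikh image of $Q_{\conf s a\,\conf{s'}{a'}}$ is semi-linear, i.e.\ expressible with orbit-finite unions, sums, and stars of orbit-finite sets only.

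\textbf{Step 1: describe the image by a local condition.} A Parikh vector of an altering set is a data vector $v$ over $\Gamma$ with two parts. The \emph{vertex part} is a multiset of triples $\tuple{s_i,a_i,t_i}$. The \emph{edge part} is a multiset of triples $\tuple{t_i,B_i,s_{i+1}}$. Since the word is a path, $v$ lies in the image exactly when the following hold:
\begin{enumerate}
\item The state components form an Eulerian-path condition from $s$ to $s'$ (Euler-type flow balance plus connectivity), in the spirit of the finite-alphabet Parikh construction.
\item There is an ordering of the vertices, starting with a vertex carrying atom $a$ and ending with one carrying $a'$, in which consecutive atoms differ.
\item Each edge set $B_i$ avoids the atoms of its two neighbouring vertices.
\end{enumerate}
Before the atoms are considered, the state part is a finite-alphabet regular condition, hence semi-linear. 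I would handle it by splitting over the finitely many state-graph patterns (which states, which transitions are used). Doing so makes the atom conditions combinatorial.

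\textbf{Step 2: split the atom condition by saturation.} The difficulty is arranging the sequence so that conditions 2 and 3 hold. The idea is a classification, which I would make precise with the unrestrained and controlled sets announced in the paper's roadmap.
\begin{itemize}
\item \emph{Unrestrained atoms.} If no single atom occupies more than roughly half of the vertex positions, and the edge sets are not concentrated on too few atoms, then an interleaving argument shows that any such multiset is realizable. The witness is a greedy arrangement alternating the most frequent atoms, placing each $B_i$ next to vertices whose atoms it avoids.
\item \emph{Controlled atoms.} Otherwise, a bounded number of atoms $c_1,\dots,c_m$ dominate, with $m$ bounded in terms of $k$ and $\card S$. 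For these atoms the exact constraint matters. For example, one atom $c$ sitting on more than half the vertices forces every other vertex to separate consecutive $c$'s, and forces the $B$'s adjacent to $c$ to avoid $c$.
\end{itemize}

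\textbf{Step 3: build the semi-linear expression.} For each of the finitely many controlled types I would write the set as an orbit-finite union over the tuple $(c_1,\dots,c_m)$ together with a bounded sample of witnesses: a base vector of bounded size, plus $P^\ast$. Here $P$ is an orbit-finite set of bounded-size ``periods'' preserving realizability. A typical period is a vertex at $c_1$ together with a vertex at some fresh $b$ and an edge set avoiding both, or a pair of vertices at fresh distinct atoms, and so on. All periods are of bounded size, so $P$ is orbit-finite by \Cref{l: b-o-f}. The unrestrained case is similar, with the base ensuring the endpoints $a,a'$ and enough distinct atoms, and with periods consisting of short alternating fragments over fresh atoms. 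Soundness of each piece is easy: splicing a period into a realizing sequence keeps it realizing. The result is a finite union of sets $\bigcup_{\bar c}(b_{\bar c}+P_{\bar c}^\ast)$, each of star-height one.

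\textbf{Main obstacle.} The hard part is completeness. Every realizable vector must decompose as base plus periods within the correct case, and the case boundaries must be expressible without stars (for instance, ``$c$ occurs more than half the time'' is not obviously semi-linear over infinitely many other atoms). The first attempt would be a pumping/decomposition lemma: any sufficiently long realizing sequence contains a removable bounded fragment, from one of the listed period shapes, whose removal preserves both realizability and the controlled type. Induction on length then lands in the bounded base sets. Proving this removable-fragment property, especially where the edge sets $B_i$ interact with nearly saturated atoms, is where I expect most of the work.
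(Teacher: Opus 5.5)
Your template (an orbit-finite union of sets $b+P^\ast$ with $P$ made of bounded cycles, soundness by splicing, completeness by pumping down) has the same shape as the paper's \Cref{t: par qchi}. However, the invariant that makes both directions work is missing, and the concrete choices you do make would not support it.

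Step~1 cannot separate the state condition from the atom condition. Which letters end up adjacent is decided by the same path that must satisfy the state condition, and fixing which transitions occur does not fix that path. For instance, if the state components force the only two occurrences of $\tuple{s_1,c,t_1}$ to be consecutive vertex letters, the vector is not realizable however rare $c$ is overall. So a global frequency criterion (``no atom on more than half the vertices'') cannot be the right classifier. It is also not stable under splicing or deleting fragments, and, as you observe, it is not semi-linear. The paper instead localizes to state pairs. The constraint across an edge letter $\tuple{t,B,s}$ involves exactly the preceding vertex ending in $t$ and the following one starting in $s$. So each word gets the sets $A^{t,s}_{\sbsigma}=\set{\tuple{a_i,B_i\cup\set{a_{i+1}}}:t_i=t,\ s_{i+1}=s}\subseteq\Theta_{k+1}$, and the periods (remainders) are closed walks $\tuple{t,E,s}\tuple{s,f,\cdot}\cdots\tuple{\cdot,d,t}$ rather than state-free fragments. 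Nor are semi-linear case boundaries needed. The classes $Q^\chi$ are sets of words, may overlap, and only have to cover $Q$ by an orbit-finite union, which \Cref{l: good control} provides.

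What you are missing is the per-state-pair invariant: a good control $(X,Y)$ of $A^{t,s}_{\sbsigma}$, which in the non-maximal case means $A^{t,s}_{\sbsigma}/(X,Y)$ is unrestrained. Three facts are built on it.
(i) Splicing is not automatic. The cycle word can be inserted after the $i$-th vertex letter (with $t_i=t$) only if $\tuple{a_i,B_i\cup\set{a_{i+1}}}$ is not restrained by its closing pair $\tuple{d,E\cup\set{f}}$. To iterate the star, the result must also stay in the same class. \Cref{l: insertion place}, a swap argument generalizing \Cref{l: unrest simple}, gives both, but only for cycles satisfying the insertion condition relative to $(X^{t,s},Y^{t,s})$.
(ii) The class is witnessed by a subset of bounded size (\Cref{l: compact ss}). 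Hence deleting a fragment that avoids the marked witness letters keeps the word in $Q^\chi$, by \Cref{l: sandwich good} and \Cref{l: good cup}.
(iii) In a long unmarked infix, some state pair occurs more than $(2p+1)(\card{Y}+1)$ times. \Cref{l: shorten ss} then yields two occurrences whose swap is legal and weakly controlled, i.e.\ a removable remainder of bounded length.

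Your ``removable-fragment property'' is exactly (ii) together with (iii), and you leave it open. Without it and (i), the proposal describes the form of the answer rather than proving it.
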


The proof of \Cref{l:LF2} is presented in the next sections.
In the remaining part of this section, we show how \Cref{l:LF2}
implies \Cref{l:LF}.

The idea is to use the atoms in $B$ as the domain
for the assignment of the variables.

Let $r=s\trans{h_1h_2\cdots h_p,\varphi}s^\prime\in\mu$
be a register-updating rule.
For a subset $B\subset \atoms$,
we define:
\[
C(r,B)=\set{\tuple{c_1,c_2,\ldots,c_p}\in\atoms^p:(c_1,c_2,\ldots,c_p)\models\varphi}\cap B^p\, ,
\]\
and
\[
I(r,B)=
\{
\letter{h_1,c_1}
\letter{h_2,c_2}
\cdots
\letter{h_p,c_p}:
(c_1,c_2,\ldots,c_p)\in C(r,B)
\}
\, .
\]

For $s,s^\prime\in S$, let $\mu_{s,s^\prime}$
be the set of all transition rules from $s$ to $s^\prime$.

\begin{proof}[Proof of \Cref{l:LF}]
Indeed, $P_{\conf s a \,
\conf {s^\prime} {a^\prime}}$
is obtained from the altering sets language
$Q_{\conf s a \, \conf {s^\prime} {a^\prime}}$
using the substitution:
\begin{align*}
&\tuple{s, a, s^\prime} &&
\mapsto \quad \tuple{s,a,s^\prime}\, ,\\
&\tuple{s,B,s^\prime} &&
\mapsto \quad
\bigcup_{r\in\mu_{s,s^\prime}}
I(r,B)\, .
\end{align*}
These substitutions are of star-height zero,
therefore, the result is of star-height
at most one.
\end{proof}

\section{Unrestrained sets and a motivating result}
\label{a: ur ap}

Fix a positive integer $p$
and define
\begin{equation}
\label{eq: def theta p}
\Theta_p=\set{(a,B):a\in\atoms,B\in\atoms^{(p)},a\notin B}
\, .
\end{equation}

\begin{definition}
\label{d: rest}
Let $A\subseteq \Theta_p$,
we say that $A$ is \textit{restrained} by $(z,W)\in\Theta_p$,
if for every $(a,B)\in A$, we have $a\in W$ or $z\in B$.
If $A$ is not restrained by any element of $\Theta_p$,
we call $A$ \textit{unrestrained}.
\end{definition}

\begin{remark}
\label{r: rest}
Every subset of a restrained set is also restrained
by the same pair, i.e., this property is downward-closed.
In particular, being unrestrained is upward-closed.
\end{remark}

\begin{example}
\label{e: rest sets}
Consider the following sets of $\Theta_2$,
for distinct atoms $a,b,\ldots,l$,
\begin{align*}
A_1&=\set{(a,\set{b,c}),(a,\set{d,e}),(a,\set{f,g})}\, ,\\
A_2&=\set{(a,\set{b,c}),(d,\set{e,f}),(g,\set{h,i}),(j,\set{k,l})}\, ,\\
A_3&=\set{(a,\set{b,c}),(d,\set{b,e}),(f,\set{b,g}),(h,\set{b,i})}
\, .
\end{align*}
$A_1$ is restrained by $(\cdot,\set{a,\cdot})$,
$A_2$ is unrestrained,
and $A_3$ is restrained by $(b,\set{\cdot,\cdot})$.
\end{example}

The following lemma shows that
unrestrained sets
are robust under a simple augmentation
operation.

\begin{lemma}
\label{l: unrest simple}
Let $A\subseteq\Theta_p$ be an unrestrained set,
and let $(c,D)\in\Theta_p$.
Then
there is an element $(a,B)\in A$
such that
\begin{itemize}
\item $a\notin D$, $c\notin B$, equivalently, $(a,D),(c,B)\in\Theta_p$.
\item The set $A^\prime\subseteq \Theta_p$ is also unrestrained,
where 
\begin{equation}
\label{eq: set post insert}
A^\prime=A\setminus\set{(a,B)}\cup\set{(a,D),(c,B)}\, .
\end{equation}
\item The set
$A^{\prime\prime}= A^\prime\cup\set{(a,B)}\subseteq \Theta_p$
is also unrestrained.
\end{itemize}
\end{lemma}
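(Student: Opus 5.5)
The first bullet comes directly from unrestrainedness, and the third follows from the second by upward closure (\Cref{r: rest}), since $A^{\prime\prime}\supseteq A^\prime$. So the real work is to pick the element $(a,B)$ carefully enough that $A^\prime$ stays unrestrained.

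The set of admissible choices is nonempty. Since $(c,D)\in\Theta_p$ does not restrain $A$, there is some $(a,B)\in A$ with $a\notin D$ and $c\notin B$. Call such elements \emph{candidates}, and let $G\subseteq A$ be the set of candidates. Every candidate satisfies the first bullet, and $G\neq\emptyset$.

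Next I would characterize when a candidate fails. Suppose $e=(a,B)\in G$ and $A^\prime$ is restrained by some $(z,W)$. Since $A$ itself is unrestrained and the elements of $A\setminus\set{e}$ lie in $A^\prime$, $(z,W)$ must fail exactly on $e$, so $a\notin W$ and $z\notin B$. Then the conditions $(a,D)$: ``$a\in W$ or $z\in D$'' and $(c,B)$: ``$c\in W$ or $z\in B$'' force
\[
z\in D\qquad\mbox{and}\qquad c\in W .
\]
Thus a failing candidate $e$ comes with a witness $(z_e,W_e)$ that satisfies three conditions: $z_e\in D$, $c\in W_e$, and $(z_e,W_e)$ restrains every element of $A$ except $e$. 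In particular, $e$ is the unique element of $A$ not restrained by $(z_e,W_e)$.

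Now assume for contradiction that every candidate fails, and build a restrainer of $A$. Take a failing candidate $e_1=(a_1,B_1)$ with witness $(z_1,W_1)$, and modify the witness by swapping $c$ for $a_1$:
\[
W^\ast=(W_1\setminus\set{c})\cup\set{a_1}.
\]
This is legitimate: $a_1\notin W_1$, and $z_1\in D$ while $a_1\notin D$, so $(z_1,W^\ast)\in\Theta_p$. The pair $(z_1,W^\ast)$ now restrains $e_1$, and it also restrains every element of $A$ that was restrained by $(z_1,W_1)$ through a reason other than ``first atom equals $c$''. What remains are the elements $(c,B^\prime)\in A$ with $z_1\notin B^\prime$. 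Since $(z_1,W^\ast)$ cannot restrain all of $A$, such elements exist. To handle them, I would run the same argument with the roles exchanged. The elements with first atom $c$ are never candidates, because $c\in D$ is impossible as $(c,D)\in\Theta_p$. I would then either choose a second failing candidate $e_2$ whose witness $z_2$ differs from $z_1$, or exploit the finiteness of $D$ (only $p$ possible values of $z_e$). The aim is a pigeonhole or exchange argument on the finitely many pairs $(z_e,W_e)$ that produces a single pair in $\Theta_p$ restraining all of $A$, contradicting unrestrainedness.

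The main obstacle is this final combinatorial step: reconciling the elements of the form $(c,B^\prime)$ with the swap of $c$ out of $W$. If the direct argument stalls, my fallback is to choose the candidate $e$ to be extremal, for instance one minimizing the number of elements of $A$ restrained by some fixed auxiliary pair. I would then argue that a witness $(z_e,W_e)$ for a failure would yield, after the swap, a candidate with a strictly better value.
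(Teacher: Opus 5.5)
\textbf{The final step is missing, and it stalls on a false claim.} Your setup is right. So is the characterization of a failing candidate: the witness $(z,W)$ has $z\in D$, $c\in W$, and restrains $A\setminus\{e\}$. The swap $W^\ast=(W_1\setminus\{c\})\cup\{a_1\}$ is exactly the right move. But you assert that elements of $A$ with first atom $c$ are never candidates, and the opposite is true. For $(c,B')\in A$ we have $c\notin D$ because $(c,D)\in\Theta_p$, and $c\notin B'$ because $(c,B')\in\Theta_p$. So $(c,B')$ is always a candidate. It also never fails: choosing $(a,B)=(c,B')$ gives $A'=A\setminus\{(c,B')\}\cup\{(c,D),(c,B')\}=A\cup\{(c,D)\}\supseteq A$, which is unrestrained by \Cref{r: rest}.

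\textbf{How to close the argument.} With this observation your contradiction finishes at once. No pigeonhole over the witnesses $(z_e,W_e)$ and no extremal choice is needed. If every candidate fails, then $A$ contains no element with first atom $c$. Hence the only elements $(z_1,W^\ast)$ could miss, namely the $(c,B')$ with $z_1\notin B'$, do not exist. So $(z_1,W^\ast)\in\Theta_p$ restrains all of $A$, contradicting unrestrainedness. The paper organizes the same argument as a case split. If some $(c,B')\in A$ exists, pick it. Otherwise pick any candidate, and the swap argument shows $A'$ is unrestrained. As written, though, your proposal leaves this step open with speculative fallbacks, so it is not yet a proof.
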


Note the last item
holds automatically, since $A\subseteq A^{\prime\prime}$
and from \Cref{r: rest}, it is unrestrained.

\begin{proof}
If there is an element in $A$ whose first component is $c$,
then this element satisfy the claim.
Indeed, let $(c,B)\in A$ be such element,
it is clear that $c\notin B$
since $(c,B)\in A\subseteq \Theta_p$
and $c\notin D$
since $(c,D)\in\Theta_p$.
Moreover, the replacement just adds the element
$(c,D)$ to $A$, in this case $A^\prime=A\cup\set{(c,D)}$,
which contains $A$ which is unrestrained, thus
is unrestrained as well 
from \Cref{r: rest}.

Assume from now on, that no element
of $A$ has first component $c$.
Since $A$ is unrestrained,
the pair $(c,D)$ does not restrain it.
Hence, there is an element
$(a,B)\in A$
which is not restraint by it,
i.e., $a\notin D$
and $c\notin B$.

Let $A^\prime$ results from $A$ in replacing 
$(a,B)$
with
$(a,D),(c,B)$.
We contend that $A^\prime$ is unrestrained.

Suppose, towards contradiction
that $A^\prime$ is restrained
by some $(z,W)\in\Theta_p$.
Since $A$ is unrestrained, $(z,W)$ does
not restrain $A$.
Thus, the only element of $A$
not restrained by $(z,W)$
must be $(a,B)$.
Therefore,
$a\notin W$
and $z\notin B$.
However, it does restrain the new elements
$(a,D)$ and $(c,B)$,
therefore,
$z\in D$
and
$c\in W$.

Consider the set
$A_2=A\setminus \set{(a,B)}$.
By assumption, every element $(u,V)\in A_2\subseteq A^\prime$
is restrained by $(z,W)$, i.e., 
$u\in W$
or 
$z\in V$.
Since we assumed that no element begins with $c$,
this simplifies to
$u\in W\setminus\set{c}$
or 
$z\in V$.

Consider now the pair
$(z,W^\prime=W\setminus\set{c}\cup\set{a})$.
Because $z\in D$ and $a\notin D$,
we have $z\notin W^\prime$,
so $(z,W^\prime)\in\Theta_p$.
Moreover, it restrains all elements of $A_2$
and also $(a,B)$,
since $a\in W\setminus\set{c}\cup\set{a}$.
Thus, $(z,W^\prime)$ restrains $A$,
contradicting the assumption that $A$ is unrestrained.
Hence, $A^\prime$ must be unrestrained.
\end{proof}

The lemma below shows that every unrestrained
set contains a small unrestrained subset.

\begin{lemma}
\label{l: compact ss simple}
Let $A\subseteq\Theta_p$ be an unrestrained set,
there is a subset $A^\prime\subseteq A$
which is also unrestrained and
$\card{A^\prime}\,\leq p^3+2p^2+2p+1$.
\end{lemma}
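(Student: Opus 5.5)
The plan is to use an equivalent pointwise description of restraint and then build the small subset greedily. The target bound factors as $p^3+2p^2+2p+1=(p+1)+p(p+1)^2$, and the construction below produces exactly these two parts.

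\emph{Step 1 (characterisation).} For $A\subseteq\Theta_p$ and an atom $z$, let $A_z=\set{(a,B)\in A: z\notin B}$ and $F(A_z)=\set{a:(a,B)\in A_z}$. I claim that $A$ is restrained by a pair with first component $z$ if and only if $z\notin F(A_z)$ and $\card{F(A_z)}\leq p$.
\begin{itemize}
\item Suppose $(z,W)$ restrains $A$. Then every element of $A_z$ has its first component in $W$. Since $z\notin W$, this gives $F(A_z)\subseteq W\setminus\set{z}$, which has size at most $p$.
\item Conversely, if $z\notin F(A_z)$ and $\card{F(A_z)}\leq p$, pad $F(A_z)$ with fresh atoms to a $p$-set $W$ avoiding $z$. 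Then $(z,W)\in\Theta_p$ restrains $A$.
\end{itemize}
Hence $A$ is unrestrained if and only if, for every atom $z$, the set $A_z$ either contains an element with first component $z$, or has at least $p+1$ distinct first components. I call such a set of at most $p+1$ elements of $A_z$ a \emph{witness for $z$}. The key monotonicity is that if $S\subseteq A$ contains a witness for $z$ taken from $A$, it is also a witness for $z$ relative to $S$, because $S_z\supseteq$ witness.

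\emph{Step 2 (a generic transversal).} I will show that $A$ contains $p+1$ elements $(a_0,B_0),\ldots,(a_p,B_p)$ with pairwise distinct first components.
\begin{itemize}
\item Suppose instead that $F(A)$ has at most $p$ atoms. The set $A$ is nonempty, since the empty set is restrained. Pick any $(a,B)\in A$ and an atom $z\notin B\cup F(A)$.
\item Then $A_z\neq\emptyset$, $z\notin F(A_z)$, and $\card{F(A_z)}\leq p$. By Step 1, $A$ is restrained, a contradiction.
\end{itemize}
Let $T$ be this set of $p+1$ elements and $U=\bigcup_i B_i$, so $\card{U}\leq p(p+1)$.

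\emph{Step 3 (covering all atoms).} For each $z\in U$, choose a witness $S^{(z)}\subseteq A_z$ of size at most $p+1$; it exists by Step 1 since $A$ is unrestrained. Set $A'=T\cup\bigcup_{z\in U}S^{(z)}$, so that
\[
\card{A'}\leq (p+1)+p(p+1)^2=p^3+2p^2+2p+1 .
\]
To see that $A'$ is unrestrained, check Step 1 for each atom $z$.
\begin{itemize}
\item If $z\in U$, the witness $S^{(z)}\subseteq A'_z$ does the job.
\item If $z\notin U$, then $z\notin B_i$ for every $i$, so $T\subseteq A'_z$. Thus $A'_z$ has $p+1$ distinct first components (whether or not $z$ is one of them).
\end{itemize}
By Step 1, $A'$ is not restrained by any pair.

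The main point requiring care is the characterisation in Step 1, in particular the padding argument and the case where $z$ occurs as a first component. The other delicate point is that infinitely many atoms $z$ must be handled by a single finite set; the generic transversal $T$ of Step 2 resolves this, leaving only the finitely many exceptional atoms in $U$.
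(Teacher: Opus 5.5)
Your proof is correct and follows the same route as the paper: a transversal $T$ of $p+1$ elements with distinct first components, the set $U$ (the paper's $P_2$) of atoms in their second components, and one witness inside $A_z$ for each $z\in U$, giving the same bound $(p+1)+p(p+1)^2$. Your Step~1 is more careful than the paper's argument. The paper claims $\card{p_1(A_b)\setminus\set{b}}\geq p+1$ for every atom $b$, on the grounds that otherwise $(b,W''\cup p_1(A_b))$ restrains $A$. That pair is not in $\Theta_p$ when $b\in p_1(A_b)$, and the claim fails, e.g.\ for $p=1$ and the unrestrained set $A=\set{(b,\set{c}),(d_1,\set{b}),(d_2,\set{b})}$, where $A_b=\set{(b,\set{c})}$. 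Allowing a single element with first component $z$ to serve as a witness, as you do, is exactly the repair needed.
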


\begin{proof}
Let $p_1:\Theta_p\to\atoms$
denote projection on the first component,
i.e.,
\[
p_1(A)=\set{a:(a,B)\in A}\, .
\]
We contend that $\card{p_1(A)}\,\geq p+1$.
Indeed, if $\card{p_1(A)}\,\leq p$,
complete $p_1(A)$ with fresh atoms
to form an element
of the form $(z,W^{\prime}\cup p_1(A))\in\Theta_p$.
This element restrains $A$,
contradicting the fact that $A$ is unrestrained.

Hence, we can choose $p+1$ elements in $A$ with 
distinct first coordinates.
Let $P_2$ be the set of all atoms
occurring in their second components;
$\card{P_2}\,\leq p(p+1)$.

For any atom $b$, define
\[
A_b=\set{(a,B)\in A:b\notin B}\, .
\]
We contend that 
$\card{p_1(A_b)\setminus\set{b}}\,\geq p+1$.
Otherwise, the element $(b,W^{\prime\prime}\cup p_1(A_b))$
restrains $A$.

For each atom $b\in P_2$,
select $p+1$ elements from $A_b$ with
distinct first coordinates.

In total,
the number of witnesses is at most $p+1+p(p+1)(p+1)$.
We contend that the set of all witnesses $A^\prime$
is unrestrained.

Assume to the contrary that $A^\prime$
is restrained by $(z,W)\in\Theta_p$.
If $z\in P_2$, then there is an element in $A_z$ which
was chosen as a witness
and its first coordinate is not in $W$,
this is a contradiction.
If $z\notin P_2$, then necessarily for every element of the
first $p+1$ elements, the first coordinate is in $W$,
this is also a contradiction.
Thus $A^\prime$ is unrestrained.
\end{proof}

\subsection{Matching}

The following combinatorial argument ensures that
any sufficiently long sequence of elements in $\Theta_p$
contains a matching,
two elements $(a,B)$ and $(c,D)$
such that
the swapped pairs $(a,D)$ and $(c,B)$ remain in $\Theta_p$.

\begin{lemma}
\label{l: shorten}
Let $(a_i,B_i)_{i=1}^N$
be a sequence of elements from $\Theta_p$.
If $N\geq 2p+2$,
then there are indices $1\leq i<j\leq N$,
such that
$a_i\notin B_j$
and
$a_j\notin B_i$.
\end{lemma}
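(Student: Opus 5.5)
This is a graph-theoretic counting argument. Build a conflict graph on the index set $\{1,\dots,N\}$: join $i<j$ by an edge whenever the pair fails the conclusion, that is, whenever $a_i\in B_j$ or $a_j\in B_i$. The lemma says exactly that this graph is not complete once $N\geq 2p+2$. So I would assume, towards a contradiction, that every pair $i<j$ is an edge, and count edges in two ways.

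\textbf{Upper bound.} Orient each edge: if $a_i\in B_j$, charge the pair to $j$, and otherwise (so $a_j\in B_i$) charge it to $i$. An index $j$ is charged only by pairs $\{i,j\}$ with $a_i\in B_j$. Since $\card{B_j}=p$, at most $p$ distinct atoms of $B_j$ can serve as such $a_i$.

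Repeated atoms among the $a_i$ need care. If $a_i=a_{i'}=a$ for some $i\neq i'$, then $a\notin B_i$ and $a\notin B_{i'}$, because $(a_i,B_i)\in\Theta_p$. Hence neither $a_i\in B_{i'}$ nor $a_{i'}\in B_i$ holds, and the pair $\{i,i'\}$ is already a matching, which proves the lemma in this case.

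So we may assume the $a_i$ are pairwise distinct. Then each $j$ is charged by at most $p$ indices $i$, and the number of edges is at most $pN$.

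\textbf{Lower bound and contradiction.} A complete graph on $N$ vertices has $\binom{N}{2}=N(N-1)/2$ edges. Completeness would therefore force $N(N-1)/2\leq pN$, i.e. $N\leq 2p+1$, which contradicts $N\geq 2p+2$. Hence some pair $i<j$ is a non-edge, and for that pair $a_i\notin B_j$ and $a_j\notin B_i$.

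The only subtle step is the repeated-first-coordinate case. It is needed so that "$a_i\in B_j$" charges at most $p$ distinct $i$ to each $j$. Once that case is handled directly, the rest is routine double counting.
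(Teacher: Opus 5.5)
Your proof is correct and is essentially the paper's argument: both double count the $\binom{N}{2}$ conflicting pairs against a per-index bound of $p$ coming from $\card{B_t}=p$, and both use the observation that two indices with the same first coordinate always satisfy the conclusion. The only difference is order: the paper obtains a repeated first coordinate by pigeonhole once some counter exceeds $p$, whereas you handle that case first and then use distinctness of the $a_i$ to bound the charges.
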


\begin{proof}
Assume, towards contradiction,
that for all distinct $i,j$ we have either $a_i\in B_j$
or $a_j\in B_i$.
Initialize counters $x_i=0$ for $i=1,2,\ldots,N$.
For each unordered pair $i,j$, 
increase $x_j$ by one if $a_i\in B_j$,
and increase $x_i$ by one if $a_j\in B_i$.

Each pair contributes at least one increment,
so
\[
\sum_{i=1}^N x_i\geq\binom{N}{2}.
\]
From pigeon-hole principle,
there is some index $t$ for which
\[
x_t\geq \frac{\binom{N}{2}}{N}=\frac{N-1}{2}> p\, .
\]

Hence, there is an atom $b\in B_t$ which is equal to
$a_r,a_s$ for two distinct indices $r\neq s$,
in particular, $a_r=a_s$.
In this case,
$a_r=a_s\notin B_s$
and
$a_s=a_r\notin B_r$,
contradicting the assumption.
\end{proof}

\subsection{Anti-paths}

In previous work,~\cite{HofmanJLP21},
rationality of Parikh images
for one-register languages
was obtained via a sequence of reductions,
the most technical of which
concerns the language of \emph{anti-paths}.
The proof relies on a graph-theoretic
characterization of Parikh images
and invokes a necessary condition
for the existence of Hamiltonian cycles
in directed graphs.
While this establishes rationality,
it yields a very large upper bound
on the star-height.

Our bound of $2$ crucially exploits
the fact that the language of anti-paths
actually has a Parikh image of star-height $1$;
i.e., it is a semi-linear set.

The goal of this section
is to establish this auxiliary result
for anti-paths.
Once this is shown,
extending the argument to all one-register languages
and to one-register block finite-memory automata
requires only technical adaptations.

Let $\Pi=\atoms\times\atoms$,
a word $\bsigma$ over $\Pi$
is an \emph{anti-path} if it is of the form;
\begin{align}  \label{eq: ap}
\tuple{b_0,a_1}\,\,
\tuple{b_1,a_2} \,\,
\cdots
\,\,
\tuple{b_n,a_{n+1}},
\end{align}
and it satisfies $a_i\neq b_i$ for $i=1,2,\ldots,n$.
Let $P$ be the language of all 
anti-paths of the form~\eqref{eq: ap} for $n\geq 0$.

\paragraph*{Unrestrained anti-paths}

Each anti-path $\bsigma$
of the form~\eqref{eq: ap} 
induces a subset $A_\sbsigma$ of $\Theta_1$~\eqref{eq: def theta p}\footnote{
For $\Theta_1$, second components of elements are singletons,
thus, for readability of this section, we omit the brackets $\{\}$.
},
\begin{equation}
\label{eq: ap subset}
A_\sbsigma=\set{(a_i,b_i):i=1,2,\ldots,n}\, .
\end{equation}
An anti-path $\bsigma$
is
said to be \emph{unrestrained} if $A_\sbsigma$ is unrestrained.
The subset $P^{ur}\subseteq P$
are all the unrestrained anti-paths.

An \emph{anti-cycle} $\brho$ is an anti-path
that is a cycle,
i.e, it is a word over $\Pi$ in the form;
\begin{align}
\label{eq: rem ap}
\tuple{d_0,c_1}\,\,
\tuple{d_1,c_2} \,\,
\cdots
\,\,
\tuple{d_m,c_{m+1}},
\end{align}
where $c_i\neq d_i$ for $i=1,2,\ldots,m$
and $c_{m+1}\neq d_0$.

Let $C\subseteq P$ be the set of all anti-cycles.

For a constant $N$,
let $P_N,P^{ur}_N,C_N$ be the sets of all anti-paths,
unrestrained anti-paths, and anti-cycles of lengths at most $N$.

\begin{lemma}
\label{l: unrest ap one}
The language of unrestrained anti-paths
has a semi-linear Parikh image.
In fact, there are constants $N_0,N_1$ such that,
\begin{equation}
\label{eq: ap ur par}
\Par{P^{ur}}=\Par{P^{ur}_{N_0}}+\left(\Par{C_{N_1}}\right)^\ast
\, .
\end{equation}
\end{lemma}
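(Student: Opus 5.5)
The plan is to prove the two inclusions of \eqref{eq: ap ur par} separately, with both directions working on the set of ``joints'' $A_\sbsigma$. A joint is the pair $(a_i,b_i)$ formed by the second atom of one letter and the first atom of the next. The guiding observation concerns an anti-path $\bsigma$ with a joint $(a_i,b_i)$ and an anti-cycle $\brho$ as in \eqref{eq: rem ap}. Cutting $\brho$ at its closing joint $(c_{m+1},d_0)$ and inserting it between the two letters of $\bsigma$ that form the joint $(a_i,b_i)$ yields a word. This word is an anti-path exactly when $a_i\neq d_0$ and $c_{m+1}\neq b_i$. Its joint set is $A_\sbsigma\setminus\set{(a_i,b_i)}\cup\set{(a_i,d_0),(c_{m+1},b_i)}$ together with the internal joints $(c_l,d_l)$ of $\brho$. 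Its Parikh image is $\Par{\bsigma}+\Par{\brho}$. Conversely, excising a factor that is an anti-cycle reverses this operation.

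\textbf{The inclusion $\supseteq$.} Take $\bsigma\in P^{ur}_{N_0}$ and anti-cycles $\brho_1,\ldots,\brho_t\in C_{N_1}$, and insert them one at a time by induction on $t$. For the current unrestrained anti-path with joint set $A$ and the next cycle with closing joint $(c,D)=(c_{m+1},d_0)\in\Theta_1$, apply \Cref{l: unrest simple}. It supplies a joint $(a,B)\in A$ with $a\notin D$ and $c\notin B$, which is exactly the condition for inserting the cycle there. It also guarantees that $A^\prime$ from \eqref{eq: set post insert} is unrestrained. The new joint set is $A^\prime$ together with the internal joints of the cycle, a superset of $A^\prime$, so it is unrestrained by \Cref{r: rest}. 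Since the Parikh vectors add up, this direction needs no constants at all.

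\textbf{The inclusion $\subseteq$.} Take $\bsigma\in P^{ur}$ with $A_\sbsigma$ unrestrained. By \Cref{l: compact ss simple} with $p=1$, fix an unrestrained subset $A^\prime\subseteq A_\sbsigma$ with at most $6$ elements, and call the corresponding joints \emph{protected}. The protected joints cut $\bsigma$ into at most $7$ stretches of consecutive unprotected joints. If $|\bsigma|>N_0$, with $N_0$ chosen around $7\cdot 4+7$, some stretch contains at least $4$ consecutive unprotected joints. Apply \Cref{l: shorten} with $p=1$ and $N=4$ to these $4$ joints. This gives $i<j$ among them with $a_i\neq b_j$ and $a_j\neq b_i$. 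The factor of $\bsigma$ strictly between joint $i$ and joint $j$ begins with first atom $b_i$ and ends with second atom $a_j$. Its internal joints are joints of $\bsigma$ and hence are anti-joints, and $a_j\neq b_i$ closes it. So this factor is an anti-cycle of length at most $4$, and we may take $N_1=4$. Excising it merges joints $i$ and $j$ into $(a_i,b_j)$, which is valid because $a_i\neq b_j$. The result is still an anti-path, and none of the protected joints were removed. Its joint set therefore still contains $A^\prime$ and remains unrestrained. Iterating strictly decreases the length until it is at most $N_0$, so $\Par{\bsigma}$ is the image of a word of $P^{ur}_{N_0}$ plus a sum of images of words of $C_{N_1}$.

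\textbf{Expected obstacle.} The delicate point is the bookkeeping in the $\subseteq$ direction: the excised cycle must avoid the protected witnesses and must also have bounded length. The localisation to $4$ consecutive unprotected joints handles both. I also expect to check the degenerate cases separately: very short anti-paths, the first and last letters (which are not part of any joint), and anti-cycles of length one, where $d_0\neq c_1$.
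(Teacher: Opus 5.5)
Your proposal is correct and follows the paper's proof essentially step for step. For $\supseteq$ you insert an anti-cycle at the joint supplied by \Cref{l: unrest simple}. For $\subseteq$ you protect an unrestrained witness set of size at most $6$ from \Cref{l: compact ss simple}, and you excise an anti-cycle of length at most $4$ found by \Cref{l: shorten} among four consecutive unprotected joints. The paper counts marked letters rather than joints and takes $N_0=41$, $N_1=4$; this is only a cosmetic difference. One small imprecision: after insertion, the joint set may still contain $(a_i,b_i)$ if that pair also occurs at another joint. This is harmless, as the paper notes, because the new set contains $A^\prime$ in either case and unrestrainedness is upward-closed.
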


\begin{proof}
We shall show that the lemma holds for
$N_0=41$ and $N_1=4$.

Let $\bsigma\in P^{ur}$ be an unrestrained anti-path~\eqref{eq: ap}
and let $\brho\in C$
be an anti-cycle~\eqref{eq: rem ap}.
Since $\brho$ is an anti-cycle,
$(c_{m+1},d_0)\in\Theta_1$.

By~\Cref{l: unrest simple},
for the unrestrained set $A_\sbsigma$
and the pair $(c_{m+1},d_0)$,
there is a pair $(a_i,b_i)\in A_\sbsigma$
such that $a_i\neq d_0$ and $c_{m+1}\neq b_i$.

Let $\bomega$
result in the insertion of
anti-cycle
$\brho$ into anti-path $\bsigma$ immediately
after $\tuple{b_{i-1},a_i}$.

That is, for the decomposition $\bsigma=\bsigma_1
\tuple{b_{i-1},a_i}
\tuple{b_{i},a_{i+1}}
\bsigma_2\,$,
\[
\bomega = 
\bsigma_1
\tuple{b_{i-1},a_{i}}
\tuple{d_0,c_1}
\tuple{d_1,c_2}
\cdots
\,\,
\tuple{d_m,c_{m+1}}
\tuple{b_i,a_{i+1}}
\bsigma_2
\, .
\]

Note that, $\bomega$
is an anti-path.

In particular, $A_\sbomega$
is either
\[
A_\sbomega=
A_\sbsigma\cup\set{(a_i,d_0),(c_{m+1},b_i)}\cup
A_\sbrho\setminus \set{\tuple{a_i,b_i}}\, ,
\]
or
\[
A_\sbomega =
A_\sbsigma\cup\set{(a_i,d_0),(c_{m+1},b_i)}\cup
A_\sbrho\, ,
\]
cf.~\Cref{l: unrest simple}.
In either case
$A_\sbomega$ is unrestrained.
Thus, $\bomega\in P^{ur}$, implying the inclusion
\[
\Par{P^{ur}}\supseteq \Par{P^{ur}}+\Par{C}
\, .
\]
In particular,
\[
\Par{P^{ur}}\supseteq\Par{P^{ur}_{N_0}}+\left(\Par{C_{N_1}}\right)^\ast
\, ,
\]
cf.~\eqref{eq: ap ur par}.

For the converse inclusion $\subseteq$, let $\bsigma\in P^{ur}$
be an unrestrained anti-path of length at least $N_0$.
By \Cref{l: compact ss simple}, $A_\sbsigma$
contains an unrestrained subset $A^\prime\subseteq A_\sbsigma$
of size at most~$6$.
Since every pair in $A_\sbsigma$
is related
to two letters in $\bsigma$,\footnote{Recall that
$\bsigma$ is a word over $\Pi=\atoms^2$.}
there are at most six consecutive pairs
of such letters and we mark these letters.
This gives us $7$ sub-words of $\bsigma$
not containing the marked letters.
Since $\frac{N_0-12}{7}>4$,
there is a sub-word
\[
\tuple{f_0,e_1}
\tuple{f_1,e_2}
\cdots
\tuple{f_4,e_5}\, ,
\]
of $\bsigma$
not containing a marked letter.

This sequence $(e_i,f_i)_{i=1}^4$
consists of elements in $\Theta_1$,
implying by~\Cref{l: shorten},
that there are $1\leq i<j\leq 4$ such that 
$e_i\neq f_j$ and $e_j\neq f_i$.
Therefore, deleting the inner sub-word
$\tuple{f_i,e_{i+1}}\cdots \tuple{f_{j-1},e_j}$
from $\bsigma$ results in an anti-path $\bomega$.
Moreover, the deleted sub-word
is an anti-cycle
of length at most four.
Note that $\bomega$ is an anti-path
containing all marked letters.
Thus, $A\sbomega$ contains $A^\prime$
and, therefore, is unrestrained,
implying the desired
converse inclusion $\subseteq$ of~\eqref{eq: ap ur par}.
\end{proof}

\paragraph*{Restrained anti-paths}

If $A\subseteq \Theta_1$ is restrained by
$(z,w)\in\Theta_1$,
there are three options:
\begin{enumerate}
\item $A\subseteq \set{w}\times\atoms$; or
\item $A\subseteq \atoms\times \set{z}$; or
\item $A\subseteq \set{w}\times\atoms\cup \atoms\times \set{z}$
and for some $x,y\in\atoms$, $(w,x),(y,z)\in A$.
\end{enumerate}

For fixed distinct atoms $w\neq z$
and $i=1,2,3$,
let $P^{i,z,w}$
be the set of all anti-paths $\bsigma\in P$,
where $A_\sbsigma$ restrained by $(z,w)$
of type $(i)$.
Let $P^{z,w}=\bigcup_{i=1}^3P^{i,z,w}$.
It is clear that,
\[
P^{1,z,w}=
\set{\tuple{c,w}:c\in\atoms}
\cdot
\left\{\tuple{c,w}:w\neq c\in\atoms\right\}^\ast 
\cdot
\set{\tuple{c,d}:c,d\in\atoms,c\neq w}\, .
\]
Therefore,
\[
\Par{P^{1,z,w}}=
\left( \bigcup_{c\in\atoms}(c,w)\right) +
\left( \bigcup_{c\in\atoms\setminus\set{w},d\in\atoms}(c,d)\right) +
\left( \bigcup_{c\in\atoms\setminus\set{w}}(c,w)\right)^\ast \, ,
\]
is a linear set.
A symmetric argument shows that $\Par{P^{2,z,w}}$
is linear as well.

Let $C^{z,w}$ be the set of
all anti-cycles $\brho\in C$~\eqref{eq: rem ap}
such that $A_\sbrho\cup \set{\tuple{c_{m+1},d_0}}$
is restrained by $(z,w)$.
Let $P^{3,z,w}_N$ and $C^{z,w}_N$ be the
set of anti-paths in $P^{3,z,w}$
and anti-cycles in $C^{z,w}$
of length at most $N$, respectively.

\begin{lemma}
The language $P^{3,z,w}$
has a semi-linear Parikh image.
In fact, there are constants $N_0,N_1$
such that,
\begin{equation}
\label{eq: rest ap one}
\Par{P^{3,z,w}}=\Par{P^{3,z,w}_{N_0}} +
\left(\Par{C^{z,w}_{N_1}}\right)^\ast
\, .
\end{equation}
\end{lemma}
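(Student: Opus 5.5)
We follow the template of \Cref{l: unrest ap one}: an insertion argument for $\supseteq$ and a deletion (pumping-down) argument for $\subseteq$. Here the role of ``unrestrained'' is played by the type-(3) condition, namely that $A_\sbsigma \subseteq \set{w}\times\atoms\cup\atoms\times\set{z}$ and $A_\sbsigma$ contains witnesses $(w,x)$ and $(y,z)$.

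\textbf{Inclusion $\supseteq$.} Let $\bsigma\in P^{3,z,w}$ and $\brho\in C^{z,w}$. We insert $\brho$ into $\bsigma$ immediately after a letter $\tuple{b_{i-1},a_i}$ with $a_i\neq d_0$ and $c_{m+1}\neq b_i$. Since $A_\sbsigma$ contains a pair $(w,x)$ and a pair $(y,z)$ with $w\neq z$, we try these two pairs as candidates. The closing pair $(c_{m+1},d_0)$ of $\brho$ is itself restrained by $(z,w)$, so either $c_{m+1}=w$ or $d_0=z$.
\begin{itemize}
\item If $d_0=z$, we insert at the witness $(w,x)$, so $a_i=w\neq z=d_0$. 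We still need $c_{m+1}\neq x$. If this fails, the swapped pairs are $(w,z)$ and $(x,x)$, and $(x,x)\notin\Theta_1$.
\item Symmetrically, if $c_{m+1}=w$, we use the witness $(y,z)$.
\end{itemize}
The new pairs of the resulting word are $(a_i,d_0)$, $(c_{m+1},b_i)$ and $A_\sbrho$. These must again lie in $\set{w}\times\atoms\cup\atoms\times\set{z}$ and belong to $\Theta_1$. The witnesses are preserved, or re-created by the new pairs. Hence $\bomega\in P^{3,z,w}$, and so $\Par{P^{3,z,w}}\supseteq\Par{P^{3,z,w}}+\Par{C^{z,w}}$.

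\textbf{The main obstacle} is the degenerate case just described, where the natural insertion points all fail the inequality. I would handle it by allowing insertion at \emph{any} boundary rather than only at a witness. A witness fails only if $b_i=c_{m+1}$ and, for the other witness, $a_i=d_0$. In that situation both $c_{m+1}$ and $d_0$ are pinned to specific atoms. I would then use the freedom in the choice of the generator set: define $C^{z,w}_{N_1}$ only through its Parikh image, and re-rotate $\brho$. Rotating an anti-cycle keeps it an anti-cycle with the same Parikh image and the same pair-set condition, and rotation moves the closing pair to a pair that avoids the conflict. This rotation trick is the step I would check most carefully.

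\textbf{Inclusion $\subseteq$.} Take $\bsigma\in P^{3,z,w}$ of length at least $N_0$. We mark the letters carrying the two witnesses $(w,x)$ and $(y,z)$, which is at most four letters. A long unmarked factor then exists once $N_0 > 4 + 3\cdot 4$. Inside it we apply \Cref{l: shorten} with $p=1$ to four consecutive pairs. This gives $i<j$ such that deleting $\tuple{f_i,e_{i+1}}\cdots\tuple{f_{j-1},e_j}$ leaves an anti-path $\bomega$. The deleted factor is an anti-cycle $\brho$ of length at most four whose closing pair is $(e_j,f_i)$.

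Two facts remain to be checked:
\begin{itemize}
\item $A_\sbomega$ consists of old pairs together with the new pair $(e_i,f_j)$. For $\bomega\in P^{3,z,w}$ we need $(e_i,f_j)$ to be restrained by $(z,w)$.
\item $\brho\in C^{z,w}$ requires $(e_j,f_i)$ to be restrained by $(z,w)$.
\end{itemize}
Since every pair $(e_k,f_k)$ satisfies $e_k=w$ or $f_k=z$, a pigeonhole argument applies: among five consecutive pairs, two indices $i<j$ fall into the same class (both with first coordinate $w$, or both with second coordinate $z$) and also satisfy the conditions of \Cref{l: shorten}. Within the first class, $e_i=e_j=w$, so both swapped pairs are restrained, and we need only $f_i\neq w\neq f_j$, which holds automatically. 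The second class is symmetric. Hence a same-class pair always works, and $N_1=4$ suffices; I would set $N_0$ to be about $4+5\cdot 5$. Because the marked witnesses survive, $\bomega\in P^{3,z,w}$. Induction on length then completes \eqref{eq: rest ap one}.
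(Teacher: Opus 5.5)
\textbf{The gap is in your $\supseteq$ direction: you pair each case with the wrong witness.}

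When $d_0=z$ you insert $\brho$ at a pair $(w,x)\in A_{\bsigma}$, i.e., right after a letter whose second component is $w$. The two new connecting pairs are then $(w,z)$ and $(c_{m+1},x)$. For $(c_{m+1},x)$ to be restrained by $(z,w)$ you need $c_{m+1}=w$ or $x=z$, and neither is forced. You list this membership as something that must hold but never check it, and it fails.

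Take pairwise distinct atoms $e,f,g,h,w,x,y,z$. The word $\bsigma=\tuple{h,w}\tuple{x,y}\tuple{z,g}$ is in $P^{3,z,w}$ with $A_{\bsigma}=\set{(w,x),(y,z)}$. The word $\brho=\tuple{z,w}\tuple{f,e}$ is in $C^{z,w}$ with cyclic pairs $(w,f)$ and $(e,z)$. Your rule produces $\tuple{h,w}\tuple{z,w}\tuple{f,e}\tuple{x,y}\tuple{z,g}$. This is a legitimate anti-path, but its pair $(e,x)$ is not restrained by $(z,w)$, so it is not in $P^{3,z,w}$. The symmetric case ($c_{m+1}=w$, insertion at $(y,z)$) breaks in the same way through the pair $(y,d_0)$.

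Your rotation trick does not repair this. It targets the $\Theta_1$ condition, which is not what fails here. Moreover, for this $\bsigma$ your rule succeeds only if the closing pair of $\brho$ is $(w,z)$, and no rotation of $\brho$ provides that. Saying you would allow insertion ``at any boundary'' is not an argument until you specify which boundary works.

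\textbf{The repair is to match each case with its own witness; this is exactly the paper's argument.} With that choice there is no degenerate case at all.
\begin{itemize}
\item If $c_{m+1}=w$, insert at a pair $(w,x)\in A_{\bsigma}$. The connecting pairs are $(w,d_0)$ and $(w,x)$. Both have first coordinate $w$, and both lie in $\Theta_1$ since $d_0\neq c_{m+1}=w$ and $x\neq w$.
\item If $d_0=z$, insert at a pair $(y,z)\in A_{\bsigma}$, i.e., right before a letter whose first component is $z$. The connecting pairs are $(y,z)$ and $(c_{m+1},z)$. Both have second coordinate $z$, and both lie in $\Theta_1$ since $y\neq z$ and $c_{m+1}\neq d_0=z$.
\end{itemize}
In either case the broken pair of $\bsigma$ is re-created verbatim. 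So $A_{\bomega}$ is $A_{\bsigma}\cup A_{\brho}$ plus one pair restrained by $(z,w)$, the witnesses survive, and $\bomega\in P^{3,z,w}$. The conflict $c_{m+1}=x$ that you singled out as the main obstacle is only an artifact of the mismatched choice.

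\textbf{Your $\subseteq$ direction is correct} and follows the paper's deletion argument. Your observation that two pairs of the same class automatically satisfy the swap condition makes the paper's appeal to \Cref{l: shorten} unnecessary, and it gives a smaller $N_1$.
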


\begin{proof}
We shall show that the lemma holds for
$N_0=24,N_1=8$.
First, we prove that any anti-cycle
$\brho \in C^{z,w}$ can be inserted into
any anti-path in $P^{3,z,w}$
such that the result of the insertion
remains an anti-path in $P^{3,z,w}$.
Let $\bsigma\in P^{3,z,w}$ and let $\brho\in C^{z,w}$.
For some $x$ and $y$, $\bsigma$ contains the letters
$\tuple{x,w}$ and $\tuple{z,y}$.
Since $A_\sbrho\cup\set{\tuple{c_{m+1},d_0}}$
is restrained by $(z,w)$,
either $c_{m+1}=w$ or $d_0=z$.
If $c_{m+1}=w\neq d_0$, then $\brho$
can be inserted immediately after $\tuple{x,w}$.
If $d_0=z\neq c_{m+1}$, then $\brho$
can be inserted immediately before $\tuple{z,y}$.
That is we have the inclusion
\[
\Par{P^{3,z,w}}\supseteq\Par{P^{3,z,w}} +
\Par{C^{z,w}_{N_1}}
\, .
\]
In particular,
\[
\Par{P^{3,z,w}}\supseteq \Par{P^{3,z,w}_{N_0}} +
\left(\Par{C^{z,w}_{N_1}}\right)^\ast
\, ,
\]
cf.~\eqref{eq: rest ap one}.

For the converse inclusion $\subseteq$,
let $\bsigma\in P^{3,z,w}$
of length at least than $N_0$.
From definition of $P^{3,z,w}$,
there are $x$ and $y$, such that
$\bsigma$ contains $\tuple{x,w}$
and $\tuple{z,y}$.
In $\bsigma$ we mark one letter with the second component
$w$ and one letter with the first component $z$.
This gives us $3$ sub-words of $\bsigma$
not containing the marked letters.
Since $\frac{N_0-2}{3}>7$,
there is a sub-word
\[
\tuple{f_0,e_1}
\tuple{f_1,e_2}
\cdots
\tuple{f_7,e_8}\, .
\]
of $\bsigma$
not containing a marked letter.

This sequence $(e_i,f_i)_{i=1}^7$
consists of elements in $\Theta_1$.
These elements are also elements of $A_\sbsigma$,
thus, they are restrained by $(z,w)$.
Therefore, either there are four elements with $e_i=w$
or four elements with $f_i=z$.
Without loss of generality,
assume the former.
By~\Cref{l: shorten}
there are $1\leq i<j\leq 7$ such that
$w=e_i\neq f_j$ and $w=e_j\neq f_i$.

Therefore, deleting the inner sub-word
$\tuple{f_i,e_{i+1}}\cdots \tuple{f_{j-1},e_j}$
from $\bsigma$ results in an anti-path $\bomega$.
Moreover, the deleted sub-word
is an anti-cycle
of length at most four.
Note that $\bomega$ is an anti-path
containing all marked letters.
Thus, $A\sbomega$ contains 
the marked letters and $A_\sbomega$
is restrained by $(z,w)$.
Consequently, $\bomega\in P^{3,z,w}$,
implying the desired
converse inclusion $\subseteq$ of~\eqref{eq: rest ap one}.
\end{proof}

Every restrained anti-path
is restrained by some $(z,w)\in \Theta_1$,
and therefore lies in some $P^{z,w}$.
Taking the union of all distinct atom pairs
$(z,w)\in\atoms^{2_{\neq}}$
yields the following lemma.

\begin{lemma}
\label{l: rest ap one}
The language of restrained anti-paths
has a semi-linear Parikh image.
\end{lemma}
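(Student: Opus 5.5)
The plan is to write the language of restrained anti-paths as an orbit-finite union of the sets $P^{z,w}=P^{1,z,w}\cup P^{2,z,w}\cup P^{3,z,w}$ and to use the fact that semi-linear sets are closed under orbit-finite unions. Concretely, an anti-path $\bsigma$ is restrained exactly when $A_\sbsigma$ is restrained by some $(z,w)\in\Theta_1$, that is, by some pair of distinct atoms. The trichotomy stated before the lemma places every such $\bsigma$ in one of $P^{1,z,w}$, $P^{2,z,w}$, $P^{3,z,w}$. Conversely, every member of these three sets is restrained, by definition. Hence
\[
\Par{P\setminus P^{ur}}=\bigcup_{(z,w)\in\atoms^{2_{\neq}}}\left(\Par{P^{1,z,w}}\cup\Par{P^{2,z,w}}\cup\Par{P^{3,z,w}}\right).
\]

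For each fixed pair $(z,w)$, each of the three pieces is already semi-linear.
\begin{itemize}
\item $\Par{P^{1,z,w}}$ was written explicitly above as a sum of orbit-finite sets plus the star of an orbit-finite set, so it is linear.
\item $\Par{P^{2,z,w}}$ is linear by the symmetric expression.
\item $\Par{P^{3,z,w}}=\Par{P^{3,z,w}_{N_0}}+(\Par{C^{z,w}_{N_1}})^\ast$ by the preceding lemma. Here $P^{3,z,w}_{N_0}$ and $C^{z,w}_{N_1}$ consist of words of bounded length, so their Parikh images are orbit-finite by Lemma~\ref{l: b-o-f}. This expression therefore has star-height at most one.
\end{itemize}

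It remains to check that the union over $(z,w)$ is a genuine orbit-finite union. The index set $\atoms^{2_{\neq}}$ is a single equivariant orbit, so I must verify that the map $(z,w)\mapsto\Par{P^{z,w}}$ is finitely supported (in fact equivariant). This holds because every definition involved, namely anti-paths, restraint by $(z,w)$, the type classification and the cycle sets $C^{z,w}$, is invariant under atom permutations: for any $\pi$ we have $\pi(P^{i,z,w})=P^{i,\pi(z),\pi(w)}$. The constants $N_0,N_1$ do not depend on $(z,w)$, so the rational expressions for different $(z,w)$ are permuted copies of one expression. An orbit-finite union of such expressions is therefore again of star-height at most one.

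I expect the only delicate point to be this uniformity. The union must be a single orbit-finite union of expressions of star-height one. It must not silently introduce extra nesting, and it must not be an infinite union of unrelated sets. Equivariance of the construction settles both concerns. Overlaps between $P^{z,w}$ for different pairs are harmless, since the union only needs to cover every restrained anti-path.
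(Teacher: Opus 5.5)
Your proposal is correct and is essentially the paper's argument: the restrained anti-paths are the union over $(z,w)\in\atoms^{2_{\neq}}$ of $P^{1,z,w}\cup P^{2,z,w}\cup P^{3,z,w}$, each piece has a semi-linear Parikh image by the preceding formulas and lemma, and your equivariance check ($\pi(P^{i,z,w})=P^{i,\pi(z),\pi(w)}$) makes explicit that this is a legitimate orbit-finite union of star-height-one sets, a step the paper leaves implicit. One negligible point, inherited from the paper: single-letter anti-paths have $A_{\bsigma}=\emptyset$ and so are vacuously restrained, yet the displayed concatenation for $P^{1,z,w}$ does not produce them; adding this orbit-finite set of words does not affect semi-linearity.
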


\begin{lemma}
\label{l: ap par sl}
The language of anti-paths
has a semi-linear Parikh image.
\end{lemma}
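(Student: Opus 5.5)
The plan is to write the language $P$ of all anti-paths as the union of its unrestrained part and its restrained part, and to combine the two preceding results. Every anti-path $\bsigma$ either has $A_\sbsigma$ unrestrained, so that $\bsigma\in P^{ur}$, or $A_\sbsigma$ is restrained by some $(z,w)\in\Theta_1$. In the second case $\bsigma$ lies in $P^{z,w}$ for a pair of distinct atoms. Hence
\[
P \;=\; P^{ur}\;\cup\;\bigcup_{(z,w)\in\atoms^{2_{\neq}}} P^{z,w}\, ,
\]
and the Parikh map commutes with unions, so $\Par{P}=\Par{P^{ur}}\cup\bigcup_{(z,w)}\Par{P^{z,w}}$.

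By \Cref{l: unrest ap one}, $\Par{P^{ur}}$ is semi-linear. It is given explicitly as $\Par{P^{ur}_{N_0}}+(\Par{C_{N_1}})^\ast$, where the sets $\Par{P^{ur}_{N_0}}$ and $\Par{C_{N_1}}$ are orbit-finite by \Cref{l: b-o-f}, because their sizes are bounded. By \Cref{l: rest ap one}, the Parikh image of the restrained anti-paths, namely $\bigcup_{(z,w)}\Par{P^{z,w}}$, is semi-linear. A union of two star-height-one expressions is again of star-height at most one, which gives the claim.

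The only point I would check with care is that the union over $(z,w)$ is a legitimate orbit-finite union. This is what \Cref{l: rest ap one} tacitly relies on. The map $(z,w)\mapsto \Par{P^{z,w}}$ is equivariant: applying a permutation $\pi$ to an anti-path restrained by $(z,w)$ of type $i$ yields an anti-path restrained by $(\pi(z),\pi(w))$ of the same type. Its index set $\atoms^{2_{\neq}}$ is a single orbit. For each fixed pair, $\Par{P^{z,w}}$ is a finite union of the linear sets $\Par{P^{1,z,w}}$ and $\Par{P^{2,z,w}}$ with the semi-linear set $\Par{P^{3,z,w}}$ from \eqref{eq: rest ap one}. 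The expressions for the different pairs are images of one expression under permutations, so the orbit-finite union introduces no additional nesting of stars and the star-height stays at one.

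I expect no real obstacle here, because the substantive work sits in the preceding lemmas. The step most likely to need attention is the degenerate case $n=0$, where $A_\sbsigma=\emptyset$ is restrained by every pair but fits none of the three listed options. The single-letter anti-paths form the orbit-finite set $\Pi$, so I would simply add $\Par{\Pi}$ as an extra star-free summand.
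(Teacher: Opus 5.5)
Your proposal is correct and follows the paper's proof. The paper's proof simply combines Lemmas~\ref{l: unrest ap one} and~\ref{l: rest ap one} via the same split of $P$ into unrestrained anti-paths and anti-paths restrained by some $(z,w)\in\atoms^{2_{\neq}}$; your check that this is an equivariant orbit-finite union adding no star nesting is a sensible elaboration. One small correction: for $n=0$ the empty set $A_\sbsigma$ vacuously satisfies options 1 and 2, so such words lie in $P^{1,z,w}$ by definition; only the paper's displayed formula for $P^{1,z,w}$ omits single-letter anti-paths, so your extra summand $\Par{\Pi}$ is a harmless patch for that formula rather than for a missing case.
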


\begin{proof}
The proof follows from Lemmas~\ref{l: unrest ap one}
and~\ref{l: rest ap one}.
\end{proof}

\section{Controlled sets}
\label{a: con as}

\begin{definition}
A set $A\subseteq \Theta_p$
is said to be \emph{controlled} by
a pair of subsets of atoms $(X,Y)$,
if for all $(a,B)\in A$, $a\in Y$ or $X\subseteq B$.
\end{definition}

If $X=\emptyset$, then for every $Y$,
the pair $(\emptyset,Y)$
controls every $A\subseteq \Theta_p$.
However, our goal is to find a control
whose $X$ is nonempty,
and then systematically 'remove'
the atoms in $X$.
This motivates the following notion.

\begin{definition}
If $(X,Y)$ is a control of $A\subseteq\Theta_p$,
the \emph{reduction} of $A$ by $(X,Y)$
is
\[
A/(X,Y)=\set{(a,B\setminus X):(a,B)\in A,a\notin Y}
\subseteq \Theta_{p-\card{X}}\, .
\]
\end{definition}

\begin{remark}
\label{r: com red}
Reductions commutes:
\[
(A/(X,Y))/(U,V)=A/(X\cup U,Y\cup V)
\, .
\]
\end{remark}

\begin{remark}
\label{r: reduced}
If $A\subseteq\Theta_p$
is restrained by $(z,W)\in\Theta_p$,
then
$(\set{z},W)$ is a control of $A$ and
$A/(\set{z},W)\subseteq\Theta_{p-1}$.
\end{remark}

\begin{example}
\label{e: rest sets 2}
Continuing \Cref{e: rest sets},
\[
A_3/(\set{b},\emptyset)=\set{(a,\set{c}),
(d,\set{e}),(f,\set{g}),(h,\set{i})}\, ,
\]
is now unrestrained.
\end{example}

\subsection{Good controls}

\begin{definition}
Let $\emptyset\neq A\subseteq \Theta_p$
and let $(X,Y)$ be
a control of $A$.
We say that $(X,Y)$ is a \emph{good control} of $A$
if one of the following holds,
\begin{enumerate}
\item Maximal control: $\card{X}\,=p$.
Then necessarily,
$A\subseteq Y\times\atoms^{(p)}
\cup \atoms\times \set{X}$.
We distinguish:
\begin{enumerate}
\item Left-control: $A\subseteq Y\times\atoms^{(p)}$,
and for every $y\in Y$ there is an element $(y,\cdot)\in A$.
\item Right-control: $A\subseteq \atoms\times \set{X}$,
and there is an element $(\cdot,X)\in A$.
\item Full-control: $A\subseteq Y\times\atoms^{(p)}
\cup \atoms\times \set{X}$,
for every $y\in Y$ there is an element $(y,\cdot)\in A$,
and there is an element $(\cdot, X)\in A$.
\end{enumerate}
\item Unrestrained-control:
$\card{X}\,<p$, for every $y\in Y$ there is an element
$(y,\cdot)\in A$, and
$A/(X,Y)$ is unrestrained.
\end{enumerate}

In each case, we also say that $A$ with good control $(X,Y)$
is of type
$\type\in \set{\leftcontrol,
\rightcontrol,\fullcontrol,\unrestrainedcontrol}$,
where $\leftcontrol$ is for left-control,
$\rightcontrol$ for right-control,
$\fullcontrol$ for full-control,
and $\unrestrainedcontrol$ for unrestrained-control.
\end{definition}

\begin{example}
Continuing \Cref{e: rest sets} and \Cref{e: rest sets 2},
\begin{itemize}
\item $A_1$ is left-controlled by $(\emptyset,\set{a})$.
\item $A_2$ is unrestrained; $(\emptyset,\emptyset)$
is a control for which
$A_2/(\emptyset,\emptyset)=A_2$ is unrestrained.
\item $A_3$ is controlled by $(\set{b},\emptyset)$,
and $A_3/(\set{b},\emptyset)$ is unrestrained.
\end{itemize}
\end{example}

\begin{lemma}
\label{l: good control}
Let $\emptyset\neq A\subseteq\Theta_p$.
Then there is a pair $(X,Y)$
with $\card{X} \, \leq p$
and $\card{Y}\, \leq \binom{p+1}{2}$,
such that $(X,Y)$ is a good control of $A$.
\end{lemma}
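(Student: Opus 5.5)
The plan is to build the good control greedily, by repeatedly reducing $A$ using restraining pairs, as suggested by Remark~\ref{r: reduced} and Remark~\ref{r: com red}. Start with $(X_0,Y_0)=(\emptyset,\emptyset)$, which controls every $A$, and note $A/(X_0,Y_0)=A\subseteq\Theta_p$.

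Inductively, suppose $(X_j,Y_j)$ controls $A$ with $\card{X_j}=j$, and $A_j:=A/(X_j,Y_j)\subseteq\Theta_{p-j}$. There are three cases.
\begin{itemize}
\item If $j<p$ and $A_j$ is unrestrained, stop; this is the candidate for an unrestrained-control.
\item If $j<p$ and $A_j$ is restrained by some $(z,W)\in\Theta_{p-j}$, then by Remark~\ref{r: reduced} the pair $(\set{z},W)$ controls $A_j$. By Remark~\ref{r: com red},
\[
A_j/(\set{z},W)=A/(X_j\cup\set{z},Y_j\cup W).
\]
Set $X_{j+1}=X_j\cup\set{z}$ and $Y_{j+1}=Y_j\cup W$.
\item If $j=p$, stop.
\end{itemize}

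Two things must be checked for the inductive step. First, $(X_{j+1},Y_{j+1})$ controls $A$: an element $(a,B)\in A$ has either $a\in Y_j$, or $X_j\subseteq B$ with $(a,B\setminus X_j)\in A_j$. In the latter case either $a\in W$ or $z\in B\setminus X_j$. Second, $z\notin X_j$, so $\card{X_{j+1}}=j+1$. This holds because $z$ is the first component of an element of $\Theta_{p-j}$ and is restrained against sets $B\setminus X_j$; if $z\in X_j$ we may simply pick it fresh. If $A_j$ happens to be empty, it is restrained by any pair, so choose $z$ and $W$ of fresh atoms.

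Since $\card{W}=p-j$ at step $j$, after termination
\[
\card{Y}\,\leq p+(p-1)+\cdots+1=\binom{p+1}{2},
\qquad \card{X}\,\leq p .
\]

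Next, prune $Y$ to $Y^\ast=Y\cap p_1(A)$. Removing atoms that are not first coordinates of elements of $A$ affects neither the control property nor the reduction $A/(X,Y)$. After pruning, every $y\in Y^\ast$ is witnessed by some $(y,\cdot)\in A$, as all good control types require.

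Finally, classify the output.
\begin{itemize}
\item If we stopped with $\card{X}<p$ and $A/(X,Y^\ast)$ unrestrained, this is an unrestrained-control.
\item If $\card{X}=p$, control means each $(a,B)\in A$ has $a\in Y^\ast$ or $X\subseteq B$, i.e.\ $B=X$ since $\card{B}=p$. Hence $A\subseteq Y^\ast\times\atoms^{(p)}\cup\atoms\times\set{X}$.
\begin{itemize}
\item If some element has second component $X$ and $Y^\ast\neq\emptyset$, this is full-control.
\item If $Y^\ast=\emptyset$, every element has second component $X$, and $A\neq\emptyset$ supplies the required witness; this is right-control.
\item If no element has second component $X$, then $A\subseteq Y^\ast\times\atoms^{(p)}$; this is left-control.
\end{itemize}
\end{itemize}

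The main delicacy I expect is bookkeeping rather than depth. One point is ensuring $z\notin X_j$ so that $\card{X}$ genuinely grows; when it fails, I would replace $z$ by a fresh atom, which keeps the restraint since $z\in B\setminus X_j$ was the relevant condition. The other is confirming that the empty-reduction case and the pruning of $Y$ do not break the witness requirements.
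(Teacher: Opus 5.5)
Your proof is correct and is essentially the paper's argument. The paper inducts on $p$: it reduces $A$ by a restraining pair $(z,W)$ to $A/(\set{z},W)\subseteq\Theta_{p-1}$ and sets $X=X'\cup\set{z}$, $Y=Y'\cup W$, which is exactly your iteration unrolled, with the same count $\card{Y}\leq p+(p-1)+\cdots+1$. Your extra bookkeeping fills in details the paper leaves implicit: pruning to $Y\cap p_1(A)$ so every $y\in Y$ is witnessed, and replacing a restrainer with $z\in X_j$ by some $z'\notin X_j\cup W$ (valid because then no element of $A_j$ can have $z\in B\setminus X_j$, so all its first coordinates already lie in $W$).
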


\begin{proof}
By induction on $p$.
For $p=1$, if $A$ is unrestrained, then $A/(\emptyset,\emptyset)=A$
is unrestrained.
If $A$ is restrained,
there is $(z,W)\in\Theta_1$
that restrain $A$.
In this case,
$A\subseteq W\times\atoms\cup \atoms\times\set{z}$.
Hence, $A$ is maximally-controlled and falls into one
of the control cases (left, right, or full).

For $p>1$,
if $A$ is unrestrained, then $A/(\emptyset,\emptyset)=A$
is unrestrained.
If it is restrained by $(z,W)\in\Theta_p$,
define $A^\prime=A/(\set{z},W)$.
By \Cref{r: reduced}, $A^\prime\subseteq \Theta_{p-1}$.
If $A\subseteq W\times\atoms^{(p)}$
then $A$ is left-controlled 
by $Y=\set{y:(y,B)\in A}\subseteq W$.
Otherwise, $A^\prime \neq\emptyset$,
so by the induction hypothesis, there is a good control
$(X^\prime,Y^\prime)$ of $A^\prime$.
Set $X=X^\prime\cup\set{z}$, $Y=Y^\prime\cup W$.

First, we contend that $(X,Y)$ is a control of $A$.
Let $(a,B)\in A$,
since $(z,W)$ restrains $A$,
either $a\in W$ or $z\in B$.
In the former case, $a\in W\subseteq Y$.
In the latter case, $z\in B$
and 
we may assume $a\notin W$.
In particular $(a,B\setminus \set{z})\in A^\prime$.
Hence, either $a\in Y^\prime\subseteq W$ or
$X^\prime\subseteq B\setminus \set{z}$,
which implies that $X=X^\prime\cup\set{z}\subseteq B$.

It is left to show that this control is good.
If $A^\prime\subseteq Y^\prime\times\atoms^{(p)}\cup
\atoms\times \set{X^\prime}$,
then $
A\subseteq Y\times\atoms^{(p)}\cup \atoms\times \set{X}$
and falls into one of the maximal-control cases. 

If instead $A^\prime/(X^\prime,Y^\prime)$ is
unrestrained, then by commutativity of reductions (\Cref{r: com red}),
\[
A/(X^\prime\cup\set{z},Y^\prime\cup W)=
(A/(\set{z},W))/(X^\prime,Y^\prime)=
A^\prime/(X^\prime,Y^\prime)
\, ,
\]
which is unrestrained.
\end{proof}

\begin{lemma}
\label{l: sandwich good}
Let $A_1\subseteq A_2\subseteq A_3\subseteq \Theta_p$
be three sets.
Assume that $A_1,A_3$ have good control $(X,Y)$
of the same type, then $A_2$
has also good control $(X,Y)$
of the same type.
\end{lemma}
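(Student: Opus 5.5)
The plan is to check directly each clause in the definition of a good control for $A_2$. Each clause is either downward-closed (so it passes from $A_3$ to $A_2$) or upward-closed (so it passes from $A_1$ to $A_2$). No new combinatorics should be needed; the argument is bookkeeping over the four types.

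First, $(X,Y)$ is a control of $A_2$. The defining condition ``for all $(a,B)\in A$, $a\in Y$ or $X\subseteq B$'' is universal over elements, hence downward-closed. Since it holds for $A_3\supseteq A_2$, it holds for $A_2$. Nonemptiness of $A_2$ follows from $\emptyset\neq A_1\subseteq A_2$. Next, the structural inclusions in the maximal-control cases are also downward-closed: $A_3\subseteq Y\times\atoms^{(p)}$, or $A_3\subseteq\atoms\times\set{X}$, or $A_3\subseteq Y\times\atoms^{(p)}\cup\atoms\times\set{X}$, is inherited by $A_2$. The same goes for the size conditions $\card{X}=p$ (type $\leftcontrol$, $\rightcontrol$, $\fullcontrol$) or $\card{X}<p$ (type $\unrestrainedcontrol$), since these concern only $(X,Y)$.

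The remaining clauses are existential witness conditions: ``for every $y\in Y$ there is $(y,\cdot)\in A$'' and ``there is an element $(\cdot,X)\in A$''. These are upward-closed. They hold in $A_1$ because $A_1$ has the same type, so they hold in $A_2\supseteq A_1$. This settles the three maximal types.

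For type $\unrestrainedcontrol$, the one extra point is that $A_2/(X,Y)$ is unrestrained. The reduction is monotone in $A$: if $A_1\subseteq A_2$, then $A_1/(X,Y)\subseteq A_2/(X,Y)$, directly from the definition as an image of the filtered set. Moreover, both sets lie in $\Theta_{p-\card{X}}$ because $(X,Y)$ controls both. Since $A_1/(X,Y)$ is unrestrained and unrestrainedness is upward-closed (\Cref{r: rest}), $A_2/(X,Y)$ is unrestrained. The only mildly delicate step is this one. It needs the reductions of $A_1$ and $A_2$ to live in the same $\Theta_{p-\card{X}}$, so that \Cref{r: rest} applies there, and I would check this explicitly using that $X\subseteq B$ for every surviving element.
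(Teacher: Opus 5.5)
Your proposal is correct and follows essentially the same route as the paper. The control condition and the maximal-control inclusions pass down from $A_3$, the witness conditions pass up from $A_1$, and the $\unrestrainedcontrol$ case follows from $A_1/(X,Y)\subseteq A_2/(X,Y)$ together with \Cref{r: rest}; you are in fact a bit more explicit than the paper, which does not spell out the $Y$-witness condition for type $\unrestrainedcontrol$ or check that both reductions lie in $\Theta_{p-\card{X}}$.
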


\begin{proof}
Since $(X,Y)$ controls $A_3$ it controls
every subset, in particular, $A_2$.

Maximal-control cases implies that
$A_2\subseteq A_3\subseteq Y\times\atoms^{(p)}\cup 
\atoms\times \set{X}$
hence, $A_2$ is also maximally controlled.
The finer-classification (left/right/full)
is preserved since $A_1\subseteq A_2$.

Unrestrained-control is due to \Cref{r: rest},
since $A_1/(X,Y)\subseteq A_2/(X,Y)$.
\end{proof}

\begin{definition}
Let $B\subseteq\Theta_p$ and $(X,Y)$ a pair of subsets of atoms,
we say that $B$ has \emph{weak good control}
of type $\type$ with respect to $(X,Y)$
if one of the following holds:
\begin{itemize}
\item $\type=\leftcontrol$
and $B\subseteq Y\times\atoms^{( p)}$.
\item $\type=\rightcontrol$
and $B\subseteq \atoms\times \set{X}$.
\item $\type=\fullcontrol$
and $B\subseteq Y\times\atoms^{( p)}
\cup \atoms\times \set{X}$.
\item $\type=\unrestrainedcontrol$
and $(X,Y)$ controls $B$.
\end{itemize}
\end{definition}

\begin{remark}
\label{r: ss weak}
Every subset of a good controlled set
has weak good control of the same type
with respect to the same control pair.
\end{remark}

\begin{lemma}
\label{l: good cup}
Let $A\subseteq\Theta_p$
with a good control $(X,Y)$ of some type
and let $B\subseteq\Theta_p$
have a weak good control of the same type 
with respect to $(X,Y)$.
Then $A\cup B$
has the same type of good control with respect to $(X,Y)$.
\end{lemma}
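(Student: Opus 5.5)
The plan is a direct case analysis on $\type$, mirroring the proof of \Cref{l: sandwich good}. Throughout, $A\subseteq A\cup B$ is the key inclusion: the good control of $A$ supplies the ``existence'' clauses (every $y\in Y$ occurs as a first component, or some element $(\cdot,X)$ exists). The weak good control of $B$ supplies the ``containment'' clauses.

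For the maximal-control types ($\leftcontrol$, $\rightcontrol$, $\fullcontrol$), first note that $\card{X}=p$ is inherited from $A$. For the containment clause, $A$ lies in the relevant set, namely $Y\times\atoms^{(p)}$, $\atoms\times\set{X}$, or their union. By weak good control of the same type, so does $B$, and hence so does $A\cup B$. This also shows that $(X,Y)$ controls $A\cup B$. The existence clauses hold for $A\cup B$ because they hold for $A\subseteq A\cup B$.

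For $\type=\unrestrainedcontrol$, I would check the following.
\begin{enumerate}
\item $(X,Y)$ controls $A\cup B$, since it controls both $A$ and $B$ and the control condition is elementwise.
\item $\card{X}<p$ is inherited from $A$.
\item Every $y\in Y$ has an element $(y,\cdot)\in A\subseteq A\cup B$.
\item $(A\cup B)/(X,Y)=A/(X,Y)\cup B/(X,Y)\supseteq A/(X,Y)$, because reduction is defined elementwise and so distributes over unions. The right-hand set is unrestrained, so the left-hand set is unrestrained by \Cref{r: rest}, since unrestrainedness is upward closed.
\end{enumerate}

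One minor point needs care. In the $\unrestrainedcontrol$ case, the reduction is defined only for a control, so I must first record that $(X,Y)$ controls $B$; the weak good control assumption gives exactly this. Otherwise the argument is routine, and I expect no real obstacle.
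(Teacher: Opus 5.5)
Your proposal is correct and follows the paper's own proof. The maximal-control cases are immediate from the definitions, and the unrestrained case rests on $(A\cup B)/(X,Y)=A/(X,Y)\cup B/(X,Y)$ together with upward closure of unrestrainedness (\Cref{r: rest}); you additionally spell out the routine checks the paper leaves implicit, namely that $(X,Y)$ controls $A\cup B$, that the condition on $\card{X}$ is inherited, and that the $Y$-witnesses come from $A$.
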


\begin{proof}
Maximal-controlled cases are immediate from the definitions.
Unrestrained-controlled is due to \Cref{r: rest} since,
\[
(A\cup B)/(X,Y)=(A/(X,Y))\cup (B/(X,Y))\,.
\]
\end{proof}

\begin{definition}
Let $(c,D)\in\Theta_p$ and $(X,Y)$ a pair of
subsets of atoms,
we say that $(c,D)$ satisfy the \emph{insertion condition}
of type $\type$ with respect to $(X,Y)$ if
one of the following holds:
\begin{itemize}
\item $\type=\leftcontrol$
and $c\in Y$.
\item $\type=\rightcontrol$
and $D=X$.
\item $\type=\fullcontrol$
and ($c\in Y$ or $D=X$).
\item $\type=\unrestrainedcontrol$
and ($c\in Y$ or $X\subseteq D$).
\end{itemize}
\end{definition}

\begin{remark}
An element
$(c,D)\in \Theta_p$
satisfy the insertion condition of some type
with respect to $(X,Y)$
if and only if,
the singleton set $\set{(c,D)}$ is weak
good controlled by $(X,Y)$ of the same type.
\end{remark}

\begin{lemma}
\label{l: insertion place}
Let $A\subseteq\Theta_p$ with a good control $(X,Y)$ of some type
and let $(c,D)\in\Theta_p$ that 
satisfies the insertion condition of the same type for $(X,Y)$.
Then there
there is an element $(a,B)\in A$
such that
\begin{itemize}
\item $a\notin D,c\notin B$, equivalently $(a,D),(c,B)\in\Theta_p$.
\item The set $A^\prime\subseteq \Theta_p$ has also good control
with respect to $(X,Y)$ of the same type as $A$. Where,
\begin{equation}
\label{eq: set post insert 2}
A^\prime=A\setminus\set{(a,B)}\cup\set{(a,D),(c,B)}\, .
\end{equation}
\item The set
$A^{\prime\prime}= A^\prime\cup\set{(a,B)}\subseteq \Theta_p$
also has good control with respect to $(X,Y)$ with the same type as $A$.
\end{itemize}
\end{lemma}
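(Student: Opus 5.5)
We argue by cases on the type of the good control $(X,Y)$ of $A$. The third item follows from the second: $A\subseteq A''$, and $A''$ is $A'$ plus one element of $A$, which has weak good control of the same type by \Cref{r: ss weak}. Hence \Cref{l: good cup} applies to $A'$ and $\set{(a,B)}$. So in each case we only need to find $(a,B)\in A$ with $a\notin D$ and $c\notin B$, and then check that $A'$ has good control of the same type. Since $(c,D)$ satisfies the insertion condition, $\set{(c,D)}$ has weak good control of the type. We still have to handle the two new elements $(a,D)$ and $(c,B)$, and we must keep the witness requirements: every $y\in Y$ still needs some $(y,\cdot)$ in the set, and, where required, some $(\cdot,X)$.

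\emph{Left-control.} Here $c\in Y$ and $A\subseteq Y\times\atoms^{(p)}$. Choose $(a,B)\in A$ with $a=c$, which exists by the witness condition. Then $c\notin B$ because $(c,B)\in\Theta_p$, and $a=c\notin D$ because $(c,D)\in\Theta_p$. Hence $A'=A\cup\set{(c,D)}$, which is still inside $Y\times\atoms^{(p)}$ and keeps all witnesses.

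\emph{Right-control.} Here $D=X$ and $A\subseteq\atoms\times\set{X}$. Take any $(a,B)\in A$ with $a\neq c$; then $B=X=D$, so $A'=A\cup\set{(c,X)}$. If every element of $A$ had first coordinate $c$, take $(c,X)$ itself, and $A'=A$.

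\emph{Full-control.} If $c\in Y$, we argue as in the left case, choosing the witness $(c,\cdot)$; the added $(c,D)$ lies in $Y\times\atoms^{(p)}$. If $D=X$ and $c\notin Y$, we pick the element $(a,X)\in A$, which is guaranteed. If $a=c$, then $A'=A$. Otherwise $(a,D)=(a,X)$ is unchanged, and $(c,X)$ is added and lies in $\atoms\times\set{X}$. In both subcases the witnesses for $Y$ and for $X$ survive, because no element is really removed: the removed element either reappears, or its first coordinate reappears in $(a,D)$ and its second in $(c,B)$. The same observation keeps the $Y$-witnesses in the left case.

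\emph{Unrestrained-control.} This is the main case, and we reduce it to \Cref{l: unrest simple}. If $c\in Y$, take the witness $(c,B)\in A$; as before $A'=A\cup\set{(c,D)}$, which is controlled since $c\in Y$, and its reduction contains $A/(X,Y)$. So assume $c\notin Y$, hence $X\subseteq D$. Set $q=p-\card{X}$, $\bar A=A/(X,Y)\subseteq\Theta_q$, which is unrestrained, and $(c,D\setminus X)\in\Theta_q$. \Cref{l: unrest simple} gives $(a,B\setminus X)\in\bar A$, with $a\notin Y$ and $X\subseteq B$, such that $a\notin D\setminus X$, $c\notin B\setminus X$, and the swapped set $\bar A'$ is unrestrained. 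We must upgrade these to $a\notin D$ and $c\notin B$. For $c$: $c\notin X$ because $X\subseteq D$ and $(c,D)\in\Theta_p$, so $c\notin B$. For $a$: $a\notin X$ because $X\subseteq B$ and $a\notin B$, so $a\notin D$. Then $A'$ is controlled by $(X,Y)$: $(a,D)$ has $X\subseteq D$, and $(c,B)$ has $X\subseteq B$. Moreover $A'/(X,Y)$ contains $\bar A'$, since $a,c\notin Y$, so it is unrestrained by \Cref{r: rest}. The $Y$-witnesses survive because $a\notin Y$ means no element with first coordinate in $Y$ was removed.

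The step I expect to be the main obstacle is exactly this transfer through the reduction: checking that $a,c\notin X$ and that removing $X$ commutes with the swap, so that the reduced swap equals $\bar A'$.
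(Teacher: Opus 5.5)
Your proof is correct. For the three maximal types it matches the paper's argument: choose the witness $(c,\cdot)$ when $c\in Y$, or the witness $(\cdot,X)$ when $D=X$, so that $A'=A\cup\set{(c,D)}$, and \Cref{l: good cup} finishes.

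The real difference is the unrestrained case with $c\notin Y$. The paper first assumes that no element of $A$ begins with $c$. It then re-runs, inside $A/(X,Y)$, the entire contradiction argument from the proof of \Cref{l: unrest simple}, rebuilding the restraining pair $(z,W\setminus\set{c}\cup\set{a})$. You instead apply \Cref{l: unrest simple} as a black box to $\bar A=A/(X,Y)\subseteq\Theta_q$ and $(c,D\setminus X)\in\Theta_q$, and lift the chosen element back to $(a,B)\in A$. The only extra work is checking that $a,c\notin X$ and that $A'/(X,Y)\supseteq\bar A'$. Both checks are sound. Equality in fact holds: every element with first coordinate outside $Y$ has $X\subseteq B$, so $(a,B)$ is the unique preimage of $(a,B\setminus X)$.

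What your route buys:
\begin{itemize}
\item It is shorter and shows that this case is literally a corollary of the simple lemma.
\item It needs no separate treatment of elements beginning with $c$, because \Cref{l: unrest simple} handles that situation itself. The paper dismisses this situation somewhat loosely as ``a previous case''.
\item Your derivation of the third item from \Cref{l: good cup} is uniform across all types. The paper notes $A''=A'$ in the maximal cases and leaves the unrestrained case implicit.
\end{itemize}
The paper's version is self-contained but duplicates an argument. Its direct computation of $A_2/(X,Y)$ relies on the same commutation of swap and reduction that you singled out as the main obstacle.
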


This is a direct generalization of \Cref{l: unrest simple}.

\begin{proof}
If $c\in Y$, there
is an element $(c,B)\in A$,
and replacing it simply adds the element $(c,D)$.
In this case,
$A^{\prime\prime}=A^\prime=A\cup\set{(c,D)}$
and the result follows from \Cref{l: good cup}
for the singleton set $B=\set{\tuple{c,D}}$.

If $D=X$, there is an element of the form $(a,D)\in A$,
replacing it yields
$A^{\prime\prime}=A^\prime=A\cup\set{(c,D)}$,
which again has the same relation with $(X,Y)$
due to \Cref{l: good cup}.

It remain to treat the unrestrained-controlled case
with $X\subseteq D$.
We may assume that no element in $A$
begins with $c$,
otherwise we fall into a previous case.

Let $D^\prime=D\setminus X$.
Since $A^\prime=A/(X,Y)$ is unrestrained,
the pair $(c,D^\prime)$ does not restrain $A^\prime$.
In particular, there is $(a,B)\in A^\prime$
with $a\notin D^\prime,c\notin B$.
Note, $(a,B\cup X)\in A$.
We contend that this element is a good choice for
the replacement.

First, note that the new elements $(a,D),(c,B\cup X)$
are indeed in $\Theta_p$.
Indeed, $(a,B\cup X)\in A\subseteq\Theta_p$
and $a\notin D^\prime$, implies $a\notin D^\prime\cup X=D$.
Similarly, $c\notin D$, since $(c,D)\in\Theta_p$
and $X\subseteq D$, implies $c\notin X$ which together with $c\notin B$,
implies $c\notin B\cup X$.
Therefore, the new set
$A_2=A\setminus\set{(a,B\cup X)}\cup\set{(a,D),(c,B\cup X)}$
is indeed a subset of $\Theta_p$.

Second, we need to show that $A_2$ is controlled by $(X,Y)$.
It holds for all original elements of $A$ since
it is a good control of $A$.
Moreover, it holds for the new elements as well,
since $X\subseteq D$ by our assumption and
$X\subseteq B\cup X$ trivially.

It remains to show that $A_2/(X,Y)$
is still unrestrained.
Assume to the contrary that $A_2/(X,Y)$ is restrained
by some $(z,W)$.

However,
\begin{align*}
&A_2/(X,Y)\\
&=(A/(X,Y))\,
\setminus \ \set{(a,B\cup X)}/(X,Y)
\ \cup \ \set{(a,D),(c,B\cup X)}/(X,Y)=\\
&=(A/(X,Y))\, \setminus \ \set{(a,B)}
\ \cup \ \set{(a,D^\prime),(c,B)}
\, .
\end{align*}

In particular,
the subset $(A/(X,Y))\setminus\set{(a,B)}$
is restrained, as opposed to $A/(X,Y)$.
Therefore, the only element that obstructs this restraint
is $(a,B)$.
Hence, $a\notin W$ and $z\notin B$.

Furthermore, the new elements
$(a,D^\prime),(c,B)$
are restrained by it,
which necessarily implies that
$z\in D^\prime$ and $c\in W$.
In particular, $c\in W$.

We observe at elements in
$A_3=(A/(X,Y))\setminus \set{(a,B)}$,
they are restrained by $(z,W)$
from our assumption.
Therefore, for every element $(u,V)\in A_3$
either
$u\in W$
or 
$z\in V$.
Moreover, since we assume that no element begins with $c$,
it simplifies that any element in
$A_3$
satisfies
$u\in W\setminus\set{c}$
or 
$z\in V$.

We contend that $(z,W\setminus\set{c}\cup\set{a})$
now restrains all the elements in $A/(X,Y)$.
Clearly all elements in $A_3$ are restrained
and now also $(a,B)$ is restrained,
since $a\in W\setminus\set{c}\cup\set{a}$.
Which is a contradiction to $A/(X,Y)$ being unrestrained.
\end{proof}

\begin{lemma}
\label{l: compact ss}
Let $A\subseteq\Theta_p$ with a good control $(X,Y)$.
Then there is subset $A^\prime\subseteq A$
with the same type of good control with respect to $(X,Y)$
and $\card{A^\prime}\, \leq 6p^3+\card{Y}$.
\end{lemma}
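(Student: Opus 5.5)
We argue by cases on the type of the good control $(X,Y)$, choosing in each case a small set of \emph{witnesses} in $A$ that certifies the type. The aim is to pick one element $(y,\cdot)\in A$ for every $y\in Y$, together with at most $6p^3$ further elements.

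\textbf{Maximal controls.} These cases are immediate. In the left-control case, take one element $(y,\cdot)\in A$ for each $y\in Y$; this gives $A^\prime\subseteq A\subseteq Y\times\atoms^{(p)}$, and every $y\in Y$ occurs as a first component, so $\card{A^\prime}\leq\card{Y}$. In the right-control case, a single element $(\cdot,X)\in A$ suffices. In the full-control case, take the union of both choices, giving at most $\card{Y}+1$ elements. In each case $A^\prime$ stays inside the appropriate union because $A$ does, so it has the same type of good control with respect to $(X,Y)$.

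\textbf{Unrestrained control.} Here $\card{X}=q<p$ and $A/(X,Y)\subseteq\Theta_{p-q}$ is unrestrained. Apply \Cref{l: compact ss simple} to $A/(X,Y)$ with parameter $p-q$. This yields an unrestrained subset $U\subseteq A/(X,Y)$ with $\card{U}\leq (p-q)^3+2(p-q)^2+2(p-q)+1\leq 6p^3$ (using $p\geq1$). Each $(a,B^\prime)\in U$ arises as $(a,B)/(X,Y)$ for some $(a,B)\in A$ with $a\notin Y$. Pick one such preimage for each element of $U$, and add one element $(y,\cdot)\in A$ for every $y\in Y$; call the result $A^\prime$. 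Since $A^\prime\subseteq A$, the pair $(X,Y)$ still controls $A^\prime$, and every $y\in Y$ occurs as a first component. Moreover $A^\prime/(X,Y)\supseteq U$ is unrestrained by \Cref{r: rest}. The elements with first component in $Y$ are discarded by the reduction, so they do not disturb this. Hence $A^\prime$ has unrestrained good control with respect to $(X,Y)$, and $\card{A^\prime}\leq 6p^3+\card{Y}$.

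\textbf{Expected obstacle.} The only delicate point is the preimage step. Distinct elements of $U$ have distinct preimages, so the count is fine, but we must check that the reduction of the chosen preimages really contains $U$ and does not merely project onto it. This holds because reduction is applied elementwise. We must also confirm the arithmetic bound $(p-q)^3+2(p-q)^2+2(p-q)+1\leq 6p^3$, which is clear for $p\geq1$.
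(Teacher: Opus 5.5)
Your proposal is correct and follows essentially the same route as the paper. In the maximal cases you take one representative per $y\in Y$, plus one $(\cdot,X)$ element where needed. In the unrestrained case you apply Lemma~\ref{l: compact ss simple} to $A/(X,Y)\subseteq\Theta_{p-\card{X}}$, lift each element back to its (unique) preimage $(a,B'\cup X)$, and add one representative for each $y\in Y$. You also make explicit the arithmetic bound $t^3+2t^2+2t+1\leq 6t^3\leq 6p^3$ for $1\leq t=p-\card{X}\leq p$, which the paper leaves implicit.
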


This is a direct generalization of \Cref{l: compact ss simple}.

\begin{proof}
For left-control, simply take a representative
for each $y\in Y$,
which results in $\card{Y}$ elements.
For right-control, a single representative
with $(\cdot,X)$ is sufficient.
For full-control we take at most $\card{Y}\,+1$ such elements.

For unrestrained-controlled,
take a representative for each $y\in Y$.
Then, for $A^\prime=A/(X,Y)\subseteq \Theta_t$ with $t\leq p$,
which is unrestrained, there is an unrestrained subset
$A^{\prime\prime}\subseteq A^{\prime}$
with $\card{A^{\prime\prime}}\leq 6p^3$.
For each element $a\in A^{\prime\prime}$,
take a precursor element that fits him in $A$.
To obtain the desired subset.
\end{proof}

\subsection{Matching}

\begin{lemma}
\label{l: shorten ss}
Let $A\subseteq\Theta_p$ with
a good control $(X,Y)$.
Let $(a_i,B_i)_{i=1}^N$
be a sequence of elements from $A$.
If $N>(2p+1)(\card{Y}+1)$,
then there are indices $1\leq i<j\leq N$,
such that
$a_i\notin B_j$,
$a_j\notin B_i$,
and the elements $(a_i,B_j),(a_j,B_i)$
are weak-good-controlled by $(X,Y)$
with the same type as $A$.
\end{lemma}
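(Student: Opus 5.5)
The proof is a pigeonhole argument followed by an application of \Cref{l: shorten}. We first sort the $N$ elements by their first coordinate: each element $(a_i,B_i)$ either has $a_i\in Y$ or $a_i\notin Y$. We split the sequence into $\card{Y}+1$ classes: one class $\mathcal{C}_y$ for each $y\in Y$, holding the elements with $a_i=y$, and one class $\mathcal{C}_\ast$ holding the elements with $a_i\notin Y$. Since $N>(2p+1)(\card{Y}+1)$, some class contains at least $2p+2$ elements. Applying \Cref{l: shorten} to the subsequence in that class gives indices $i<j$ with $a_i\notin B_j$ and $a_j\notin B_i$. Hence $(a_i,B_j),(a_j,B_i)\in\Theta_p$.

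It remains to check that the swapped elements have weak good control of the same type as $A$. The argument depends on which class was chosen.

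If the class is some $\mathcal{C}_y$, then both swapped elements have first coordinate $y\in Y$. For types $\leftcontrol$, $\fullcontrol$ and $\unrestrainedcontrol$, an element with first coordinate in $Y$ is weakly good controlled. For type $\rightcontrol$ we have $A\subseteq\atoms\times\set{X}$, so every $B_i=X$, and the swapped elements again lie in $\atoms\times\set{X}$.

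If the class is $\mathcal{C}_\ast$, then every element in it has $a_i\notin Y$. Because $(X,Y)$ controls $A$, this forces $X\subseteq B_i$ for all of them. We then handle each type:
\begin{itemize}
\item In the maximal cases, $\card{X}=p=\card{B_i}$, so $B_i=X$. The swapped elements are again of the form $(\cdot,X)$, which is allowed for $\rightcontrol$ and $\fullcontrol$.
\item For $\leftcontrol$, the class $\mathcal{C}_\ast$ is empty, since $A\subseteq Y\times\atoms^{(p)}$.
\item For $\unrestrainedcontrol$, weak control only asks that $(X,Y)$ control the element. This holds because $X\subseteq B_j$ and $X\subseteq B_i$.
\end{itemize}

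In the maximal cases with all $B_i=X$, a matching would require $a_i\notin X$. That is automatic, because each $(a_i,X)\in\Theta_p$, so \Cref{l: shorten} is not even needed there.

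The main obstacle I expect is bookkeeping. One must make sure that in each type the swap does not break the structural shape $Y\times\atoms^{(p)}\cup\atoms\times\set{X}$. This is exactly why the classes must be separated by the first coordinate: mixing an element with $a\in Y$ and one with $a\notin Y$ could produce $(a_j,B_i)$ with $a_j\notin Y$ and $B_i\neq X$, which violates $\fullcontrol$. With the partition by first coordinate, each case above reduces to a one-line check.
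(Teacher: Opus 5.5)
Your proof is correct and follows essentially the same route as the paper. Both use a pigeonhole over $\card{Y}+1$ classes: either the first coordinate equals some $y\in Y$, or the control forces $X\subseteq B_i$. Both then apply \Cref{l: shorten} inside the large class and check weak good control type by type. The differences are cosmetic: you organize the case analysis by class rather than by type, and you note that \Cref{l: shorten} is unnecessary whenever $B_i=B_j=X$, whereas the paper invokes it for right-control anyway.
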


\begin{proof}
For left-control,
because $\frac{N}{\card{Y}}>\frac{N}{\card{Y}+1}>2p+1$,
there are $2p+2$ elements with the same $y\in Y$
in their first coordinate.
From \Cref{l: shorten}, there are two elements
that satisfy $a_i\notin B_j$ and $a_j\notin B_i$,
and in this case $a_i=a_j=y\in Y$, hence,
$(a_i,B_j)$ and $(a_j,B_i)$
are weak-left-controlled by $Y$.

For right-control, because $N>2p+1$,
there are two elements with  $a_i\notin B_j$ and $a_j\notin B_i$,
from right-control $B_i=B_j=X$,
hence, $(a_i,B_j)$ and $(a_j,B_i)$
are weak-right-controlled by $X$.

For full-control, because $\frac{N}{\card{Y}+1}>2p+1$,
so there are $2p+2$ elements with the same first coordinate
or $2p+2$ elements with the same second coordinate,
it falls into previous cases.

For unrestrained-control, for every $i$,
either $a_i\in Y$ or $X\subseteq B_i$.
Because $\frac{N}{\card{Y}+1}>2p+1$,
there are $2p+2$ elements with the same first coordinate,
or there are $2p+2$ elements with second coordinate
that contains $X$.
In the first case, the elements
$(a_i,B_j)$ and $(a_j,B_i)$
are weak-good-controlled since $a_i=a_j\in Y$.
In the second case, the elements
$(a_i,B_j)$ and $(a_j,B_i)$
are weak-good-controlled since $X\subseteq B_j,B_i$.
\end{proof}

\section{Altering sets have semi-linear Parikh images}
\label{a: as sl}

The goal of this section is to show,
as the title suggests,
that Altering sets have semi-linear
Parikh images.
In other words, their Parikh image
is an orbit-finite union of linear sets.

For $k$-altering sets,
let $p=k+1$.

Recall, that an altering set $\bsigma\in Q$
is a word $\bsigma$
over
$\Gamma=(S\times\atoms\times S)\ \cup
\ (S\times\atoms^{(k)}\times S)$,
of the following form,
\begin{align} 
\begin{aligned} \label{eq:as two}
\tuple{s_1, a_1, t_1}
\tuple{t_1,B_1,s_2}
\tuple{s_2, a_2, t_2}
\tuple{t_2,B_2,s_3}
\tuple{s_3, a_3, t_3}
\ldots
\tuple{s_n, a_n, t_n} 
\end{aligned}
\end{align}
such that 
$a_i \neq a_{i+1}$
and $B_i\cap \{a_i,a_{i+1}\}=\emptyset$,
for $i=1,2,\ldots,{n-1}$.
Equivalently, $\tuple{a_i,B_i\cup\set{a_{i+1}}}\in \Theta_{k+1}$
for $i=1,2,\ldots,n-1$.

For each pair of states $(t,s)\in S^2$,
an altering set word $\bsigma\in Q$
induces a set
$A^{t,s}_\sbsigma\subseteq\Theta_{k+1}=\Theta_p$,
\begin{equation}
\label{eq: as pairs}
A^{t,s}_\sbsigma=\set{\tuple{a_i,B_i\cup\set{a_{i+1}}}:
t_i=t,s_{i+1}=s}\, .
\end{equation}

\begin{definition}
A \textit{control profile} is a function of
the form below,
\[
\chi:S^2\to \set{\nullcontrol,\leftcontrol,
\rightcontrol,\fullcontrol,\unrestrainedcontrol }\times
\atoms^{(\leq p)}\times \atoms^{\left(\leq \binom{p+1}{2}\right)}
\, ,
\]
For each pair of states $(t,s)\in S^2$,
write $\chi(t,s)=(\type^{t,s},X^{t,s},Y^{t,s})$.
Let $\Chi$ be the set of all such functions,
it is an orbit-finite set.
\end{definition}

Next, we partition $Q$ to sub-languages
of words that are consistent with a given control profile.

\begin{definition}
For $\chi\in\Chi$, define $Q^{\chi}$ be the set
of all words $\bsigma\in Q$ that
are consistent with the control profile of $\chi$.
That is, for all $(t,s)\in S^2$,
the set
$A_\sbsigma^{t,s}\subseteq\Theta_{p}$
(cf. \eqref{eq: as pairs})
has $(X^{t,s},Y^{t,s})$ as a good control
of type $\type^{t,s}$.
If $\type^{t,s}=\nullcontrol$,
then $A_\sbsigma^{t,s}=\emptyset$.
\end{definition}

From \Cref{l: good control} every set in $\Theta_p$
has a good control bounded by $p,\binom{p+1}{2}$, thus
\begin{equation}
\label{eq: q cup}
Q=\bigcup_{\chi\in\Chi}Q^{\chi}\, .
\end{equation}

Next, we define the set of remainders
of a given control profile $\chi\in \Chi$.

\begin{definition}
A word $\brho$ over $\Gamma$ of the form
\begin{equation}
\label{eq: remainder}
\brho=
\tuple{t,E,s}
\tuple{s,f,\cdot}
\brho_1
\tuple{\cdot,d,t}
\, ,
\end{equation}
is said to be \emph{remainder}
compatible with control profile $\chi\in\Chi$
if all of the following holds:
\begin{enumerate}
\item The word is a legal infix in $Q$,
i.e., $f\notin E$ and
\[
\btau=\tuple{s,f,\cdot}\brho_1\tuple{\cdot,d,t}\in Q\, .
\]
\item The pair $(d,E\cup\set{f})\in\Theta_p$,
i.e., $d\notin E\cup\set{f}$ (it forms a cycle).
\item The word is compatible with $(t,s)$-insertion
under $\chi$, i.e., $\type^{t,s}\neq \nullcontrol$
and $(d,E\cup\set{f})$
satisfies the insertion condition of type $\type^{t,s}$
with respect to $(X^{t,s},Y^{t,s})$.
\item The word has control-profile subword structure,
i.e.,
for all $p,q\in S$, the set $A^{p,q}_\sbtau$
has weak good control of type
$\type^{p,q}$ with respect to $(X^{p,q},Y^{p,q})$.
\end{enumerate}
Let $R^\chi$ be the set of all remainders
compatible with $\chi$.
\end{definition}

Let $Q^\chi_N,R^\chi_N$ be the subset of all 
altering sets in $Q^\chi$ and remainders in $R^\chi$
of length at most $N$.

\begin{theorem}
\label{t: par qchi}
The Parikh image of
$Q^{\chi}$
is semi-linear.
In fact, there are constants
$N_0,N_1$ that depends only
on $p$ and $\card{S}$ such that
\[
\Par{Q^{\chi}}=
\Par{Q^{\chi}_{N_0}}+
\left(\Par{R^\chi_{N_1}}\right)
^\ast
\, .
\]
\end{theorem}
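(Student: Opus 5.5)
We prove the two inclusions separately, following the template of the anti-path lemmas (\Cref{l: unrest ap one} and its restrained counterpart). The roles of \Cref{l: unrest simple}, \Cref{l: compact ss simple} and \Cref{l: shorten} are taken over by their controlled generalizations \Cref{l: insertion place}, \Cref{l: compact ss} and \Cref{l: shorten ss}. Since $\card{X^{t,s}}\leq p$ and $\card{Y^{t,s}}\leq\binom{p+1}{2}$, all constants will depend only on $p$ and $\card{S}$.

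\textbf{Inclusion $\supseteq$.} It suffices to show $\Par{Q^\chi}+\Par{R^\chi}\subseteq\Par{Q^\chi}$ and then iterate. Fix $\bsigma\in Q^\chi$ and a remainder $\brho=\tuple{t,E,s}\tuple{s,f,\cdot}\brho_1\tuple{\cdot,d,t}\in R^\chi$. By condition~3, $\type^{t,s}\neq\nullcontrol$, so $A^{t,s}_\sbsigma$ is nonempty with good control $(X^{t,s},Y^{t,s})$. Moreover $(d,E\cup\set{f})$ satisfies the insertion condition. \Cref{l: insertion place} then yields an element $(a_i,B_i\cup\set{a_{i+1}})\in A^{t,s}_\sbsigma$, which comes from an occurrence $\tuple{s_i,a_i,t}\tuple{t,B_i,s}\tuple{s,a_{i+1},t_{i+1}}$ in $\bsigma$, with $a_i\notin E\cup\set{f}$ and $d\notin B_i\cup\set{a_{i+1}}$. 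We splice $\brho$ in, replacing the middle letter by
\[
\tuple{t,E,s}\,\tuple{s,f,\cdot}\brho_1\tuple{\cdot,d,t}\,\tuple{t,B_i,s}.
\]
The two cross-pairs are exactly the ones produced by \Cref{l: insertion place}, so the result $\bomega$ is an altering set. Its Parikh vector is $\Par{\bsigma}+\Par{\brho}$, because the letter $\tuple{t,B_i,s}$ is kept and the letter $\tuple{t,E,s}$ is added. The new set $A^{t,s}_\sbomega$ is $A'$ or $A''$ of \Cref{l: insertion place} together with the $(t,s)$-pairs from $\btau$. For every other pair $(p,q)$, $A^{p,q}_\sbomega$ is $A^{p,q}_\sbsigma\cup A^{p,q}_\sbtau$. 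By condition~4 and \Cref{l: good cup}, the good control of each type is preserved, so $\bomega\in Q^\chi$.

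\textbf{Inclusion $\subseteq$.} Take $\bsigma\in Q^\chi$ of length greater than $N_0$. For each $(t,s)$ with $\type^{t,s}\neq\nullcontrol$, apply \Cref{l: compact ss} to get a small subset of $A^{t,s}_\sbsigma$ with the same good control, and mark the letters realizing it. This marks $O(\card{S}^2(6p^3+p^2))$ letters in total. By choosing $N_0$ large, some unmarked segment contains more than $\card{S}^2(2p+1)(\binom{p+1}{2}+1)$ consecutive set-letters. By pigeonhole, more than $(2p+1)(\card{Y}+1)$ of these are $(t,s)$-transitions for a single pair $(t,s)$. \Cref{l: shorten ss} gives indices $i<j$ whose swapped pairs stay in $\Theta_p$ and are weakly good controlled.

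We cut out the factor between the $i$-th set-letter (inclusive) and the $j$-th set-letter (exclusive), so the kept letter $\tuple{t,B_j,s}$ now links $a_i$ and $a_{j+1}$. The removed factor has the shape~\eqref{eq: remainder} with $E=B_i$. We then check its conditions. Condition~1 holds because the factor is an infix of $\bsigma$. Condition~2 follows from the matching $a_j\notin B_i\cup\set{a_{i+1}}$. Condition~3 is the weak-control conclusion of \Cref{l: shorten ss}. Condition~4 holds by \Cref{r: ss weak}. The shortened word $\bomega$ keeps all marked letters. Its new cross-pair $(a_i,B_j\cup\set{a_{j+1}})$ is weakly controlled, and every remaining pair comes from $\bsigma$. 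The sandwich argument of \Cref{l: sandwich good}, together with \Cref{l: good cup}, then shows $\bomega\in Q^\chi$.

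To bound the remainder length by $N_1$, we instead apply \Cref{l: shorten ss} to the first $(2p+1)(\card{Y}+1)+1$ occurrences of the pair $(t,s)$ inside the chosen window. The window length is bounded by roughly $\card{S}^2(2p+1)(\binom{p+1}{2}+1)$ set-letters. Induction on length finishes the argument.

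\textbf{Main obstacle.} The expected difficulty is the bookkeeping of the swap in both directions. One must check that the index pairing in the matching produces precisely the pairs $(a_i,B_j\cup\set{a_{j+1}})$ and $(a_j,B_i\cup\set{a_{i+1}})$, since the second component includes the next atom. One must also check that the removed or inserted cyclic factor is exactly of the remainder shape. Both the matching and the weak-control condition have to be arranged to refer to the correct cross-pair.
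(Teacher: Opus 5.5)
Your proposal is correct and follows essentially the same route as the paper. For $\supseteq$ you splice a remainder into $\bsigma$ at the element supplied by \Cref{l: insertion place} and conclude with \Cref{l: good cup}; for $\subseteq$ you mark a compact good-controlled witness set from \Cref{l: compact ss}, pigeonhole a long unmarked window onto one state pair $(t,s)$, excise the factor found by \Cref{l: shorten ss}, and recover membership in $Q^\chi$ via \Cref{l: sandwich good} and \Cref{l: good cup}. Your explicit restriction to a bounded window to control $N_1$ just spells out a step the paper leaves implicit.
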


By \eqref{eq: q cup}
we obtain the following.

\begin{corollary}
Parikh's image of $Q$ is semi-linear.
\end{corollary}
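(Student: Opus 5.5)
The corollary follows by combining the decomposition \eqref{eq: q cup} with \Cref{t: par qchi}, and then checking that the resulting union is an orbit-finite union of linear sets. First, since the Parikh map commutes with unions, \eqref{eq: q cup} gives
\[
\Par{Q}=\bigcup_{\chi\in\Chi}\Par{Q^{\chi}}\, .
\]
Second, by \Cref{t: par qchi}, each term has the form
\[
\Par{Q^{\chi}}=\Par{Q^{\chi}_{N_0}}+\left(\Par{R^\chi_{N_1}}\right)^\ast\, ,
\]
with constants $N_0,N_1$ that depend only on $p$ and $\card{S}$, and hence are uniform in $\chi$.

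Third, I will check that each such term is a rational expression of star-height at most one. By \Cref{l: b-o-f}, a set of data vectors of bounded size is orbit-finite. The set $\Par{Q^\chi_{N_0}}$ consists of Parikh vectors of words of length at most $N_0$, so its sizes are bounded by $N_0$. Since letters of $\Gamma$ are single symbols, this remains true over the orbit-finite alphabet $\Gamma$. Hence $\Par{Q^\chi_{N_0}}$ is orbit-finite, and likewise $\Par{R^\chi_{N_1}}$ is orbit-finite. An orbit-finite set of data vectors is an orbit-finite union of singleton sums, so it has star-height $0$. Therefore each $\Par{Q^\chi}$ is a Minkowski sum of a star-free set with the star of a star-free set, i.e.\ it has star-height at most one.

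Fourth, the outer union must be an orbit-finite union in the paper's sense. The index set $\Chi$ is orbit-finite, as noted after its definition: it is a finite product of $S^2$-indexed copies of a finite set times bounded-size atom subsets. It remains to check that the map $\chi\mapsto\Par{Q^\chi}$ is finitely supported, and in fact equivariant. The conditions defining $Q^\chi$ (good control, consistency with the profile, $\Theta_p$-membership) and $R^\chi$ are all expressed by equalities of atoms and are invariant under permutations. So $\pi(Q^\chi)=Q^{\pi(\chi)}$ for every $\pi\in\Gr$, and similarly for $R^\chi$; the Parikh map is also equivariant. The one subtlety is to make sure the constants $N_0,N_1$ do not depend on $\chi$, so that the expressions vary uniformly; \Cref{t: par qchi} provides exactly this. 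Hence $\Par{Q}$ is an orbit-finite union of star-height-one expressions, so it has star-height one and is semi-linear.

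I expect no real obstacle here. The only care needed is the equivariance and orbit-finiteness bookkeeping in the fourth step, and the uniformity of the constants $N_0,N_1$ over $\chi$.
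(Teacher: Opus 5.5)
Your proposal is correct and follows the paper's route: the paper obtains the corollary directly from the decomposition \eqref{eq: q cup} together with \Cref{t: par qchi}. Your additional bookkeeping (orbit-finiteness of $\Chi$, equivariance of $\chi\mapsto\Par{Q^{\chi}}$, uniformity of $N_0,N_1$, and star-height zero of the bounded-size parts) spells out what the paper leaves implicit; strictly, the conclusion is star-height \emph{at most} one, which is exactly semi-linearity.
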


The proof of \Cref{t: par qchi}
is based on the lemmas below.

\begin{lemma}
Let $\bsigma\in Q^\chi,\brho\in R^\chi$,
there is a word $\bpsi\in Q^\chi$
such that,
\[
\Par{\bpsi}=\Par{\bsigma}+\Par{\brho}
\, .
\]
\end{lemma}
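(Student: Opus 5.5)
The plan mirrors the insertion half of the proof of \Cref{l: unrest ap one}: splice the remainder $\brho$ into $\bsigma$ at a suitable place and check that the result is an altering set with the same control profile. Write $\brho=\tuple{t,E,s}\tuple{s,f,\cdot}\brho_1\tuple{\cdot,d,t}$ as in \eqref{eq: remainder}, and $\btau=\tuple{s,f,\cdot}\brho_1\tuple{\cdot,d,t}\in Q$. By item~3 of the definition of remainders, $\type^{t,s}\neq\nullcontrol$. Hence $A^{t,s}_\sbsigma\neq\emptyset$ has good control $(X^{t,s},Y^{t,s})$ of type $\type^{t,s}$, and $(d,E\cup\set{f})$ satisfies the insertion condition of that type.

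We apply \Cref{l: insertion place} to $A=A^{t,s}_\sbsigma$ and $(c,D)=(d,E\cup\set{f})$. This gives an element $(a_i,B_i\cup\set{a_{i+1}})\in A$, arising from a factor
\[
\tuple{s_i,a_i,t}\tuple{t,B_i,s}\tuple{s,a_{i+1},t_{i+1}}
\]
of $\bsigma$, such that $a_i\notin E\cup\set{f}$ and $d\notin B_i\cup\set{a_{i+1}}$. We then form $\bpsi$ by replacing the single letter $\tuple{t,B_i,s}$ with
\[
\tuple{t,E,s}\,\btau\,\tuple{t,B_i,s},
\]
so that the factor becomes $\tuple{s_i,a_i,t}\tuple{t,E,s}\tuple{s,f,\cdot}\brho_1\tuple{\cdot,d,t}\tuple{t,B_i,s}\tuple{s,a_{i+1},t_{i+1}}$. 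Since every letter of $\brho$ is inserted and every letter of $\bsigma$ is kept, $\Par{\bpsi}=\Par{\bsigma}+\Par{\brho}$.

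Legality of $\bpsi$ in $Q$ is checked locally. The state matching works because the new letters start with $\tuple{t,E,s}$ after a letter ending in $t$, and end with $\tuple{\cdot,d,t}$ before $\tuple{t,B_i,s}$. The junction $a_i$, $E$, $f$ needs $a_i\neq f$ and $E\cap\set{a_i,f}=\emptyset$; this follows from $a_i\notin E\cup\set{f}$ together with $f\notin E$ (item~1). The junction $d$, $B_i$, $a_{i+1}$ needs $d\neq a_{i+1}$ and $B_i\cap\set{d,a_{i+1}}=\emptyset$; this follows from $d\notin B_i\cup\set{a_{i+1}}$ together with $a_{i+1}\notin B_i$, which comes from $\bsigma\in Q$. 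The interior of $\btau$ is legal by item~1.

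It remains to check that $\bpsi$ is consistent with the profile $\chi$, which I expect to be the main bookkeeping obstacle. For the pair $(t,s)$, the new set is
\[
A^{t,s}_\sbpsi=A^{t,s}_\sbsigma\cup\set{(a_i,E\cup\set{f}),(d,B_i\cup\set{a_{i+1}})}\cup A^{t,s}_\sbtau .
\]
The first two parts together form exactly the set $A^{\prime\prime}$ of \Cref{l: insertion place}, which has good control of the required type. By item~4, $A^{t,s}_\sbtau$ has weak good control of the same type, so \Cref{l: good cup} makes the union good. For any other pair $(p,q)$, $A^{p,q}_\sbpsi=A^{p,q}_\sbsigma\cup A^{p,q}_\sbtau$. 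If $\type^{p,q}\neq\nullcontrol$, \Cref{l: good cup} again applies. The delicate case is $\type^{p,q}=\nullcontrol$: we need $A^{p,q}_\sbtau=\emptyset$. I would read item~4 under the convention that weak good control of type $\nullcontrol$ means the set is empty, consistent with the definition of $Q^\chi$, and would make this convention explicit.
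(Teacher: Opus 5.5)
Your proposal is correct and is essentially the paper's proof: you splice $\brho$ in immediately after $\tuple{s_i,a_i,t}$ at the element supplied by \Cref{l: insertion place}, then conclude with \Cref{l: insertion place} and \Cref{l: good cup}; the paper tacitly uses the same $\nullcontrol$ convention that you make explicit. One small correction: the splice breaks the factor that produced $(a_i,B_i\cup\set{a_{i+1}})$, so $A^{t,s}_{\sbpsi}$ is $A^\prime\cup A^{t,s}_{\sbtau}$ or $A^{\prime\prime}\cup A^{t,s}_{\sbtau}$ (depending on whether some other triple of $\bpsi$ yields that pair), rather than always containing all of $A^{t,s}_{\sbsigma}$ --- both cases are covered by the second and third items of \Cref{l: insertion place}, which is exactly the ``possible exception'' the paper notes.
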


\begin{proof}
Let $\brho$ be as in \eqref{eq: remainder}
with initial state pair $(t,s)$ and
corresponding loop pair
$(d,E\cup\set{f})\in\Theta_{k+1}$.

Note $\type^{t,s}\neq \nullcontrol$
due to the definition of $R^\chi$.
The pair that we wish to insert is of $(t,s)$ transition
with $(d,E\cup\set{f})$.
From \Cref{l: insertion place},
there is an element $(a,B\cup\set{c})\in A^{t,s}_\sbsigma$
that is appropriate for the
insertion of the pair $(d,E\cup\set{f})$.
In particular, $\bsigma$ is of the following form,
\[
\bsigma_1 \tuple{\cdot,a,t}\tuple{t,B,s}\tuple{s,c,\cdot}
\bsigma_2
\, ,
\]
Define $\bpsi$ by inserting $\brho$ immediately
after $\tuple{\cdot,a,t}$,
\[
\bpsi =
\bsigma_1 \tuple{\cdot,a,t} 
\tuple{t,E,s}
\tuple{s,f,\cdot}
\brho_1
\tuple{\cdot,d,t}
\tuple{t,B,s}\tuple{s,c,\cdot}
\bsigma_2 
\, .
\]
This is a legal altering-set word
because $\brho$ is a legal infix and
the insertion respects the boundaries.

It remain to show that $\bpsi\in Q^\chi$.
For the pair $(t,s)$,
\[
A_\sbpsi^{t,s}=A_\sbsigma^{t,s}
\cup\set{\tuple{a,E\cup\set{f}},\tuple{d,B\cup\set{c}}}
\cup A_\sbtau^{t,s}
\, ,
\]
with possibly the exception of the element $\tuple{a,B\cup\set{c}}$.

In either case, the union
of the first two sets
is guaranteed to be a
good control of the same type due to \Cref{l: insertion place}.
Moreover, it joins with the set $A^{t,s}_\sbtau$
which is weak good control of the same type.
Thus, the resulting set is a good control of the same type
from \Cref{l: good cup}.

For all other pairs of states $(p,q)\in S^2$,
the new set $A^{p,q}_{\sbpsi}$
is equal to
$A^{p,q}_{\sbsigma}\cup A^{p,q}_{\sbtau}$,
where the first is of good control
and the latter is of weak good control
of the same type $\type^{p,q}$ and with respect to $(X^{p,q},Y^{p,q})$,
which concludes the proof.
\end{proof}

\begin{lemma}
There are sufficiently large constants $N_0,N_1$
such that for every word  $\bsigma\in Q^\chi$ 
of length which is greater than $N_0$,
there are words $\brho\in R^\chi_{N_1},\bpsi\in Q^\chi$
such that,
\[
\Par{\bsigma}=\Par{\bpsi}+\Par{\brho}
\, .
\]
\end{lemma}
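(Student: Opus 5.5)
We mirror the converse inclusion in the proof of \Cref{l: unrest ap one}, now applied to every pair of states simultaneously. Let $\bsigma\in Q^\chi$ be as in \eqref{eq:as two} and long. For each $(t,s)\in S^2$ with $\type^{t,s}\neq\nullcontrol$, \Cref{l: compact ss} gives a subset $A'^{t,s}\subseteq A^{t,s}_\sbsigma$ with the same good control $(X^{t,s},Y^{t,s})$ and $\card{A'^{t,s}}\leq 6p^3+\binom{p+1}{2}$. Each element of $A'^{t,s}$ comes from a triple of letters $\tuple{\cdot,a_i,t}\tuple{t,B_i,s}\tuple{s,a_{i+1},\cdot}$, and we mark these letters. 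The total number of marked letters is at most $M=3\card{S}^2(6p^3+\binom{p+1}{2})$. This leaves at most $M+1$ maximal unmarked infixes. If we take $N_0$ larger than $(M+1)\cdot L+M$, for a threshold $L$ to be fixed below, some unmarked infix has at least $L$ consecutive set-letters.

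Inside this infix we look at the consecutive elements $(a_i,B_i\cup\set{a_{i+1}})$. By pigeonhole on the pair $(t_i,s_{i+1})\in S^2$, if $L>\card{S}^2\,(2p+1)(\binom{p+1}{2}+1)$, then some fixed pair $(t,s)$ occurs more than $(2p+1)(\card{Y^{t,s}}+1)$ times. We apply \Cref{l: shorten ss} to the subsequence of these occurrences, which are elements of $A^{t,s}_\sbsigma$. This gives indices $i<j$ with $t_i=t_j=t$, $s_{i+1}=s_{j+1}=s$, such that the swapped pairs $(a_i,B_j\cup\set{a_{j+1}})$ and $(a_j,B_i\cup\set{a_{i+1}})$ lie in $\Theta_p$ and are weak-good-controlled of type $\type^{t,s}$.

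We cut out $\brho=\tuple{t,B_i,s}\tuple{s,a_{i+1},\cdot}\cdots\tuple{\cdot,a_j,t}$ and let $\bpsi$ be the remaining word, which reads $\ldots\tuple{\cdot,a_i,t}\tuple{t,B_j,s}\tuple{s,a_{j+1},\cdot}\ldots$. The first swapped pair shows that $\bpsi$ is a legal altering-set word. Next we check that $\brho$ is a remainder compatible with $\chi$, taking $E=B_i$, $f=a_{i+1}$, $d=a_j$. It is a legal infix. The second swapped pair gives $(d,E\cup\set{f})\in\Theta_p$ and the insertion condition, which coincides with weak good control of the singleton. Its inner pair-sets are subsets of $A^{p,q}_\sbsigma$, so they are weakly controlled by \Cref{r: ss weak}. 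For $\bpsi\in Q^\chi$ we note that $A^{p,q}_\sbpsi$ contains the marked witnesses $A'^{p,q}$, because no marked letter was removed. It is also contained in $A^{p,q}_\sbsigma\cup\set{\text{new swapped pair}}$, and that union has good control by \Cref{l: good cup}. The sandwich \Cref{l: sandwich good} then gives $A^{p,q}_\sbpsi$ the required good control of type $\type^{p,q}$. Pairs with $\nullcontrol$ stay empty, since no new pair type is created. By construction $\Par{\bsigma}=\Par{\bpsi}+\Par{\brho}$.

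The main obstacle is bounding $\size{\brho}$ by a constant $N_1$. To do this, we restrict \Cref{l: shorten ss} to a window of the first $L$ occurrences within the chosen infix. Then $j-i\leq L$, so $N_1=2L+1$ suffices, and both constants depend only on $p$ and $\card{S}$. We need to take care that the cut really does avoid every marked triple, including the boundary letters $\tuple{s,a_{j+1},\cdot}$. This is handled by taking the infix strictly interior, so we lose a constant amount at its ends.
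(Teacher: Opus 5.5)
Your proposal is correct and follows the paper's proof essentially step for step. Both arguments mark the compact witnesses from \Cref{l: compact ss} and locate a long unmarked infix, then pigeonhole on state pairs inside a bounded window and apply \Cref{l: shorten ss}. Both close with \Cref{l: good cup} and \Cref{l: sandwich good}. Your sandwich $A'^{p,q}\subseteq A^{p,q}_\sbpsi\subseteq A^{p,q}_\sbsigma\cup\set{(a_i,B_j\cup\set{a_{j+1}})}$ handles the distinguished pair slightly more cleanly than the paper does.

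The only slip is in the constants. An unmarked infix of length greater than $L$ contains only about $L/2$ set-letters, not $L$, so $N_0$ should be roughly doubled.
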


\begin{proof}
For each $(t,s)\in S^2$
with $\type^{t,s}\neq \nullcontrol$,
there is a subset $(A^\prime)^{t,s}\subseteq A^{t,s}_\sbsigma$
which is also has good control of type $\type^{t,s}$
with respect to $(X^{t,s},Y^{t,s})$
with $\card{(A^\prime)^{t,s}}\,\leq 6p^3+{\card{Y^{t,s}}}\leq 7p^3$
from \Cref{l: compact ss}.
For every $(a,B)\in (A^\prime)^{t,s}$,
choose a representative
occurrence in $\bsigma$,
i.e., a block of three letters
\[
\tuple{\cdot ,a, t}\tuple{t,B^\prime,s}\tuple{s,c^\prime,\cdot}
\, , \qquad B^\prime\cup\set{c^\prime}=B \, .
\]
Mark all those letters and mark also the first and last letter of $\bsigma$.
In total, choosing at most $m_1=7p^3\card{S}^2+2$
representatives and marking at most
$3m_1$ letters.

The representatives partition
$\bsigma$ into $m_1+1$ infixes
that do not contain any marked letter.

If $\frac{N_0-3m_1}{m_1+1}>m_2$, then
from the pigeon-hole principle, there is an infix
of length at least $m_2$
that does not contain any marked letter.

If $\frac{m_2}{3}>m_3$,
then this infix contains at least $m_3$
transition-triples.

If $\frac{m_3}{\card{S}^2}>(2p+1)(p^2+1)$,
then there is a pair $(t,s)\in S^2$
that appears in at least $(2p+1)(p^2+1)$ distinct
transition-triples in this infix.

From \Cref{l: shorten ss},
there are two triples that can be shortened,
such that the result is weakly good controlled of type $\type^{t,s}$
with respect to $(X^{t,s},Y^{t,s})$.
Let $\brho$ be the removed infix, it is of length
at most $N_1=m_2$
and is in $R^\chi_{N_1}$.

It remains to show that the shortened word $\bpsi$ is also in $Q^\chi$.
Because no marked letter is removed, for every $(p,q)\in S^2$,
\[
(A^\prime)^{p,q} 
\subseteq  A^{p,q}_\sbpsi \subseteq A^{p,q}_\sbsigma\, ,
\]
hence, $A^{p,q}_\sbpsi$ has
good control of type $\type^{p,q}$ with respect to $(X^{p,q},Y^{p,q})$
by \Cref{l: sandwich good}.

For the distinguished pair $(t,s)$,
the shortening may introduce a new element 
weakly controlled by
$(X^{t,s},Y^{t,s})$ of type $\type^{t,s}$,
and thus $A^{t,s}_\sbpsi$
is good-controlled of the same type
by \Cref{l: good cup}.
\end{proof}

Since insertions do not change
the first and last letters, and in shortening
we mark the first and last letters,
these do not change while
insertions and shortenings.
Thus, replacing $Q$ with
$Q_{\conf s a \,
\conf {s^\prime} {a^\prime}}$
completes the proof of \Cref{l:LF2}.

\section{Linear forms of sets of data vectors}
\label{s: lin form}

We lift the notions
of linearity to infinite alphabets.
Linear forms will be defined recursively,
according to the star-height
of the rational expressions describing them.

For star-height $h=0$,
a rational set is necessarily orbit-finite.
Such a set is said to be in linear form of height $0$
if it is presented as an orbit-finite union of singletons:
\begin{equation}
\label{eq: sh zero}
\bigcup_{i\in I}g_i\,,
\end{equation}
where $g_i$ is a data vector for $i\in I$.

For star-height $h>0$,
a rational set is in linear form
if it is presented as an orbit-finite union
of singletons and lower-height linear forms, namely:
\begin{equation}
\label{eq: normal ratset}
\bigcup_{i\in I} \, g_i + {P_i}^\ast\,,
\end{equation}
where $g_i$ is a data vector and $P_i$
is a rational set of data vectors
that is already in
a linear form of star-height strictly smaller than $h$,
for all $i\in I$.

\begin{example}
\label{e: dv set}
Let
\begin{align*}
P_a&=\bigcup_{b\in\atoms\setminus\set{a}}a+2 b, \quad
\mbox{for each }a\in \atoms\, ,\\
R&=\bigcup_{a\in\atoms}a+\left(P_a\right)^\ast \, .
\end{align*}
Intuitively, $R$ consists of all the data vectors
which for some atom $a$,
contain $a$ exactly $k+1$ times
and contain $2k$ duplicates of atoms distinct from $a$,
for some $k\geq 0$.
For instance, $3 a+2 b+2 c$
and $3 a+4 b$ belong to $R$.
Furthermore, $R$ is in linear form of height one.
\end{example}

\begin{proposition}[Cf. {\cite[Proposition 9]{HofmanJLP21}}]
\label{p: lin form}
Every rational set of data vectors admits a representation
in linear form of some finite height.
\end{proposition}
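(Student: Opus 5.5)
We prove the proposition by structural induction on the rational expression defining the set. Along the way we show that the linear form obtained from an expression of star-height $h$ has height at most $h$. It is convenient to prove a slightly stronger statement: every rational set $X$ can be written as $\bigcup_{i\in I} g_i + P_i^\ast$, where $I$ is orbit-finite, the map $i\mapsto (g_i,P_i)$ is finitely supported, and each $P_i$ is itself in linear form. Keeping $i \mapsto (g_i,P_i)$ finitely supported is what makes the orbit-finite unions below legitimate.

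\emph{Base cases.} The zero vector is $\zerovector + \emptyset^\ast$; the empty index set is also admissible. A singleton $\set{\sigma}$ is $\sigma+\emptyset^\ast$, which we may also read as a height-$0$ form.

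\emph{Orbit-finite unions.} Take $\bigcup_{j\in J} X_j$, where $j\mapsto X_j$ is finitely supported and each $X_j=\bigcup_{i\in I_j} g_{j,i}+P_{j,i}^\ast$. We re-index over the pairs $\set{(j,i): j\in J, i\in I_j}$. To keep this index set orbit-finite and the choice of representations equivariant, we choose the representation of $X_j$ once per orbit representative and transport it by permutations. The representation of a set supported by $S$ can be chosen supported by $S$ together with the support of the orbit. Orbit-finiteness of the combined index set then follows from Proposition~\ref{p: of closure}.

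\emph{Addition.} Distributivity gives
\[
X+Y=\bigcup_{i,j}(g_i+h_j)+(P_i^\ast+Q_j^\ast).
\]
Since $P^\ast+Q^\ast=(P\cup Q)^\ast$ holds for sets of vectors, and $P_i\cup Q_j$ is a binary union of linear forms, hence again a linear form, this term is linear. The product index set $I\times J$ is orbit-finite.

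\emph{Star.} Take $X=\bigcup_{i\in I} g_i+P_i^\ast$. Every element of $X^\ast$ is a sum of finitely many elements of $X$. Group the summands by the finite set $F\subseteq I$ of indices that are actually used. This gives
\[
X^\ast=\bigcup_{F\subseteq_{\mathrm{fin}} I}\Big(\sum_{i\in F} g_i\Big)+\Big(\bigcup_{i\in F}\big(\set{g_i}\cup P_i\big)\Big)^\ast ,
\]
with the convention that each $i\in F$ is used at least once. The obstacle is that the family of finite subsets of $I$ is not orbit-finite. The standard fix, as in \cite[Proposition 9]{HofmanJLP21}, is to avoid indexing by $F$ and instead write
\[
X^\ast=\zerovector\cup\Big(\bigcup_{i\in I} g_i + \Big(\bigcup_{j\in I}(\set{g_j}\cup P_j)\Big)^\ast\Big)\cap X^\ast .
\]
Intersection is not allowed in rational expressions, though, so this must be repaired. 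The repair: the vector $\sum_j g_j$ plus elements of $P_j^\ast$ for unused $j$ is not in $X^\ast$ in general. The correct approach is to flatten a nested star by taking $Z=\bigcup_{j\in I}(g_j+P_j^\ast)$ as the new generator set, which is in linear form of height at most $h$. Then $X^\ast=\zerovector\cup\bigcup_{i\in I}(g_i+P_i^\ast)+Z^\ast$, which simplifies to $\zerovector+Z^\ast$, and this is of linear form with $P=Z$ of height at most $h$. The height therefore increases by one, matching the star in the expression.

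I expect the main difficulty to be the bookkeeping of supports in the union case, and verifying that the addition and star cases keep the height bounded by the star-height of the expression. Keeping the height bounded matters later for star-height lower bounds.
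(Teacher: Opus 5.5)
The paper does not prove this proposition; it only cites \cite[Proposition 9]{HofmanJLP21}. Your structural induction is a correct self-contained substitute, and its mathematical content is right: unions by re-indexing, addition by distributivity together with $P^\ast+Q^\ast=(P\cup Q)^\ast$, and a star case that is trivial under the paper's definition. Keep your strengthening that the height of the linear form is at most the star-height of the expression. The bare statement (``some finite height'') does not give it, yet the paper uses it tacitly. It asserts that parsing trees have depth at most the star-height, and in the proof of \Cref{l: par bg sh 3} it uses that the children of the root of $T_{v_j}$ are leaves ``since $R$ has star-height at most two''.

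Two parts of the write-up need fixing.

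\emph{Star case.} Your $Z=\bigcup_{j\in I}(g_j+P_j^\ast)$ is literally $X$. So what you end with is $X^\ast=\zerovector+X^\ast$ with a one-element index set, which is a linear form of height $h+1$ simply because $X$ is one of height $h$. Nothing is flattened. Delete the $F$-indexed formula, the formula with $\cap X^\ast$ and its attribution to \cite{HofmanJLP21}, and the ``repair'' sentence.

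\emph{Union case.} The transport step is not justified as stated. Setting the representation of $X_{\pi(j_0)}$ to $\pi(\rho_0)$ for $\pi\in\Gr_S$ is well defined only if $\rho_0$ is fixed by every $\pi\in\Gr_S$ with $\pi(j_0)=j_0$. Being supported by $S\cup\supp{j_0}$ gives invariance only under the pointwise stabilizer, and the setwise stabilizer can be larger (take $j_0=\set{a,b}$ and the transposition of $a,b$). Moreover, your induction hypothesis does not even provide representations with such small support.

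The remedy is to index by the $\Gr_S$-orbit $\set{(\pi(j_0),\pi(\rho_0)):\pi\in\Gr_S}$ of the pair instead of by the orbit of $j_0$. This is a single $S$-orbit. Each second component represents $X$ at the first component, because $\pi(\rho_0)$ denotes $\pi(X_{j_0})=X_{\pi(j_0)}$. Reading off $(g_i,P_i)$ from the pair is equivariant, so no well-definedness issue arises. Equivalently, you can define the linear form by an equivariant recursion on expressions, so that no choice is ever made.

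Finally, orbit-finiteness of $\set{(j,i):j\in J,\ i\in I_j}$ is not an instance of Proposition~\ref{p: of closure} as stated. It follows from two standard facts: an orbit-finite union of orbit-finite sets is orbit-finite, and this index set is a finitely supported subset of $J\times\bigcup_{j\in J}I_j$.
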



\begin{definition}
\label{d: pars}
Let $R$ be a rational set of data vectors 
presented in linear form,
and let $v\in R$ be a data vector.
A \emph{parsing tree} of $v$
is a rooted tree whose vertices
are labeled by pairs consisting of a data vector
and a finite subset of atoms.
Parsing trees are defined recursively according
to the star-height of $R$.

For star-height $h=0$, $R$ is of the form \eqref{eq: sh zero},
therefore, there is $i\in I$ such that $v=g_i$.
Let $S_v=\supp{f}$ where $f$ is the mapping $x\mapsto g_x$,
which is a finitely supported function,
because it is an orbit-finite union.
The parsing tree $T_v$ consists of
a single root vertex
labeled by $(v,S_v)$.

For star-height $h>0$, $R$ is of the form \eqref{eq: normal ratset}.
Then
there are $i\in I,v_0= g_i$
and $v_1,\ldots,v_k\in P_i$,
such that $v=\sum_{j=0}^kv_j$.
The root of the parsing tree $T_v$ is labeled by $(v_0,\supp{f})$
(where $f$ is the finitely supported function $x\mapsto (g_x,P_x)$),
and it has $k$ children whose sub-trees
are the parsing tree
$\{T_{v_j}\}_{j=1}^k$
of the corresponding data vectors in $P_i$.
\end{definition}

Note that the depth of any parsing tree $T_v$
is at most the star-height of the rational set $R$.

Given a parsing tree $T$ and a sub-tree $T^\prime$ of $T$
(i.e., a node together with all its descendants),
we define the \emph{value} of $T^\prime$
as the sum of all data vectors occurring in its vertex labels.
In particular, the value of a parsing tree
$T_v$ is the vector $v$ itself.

\begin{example}
\label{e: dv set 2}
Continuing \Cref{e: dv set},
the data vectors
$3 a+2 b+2 c$
and $3 a+4 b$
both admit parsing trees of depth one,
as imposed by the structure of $R$.
Their parsing trees are depicted in \Cref{fig: pars trees}.
\end{example}

\begin{figure}[h]
\centering

\begin{subfigure}{0.45\textwidth}
\centering
\begin{tikzpicture}[
level distance=22mm,
sibling distance=25mm,
every node/.style={draw, rounded corners, align=center, inner sep=3pt}
]

\node {$a,\emptyset$}
child { node {$a+2 b,\set{a}$} }
child { node {$a+2 c,\set{a}$} };
\end{tikzpicture}
\caption{Parsing tree for $3 a+2 b+2 c$}
\end{subfigure}
\hfill
\begin{subfigure}{0.45\textwidth}
\centering
\begin{tikzpicture}[
level distance=22mm,
sibling distance=25mm,
every node/.style={draw, rounded corners, align=center, inner sep=3pt}
]

\node {$a,\emptyset$}
child { node {$a+2 b,\set{a}$} }
child { node {$a+2 b,\set{a}$} };
\end{tikzpicture}
\caption{Parsing tree for $3 a+4 b$}
\end{subfigure}

\caption{Parsing trees for data vectors in Example~\ref{e: dv set}}
\label{fig: pars trees}
\end{figure}

The lemma below
immediately follows from~\Cref{d: pars}.

\begin{lemma}
\label{p: tree op}
Let $R$ be a rational set of data vectors in
linear form, and let $T$ be a parsing tree
of a data vector in $R$.
The following operations on $T$
yield a parsing tree for some vector in $R$:
\begin{enumerate}
\item Pruning sub-trees: deleting sub-trees.
\item Duplicating sub-trees: adding copies of existing sub-trees
at the same node.
\item Permuting sub-trees: applying a permutation
to the atoms occurring in any sub-tree,
provided the permutation preserves the atom-set
in the label of its root.
\end{enumerate}
\end{lemma}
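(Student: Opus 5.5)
The plan is to argue directly from \Cref{d: pars}, by induction on the height of the linear form. We check each operation locally: every vertex of $T$ records one choice of the form ``pick an index $i\in I$ and the offset $g_i$'' (or, at height $0$, just $v=g_i$). So the only thing to verify is that, after the modification, every vertex still encodes a legal choice in the corresponding level of the linear form. The value of the new tree is then automatically an element of $R$.

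\textbf{Pruning.} Consider a vertex $u$ whose label corresponds to the term $g_i+P_i^\ast$. Its children are parsing trees of vectors $v_1,\ldots,v_k\in P_i$. Deleting some of these sub-trees leaves a sub-multiset $\{v_j\}_{j\in J}$, and $g_i+\sum_{j\in J}v_j\in g_i+P_i^\ast$ because the Kleene star contains all finite sums, including the empty sum. The ancestors of $u$ are untouched, and they only require that the value of the sub-tree at $u$ be an element of $P_{i'}$ at the level above. That value is still produced from the same index $i$, so it still lies in $P_{i'}$. Pruning the root's children is the base case of this argument.

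\textbf{Duplicating.} This follows in the same way. Adding a further copy of some $T_{v_j}$ below $u$ adds one more summand from $P_i$, and $P_i^\ast$ is closed under addition.

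\textbf{Permuting.} This is the only step with content, and I expect it to be the main point to get right. Let the sub-tree be rooted at $u$, with label $(g_i,S)$, where $S=\supp{f}$ and $f$ is the indexing map of the linear form at that level. Let $\pi$ fix $S$ pointwise; I read ``preserves the atom-set'' as $\pi\in\Gr_S$. Since $f$ is $S$-supported, we get $f(\pi(i))=\pi(f(i))$, i.e.\ $g_{\pi(i)}=\pi(g_i)$ and $P_{\pi(i)}=\pi(P_i)$. Applying $\pi$ to all data vectors in the sub-tree therefore gives a root vector $g_{\pi(i)}$, which is a legal choice at this level, and children whose values lie in $\pi(P_i)=P_{\pi(i)}$.

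These children are again parsing trees for the lower-level linear form. To see this, apply the same equivariance inductively: the lower-level indexing maps are supported by supports that are themselves moved by $\pi$. Alternatively, one can observe that the family of linear forms is hereditarily finitely supported, so $\pi$ maps parsing trees for $P_i$ to parsing trees for $\pi(P_i)$. This is the same invariance principle as \Cref{p: permutation word fma}, transferred to rational expressions.

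It remains to check the ancestors of $u$. The value of the sub-tree at $u$ becomes $\pi(w)$, where $w$ was its old value. We need $\pi(w)$ to lie in the same set $P$ that the parent required. This is precisely why the condition at $u$'s root is imposed. The set at the parent's level is the orbit-finite union $\bigcup_{i\in I}g_i+P_i^\ast$, and the permuted sub-tree uses the index $\pi(i)\in I$. Hence $\pi(w)$ is still an element of that union, so the parent's choice remains valid and the whole tree is a parsing tree. Finally, the value of the whole tree is the sum of the values of its vertex labels, so it lies in $R$.
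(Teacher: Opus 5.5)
Your proposal is correct and takes the same route as the paper, which gives no separate argument and only says the lemma follows immediately from Definition~\ref{d: pars}; your checks supply exactly that unpacking: empty sums and closure of $P_i^\ast$ under addition for pruning and duplicating, and $S$-equivariance $g_{\pi(i)}=\pi(g_i)$, $P_{\pi(i)}=\pi(P_i)$ for permuting. Reading ``preserves the atom-set'' as fixing it pointwise, and requiring the lower-level linear forms to form a finitely supported family, are precisely the implicit assumptions needed to make this rigorous.
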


\begin{example}
Continuing Examples~\ref{e: dv set} and~\ref{e: dv set 2},
\Cref{fig: pars op} depict possible operations
on the parsing tree of $3 a+2 b+2 c$
on the sub-tree of the vertex labeled $(a+2 c,\set{a})$,
which creates parsing trees for the data vectors,
$2 a+2 b$, $4 a+2 b+4 c$, and
$3 a+2 b+2 d$.
\end{example}

\begin{figure}[h]
\centering

\begin{subfigure}{0.22\textwidth}
\centering
\begin{tikzpicture}[
scale =0.8,
level distance=22mm,
sibling distance=30mm,
every node/.style={draw, rounded corners, align=center, inner sep=3pt}
]

\node {$a,\emptyset$}
child { node {$a+2 b,\set{a}$} };

\end{tikzpicture}
\caption{Pruning a sub-tree}
\end{subfigure}
\begin{subfigure}{0.25\textwidth}
\centering
\begin{tikzpicture}[
scale =0.8,
level distance=22mm,
sibling distance=26mm,
every node/.style={draw, rounded corners, align=center, inner sep=3pt}
]

\node {$a,\emptyset$}
child { node {$a+2 b,\set{a}$} }
child { node {$a+2 c,\set{a}$} }
child { node {$a+2 c,\set{a}$} };

\end{tikzpicture}
\caption{Duplicating a sub-tree}
\end{subfigure}
\hfill
\begin{subfigure}{0.32\textwidth}
\centering
\begin{tikzpicture}[
scale =0.8,
level distance=22mm,
sibling distance=26mm,
every node/.style={draw, rounded corners, align=center, inner sep=3pt}
]

\node {$a,\emptyset$}
child { node {$a+2 b,\set{a}$} }
child { node {$a+2 d,\set{a}$} };

\end{tikzpicture}
\caption{Permuting a sub-tree ($c\mapsto d$)}
\end{subfigure}

\caption{Parsing tree operations
applied to a parsing tree from~\Cref{e: dv set}.}
\label{fig: pars op}
\end{figure}

\begin{proposition}
\label{p: const pars}
Let $R$ be a rational set of data vectors in
linear form.
There are constants $N$ and $M$,
such that, for every $v\in R$
and every parsing tree $T_v$ of $v$,
every label $(u,A)$ in $T_v$ satisfies
$\size{u}\leq N $ and $\card{A} \, \leq M $.
\end{proposition}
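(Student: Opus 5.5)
The plan is to argue by induction on the height $h$ of the linear form. The key ingredient is that every orbit-finite set of data vectors has bounded sizes (\Cref{l: b-o-f}), and that every finitely supported function has a single, fixed support.

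\emph{Base case.} For height $h=0$, the set $R=\bigcup_{i\in I}g_i$ is an orbit-finite union. By \Cref{l: b-o-f}, the orbit-finite set $\setof{g_i}{i\in I}$ has bounded sizes; call the bound $N$. A parsing tree consists of one label $(g_i,\supp f)$. Here $f:i\mapsto g_i$ is a single fixed function, so $\supp f$ is one fixed finite set. We may therefore take $M=\card{\supp f}$.

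\emph{Inductive step.} For height $h>0$, write $R=\bigcup_{i\in I} g_i+P_i^\ast$, where each $P_i$ is in linear form of height $<h$.

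The root label is $(g_i,\supp f)$, where $f:i\mapsto(g_i,P_i)$. Its size is bounded exactly as in the base case, since $\setof{g_i}{i\in I}$ is the image of the orbit-finite set $I$ under a finitely supported map, hence orbit-finite. The support $\supp f$ is again one fixed finite set.

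Every child subtree is a parsing tree of some $v_j\in P_i$ with respect to the linear form of $P_i$. The main point is to get bounds uniform over $i\in I$, rather than a separate bound for each $P_i$, since $I$ may be infinite.

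\emph{Main obstacle: uniformity over $I$.} I will use equivariance. The family $(P_i)_{i\in I}$, together with its presentation, is given by the finitely supported function $f$. So for any $\pi\in\Gr_{\supp f}$, the presentation of $P_{\pi(i)}$ is the $\pi$-image of the presentation of $P_i$. In particular $\pi$ maps parsing trees of $P_i$ bijectively onto parsing trees of $P_{\pi(i)}$. It preserves sizes of data vectors and cardinalities of atom sets, and it sends supports to supports because $\supp{\pi(x)}=\pi(\supp x)$.

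Since $I$ is orbit-finite, it splits into finitely many $\supp f$-orbits. Choose one representative $i_1,\dots,i_m$ from each. Apply the induction hypothesis to each $P_{i_l}$ to get constants $(N_l,M_l)$. By the transfer just described, these constants hold for every $P_i$ in the same orbit.

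\emph{Conclusion.} Take
\[
N=\max\bigl(\max_{i\in I}\size{g_i},\,N_1,\dots,N_m\bigr)
\]
and
\[
M=\max\bigl(\card{\supp f},\,M_1,\dots,M_m\bigr).
\]
Every label of $T_v$ is either the root label or a label inside some child subtree, so these constants bound all labels.

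For this to work, the induction must be phrased over linear-form presentations: a presentation is a finite nesting of finitely supported indexing functions. The statement to prove is that the constants depend only on the presentation, and that they are invariant under atom permutations applied to the presentation.
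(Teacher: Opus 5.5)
Your proof is correct and follows the paper's route: induction on the height, bounded sizes from orbit-finiteness, one fixed support per indexing function, and maxima taken over finitely many orbits of $I$. You make explicit, via equivariance of the indexing function, the transfer of constants within an orbit that the paper compresses into ``only finitely many orbits need to be considered''. You also correctly include $\card{\supp{f}}$ in $M$ at the inductive step, which the paper's choice $M=\max M_i$ omits for the root label.
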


\begin{proof}
The proof is by induction on the star-height of $R$.

For star-height $h=0$,
$R$ is of the form \eqref{eq: sh zero}.
Let $N=\max|g_i|$ and $M=\,\card{\supp{x\mapsto g_x}}$.
Since $I$ is orbit-finite and all elements in
the same orbit are of equal size, these maxima exist and are finite.

For star height $h>0$,
$R$ is of the form \eqref{eq: normal ratset}.
Let $N_i,M_i$ be the constants
obtained by the induction hypothesis for $P_i$, for $i\in I$.
Define $N^\prime_i=\max\set{|g_i|,N_i}$.
Then, let $N=\max N^\prime_i$ and $M=\max M_i$.
These maxima are finite because $I$ is orbit-finite,
and only finitely many orbits need to be considered.
\end{proof}

\section{One-register context-free grammars
have large star-height}
\label{s: cfg high sh}

The purpose of this section is to prove \Cref{t: high sh cfg} 
that for every $n$, there is a binary one-register context-free
grammar $\bG_{\mathrm{sh},n}$ with $\Par{\langof{\bG_{\mathrm{sh},n}}}$
that has star-height $n$.
We present the construction for $n=3$.

Let $H=\set{r,u_1,d_1,u_2,d_2,l}$.
Fix an atom $a\in\atoms$,
and let $\bG_{\mathrm{sh},3}$
be a binary one-register context-free
grammar
with nonterminals $S,S_1,S_2,S_3,S_4,S_5$,
initial atom $a\in\atoms$,
and the following production rules:
\begin{itemize}
\item $S(x)\to S_1(x)\, |\, \varepsilon$.
\item $S_1(x)\to
S_1(x)S_1(x)\, |\, r(x)S_2(y)\, |\, \varepsilon$
with $x\neq y$.
\item $S_2(x)\to S_3(x)u_1(x)$ .
\item $S_3(x)\to
S_3(x)S_3(x)\, |\, d_1(x)S_4(y)\, |\, 
\varepsilon$
with $x\neq y$.
\item $S_4(x)\to S_5(x)u_2(x)$ .
\item $S_5(x)\to
S_5(x)S_5(x)\, |\,  d_2(x)l(y)\, |\, 
\varepsilon$
with $x\neq y$.
\end{itemize}

\begin{lemma}
\label{l: par bg sh 3}
The star-height of $\Par{\langof{\bG_{\mathrm{sh},3}}}$
is exactly $3$.
\end{lemma}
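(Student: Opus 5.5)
Both bounds are needed: an explicit rational expression of star-height $3$ for the upper bound, and a parsing-tree argument in the style of \Cref{s: lin form} for the lower bound.

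\textbf{Upper bound.} I compute the Parikh image bottom-up, one nonterminal at a time, writing $L_X(a)$ for the Parikh image of words derived from $X(a)$.

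\begin{itemize}
\item The productions $S_5(x)\to S_5S_5\mid d_2(x)l(y)\mid\varepsilon$ generate arbitrary concatenations of blocks $d_2(x)l(y)$ with $y\neq x$. Hence
\[
L_{S_5}(b)=\Bigl(\bigcup_{c\neq b}\letter{d_2,b}+\letter{l,c}\Bigr)^\ast ,
\]
which has star-height $1$.
\item Since $S_4(x)\to S_5(x)u_2(x)$, we get $L_{S_4}(b)=\letter{u_2,b}+L_{S_5}(b)$.
\item In the same way,
\[
L_{S_3}(c)=\Bigl(\bigcup_{b\neq c}\letter{d_1,c}+L_{S_4}(b)\Bigr)^\ast ,
\]
of star-height $2$, and $L_{S_2}(c)=\letter{u_1,c}+L_{S_3}(c)$.
\item Finally,
\[
L_{S_1}(a)=\Bigl(\bigcup_{c\neq a}\letter{r,a}+L_{S_2}(c)\Bigr)^\ast ,
\]
and $\Par{\langof{\bG_{\mathrm{sh},3}}}=L_{S_1}(a)\cup\zerovector=L_{S_1}(a)$.
\end{itemize}

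Each union is orbit-finite with support $\{a\}$ and at most one further atom, so the result is rational of star-height at most $3$. The only checks are that the binary productions $S_i\to S_iS_i$ realise exactly the Kleene star of the unary alternatives (immediate, since order is irrelevant) and that distinctness is only ever required between parent and child atoms.

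\textbf{Lower bound.} Suppose $R=\Par{\langof{\bG_{\mathrm{sh},3}}}$ had a linear form of height at most $2$ (\Cref{p: lin form}). Let $N,M$ be the constants of \Cref{p: const pars}, and choose $K\gg N,M$. Consider the vector $v$ built as follows:
\begin{itemize}
\item $K$ distinct atoms $c_1,\ldots,c_K\neq a$ at level $1$, each contributing $\letter{r,a}+\letter{u_1,c_i}$;
\item under each $c_i$, $K$ fresh atoms $b_{i,j}$, each contributing $\letter{d_1,c_i}+\letter{u_2,b_{i,j}}$;
\item under each $b_{i,j}$, $K$ fresh atoms at level $3$, each contributing $\letter{d_2,b_{i,j}}+\letter{l,\cdot}$.
\end{itemize}
All these atoms are pairwise distinct.

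The key invariant is \emph{balance}. Membership in $R$ forces
\[
v(\letter{r,a})=\sum_c v(\letter{u_1,c}),\qquad
v(\letter{d_1,c})=\sum_b\bigl[\text{$u_2$-count of $b$ under $c$}\bigr],
\]
and similarly at level $3$. Moreover each $\letter{u_1,c}$ or $\letter{u_2,b}$ with nonzero value must be matched with appropriate $d$-letters. I will then argue via the operations of \Cref{p: tree op}.

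A parsing tree of $v$ has depth at most $2$, and each node has size at most $N$ with support at most $M$. So some level-$3$ letter $\letter{l,e}$ sits in a node $w$ that is a leaf or near-leaf. I consider the subtree $T'$ rooted at depth $\le1$ containing $w$ and trace which $d_1,d_2,u_1,u_2$ letters with atoms $c_i,b_{i,j}$ occur in $T'$. By pigeonhole, since $K\gg N,M$, some subtree at depth $1$ contains letters involving an atom $c_i$ or $b_{i,j}$ outside the support of its root.

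I then apply a permutation renaming that atom to a fresh one, or duplicate the subtree. Either operation yields a vector in $R$ that violates balance: it produces a $d_1$-letter on an atom without the matching $u_1$, or an extra $\letter{u_2,b}$ whose $d_2$-count is unchanged. The basic template is the $h=2$ argument of \cite[Lemma 10]{HofmanJLP21}, iterated once more.

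The main obstacle is the case analysis showing that the three nested levels of unboundedly many fresh atoms cannot fit into two levels of star. Concretely, some depth-$\le 2$ subtree must "own" a whole level-$2$ family of atom $b_{i,j}$ or $c_i$, and such ownership is incompatible with the bounded-support root labels. I expect to handle this by choosing $K>N\cdot M$ at every level and arguing level by level: the atoms $c_i$ cannot all lie in the root support, so some depth-$1$ subtree handles $c_i$ freely. Inside that subtree the $b_{i,j}$ cannot all lie in the depth-$1$ support, so some leaf handles $b_{i,j}$. That leaf must then contain all $K$ letters $\letter{d_2,b_{i,j}}$, which contradicts the size bound $N$; otherwise the counts can be desynchronised by duplicating the leaf, violating balance.
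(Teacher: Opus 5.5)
Your upper bound is correct; it is the paper's expression \eqref{eq: par bg sh 3} built bottom-up. Your lower bound has the same architecture as the paper's: a level-one atom outside the root support, a depth-one subtree saturating it, a level-two atom outside the supports, and a leaf forced to carry $K>N$ letters. However, the two steps that drive it are either missing or rest on a mechanism that fails.

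\textbf{The final step uses the wrong tool.} Your balance invariant is misstated. Because $S_3,S_5\to\varepsilon$, a letter $\letter{u_1,c}$ or $\letter{u_2,b}$ can occur with no $d$-letter on that atom. Because the same child atom may be chosen repeatedly, $\letter{u_2,b}$ can have value $2$ with any number of $\letter{d_2,b}$. So an extra $\letter{u_2,b}$ with unchanged $d_2$-count violates nothing, and duplicating a leaf cannot force it to contain all $K$ copies of $\letter{d_2,b}$. Concretely, suppose the leaf is $\letter{d_1,c'}+\letter{u_2,b}+t\letter{d_2,b}$ plus $t$ letters $\letter{l,\cdot}$, with $t<K$. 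Then $v$ plus this leaf is again in the image: add one more branch $d_1(c')S_4(b)$.

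The sound invariants are the global counts $\#r=\#u_1$, $\#d_1=\#u_2$, $\#d_2=\#l$, together with the one-way implications $\letter{d_1,c}\in\dom{\cdot}\Rightarrow\letter{u_1,c}\in\dom{\cdot}$ and $\letter{d_2,b}\in\dom{\cdot}\Rightarrow\letter{u_2,b}\in\dom{\cdot}$. What forces the leaf to contain every $\letter{d_2,b}$ is the permutation operation. Since $\letter{u_2,b}$ has value $1$ in $v$, it is saturated in its leaf. If some $\letter{d_2,b}$ lay outside that leaf, renaming $b$ to a fresh atom inside the leaf would leave that $\letter{d_2,b}$ with no $\letter{u_2,b}$. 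For this you need $b$ outside the supports $E_0\cup S_0$ of both ancestors, and $\letter{u_2,b},\letter{d_2,b}\notin\dom{u}\cup\dom{v_0}$. This is why the paper chooses $n=2N+2M+1$.

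\textbf{The passage from level one to level two is not justified.} Your claim that the $b_{i,j}$ cannot all lie in the depth-one support presupposes that the depth-one subtree contains $u_2$-letters on many atoms. Nothing in the proposal shows this, and those atoms need not be children of $c_i$ at all. The paper obtains it in two moves:
\begin{enumerate}
\item The permutation argument, applied to the saturated $\letter{u_1,c}$, shows that the depth-one subtree $T_{v_j}$ also contains all $K$ copies of $\letter{d_1,c}$.
\item Duplicating $T_{v_j}$ and using $\#d_1=\#u_2$ forces $v_j$ to contain at least $K$ letters $\letter{u_2,\cdot}$. These lie on pairwise distinct atoms, since each has value $1$ in $v$.
\end{enumerate}
Only after this does the pigeonhole over $E_0\cup S_0$ and $\dom{u}\cup\dom{v_0}$ produce the atom $b$ used above. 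Your paragraph about a level-three letter $\letter{l,e}$ in a near-leaf plays no role and can be dropped.
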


First, from the grammar, we simply obtain
the following rational expression for $\Par{\langof{\bG_{\mathrm{sh},3}}}$:
\begin{equation}
\label{eq: par bg sh 3}
\left(
\bigcup_{b\neq a}\letter{r,a}+\letter{u_1,b}+
\left(
\bigcup_{c\neq b}\letter{d_1,b}+\letter{u_2,c}+
\left(
\bigcup_{d\neq c}\letter{d_2,c}+\letter{l,d}
\right)^\ast
\right)^\ast
\right)^\ast
\, ,
\end{equation}
therefore, the star-height of $\Par{\langof{\bG_{\mathrm{sh},3}}}$
is at most $3$.

For the lower bound, we make the following observations
from~\eqref{eq: par bg sh 3}.

\begin{lemma}
For any $\bsigma\in \langof{\bG_{\mathrm{sh},3}}$
with Parikh vector $v=\Par{\bsigma}$,
\begin{enumerate}
\item The number of occurrences of $r$-labels
equals that of $u_1$-labels.
\item The number of $d_1$-labels equals
that of $u_2$-labels.
\item The number of $d_2$-labels equals
that of $l$-labels.
\item Moreover, register matching enforces:
\begin{itemize}
\item $\dom v$ contains $\letter{u_1,b}$ if and only if
it contains $\letter{d_1,b}$.
\item $\dom v$ contains $\letter{u_2,c}$ if and only if
it contains $\letter{d_2,c}$.
\end{itemize}
\end{enumerate}
\end{lemma}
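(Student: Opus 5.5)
The plan is to read all four items directly off the rational expression \eqref{eq: par bg sh 3}, arguing by induction on the structure of a derivation in $\bG_{\mathrm{sh},3}$. Equivalently, I would induct over the three nested levels of the expression. I would first make precise that \eqref{eq: par bg sh 3} is exactly $\Par{\langof{\bG_{\mathrm{sh},3}}}$. The productions $S_1(x)\to S_1(x)S_1(x)\mid\varepsilon$ generate arbitrary concatenations of the unit $r(x)S_2(y)$ with $y\neq x$. Since $S_2(y)\to S_3(y)u_1(y)$, each unit contributes exactly $\letter{r,a}+\letter{u_1,b}$ for some $b\neq a$, plus the Parikh image of an $S_3(b)$-derivation. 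The same pattern repeats one level down with $d_1,u_2$ and again with $d_2,l$. This gives three nested stars.

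For items 1--3, I would use the fact that every summand of every orbit-finite union in \eqref{eq: par bg sh 3} contributes labels in fixed pairs. An outer summand contributes one $r$ and one $u_1$; a middle summand contributes one $d_1$ and one $u_2$; an inner summand contributes one $d_2$ and one $l$. Since $v$ is a finite sum of such summands, the label counts balance. This part is routine.

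For item 4 I would track atoms through the nesting. Fix an atom $b$ and look at the middle-level union indexed by $c\neq b$, the one attached to an outer summand with parameter $b$. Every occurrence of $\letter{d_1,b}$ in $v$ is generated inside a starred middle-level block whose enclosing outer summand is indexed by that same $b$. That outer summand supplies $\letter{u_1,b}$. This gives the direction $\letter{d_1,b}\in\dom v\Rightarrow\letter{u_1,b}\in\dom v$. The argument for $\letter{d_2,c}\Rightarrow\letter{u_2,c}$ is identical one level down. An atom cannot be carried between levels otherwise, because every other occurrence of $b$ at the $d_1$ position is forced by the constraint $x=y$ in $S_3(x)\to d_1(x)S_4(y)$ being applied at register value $b$. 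That register value can only be created by $S_1\to r(x)S_2(b)$, and that production emits $u_1(b)$.

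The main obstacle is the reverse implication $\letter{u_1,b}\in\dom v\Rightarrow\letter{d_1,b}\in\dom v$. Here $\letter{u_1,b}$ comes with a starred block over $\letter{d_1,b}$, but the star may be taken zero times, via $S_3(b)\to\varepsilon$. In that case $\letter{u_1,b}$ occurs with no matching $\letter{d_1,b}$. For example, $\letter{r,a}\letter{u_1,b}\in\langof{\bG_{\mathrm{sh},3}}$ is derived by $S\Rightarrow S_1(a)\Rightarrow r(a)S_2(b)\Rightarrow r(a)S_3(b)u_1(b)\Rightarrow r(a)u_1(b)$. So, as literally worded, item 4 only holds in the direction $d\Rightarrow u$. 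I would attempt the converse first by checking whether any production forces $S_3(b)$ to be nonempty, and this computation shows none does. I would therefore state and prove the $d\Rightarrow u$ direction. I would then verify that the later lower-bound argument only uses this direction: that an inner-level atom $b$ must be anchored to an outer-level occurrence $\letter{u_1,b}$. That dependence is what forces nesting depth $3$ in any linear form via parsing trees (\Cref{p: const pars}).
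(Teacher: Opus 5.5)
Your proposal is correct. For items 1--3 and for the implications $\letter{d_1,b}\Rightarrow\letter{u_1,b}$ and $\letter{d_2,c}\Rightarrow\letter{u_2,c}$, it is the argument the paper intends. The paper gives no proof beyond ``observations from'' \eqref{eq: par bg sh 3}, i.e.\ reading the paired labels and the shared index $b$ (resp.\ $c$) off each summand.

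Your counterexample to the converse is also valid. The grammar contains the productions $S_3(x)\to\varepsilon$ and $S_5(x)\to\varepsilon$, equivalently the inner stars may be taken zero times. Hence $\letter{r,a}\letter{u_1,b}$ and $\letter{r,a}\letter{d_1,b}\letter{u_2,c}\letter{u_1,b}$ both lie in $\langof{\bG_{\mathrm{sh},3}}$. So the biconditional in item~4 is false as stated, and only the $d\Rightarrow u$ implications hold. One small precision in your grammar-level remark: $u_1(b)$ is emitted by $S_2(b)\to S_3(b)u_1(b)$, not by $S_1\to r(x)S_2(b)$. The relevant point is that every $S_3(b)$ descends from some $S_2(b)$.

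Your check of how the lemma is used downstream is right, with one caveat worth recording. The first claim in the proof of \Cref{l: par bg sh 3} is worded as if it used the false direction: ``contain $\letter{u_1,e}$ but not $\letter{d_1,e}$''. That wording is also inaccurate when $v_j(\letter{d_1,b})>0$. The correct contradiction is as follows. After renaming $b$ to a fresh $e$ in $T_{v_j}$, the resulting vector still contains $\letter{d_1,b}$, coming from $v_i$. It no longer contains $\letter{u_1,b}$, because that letter was saturated in $v_j$. This uses only the valid direction, as the second claim already does. So the star-height lower bound survives with your corrected version of the lemma.
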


\begin{proof}[Proof of~\Cref{l: par bg sh 3}]
It remain to show
that there are no regular expressions
of smaller star-height for $\Par{\langof{\bG_{\mathrm{sh},3}}}$.

Towards contradiction, assume there is a regular expression $R$
which generates $\Par{\langof{\bG_{\mathrm{sh},3}}}$
of star-height at most two.
Let $N,M$ be the constant guaranteed by~\Cref{p: const pars}.

Fix $n=2N+2M+1$.
Let $A,B, C ,D$
be pairwise disjoint sets of atoms of sizes $1,n,n^2,n^3$ accordingly.
We construct a word $\bsigma\in\langof{\bG_{\mathrm{sh},3}}$
whose register updates follow the following pattern:
\begin{itemize}
\item $a\to b_i$ in the first level, for $i=1,2,\ldots,n$.
\item $b_i\to c_{i,j}$ in the second level,
for $i,j=1,2,\ldots,n$.
\item $c_{i,j}\to d_{i,j,k}$ in the third level,
for $i,j,k=1,2,\ldots,n$.
\end{itemize}

Let $v=\Par{\bsigma}$,
by construction,
\begin{itemize}
\item For $a\in A$, $v(\letter{r,a})=n$.
\item For $b\in B$, $v(\letter{u_1,b})=1$
and $v(\letter{d_1,b})=n$.
\item For $c\in C$, $v(\letter{u_2,c})=1$
and $v(\letter{d_2,c})=n$.
\item For $d\in D$, $v(\letter{l,d})=1$.
\end{itemize}

Let $T_v$ be a parsing tree for $v\in R$
with the root labeled $(v_0,S_0)$ and children
$T_{v_1},T_{v_2},\ldots,T_{v_k}$.
Since $n> M+N$,
there is an atom $b\in B$ which is not in $S_0$
and $\letter{u_1,b},\letter{d_1,b}\notin \dom{v_0}$.
Therefore, some child $v_j$ satisfies $v_j(\letter{u_1,b})>0$,
hence
\[
v_j(\letter{u_1,b})=1=v(\letter{u_1,b}).
\]
Thus, the letter $\letter{u_1,b}$ is saturated in $v_j\leq v$.

\begin{claim}
The letter $\letter{d_1,b}$ is also
saturated in $v_j\leq v$, i.e.,
\[
v_j(\letter{d_1,b})=v(\letter{d_1,b})=n \, .
\]
\end{claim}

\begin{proof}
Since $\letter{d_1,b}\notin \dom{v_0}$, $v_0(\letter{d_1,b})=0$.
If some other sub-tree $T_{v_i}$ with $i\neq j$
contributes to the count of $\letter{d_1,b}$,
i.e., $v_i(\letter{d_1,b})>0$.
Consider
permuting the sub-tree $T_{v_j}$
by replacing $b$ with fresh atom $e$.
The resulting Parikh vector would contain $\letter{u_1,e}$
but not $\letter{d_1,e}$,
contradicting a necessary condition
for membership in $\Par{\langof{\bG_{\mathrm{sh},3}}}$.
\end{proof}

The grammar $\bG_{\mathrm{sh},3}$ enforces 
equal numbers of $d_1$-labels and $u_2$-labels.
Thus, the number of $d_1$-labels in $v_j$
must be equal to the number of $u_2$-labels in $v_j$.
Otherwise, duplicating the sub-tree $T_{v_j}$
produces a Parikh vector violating this equality.

Let the root of $T_{v_j}$ be $(u,E_0)$,
whose children are leaves (since $R$ has star-height at most two),
labeled with
\[
(u_1,E_1),(u_2,E_2),\ldots,(u_m,E_m)\, .
\]
Since $n>2N+2M$, there
is an atom $c\in C$
with $\letter{u_2,c}\in\dom{v_j}$
such that $c\notin E_0\cup S_0$ and 
$\letter{u_2,c},\letter{d_2,c}\notin\dom{u}\cup\dom{v_0}$.
Thus, some child component $u_k$ satisfies $u_k(\letter{u_2,c})>0$,
hence
\[
u_k(\letter{u_2,c})=v(\letter{u_2,c})=1
\]
Thus, the letter $\letter{u_2,c}$ is saturated in $u_k\leq v$.

\begin{claim}
The letter
$\letter{d_2,c}$ is also saturated in $u_k\leq v$.
\end{claim}

\begin{proof}
The root does not contain $\letter{d_2,c}$.
If some other sub-tree $T_{v_i}$ with $i\neq j$
produce $\letter{d_2,c}$,
then by permuting $T_{v_j}$
by replacing $c$ with fresh atom $f$
we obtain a Parikh vector that does not contain $\letter{u_2,c}$
but does contain $\letter{d_2,c}$.
Similarly, if the contribution
came from a sibling leaf $u_p\neq u_k$.
Thus, all the occurrences of $\letter{d_2,c}$
are in $u_k$.
\end{proof}

From the final claim,
\[
u_k(\letter{d_2,c})=v(\letter{d_2,c})=n> N+M>N\, .
\]
But, $u_k$ is a leaf; therefore
its value is the data vector in its label,
which is smaller than $N$,
which is a contradiction.
\end{proof}

\subsection{A note on hierarchical register automata}

The Parikh image of $\langof{\bG_{\mathrm{sh},n}}$,~\eqref{eq: par bg sh 3},
is also the Parikh image of a quasi-regular language
accepted by the three-register finite-memory automaton
depicted in~\Cref{fig: automaton for bg3},
initialized with $x_1=a$.
For readability, we adopt the following convention:
in a transition, any register $x_i^\prime$
not explicitly mentioned is assumed to satisfy $x_i^\prime=x_i$;
that is, the automaton performs no guessing on that register.

\begin{figure}[h]
\centering
\begin{center}
\begin{tikzpicture}[scale=0.2]
\tikzstyle{every node}+=[inner sep=0pt]
\draw [black] (10,-18.1) circle (1);
\draw [black] (10,-18.1) circle (0.4);
\draw [black] (20.3,-18.1) circle (1);
\draw [black] (31.4,-18.1) circle (1);
\draw [black] (42.7,-18.1) circle (1);
\draw [black] (54.5,-18.1) circle (1);
\draw [black] (65.2,-18.1) circle (1);
\draw [black] (75.8,-18.1) circle (1);


\draw [black] (11,-18.1) -- (19.3,-18.1);
\fill [black] (19.3,-18.1) -- (18.5,-17.6) -- (18.5,-18.6);
\draw (15.15,-19.6) node [below] {$r,y=x_1$};

\draw [black] (21.3,-18.1) -- (30.4,-18.1);
\fill [black] (30.4,-18.1) -- (29.6,-17.6) -- (29.6,-18.6);
\draw (25.85,-15.6) node [above] {$u_1,y=x^\prime_2\neq x_1$};

\draw [black] (32.4,-18.1) -- (41.7,-18.1);
\fill [black] (41.7,-18.1) -- (40.9,-17.6) -- (40.9,-18.6);
\draw (37.05,-19.6) node [below] {$d_1,y=x_2$};

\draw [black] (43.7,-18.1) -- (53.5,-18.1);
\fill [black] (53.5,-18.1) -- (52.7,-17.6) -- (52.7,-18.6);
\draw (48.6,-15.6) node [above] {$u_2,y=x^\prime_3\neq x_2$};

\draw [black] (55.5,-18.1) -- (64.2,-18.1);
\fill [black] (64.2,-18.1) -- (63.4,-17.6) -- (63.4,-18.6);
\draw (59.85,-19.6) node [below] {$d_2,y=x_3$};

\draw [black] (66.2,-18.1) -- (74.8,-18.1);
\fill [black] (74.8,-18.1) -- (74,-17.6) -- (74,-18.6);
\draw (70.85,-15.6) node [above] {$l,y\neq x_3$};

\draw [black] (31.124,-19.077) arc (-13.3266:-166.6734:10.712);
\fill [black] (10.28,-19.08) -- (9.97,-19.97) -- (10.95,-19.74);
\draw (20.7,-27.82) node [below] {$\varepsilon$};
\draw [black] (54.053,-19.058) arc (-16.02521:-163.97479:11.552);
\fill [black] (31.85,-19.06) -- (31.59,-19.96) -- (32.55,-19.69);
\draw (42.95,-27.92) node [below] {$\varepsilon$};
\draw [black] (75.422,-19.066) arc (-15.34122:-164.65878:10.651);
\fill [black] (54.88,-19.07) -- (54.61,-19.97) -- (55.57,-19.71);
\draw (65.15,-27.4) node [below] {$\varepsilon$};
\end{tikzpicture}
\end{center}
\caption{Three-register automaton with the same
Parikh image as $\bG_{\mathrm{sh},3}$.}
\label{fig: automaton for bg3}
\end{figure}

\begin{remark}
The finite-memory automaton described above operates in a
hierarchical manner and can therefore be expressed as a
hierarchical register automaton in the sense of~\cite{LasotaP21}.
More generally, for every $n\in\mathbb{N}$, there exists a
hierarchical register automaton with $n$ registers whose Parikh
image coincides with that of $\langof{\bG_{\mathrm{sh},n}}$.
Thus, there is no universal star-height bound for hierarchical
register automata.

Inspecting the proof of rationality for hierarchical register
automata~\cite[Theorem~14]{LasotaP21}, and combining it with the
fact that the star-height of every one-register finite-memory
automaton is at most two~(\Cref{t: sh two}), yields an upper bound
of $r+1$ on the star-height of Parikh images of hierarchical
register automata with $r$ registers.
On the other hand, the high-star-height examples above give a
lower bound of $r$ for hierarchical register automata with $r$
registers.
Closing this gap remains open.
\end{remark}

\section{Irrationality of Parikh images of quasi-regular languages}
\label{s: par fma neq rat}

In this section we construct
a quasi-regular language
whose Parikh image is not rational.
We precede the presentation of
the separating language
with an introduction
to the notion of \emph{commutative stability},
that is the key concept
for the construction
of the separating language.

\subsection{Commutative stability}
The Parikh map is not injective,
hence, the same data vector may be
the Parikh vector of multiple
different words.
Moreover, parsing-tree operations
allow us to transform a data vector
in the set into other data vectors
in the set.
To analyze the Parikh image
of a language
it is useful to understand
the structure of witnesses
that are the result of
a parsing-tree operation.
To this end, we ask the following question:
given $v\in\Par{L}$ and $w\leq v$
such that $v+w\in\Par{L}$,
under what conditions, the structure
of $\Parinv{v+w}$
is close to that of $\Parinv{v}$, in some sense of proximity.

The examples below illustrate this notion.

\begin{example}
\label{e: three block}
Consider the  finite-alphabet language,
\[
L_{3\mathrm{Block}}=
\set{(abc)^{n_1}(cde)^{n_2}(efg)^{n_3}:
n_i\geq 1 \mbox{ for }i=1,2,3}\, .
\]
In this language, every word is uniquely determined by
its Parikh vector.
Indeed, the values of $a$ and $b$ are the same and determine
the first exponent $n_1$;
the value of $d$ determines the second exponent;
and the values of $f$ and $g$ are the same and determine
the third exponent $n_3$.

Let
$z=(abc)^{n_1}(cde)^{n_2}(efg)^{n_3}\in L_{3\mathrm{Block}}$
with $v=\Par{z}$, and let $w\leq v$ be such that
there exists 
$z^\prime\in L_{3\mathrm{Block}}$ with $\Par{z^\prime}=v+w$.
Since the values of $a,d$, and $g$
have increased or remained unchanged,
we conclude that
$z^\prime = (abc)^{m_1}(cde)^{m_2}(efg)^{m_3}$
with $m_i\geq n_i$ for $i=1,2,3$.
Thus,
increasing the Parikh vector corresponds exactly
to increasing the block exponents.
In this sense, we call $L_{3\mathrm{Block}}$
a commutatively stable language.
\end{example}

This stability phenomenon fails for blocks of size two.

\begin{example}
\label{e: two block}
Consider the finite-alphabet language,
\[
L_{2\mathrm{Block}}=
\set{(ab)^{n_1}(bc)^{n_2}(cd)^{n_3}:n_i\geq 1 \mbox{ for }i=1,2,3}\, .
\]
Again, every word is uniquely determined by
its Parikh vector:
the values of $a$ and $d$ determine
the first and third exponents,
while the second exponent
is determined by the value of either $b$ or $c$.

However, for
$z=(ab)^{n_1}(bc)^{n_2}(cd)^{n_3}\in L_{2\mathrm{Block}}$
with $n_2\geq 2$,
the word
$z^\prime=(ab)^{n_1+1}(bc)^{n_2-1}(cd)^{n_3+1}$
also belongs to $L_{2\mathrm{Block}}$.
Although $\Par{z^\prime}\geq \Par{z}$,
the exponents structure is broken.
Thus, the Parikh vector may increase
while the middle exponent decreases,
so the block structure is not stable under
commutative perturbations.
\end{example}

For the separating language below,
we need stability only for a large
family of carefully chosen words,
not for all words in the language.
To obtain such a family
over an infinite alphabet, we use two devices.

First, we introduce a delimiter symbol $\#$;
the number of occurrences of $\#$
determines the number of blocks and therefore
bounds the number of atoms that may
occur in the word.
Second, we choose block exponents
that grow exponentially.
This makes the values of the atoms
sufficiently separated
that a small perturbation
cannot change their relative order.
As a result,
the atoms with the largest values
must occur in the same positions,
which forces the whole block
structure to be preserved,
up to reversal.

\subsection{The separating language}
Let $L$ be the language
of all words of the form
\begin{equation}
\label{eq: word l}
\begin{aligned}
(\tau_1\tau_2)^{n_0}
\#\#
(\tau_2\tau_3\tau_4)^{n_1}
&\#\#
(\tau_4\tau_5\tau_6)^{n_2}
\#\#
(\tau_6\tau_7\tau_8)^{n_3}\cdots \\[1ex]
&\#\#
(\tau_{2k}\tau_{2k+1}\tau_{2k+2})^{n_k}
\#\#
(\tau_{2k+2}\tau_{2k+3})^{n_{k+1}}\, .
\end{aligned}
\end{equation}
where
\begin{enumerate}
\item $k\geq 0$;
\item $n_i\geq 1$, $i=0,1,\ldots,k+1$; and
\item $\tau_i\in\atoms\setminus \{\#\}$, $i=1,2,\ldots,2k+3$.
\end{enumerate}

The first and last blocks are of size two,
and all intermediate blocks are of
size three.
Consecutive blocks overlap
in exactly one atom:
the last atom of one block
is the first atom of the next block.

\begin{lemma}
\label{l: l fma3}
$L$ is recognized by a deterministic finite-memory
automaton with three registers.
\end{lemma}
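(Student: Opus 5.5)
The plan is to build the automaton explicitly, as a deterministic device that reads a word of the form~\eqref{eq: word l} block by block. I take $D=\set{\#}$ as the set of distinguished constants, so every transition can test whether the input atom is $\#$. The three registers $x_1,x_2,x_3$ hold the atoms of the block currently being read. At the start of a block, $x_1$ already holds the overlap atom inherited from the previous block; the next letter(s) are copied into $x_2$ (and $x_3$). Once a block has been read once, every further letter is forced: it must equal the next register in the cyclic order $x_1,x_2,x_1,\ldots$ (size two) or $x_1,x_2,x_3,x_1,\ldots$ (size three), or, at the end of a period, it may be the first $\#$ of a separator $\#\#$. After $\#\#$, the last atom of the finished block (in $x_2$ for the first block, in $x_3$ for a size-three block) is moved to $x_1$ and a new block begins. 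Finite control records the phase inside the period and whether we are in the first block. Since every transition either checks equality with a register or loads the current letter, the automaton is deterministic.

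The register discipline must match \Cref{d: fma}. Atoms $\tau_i$ need not be pairwise distinct, so two registers may need the same atom, which the definition forbids. I would handle this by one of the equivalent variants from \Cref{r: variants}: either allow equal register contents, or keep only distinct atoms and record in the finite state the equality type of $(\tau_{2i},\tau_{2i+1},\tau_{2i+2})$. That type is determined deterministically at the moment each new atom is read, by comparing it with the stored ones. Either way, finiteness and determinism are preserved.

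The main obstacle is the last block. After $\#\#$ the automaton cannot know whether it is reading an intermediate block $(\tau_{2k}\tau_{2k+1}\tau_{2k+2})^{n_k}$ followed by more $\#\#$, or the final block $(\tau_{2k+2}\tau_{2k+3})^{n_{k+1}}$. Guessing would break determinism, so I would run both hypotheses in parallel, a subset construction inside the finite control. Both hypotheses share the registers: $x_1$ holds the overlap atom, $x_2$ the second letter, and on the third letter the size-three hypothesis stores it in $x_3$. The size-two hypothesis survives only if that third letter equals $x_1$.

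From then on, each hypothesis predicts the next letter from the registers alone. The state therefore records which hypotheses are alive and the position modulo $6$, and a hypothesis dies when its prediction fails. Reading $\#$ is allowed only if the size-three hypothesis is alive at a period boundary, and it then commits to that hypothesis. A state is accepting iff we are past at least one $\#\#$ (so $k\ge 0$ and the first block, of size two, is complete), the size-two hypothesis is alive, and an even positive number of letters of the current block has been read. Finally I would verify by induction on the number of blocks that the accepted words are exactly those of the form~\eqref{eq: word l}, with $n_i\ge1$ enforced by requiring a full period before $\#$ or acceptance.
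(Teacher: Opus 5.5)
Your construction is correct and follows the same idea as the paper's two-sentence sketch: keep the atoms of the current block in the three registers, check the periodic repetition, and use the separator $\#\#$ to hand the overlap atom to the next block. You also spell out two points the paper leaves implicit, and both are handled soundly: tracking the size-two and size-three hypotheses in parallel so the unannounced final block needs no guessing, and the finite-state bookkeeping required by pairwise-distinct register contents.
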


\begin{proof}
The automaton stores the relevant triple of atoms
that appear in a block (or the pair of atoms
for the first and last blocks)
and checks consistency of adjacent blocks.
The double separator $\#\#$ synchronizes the transitions
between consecutive blocks.
\end{proof}

\begin{theorem}
\label{t: irrational}
$\Par{L}$ is irrational.
\end{theorem}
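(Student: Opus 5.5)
We argue by contradiction. Assume $\Par{L}$ is rational. By \Cref{p: lin form} it has a presentation $R$ in linear form of some height $h$. Let $N,M$ be the constants of \Cref{p: const pars} for $R$. The idea is to pick one data vector $v\in\Par{L}$ and apply the parsing-tree operations of \Cref{p: tree op} to its tree. We will show that some leaf of the tree must carry a data vector of size larger than $N$, which is impossible.

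\textbf{Choice of the witness.} Take $k$ much larger than $h$, $N$ and $M$, say $k>(h+1)(N+M+1)$. Take pairwise distinct atoms $\tau_1,\ldots,\tau_{2k+3}$. Take rapidly growing exponents, for instance $n_i=B^{i+1}$ with $B\gg N$. Let $\bsigma$ be the word~\eqref{eq: word l} with these parameters, and let $v=\Par{\bsigma}$.

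In $v$ the values are as follows. The letter $\#$ has value $2(k+1)$. A middle atom $\tau_{2i+1}$ has value $n_i$. A shared atom $\tau_{2i+2}$ has value $n_i+n_{i+1}$. Since the exponents are exponentially separated, these values are pairwise distinct. Moreover, adding a perturbation $w\le v$ of size much smaller than the gaps cannot change their relative order.

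\textbf{Commutative stability lemma.} The first and central step is a stability lemma. Let $w\le v$ be such that $v+w\in\Par{L}$, or such that $v-w\in\Par{L}$. Then the witness word has the same number of $\#$'s. Up to reversal, it also has the same block structure on the same atoms, only with changed exponents.

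The proof goes through the delimiters. The count of $\#$ fixes the number of blocks, and therefore bounds the number of distinct atoms by $2k+3$. The ordering of the atom values then pins down which atom sits in which block position. The atom of largest value must be the last shared atom, and we peel the structure block by block from there. This is the step I expect to be the main obstacle. The size-$3$ middle blocks are exactly what rule out the redistribution phenomenon of \Cref{e: two block}, and exponential separation is needed so that changes in the exponents cannot swap the order of values.

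A consequence follows. Duplicating a subtree with value $w$ gives $v+w\in\Par{L}$ (\Cref{p: tree op}). Hence $w$ contains no $\#$, and $w$ is a nonnegative combination of the block vectors $\tau_{2i}+\tau_{2i+1}+\tau_{2i+2}$ (respectively $\tau_1+\tau_2$ and $\tau_{2k+2}+\tau_{2k+3}$). In particular, all $\#$'s lie in root labels along a single path, of which there are at most $h+1$. This is consistent with there being many blocks only if the root-path labels carry the $\#$'s; we use this only as bookkeeping.

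\textbf{Saturation chain and descent.} Mimic the proof of \Cref{l: par bg sh 3}. At the root, with label $(v_0,S_0)$, we have $\size{v_0}\le N$ and $\card{S_0}\le M$. Since $k$ is large, there are many consecutive blocks whose atoms avoid $S_0\cup\dom{v_0}$.

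Pick such a middle atom $\tau_{2i+1}$ and a child subtree $T_{v_j}$ containing it. Suppose some other child also contains a neighbouring atom $\tau_{2i}$ or $\tau_{2i+2}$ of that block. Permute $T_{v_j}$ by sending $\tau_{2i+1}$, and the shared atoms in $T_{v_j}$ outside $S_0$, to fresh atoms. Because $w$ is a combination of whole block vectors, the block containing $\tau_{2i+1}$ is then split across fresh and old atoms. This gives a vector violating the stability lemma, since a block would have to occur with two different neighbour sets.

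Since shared atoms link consecutive blocks, the saturation propagates along the chain of free blocks. So one child $T_{v_j}$ contains all occurrences of a run of roughly $k/(h+1)$ consecutive blocks, with exponentially large multiplicities. Iterate inside $T_{v_j}$: its root label has size at most $N$ and support at most $M$, so a shorter but still long free run is saturated in one grandchild. After at most $h$ levels we reach a leaf that contains some atom with value at least $n_i>N$. This contradicts \Cref{p: const pars}.
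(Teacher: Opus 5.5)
There is a genuine gap, and it sits in the step you flagged as the main obstacle. Your stability lemma drops the hypotheses $w(\#)=0$ and $\size{w}<C$ of \Cref{l: comm stable}, and without them it is false. Take $w=2\#+2\tau_2+\tau_1\le v$. Inserting the block $\#\#(\tau_2\tau_1\tau_2)$ right after $(\tau_1\tau_2)^{n_0}$ gives a word of $L$, since nothing in $L$ forces the $\tau_i$ to be distinct, and its Parikh vector is $v+w$.

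So duplicating a subtree does not show that its value is $\#$-free. That conclusion cannot be true anyway: it would put all $2k+2$ occurrences of $\#$ into the root label, of size at most $N$, and the proof would end there. Your ``single path'' bookkeeping already contradicts it. In fact most $\#$'s sit in proper subtrees, and the paper's proof is organised around a vertex with at least $n$ children whose subtrees carry $\#$ (\Cref{c: many sep}).

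Moreover, even for $\#$-free $w$, the order-preservation argument needs $w$ small compared with the gaps. So stability applies only to the label of a single vertex: duplicate it and prune the copy's descendants, as in \Cref{c: odd implies neighbors}. It never applies to the value of a whole child subtree, and in your descent those values carry exponentially large multiplicities by design. Hence ``$w$ is a combination of whole block vectors'' is unsupported, and so is the saturation step built on it. Besides, the permuted vector $v-v_j+\pi(v_j)$ is not of the form $v+w$, so no stability statement says anything about it.

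Two different tools are needed instead.

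\emph{Saturation} has to come from the domain bound \Cref{p: dom bound}. Permute a child subtree in place by a permutation fixing $\#$ and the support of the star sub-expression generating the children; this is what \Cref{p: tree op} allows. The number of $\#$'s is unchanged, and $v$ already has the maximal domain size $v(\#)+2$, so the domain cannot grow. Hence every atom of $\dom{v_j}$ outside that support is saturated in $v_j$.

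\emph{Propagation to neighbours} has to come from labels. At most $v(\#)$ labels contain $\#$, so if $v(\tau_i)$ exceeds roughly $v(\#)N$, some $\#$-free label contains $\tau_i$. Stability applied to that label forces both neighbours of $\tau_i$ into every subtree saturating $\tau_i$ (\Cref{c: sat implies neighbors}). This requires the base of your exponents to be of order $(2k+2)N$ or more, not merely $B\gg N$.

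With these two facts your descent does go through, and it would be a genuine alternative to the paper's $A_j$/$B_j$ counting. The bookkeeping changes, though. At each level the support of the next star sub-expression may cut the current run into up to $M+1$ pieces, so the run shrinks by a multiplicative factor per level. Thus $k$ must be exponential in $h$, e.g.\ $2k+3\geq(2M+2)^{h+1}$; your choice $k>(h+1)(N+M+1)$ is not enough.
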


Obviously,~\Cref{t: par fma neq rat}
follows from~\Cref{l: l fma3} and~\Cref{t: irrational}.

The proof of~\Cref{t: irrational}
is by {\em reductio ad absurdum}.
Assume to the contrary that $\Par{L}$ is rational.
Therefore, there is a rational
expression $R$ defining $\Par{L}$.
By~\Cref{p: lin form},
we may assume $R$ is
presented in linear form
and let $r$ be the star-height of~$R$.
Let $N_0,M_0$ be the constants
provided by~\Cref{p: const pars}
for $R$.

The remainder of this section
is the detailed proof;
first, we give the proof idea and then
develop the required technical lemmas.

\subsection{Proof idea}

The main idea is to show
that the language $L$
contains a sequence of commutatively
stable words with an increasing
number of blocks.
Due to commutative stability,
we shall show that in $\Par{L}$,
each block is generated by a number
of star expressions associated
with the block exponent.
Consequently, these sub-expressions must remember
all atoms occurring within the block.

However, the blocks are not independent:
the last atom of each block is also the
first atom of the next block.
Such overlap forces a star sub-expression
that generates a block
to also remember its immediate neighbors.
By iteration, this neighbor
constraint propagates
across the blocks, requiring distinct atoms from
arbitrarily many adjacent blocks to appear
in the support of a single star sub-expression.

When the number of blocks grows, the number of atoms
that must be remembered by a single star sub-expression
grows as well.
Hence, the required support becomes unbounded,
contradicting the bounded-support
property of rational expressions.
Hence, the Parikh image of the language
is not rational.

Namely, the proof has three steps.
\paragraph*{Proof strategy:}
\begin{enumerate}
\item We choose a word $\bsigma\in L$
whose block exponents grow exponentially.
Its Parikh vector $v=\Par{\bsigma}$
is commutatively stable:
if $w\leq v$ is small, $w(\#)=0$, and $v+w\in\Par{L}$,
then every witness for $v+w$
has the same block structure as $\bsigma$, up to reversal.
\item We use parsing-tree operations
for the assumed rational expression.
If a sub-tree of a parsing tree contributed a vector $w\leq v$,
then duplicating that sub-tree yields a parsing tree
for $v+w$.
Commutative stability then forces local propagation:
whenever $w$ increases an odd atom
in a block,
it must also increase its neighboring atoms.
\item We consider many sub-trees
generated by the same star sub-expression.
A counting argument shows that many of them
must contain atoms that are saturated,
meaning that the sub-tree accounts for the entire
value of those atoms in $v$.
Saturation, together with the propagation property,
forces distinct neighboring atoms to lie in the support
of the same star sub-expression.
Since the number of such atoms can be made arbitrarily large,
this contradicts the fixed finite support
of that sub-expression.
\end{enumerate}

\subsection{Commutative stability in infinite alphabets}

Consider the sub-language $L^\prime\subset L$
containing words in which all symbols $\tau_i$
are pairwise distinct.
We focus on $\Par{L^\prime}\subseteq\Par{L}$.

\begin{proposition}
\label{p: dom bound}
For every $v\in \Par{L}$,
\[
|\dom{v}|\leq v(\#)+2\, ,
\]
where the equality holds if and only if $v\in\Par{L^\prime}$.
\end{proposition}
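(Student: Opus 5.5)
Fix $v\in\Par{L}$ and a witness word $\bsigma\in L$ with $\Par{\bsigma}=v$, written in the form~\eqref{eq: word l} with parameter $k\geq 0$. The plan is to compute both sides of the inequality directly from this witness, and then to analyse when the bound is attained.

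First I count separators. The word has $k+2$ blocks separated by $k+1$ occurrences of $\#\#$, so $v(\#)=2k+2$. The atoms occurring in $\bsigma$ are $\tau_1,\ldots,\tau_{2k+3}$, which gives $2k+3$ atom positions. Every $n_i\geq 1$, so each $\tau_i$ really occurs. Hence
\[
\dom{v}=\set{\#}\cup\set{\tau_1,\tau_2,\ldots,\tau_{2k+3}}\, ,
\]
and therefore $|\dom{v}|\leq 1+(2k+3)=v(\#)+2$. Here $|\dom{v}|$ is read as the cardinality of the domain; note that the symbols in $\dom v$ are letters, and $\#$ is one of them.

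Next the equality case. Equality holds exactly when the $2k+3$ atoms $\tau_i$ are pairwise distinct, since $\#$ differs from each $\tau_i$ by condition 3. If $v\in\Par{L^\prime}$, take a witness in $L^\prime$; its $\tau_i$ are pairwise distinct, so equality holds.

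Conversely, suppose equality holds for $v$. The number $k$ is determined by $v(\#)$, so \emph{every} witness $\bsigma$ of $v$ has exactly $2k+3$ atom positions. These positions must realise $2k+2$ distinct atoms outside $\#$, since $|\dom{v}|=2k+4$ counts $\#$ as well. So the $\tau_i$ of any witness are pairwise distinct, the witness lies in $L^\prime$, and $v\in\Par{L^\prime}$.

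The only delicate point is the bookkeeping: making sure that $\#$ itself is counted in $\dom v$, and that $k$, and hence the number of positions, is a function of $v$ alone rather than of the chosen witness. Once this is set up, no real obstacle remains; the argument is a direct count.
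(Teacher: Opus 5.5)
Your proof is the same direct count as the paper's: $v(\#)=2k+2$ from the $k+1$ separator pairs, at most $2k+3$ atom positions $\tau_i\neq\#$, and equality exactly when these are pairwise distinct, which places any witness of $v$ in $L^\prime$. There is one slip to correct in the converse: $|\dom{v}|=2k+4$ means the $2k+3$ positions realise $2k+3$ distinct atoms other than $\#$, not $2k+2$. With $2k+2$ there would be a repetition among the $\tau_i$, so the sentence as written contradicts the pairwise distinctness you conclude from it.
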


\begin{proof}
If a word in $L$ has parameter $k$,
then it contains exactly $k+1$ occurrences
of the separator pair $\#\#$,
hence, $v(\#)=2k+2$.
Apart from $\#$, the word
uses atoms among $\tau_1,\tau_2,\ldots,\tau_{2k+3}$,
so it contains at most $2k+3$ non-separator atoms.
Therefore,
\[
\card{\dom{v}}\,\leq (2k+3)+1=2k+4=v(\#)+2\, .
\]
Equality holds precisely
when all atoms $\tau_1,\tau_2,\ldots,\tau_{2k+3}$
are pairwise distinct,
that is,
precisely for vectors in $\Par{L^\prime}$.
\end{proof}

In particular, the number of distinct atoms
is linearly bounded by the number
of appearances of $\#$.

Fix a positive integer $C$.
Let
$\bsigma$ in $L^\prime$ be
in form~\eqref{eq: word l},
where
$n_i=C4^{i+1}$ for $i=0,1,\ldots,k+1$,
and let $v=\Par{\bsigma}$.
Then,
\[
v(\#)=2k+2\quad \mbox{and} \quad \card{\dom{v}}\, = 2k+4\, .
\]

For odd indices $v(\tau_{2j+1})=n_j$
and for even indices $v(\tau_{2j})=n_{j-1}+n_j$.
Since the sequence $n_i$ grows exponentially,
the atom values are strictly ordered as follows:
\begin{equation}
\label{eq: order l}
v(\tau_{2k+2})>v(\tau_{2k+3})>v(\tau_{2k})>v(\tau_{2k+1})>\cdots
\end{equation}

\begin{lemma}
\label{l: comm stable}
Let $\bsigma$ and $v$ be as above.
Let $w$ be a data vector
such that $w\leq v$, $w(\#)=0$, $\size{w}< C$,
and $v+w\in\Par{L}$.
Let $\bsigma^\prime\in L$
be such that $\Par{\bsigma^\prime}=v+w$.
Then $\bsigma^\prime$ is either
\begin{equation*}
\begin{aligned}
(\tau_1\tau_2)^{m_0}
\#\#
(\tau_2\tau_3\tau_4)^{m_1}
&\#\#
(\tau_4\tau_5\tau_6)^{m_2}
\#\#
(\tau_6\tau_7\tau_8)^{m_3}\cdots \\[1ex]
&\#\#
(\tau_{2k}\tau_{2k+1}\tau_{2k+2})^{m_k}
\#\#
(\tau_{2k+2}\tau_{2k+3})^{m_{k+1}}\, .
\end{aligned}
\end{equation*}
for some integers $m_0,m_1,\ldots,m_{k+1}$
such that $m_i\geq n_i$, $i=0,1,\ldots,{k+1}$,
or the reverse of the latter.
\end{lemma}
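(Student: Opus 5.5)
The plan is to show that $\bsigma'$ has the same number of blocks and the same atoms as $\bsigma$, and then to reconstruct its block decomposition from the values of $v+w$ alone. Exponential growth of the $n_i$ makes each step of the reconstruction forced.

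First, since $w(\#)=0$, the word $\bsigma'$ has $v(\#)=2k+2$ separators, so it has the same parameter $k$. Since $w\leq v$, we have $\dom{v+w}=\dom{v}$, so $\card{\dom{v+w}}=2k+4=(v+w)(\#)+2$. By \Cref{p: dom bound}, $\bsigma'\in L^\prime$. Hence $\bsigma'$ has the form \eqref{eq: word l} with exponents $m_0,\ldots,m_{k+1}$ and pairwise distinct atoms $\tau'_1,\ldots,\tau'_{2k+3}$, and these atoms form exactly the set $\{\tau_1,\ldots,\tau_{2k+3}\}$. In $\bsigma'$ an odd atom $\tau'_{2j+1}$ has value $m_j$, and an even atom $\tau'_{2j}$ has value $m_{j-1}+m_j$. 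Next, because $\size{w}<C$ and consecutive values in \eqref{eq: order l} differ by at least a multiple of $C$ (they are built from $C4^{i+1}$), the strict order \eqref{eq: order l} still holds for $v+w$. Moreover, each atom's value in $v+w$ lies strictly within distance $C$ of its value in $v$.

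The core is a peeling argument from the top. Let $V_1>V_2>\cdots$ be the sorted values of $v+w$; thus $V_1$ belongs to $\tau_{2k+2}$ and $V_2$ to $\tau_{2k+3}$. Take an index $j$ with $m_j$ maximal. The atom $\tau'_{2j+1}$ has value $m_j$, and each of its even neighbours has a strictly larger value. So $m_j\neq V_1$. If $m_j\leq V_3$, then every value is at most $V_3+V_3$ from two parts, and comparing magnitudes gives a contradiction: $V_1\approx 5C4^{k+1}$, $V_2\approx 4C4^{k+1}$, while all other values are $O(C4^{k})$. So $m_j=V_2$, and this atom has exactly one neighbour of larger value. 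It must therefore be an endpoint atom. Up to reversing $\bsigma'$, we may take it to be $\tau'_{2k+3}$, which gives $\tau'_{2k+3}=\tau_{2k+3}$, $\tau'_{2k+2}=\tau_{2k+2}$, $m_{k+1}=V_2$ and $m_k=V_1-V_2$. We then continue inductively downward. The value $m_k$ must be the value of $\tau'_{2k+1}$, and the separation of values identifies this atom as $\tau_{2k+1}$. The value $m_{k-1}=(v+w)(\tau'_{2k})-m_k$ then pins $\tau'_{2k}$ to the unique atom whose value is the next largest remaining one, namely $\tau_{2k}$. At each stage, the approximate sizes $m_i\approx C4^{i+1}$ (with errors less than $C$) leave only one candidate. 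This yields $\tau'_i=\tau_i$ for all $i$.

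Once the identification is made, $m_j=(v+w)(\tau_{2j+1})=n_j+w(\tau_{2j+1})\geq n_j$ for the odd atoms; at the end that is not an odd atom, we use $m_0=(v+w)(\tau_1)\geq n_0$. The main obstacle I expect is the peeling step. There we must check carefully that the bounded perturbation $w$ can never create a coincidence that lets an even sum masquerade as an odd value, or that lets the structure shift by one position as in \Cref{e: two block}. My first attempt would be to prove, by induction, the explicit estimate $|m_i-C4^{i+1}|<C$ (possibly with a growing error such as $iC$, still dominated by $4^{i}C$) and to use it to exclude each alternative placement at every stage.
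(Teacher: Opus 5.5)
Your proposal is correct and follows essentially the paper's route: same block count and atom set via \Cref{p: dom bound}, order preservation from $\size{w}<C$, then peeling blocks off from the heavy end. The only difference is cosmetic: you anchor the end through the maximal exponent, which is forced to equal the second-largest value and hence to sit at an endpoint odd position, whereas the paper places the largest atom $\tau_{2k+2}$ at $\omega_2$ or $\omega_{2k+2}$. The obstacle you flag does not arise: once $\tau'_{2j+1}=\tau_{2j+1}$ is identified, $m_j=(v+w)(\tau_{2j+1})$ exactly and $\tau'_{2j}$ is the only remaining atom with value above $m_j$, so $m_{j-1}=(v+w)(\tau_{2j})-(v+w)(\tau_{2j+1})$ stays within $C$ of $C4^{j}$ at every stage with no accumulation of error.
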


We shall say that $\bsigma$
is \emph{commutatively stable} up to $C$.
That is, small perturbations of $v$
must correspond to words with the same
block structure as $\bsigma$.

\begin{proof}

Since $\size{w}<C$,
adding $w$ changes each atom value
by less than $C$.
The gaps between consecutive values
in the ordering~\eqref{eq: order l}
are larger than $C$,
so the relative order of atom values
is preserved in $v+w$.
That is, if $v(a)\leq v(b)$,
then also $(v+w)(a)\leq (v+w)(b)$.

Moreover, $(v+w)(\#)=v(\#)$,
so every witness for $v+w$
has the same number of blocks as $\bsigma$.
Hence, $\bsigma^\prime$ is of the following form,
\begin{equation*}
\begin{aligned}
(\omega_1\omega_2)^{m_0}
\#\#
(\omega_2\omega_3\omega_4)^{m_1}
&\#\#
(\omega_4\omega_5\omega_6)^{m_2}
\#\#
(\omega_6\omega_7\omega_8)^{m_3}\cdots \\[1ex]
&\#\#
(\omega_{2k}\omega_{2k+1}\omega_{2k+2})^{m_k}
\#\#
(\omega_{2k+2}\omega_{2k+3})^{m_{k+1}}\, .
\end{aligned}
\end{equation*}
for some atoms $\omega_1,\omega_2,\ldots,\omega_{2k+3}
\in\atoms\setminus\set{\#}$.

Since $w\leq v$, we have $\dom{w}\subseteq \dom{v}$
and $\dom{v+w}=\dom{v}$.
By~\Cref{p: dom bound},
the atoms $\omega_1,\omega_2,\ldots,\omega_{2k+3}$
must therefore be pairwise distinct
and must be a permutation of
$\tau_1,\tau_2,\ldots,\tau_{2k+3}$.

We now show that this permutation is forced
to be either the identity
or the reversal.

\begin{claim}
Either $\omega_2=\tau_{2k+2}$ or $\omega_{2k+2}=\tau_{2k+2}$.
\end{claim}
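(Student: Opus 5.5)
The plan is to identify $\tau_{2k+2}$ as the atom of maximal value in $v+w$, and then use the structure of $\bsigma^\prime$ to show that the atom of maximal value in any word of $L$ must sit at position $2$ or position $2k+2$.

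First, by the ordering~\eqref{eq: order l} and the fact, already noted in the proof, that $\size{w}<C$ preserves relative order, $\tau_{2k+2}$ is the unique atom with the largest value in $v+w$. In $v$ it exceeds every other atom by at least $n_{k+1}-n_k$, which is far larger than $C$, so adding $w$ cannot change this.

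Second, compute the values in $\bsigma^\prime$ in terms of the exponents $m_i$:
\begin{itemize}
\item for each $j$, $(v+w)(\omega_{2j+1})=m_j$;
\item for each $j$, $(v+w)(\omega_{2j})=m_{j-1}+m_j$.
\end{itemize}
Every odd-position atom $\omega_{2j+1}$ has value $m_j$, and this is strictly less than the value of an adjacent even atom. For $1\le j\le k$ the adjacent even atom $\omega_{2j}$ has value $m_{j-1}+m_j>m_j$. The endpoints are handled the same way: $\omega_1$ has value $m_0<m_0+m_1=$ the value of $\omega_2$, and $\omega_{2k+3}$ has value $m_{k+1}<$ the value of $\omega_{2k+2}$. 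So the maximal-value atom must be even, say $\omega_{2j}$ with $1\le j\le k+1$.

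Third, compare interior even atoms with the odd atoms of their own blocks. For an interior even atom $\omega_{2j}$ with $2\le j\le k$, its value $m_{j-1}+m_j$ must be compared with the values of $\omega_{2j-1}$ and $\omega_{2j+1}$, namely $m_{j-1}$ and $m_j$. Each of these is smaller, so this comparison alone is not decisive.

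This is the main obstacle. Local comparison never rules out an interior even atom, so the argument needs global value information from $v+w$. The idea is to match the sorted list of values of $v+w$, which is known and has large gaps, against the sorted structure forced by the $m_i$.

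For this, use the identity $\sum_{\text{even}}(\text{value})=2\sum_{\text{odd}}(\text{value})-m_0-m_{k+1}$, together with the known values. The odd atoms $\omega$ carry the values $m_0,\dots,m_{k+1}$, which together form a set $O$ of $k+2$ atoms. The even atoms carry the values $m_{j-1}+m_j$.

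In $v+w$, the $k+2$ smallest-in-pattern values $\approx n_0,\dots,n_{k+1}$ belong to the odd $\tau$'s. The even $\tau$'s have values $\approx n_{j-1}+n_j$.

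I would argue by induction from the top. The two largest values, $\approx n_k+n_{k+1}$ and $\approx n_{k+1}$, must be $m_{j-1}+m_j$ for some $j$ and $m_{j^\prime}$. Since $n_{k+1}>\sum_{i\le k}(n_i+\dots)$ up to the $C$ perturbation, the largest $m$ occurs at only one index, and every sum $m_{j-1}+m_j$ not involving it is below $n_k+C$.

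The only even atoms whose sum involves $m_{j^\ast}$ are $\omega_{2j^\ast}$ and $\omega_{2j^\ast+2}$. If both existed, that is, if $1\le j^\ast\le k$ so that $\omega_{2j^\ast+1}$ is interior, there would be two even atoms of value above $n_{k+1}$. But $v+w$ has exactly one atom above $n_{k+1}+C$ besides $\tau_{2k+3}$, and $\tau_{2k+3}$ has value $\approx n_{k+1}$, so there would be too many large values.

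Hence $j^\ast\in\{0,k+1\}$, so the maximal atom is $\omega_2$ or $\omega_{2k+2}$, giving $\omega_2=\tau_{2k+2}$ or $\omega_{2k+2}=\tau_{2k+2}$.

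The step needing care is this counting of atoms with value exceeding $n_{k+1}-C$, using the exponential gaps with ratio $4$.
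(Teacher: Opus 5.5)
Your first two steps match the paper and are fine: $\tau_{2k+2}$ is the unique maximum of $v+w$, and it cannot sit at an odd position. The gap is in the step you yourself flag as the main obstacle, namely excluding an interior even position.

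\textbf{The unjustified step.} You assert that the second-largest value, that of $\tau_{2k+3}$, is carried by an odd-position atom $\omega_{2j'+1}$, so that the largest exponent $m_{j^\ast}$ is about $n_{k+1}$. Your justification is that every sum $m_{j-1}+m_j$ not involving $m_{j^\ast}$ is below $n_k+C$. That bound is false even in the configuration you are aiming for: for $k\ge 1$, $\omega_{2k}$ has value $m_{k-1}+m_k\ge n_{k-1}+n_k=5C4^k>n_k+C$.

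More fundamentally, the reasoning silently identifies the odd $\omega$'s with the odd $\tau$'s (``the values $\approx n_0,\dots,n_{k+1}$ belong to the odd $\tau$'s''). That treats the unknown exponents $m_i$ as already matched with the $n_i$, which is what the lemma is trying to establish. As written, nothing rules out $\tau_{2k+3}$ sitting at an even position. Your final step, that the maximal atom is adjacent to $\omega_{2j^\ast+1}$, also rests on the false bound.

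\textbf{The missing idea.} What you need is a halving step, which is in fact the paper's entire argument. Suppose $\tau_{2k+2}=\omega_{2p}$. Then $m_{p-1}+m_p=(v+w)(\tau_{2k+2})\ge v(\tau_{2k+2})=20C4^k$, so one of $m_{p-1},m_p$ is at least $10C4^k$. Every other atom in that block then has value at least $10C4^k$. But every atom other than $\tau_{2k+2},\tau_{2k+3}$ has value at most $(v+w)(\tau_{2k})\le C(5\cdot 4^k+1)<10C4^k$. When $1<p<k+1$, both blocks adjacent to $\omega_{2p}$ are $3$-blocks. So this produces two atoms besides $\tau_{2k+2}$ of value at least $10C4^k$, a contradiction.

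The same observation repairs your route. It shows that the odd atom of the large exponent must be $\tau_{2k+3}$. Your remark then finishes the proof: an interior $\omega_{2j^\ast+1}$ would have two neighbours of value strictly above the second-largest value $(v+w)(\tau_{2k+3})$, which is impossible, and at an endpoint its unique neighbour must be $\tau_{2k+2}$. Without this step, however, the proposal does not go through. The sum identity and the ``induction from the top'' are not needed.
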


\begin{proof}
$\tau_{2k+2}$ is the atom with the largest value in $v+w$.
Let $j$ be such that $\tau_{2k+2}=\omega_j$.
If $j$ is odd, then the corresponding block
exponent must be the value of $\tau_{2k+2}$.
That implies that there is a different atom
whose value in $v+w$
is greater than the value of $\tau_{2k+2}$.
However, $\tau_{2k+2}$ is the atom with the maximal value in $v+w$.
Therefore, $j$ is even.

If $j=2p$ for $1<p<k+1$, then
\[
m_{p-1}+m_p=(v+w)(\omega_{2p})=
(v+w)(\tau_{2k+2})
\geq v(\tau_{2k+2})=C4^{k+2}+C4^{k+1}=20C4^k\, .
\]
Hence, one of $m_{p-1}$ or $m_p$ must be greater than $10C4^K$.
The corresponding block would force two
neighboring atoms to also have a value
greater than $10C4^K$.
This contradicts the ordering~\eqref{eq: order l},
since after $\tau_{2k+2}$ and $\tau_{2k+3}$,
the next largest atom is $\tau_{2k}$,
whose value in $v+w$ is at most
\[
(v+w)(\tau_{2k})\leq v(\tau_{2k})+C= C4^{k+1}+C4^{k}+C\leq C(5\cdot4^k+1)\, .
\]

Thus $\tau_{2k+2}$ can occur only
at one of the two ends:
either $\omega_2$ or $\omega_{2k+2}$.
\end{proof}

We may assume that $\tau_{2k+2}=\omega_{2k+2}$,
because the case of $\tau_{2k+2}=\omega_2$ results
in the reversal of the word.

The next largest atom
must be $\tau_{2k+3}$.
Consequently,
\[
m_{k+1}=(v+w)(\tau_{2k+3})=
C4^{k+2}+w(\tau_{2k+3})\leq C(4^{k+2}+1)\, ,
\]
and hence
\[
m_k=(v+w)(\tau_{2k+2})-m_{k+1}\geq C(4^{k+1}-1)\, .
\]

The only atoms large enough to appear in this block
are $\tau_{2k}$ and $\tau_{2k+1}$,
because the next largest value is that of $\tau_{2k-2}$,
but its value is at most $C4^k+C4^{k-1}+C$,
the latter is less than $m_k$.
Therefore, $\omega_{2k}=\tau_{2k}$,
$\omega_{2k+1}=\tau_{2k+1}$, 
and $m_k=(v+w)(\tau_{2k+1})\leq C(4^{k+1}+1)$.

Iterating the same argument
from right to left
yields $\omega_i=\tau_i$ for all $i$.
\end{proof}

The following corollary is immediate
from~\Cref{l: comm stable}.

\begin{corollary}
\label{c: comm stable}
In the prerequisites of~\Cref{l: comm stable},
if $w(\tau_{2i+1})>0$ for some $i\in\{1,2,\ldots,k\}$,
then $w(\tau_{2i})>0$ and $w(\tau_{2i+2})>0$.
Similarly, if $w(\tau_1)>0$, then $w(\tau_2)>0$,
and if $w(\tau_{2k+3})>0$, then $w(\tau_{2k+2})>0$.
\end{corollary}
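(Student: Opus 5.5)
By \Cref{l: comm stable}, any witness $\bsigma^\prime\in L$ with $\Par{\bsigma^\prime}=v+w$ is, up to reversal, of the form
\[
(\tau_1\tau_2)^{m_0}\#\#(\tau_2\tau_3\tau_4)^{m_1}\#\#\cdots\#\#(\tau_{2k+2}\tau_{2k+3})^{m_{k+1}}
\]
with $m_i\geq n_i$. Reversal does not change the Parikh vector, so we may assume this orientation. The plan is to compute $w=\Par{\bsigma^\prime}-v$ coordinatewise from the block exponents and then read off the implications. Set $\delta_i=m_i-n_i\geq 0$ for $i=0,1,\ldots,k+1$.

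In $\bsigma^\prime$ each odd-indexed atom $\tau_{2j+1}$ occurs only in block $j$, so its value is $m_j$. Each even-indexed atom $\tau_{2j}$ occurs in blocks $j-1$ and $j$, so its value is $m_{j-1}+m_j$. The same formulas with $n_i$ give $v$. Subtracting, we get
\[
w(\tau_{2j+1})=\delta_j \quad\mbox{and}\quad w(\tau_{2j})=\delta_{j-1}+\delta_j .
\]
These hold for all $j$ at the ends too: $w(\tau_1)=\delta_0$, $w(\tau_2)=\delta_0+\delta_1$, $w(\tau_{2k+3})=\delta_{k+1}$, and $w(\tau_{2k+2})=\delta_k+\delta_{k+1}$.

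The corollary now follows from nonnegativity of the $\delta_i$. Suppose $w(\tau_{2i+1})>0$ for some $1\leq i\leq k$. Then $\delta_i>0$, so $w(\tau_{2i})=\delta_{i-1}+\delta_i>0$ and $w(\tau_{2i+2})=\delta_i+\delta_{i+1}>0$. For the ends, $w(\tau_1)=\delta_0>0$ gives $w(\tau_2)\geq\delta_0>0$, and $w(\tau_{2k+3})=\delta_{k+1}>0$ gives $w(\tau_{2k+2})\geq\delta_{k+1}>0$.

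The only delicate step is the reversal case. There the labels $\tau_i$ attach to blocks in the opposite order, but the two neighbours of each odd atom are still the two even atoms sharing its block. Hence the same computation applies with the blocks indexed from the other end. Everything else is direct bookkeeping.
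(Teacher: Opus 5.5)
Your proof is correct and follows the paper's argument. Both invoke \Cref{l: comm stable} to fix the block structure, then note that an odd atom lies in a single block, so its increase forces that block's exponent up and hence raises both neighbouring even atoms; your bookkeeping $w(\tau_{2j+1})=\delta_j$, $w(\tau_{2j})=\delta_{j-1}+\delta_j$ just makes this explicit (and your closing paragraph on reversal is unnecessary, since you had already noted that reversal leaves the Parikh vector unchanged).
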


\begin{proof}
By~\Cref{l: comm stable},
the vector $v+w$
is witnessed by the same block structure
as $v$, up to reversal.
If an odd atom receives additional value,
then the exponent of the unique
block containing that odd atom must increase.
The two neighboring even atoms occur
in the same block,
so their values must increase as well.
The endpoint cases are identical, with only
one neighbor.
\end{proof}

\subsection{Local propagation}

Define
\[
n=2M_0+2, \quad k=(r+1)N_0n^{r+1}, \quad C=(2k+3)N_0\, .
\]
Choose $\bsigma\in L^\prime$ of the form~\eqref{eq: word l} with
this value of $k$ and with exponents $n_i=C\cdot 4^{i+1}$.
Let $v=\Par{\bsigma}$,
and fix a parsing tree $T_v$ of $v$ with respect to $R$.

We define the neighbor set $N(\tau_p)$
of $\tau_p$ by
$N(\tau_p)=\{\tau_{p-1},\tau_{p+1}\}$
for $p=2,3,\ldots,2k+2$,
$N(\tau_1)=\{\tau_2\}$,
and $N(\tau_{2k+3})=\{\tau_{2k+2}\}$.

\begin{claim}
\label{c: odd implies neighbors}
Let $w\leq v$ be a data vector that appears
in the label of a vertex in $T_v$.
If $\tau_{2j+1}\in\dom{w}$ and $\#\notin\dom{w}$,
then $N(\tau_{2j+1})\subseteq\dom{w}$.
\end{claim}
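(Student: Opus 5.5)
The plan is to use the duplication operation of \Cref{p: tree op} to turn $w$ into an admissible perturbation of $v$, and then read off the conclusion from \Cref{c: comm stable}.

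Let $x$ be the vertex of $T_v$ whose label contains $w$, and let $T_x$ be the sub-tree rooted at $x$. I will not duplicate $T_x$ itself, since its value may be large. Instead I use the leaf-level structure. If $x$ is not the root, its sub-tree is one of the children of its parent. Duplicating that sub-tree is allowed by \Cref{p: tree op}, and it yields a parsing tree of $v+w'$, where $w'$ is the value of the duplicated sub-tree. A cleaner choice is to duplicate the smallest sub-tree containing $x$ that is a child of some node. If $x$ is the root, the argument has to go through its children, and I expect to handle that case separately; in fact the intended use only concerns labels below the root.

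The difficulty is that the value $w'$ of the duplicated sub-tree may have large size and may contain $\#$. That would violate the hypotheses $\size{w}<C$ and $w(\#)=0$ of \Cref{l: comm stable}. To avoid this, I prune first. Starting from $T_v$, I delete every child sub-tree of $x$. By \Cref{p: tree op} the result is a parsing tree of $v-(\text{value of descendants of }x)$, which still lies in $R$. However, stability was established for $v$ itself, not for this smaller vector. So the right move is the reverse: add a copy of $x$ with all its children pruned. Concretely, take the sub-tree $T_x$, prune all of its children to obtain a single vertex labelled $w$, and insert that copy as a new sibling of $x$ under the parent of $x$. This is a combination of duplicating and pruning, so by \Cref{p: tree op} it gives a parsing tree of $v+w$. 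Hence $v+w\in R=\Par{L}$.

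It remains to check the hypotheses of \Cref{l: comm stable}. We have $w\le v$ by assumption and $w(\#)=0$ since $\#\notin\dom{w}$. By \Cref{p: const pars}, $\size{w}\le N_0<C=(2k+3)N_0$. The lemma then applies, and \Cref{c: comm stable} gives $w(\tau_{2j})>0$ and $w(\tau_{2j+2})>0$, with the endpoint variants when $j=0$ or $2j+1=2k+3$. That is exactly $N(\tau_{2j+1})\subseteq\dom{w}$.

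The main obstacle is the case where $x$ is the root, which has no parent to receive a sibling. There I would argue that the root label $g_i$ together with the chosen index $i$ forces $v=g_i+\sum$ children. Adding another copy of $g_i$ is then not directly available. So I would either exclude the root from the scope of the claim, as the later use presumably only needs non-root vertices, or treat it by noting that the root vector of a height-$r$ expression can be absorbed by the above procedure applied one level up in the linear form.
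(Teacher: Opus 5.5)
Your argument is essentially the paper's. You attach to the parent a copy of the vertex with all its descendants pruned, which by \Cref{p: tree op} gives a parsing tree of $v+w$, and then \Cref{c: comm stable} applies; your explicit check $\size{w}\le N_0<C$ via \Cref{p: const pars} is a hypothesis the paper uses without stating it.

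The root case you left open needs neither a restriction of the claim nor the vague ``one level up'' argument, because it is vacuous. Pruning all children of the root shows that its label $g_i$ is itself in $R=\Par{L}$ (recall $\zerovector\in P_i^\ast$). Every vector of $\Par{L}$ has $\#$-value at least $2$, since each word of $L$ contains a $\#\#$. So the hypothesis $\#\notin\dom{w}$ never holds at the root.
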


\begin{proof}
Duplicate the sub-tree
rooted at the vertex labeled by $w$
and then, from the new copy prune
all descendants of the copied vertex.
By the parsing tree operation~\Cref{p: tree op},
this yields a parsing tree for $v+w$,
so $v+w\in R=\Par{L}$.
The claim follows by~\Cref{c: comm stable}.
\end{proof}

The following claim shows that saturated atoms
also propagate their values to
their neighbors.

\begin{claim}
\label{c: sat implies neighbors}
Let $T$ be a sub-tree of $T_v$, and
let $w$ be the value of $T$.
If an atom
$\tau_i$ is saturated in $w\leq v$,
then $N(\tau_i)\subseteq \dom{w}$.
\end{claim}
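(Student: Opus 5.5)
The plan is to argue by contradiction using the permuting operation of \Cref{p: tree op}, combined with the rigid value pattern of $v$. Let $(u_0,A_0)$ be the label of the root of $T$, and suppose $\tau_i$ is saturated in $w\leq v$, i.e.\ $w(\tau_i)=v(\tau_i)$. Assume some neighbor $\tau_j\in N(\tau_i)$ has $\tau_j\notin\dom{w}$, so all of $v(\tau_j)$ comes from outside $T$.

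Let $\pi$ be the transposition exchanging $\tau_i$ and $\tau_j$, and apply it to the sub-tree $T$. Since $\tau_j\notin\dom{w}$ and $\tau_i$ is saturated, the resulting vector is
\[
v^\prime = v - v(\tau_i)\,\tau_i + v(\tau_i)\,\tau_j ,
\]
so $v^\prime(\tau_i)=0$ and $v^\prime(\tau_j)=v(\tau_i)+v(\tau_j)$. The operation is legal if $\pi$ preserves $A_0$ setwise. This holds when $\tau_i,\tau_j$ are both in $A_0$ or both outside it.

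If exactly one of them lies in $A_0$, I would first rename the atom outside $A_0$ to a fresh atom $e\notin\dom{v}\cup A_0$. Renaming inside $T$ is harmless, because $\tau_i$ is saturated and $\tau_j$ does not occur in $T$. I then reduce to a configuration where the swap is legal, or directly obtain a vector in which one atom's count is split or merged inconsistently. I expect this bookkeeping with the support $A_0$ to be the fiddliest step. It is also where $\card{A_0}\leq M_0$ from \Cref{p: const pars} may be needed, so that fresh atoms are available.

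Having obtained $v^\prime\in R=\Par{L}$, I compare it with \Cref{p: dom bound}. We have $v^\prime(\#)=v(\#)=2k+2$ and $\card{\dom{v^\prime}}=2k+3$. Any witness $\bsigma^\prime\in L$ therefore has the same number $k$ of blocks, and exactly one repeated atom among $\omega_1,\ldots,\omega_{2k+3}$. The contradiction should come from the values. In any word of $L$, the value of an atom is a sum of exponents of the blocks containing it, so each non-$\#$ atom lies in one or two adjacent blocks (or more, if repeated). Because the exponents $n_i=C4^{i+1}$ grow exponentially, the values of $v^\prime$ are pairwise far apart and have a known ordering. Adapting the greedy right-to-left argument in the proof of \Cref{l: comm stable}, the largest values force the block exponents step by step. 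Eventually the merged value $v(\tau_i)+v(\tau_j)$ cannot be realised as a sum of the already determined exponents of the blocks containing that atom. Otherwise some other atom in those blocks would need a value that $v^\prime$ does not have.

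I expect this last verification, that $v^\prime\notin\Par{L}$, to be the main obstacle. I would first try it in the easiest case, $i$ odd and $\tau_j$ even, where the merged value sits strictly between two consecutive values in~\eqref{eq: order l}. Then I would do $i$ even, exploiting that the unique repeated atom must account for the one missing domain element.
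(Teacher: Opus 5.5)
\textbf{This route cannot work.} The vector $v'$ that your swap produces always lies in $\Par{L}$. The language $L$ imposes no distinctness condition on $\tau_1,\ldots,\tau_{2k+3}$. So replacing every occurrence of $\tau_i$ in $\bsigma$ by $\tau_j$ gives a word of $L$, and its Parikh vector is exactly $v-v(\tau_i)\tau_i+v(\tau_i)\tau_j=v'$. Take your ``easiest case'', $i=2m+1$ and $j=2m+2$. Block $m$ becomes $(\tau_{2m}\tau_{2m+2}\tau_{2m+2})^{n_m}$, and $\tau_{2m+2}$ receives value $2n_m+n_{m+1}$. The merged atom simply occupies two positions of one block, so the greedy right-to-left analysis you sketch will not reach a contradiction. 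Nor does \Cref{p: dom bound} help: it only gives $\card{\dom{v'}}=2k+3<v'(\#)+2$, i.e.\ $v'\notin\Par{L'}$. In fact $v'\in\Par{L}$ whether or not $\tau_j$ is a neighbour of $\tau_i$, so the hypothesis $\tau_j\notin\dom{w}$ is never genuinely used. Separately, the permuting operation of \Cref{p: tree op} needs the permutation to fix $A_0$ pointwise; setwise invariance is not enough, since swapping $a,b$ sends $\set{a+2b}$ to $\set{b+2a}$. Hence transposing two atoms of $A_0$ is illegal. When exactly one of $\tau_i,\tau_j$ lies in $A_0$, the only available renamings either do nothing (because $\tau_j$ does not occur in $T$) or produce a renamed copy of $v$, which is trivially in $\Par{L}$.

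\textbf{The missing idea is duplication combined with commutative stability, not permutation.} At most $v(\#)$ labels contain $\#$, each label has size at most $N_0$, and $v(\tau_i)>v(\#)N_0$. So some vertex whose label $w_i$ has $w_i(\#)=0$ satisfies $w_i(\tau_i)>0$. By \Cref{c: odd implies neighbors}, a $\#$-free label containing an odd atom also contains that atom's neighbours: duplicate that single vertex to get $v+w_i\in\Par{L}$ with $\size{w_i}\le N_0<C$, then apply \Cref{c: comm stable}. Saturation of $\tau_i$ in $w$ forces every vertex contributing to $\tau_i$ to lie in $T$. For odd $i$ this gives $N(\tau_i)\subseteq\dom{w_i}\subseteq\dom{w}$ directly. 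For even $i=2j$, the $\#$-free vertices carrying $\tau_{2j-1}$ and $\tau_{2j+1}$ contain $\tau_{2j}$ in their labels, hence lie in $T$, so $\tau_{2j\pm1}\in\dom{w}$.
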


\begin{proof}
For all $i=1,2,\ldots,n$,
\[
v(\tau_{i})=(2k+3)4^iN_0>v(\#)N_0\, .
\]
Therefore,
for all $i=1,2,\ldots,n$,
$T_v$ contains a vertex $u_i$
that is labeled with a data vector $w_i$
such that $w_i(\#)=0$ and $w_i(\tau_{i})>0$.

For odd $i$, the claim follows from~\Cref{c: odd implies neighbors}
for $w=w_i$.

For even $i=2j$.
Notice that the vectors $w_{2j-1}$ and $w_{2j+1}$
satisfy the prerequisites for~\Cref{c: odd implies neighbors},
therefore, $\tau_{2j}\in\dom{w_{2j-1}}$ and
$\tau_{2j}\in\dom{w_{2j+1}}$.
Since $\tau_{2j}$ is saturated in $T$,
$T$ must contain the vertices $u_{2j-1}$ and $u_{2j+1}$.
Thus, $w_{2j-1}\leq w$ and $w_{2j+1}\leq w$.
Hence, $\tau_{2j-1}\in\dom{w}$ and $\tau_{2j+1}\in\dom{w}$.
\end{proof}

\subsection{The counting argument}

We now identify a star sub-expression whose
support would have to contain too many atoms.

\begin{claim}
\label{c: many sep}
There is a vertex $u$ of $T_v$
with at least $n$ immediate children
whose sub-trees have nonzero $\#$-value.
\end{claim}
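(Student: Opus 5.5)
The idea is to count the $\#$ occurrences in $v$ and push them down the parsing tree $T_v$ by pigeonhole, using the bounds of \Cref{p: const pars}. Call a sub-tree \emph{$\#$-active} if its value has a positive $\#$-coordinate. Note that $v(\#)=2k+2$, while every label $(u,A)$ in $T_v$ has $\size{u}\leq N_0$, so each label contributes at most $N_0$ copies of $\#$. The depth of $T_v$ is at most $r$, the star-height of $R$.

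Assume toward a contradiction that every vertex of $T_v$ has at most $n-1$ $\#$-active children. Let $f(d)$ bound the $\#$-value of a sub-tree of depth at most $d$ in which every vertex has fewer than $n$ active children. A leaf contributes at most $N_0$, and only active children contribute $\#$ at all. This gives the recursion
\[
f(d)\leq N_0+(n-1)f(d-1), \qquad f(0)\leq N_0,
\]
so $f(r)\leq N_0\sum_{i=0}^{r}(n-1)^i\leq (r+1)N_0n^{r}$.

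Applying this bound at the root, the total $\#$-value is at most $(r+1)N_0n^r$. But
\[
v(\#)=2k+2>k=(r+1)N_0n^{r+1}\geq (r+1)N_0n^r,
\]
which is a contradiction. Hence some vertex $u$ has at least $n$ immediate children whose sub-trees have nonzero $\#$-value.

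The only delicate point is bookkeeping. The bound on active children must be applied only to active children, since inactive sub-trees contribute zero $\#$ regardless of how many there are. The chosen value of $k$ leaves ample slack, so even the cruder estimate $n^{r+1}$ for the number of active vertices, times $N_0$, already suffices.
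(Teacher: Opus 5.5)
Your proof is correct and uses the same counting argument as the paper. Assuming every vertex has fewer than $n$ children with nonzero $\#$-value, you bound the total $\#$-value level by level via the per-label bound $N_0$ from \Cref{p: const pars}, which contradicts $v(\#)=2k+2>k$. Your explicit recursion $f(d)\leq N_0+(n-1)f(d-1)$ over depth at most $r$ is just tidier bookkeeping of the paper's geometric sum $N_0(1+n+\cdots+n^{r+1})$.
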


\begin{proof}
Assume not.
Then every non-leaf vertex 
has fewer than $n$ immediate children
whose sub-trees contribute to $\#$.
Since the star-height of $R$ is $r$,
and since every node contributes at most $N_0$
to any fixed atom
along each level of the linear-form parsing tree,
the total contribution to $\#$
is bounded by
\[
N_0(1+n+n^2+\cdots +n^{r+1})\leq (r+1)N_0n^{r+1}=k\, .
\]
This contradicts $v(\#)=2k+2>k$.
\end{proof}

Let $R^\prime$ be the star sub-expression
corresponding to the vertex
given by~\Cref{c: many sep}.
Let $v_1,v_2,\ldots,v_n$
be the values of the sub-trees
rooted at $n$ distinct immediate children
with $v_i(\#)>0$ for all $i=1,2,\ldots,n$.
By the definition of $M_0$,
$\card{\supp{R^\prime}}\,\leq M_0$.

For each $j=1,2,\ldots,n$,
define 
\begin{align*}
A_j&= \set{a\in\dom{v_j}:v_j(a)=v(a)} \, ,\\
B_j&= \dom{v_j}\setminus \supp{R^\prime}\, .
\end{align*}

\begin{claim}
The sets $A_j$ are pairwise disjoint,
and $\card{A_j}\,\geq v_j(\#)$ for $j=1,2,\ldots,n$.
\end{claim}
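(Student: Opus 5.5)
Disjointness is immediate. If $a\in A_i\cap A_j$ with $i\neq j$, then $v_i(a)=v(a)$ and $v_j(a)=v(a)>0$. Since the sub-trees rooted at distinct children are disjoint parts of $T_v$, their values satisfy $v_i+v_j\leq v$. This gives $2v(a)\leq v(a)$, which is impossible because $v(a)>0$.

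For the lower bound $\card{A_j}\geq v_j(\#)$, I will use the domain bound of \Cref{p: dom bound} for the vector $v+v_j$. Duplicating the sub-tree with value $v_j$ (\Cref{p: tree op}) shows that $v+v_j\in R=\Par{L}$. Since $v_j\leq v$, we have $\dom{v+v_j}=\dom{v}$, so
\[
\card{\dom{v}}\,\leq (v+v_j)(\#)+2=v(\#)+v_j(\#)+2 .
\]
This is the only inequality \Cref{p: dom bound} gives; it does not bound $\card{A_j}$ directly. So I go the other way, by pruning. Removing the sub-tree with value $v_j$ gives a parsing tree for $v-v_j\in\Par{L}$, and \Cref{p: dom bound} yields
\[
\card{\dom{v-v_j}}\,\leq (v-v_j)(\#)+2=v(\#)-v_j(\#)+2 .
\]
The letters of $\dom{v}$ that disappear from $\dom{v-v_j}$ are exactly the letters $a$ with $v_j(a)=v(a)$, i.e.\ the elements of $A_j$. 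The letter $\#$ is not among them, because $v_j(\#)\leq k<2k+2=v(\#)$; this holds since a sub-tree's contribution to $\#$ is bounded as in the proof of \Cref{c: many sep}, or simply because at least two children contribute to $\#$. Therefore
\[
\card{\dom{v-v_j}}\,=\card{\dom{v}}-\card{A_j}=v(\#)+2-\card{A_j},
\]
using the equality case $\card{\dom{v}}=v(\#)+2$, valid because $v\in\Par{L'}$. Comparing the two displays gives $v(\#)+2-\card{A_j}\leq v(\#)-v_j(\#)+2$, i.e.\ $\card{A_j}\geq v_j(\#)$.

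The step I expect to be the main obstacle is justifying that pruning a whole sub-tree (the child together with its descendants) yields a valid parsing tree. This is operation (1) of \Cref{p: tree op}, so it should be available. The remaining detail is checking that $\#\notin A_j$; I will handle it by noting that $v_j(\#)$ is strictly smaller than $v(\#)$ because another child $v_i$ with $i\neq j$ also has $v_i(\#)>0$, and $v_i+v_j\leq v$.
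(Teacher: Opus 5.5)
Your proposal is correct and follows the paper's proof. Disjointness comes from $v_i+v_j\leq v$, and the lower bound comes from pruning the $j$-th sub-tree and applying \Cref{p: dom bound} to $v-v_j$, using $\card{\dom{v}}=v(\#)+2$; the duplication detour and the check $\#\notin A_j$ are unnecessary, since the letters leaving the domain are exactly those of $A_j$ in any case, but they do no harm.
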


\begin{proof}
The sets $A_j$ are pairwise disjoint
because the sub-trees $v_j$
are distinct immediate children;
if an atom were saturated in two of them,
then their total contribution to that atom
would exceed its value in $v$.

Fix $j$.
Prune the sub-tree with value $v_j$ from $T_v$.
By~\Cref{p: tree op},
the resulting vector $v-v_j$
belong to $R=\Par{L}$.
Pruning removes exactly $v_j(\#)$ occurrences of $\#$.
By~\Cref{p: dom bound},
the domain size of the resulting vector must be
at most its new $\#$-value
plus two.
Therefore, at least $v_j(\#)$
atoms must disappear from the domain when $v_j$
is removed.
These are precisely atoms
saturated by $v_j$,
so $\card{A_j}\,\geq v_j(\#)$.
\end{proof}

\begin{claim}
For every $j=1,2,\ldots,n$,
$\card{B_j}\,\leq  v_j(\#)$
\end{claim}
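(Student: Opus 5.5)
The plan is to combine the two parsing-tree operations of \Cref{p: tree op}, duplication and permutation, on the sub-tree $T_j$ with value $v_j$. The duplicated copy will carry the atoms of $B_j$ to fresh atoms, and \Cref{p: dom bound} will bound the number of atoms that can be introduced this way.

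\emph{Step 1: $\#\in\supp{R^\prime}$.} Suppose $\#\notin\supp{R^\prime}$. Take a $\supp{R^\prime}$-permutation $\pi$ that sends $\#$ to a fresh atom $e\notin\dom{v}$ and fixes every other atom of $\dom{v}$. Apply $\pi$ to $T_j$. The root of $T_j$ is a child of the vertex of $R^\prime$, and its value lies in the starred set, which is supported by $\supp{R^\prime}$. So \Cref{p: tree op} shows that
\[
v^\prime=v-v_j+\pi(v_j)\in\Par{L}.
\]
Here $v^\prime(\#)=v(\#)-v_j(\#)<v(\#)$, while $e$ is added to the domain and no atom other than possibly $\#$ leaves it. Hence $\card{\dom{v^\prime}}\geq\card{\dom{v}}=v(\#)+2>v^\prime(\#)+2$, which contradicts \Cref{p: dom bound}. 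Therefore $\#\notin B_j$. I expect this to need the most care: one has to check that the root label of $T_j$, i.e.\ the set the permutation must preserve, is controlled by $\supp{R^\prime}$, and one has to handle $\#$ being an atom.

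\emph{Step 2: duplicate and rename.} Duplicate the sub-tree $T_j$ at its parent. Then apply to the new copy a $\supp{R^\prime}$-permutation $\pi$ that maps $B_j$ bijectively onto atoms outside $\dom{v}\cup\supp{R^\prime}$ and fixes all other atoms of $\dom{v_j}$. This is possible because $B_j$ is disjoint from $\supp{R^\prime}$ and atoms are unlimited. By \Cref{p: tree op}, $v+\pi(v_j)\in\Par{L}$.

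\emph{Step 3: count.} By Step 1, $\#$ is fixed by $\pi$, so
\[
(v+\pi(v_j))(\#)=v(\#)+v_j(\#).
\]
The domain of the new vector is $\dom{v}$ together with the $\card{B_j}$ fresh atoms $\pi(B_j)$, so its size is $v(\#)+2+\card{B_j}$, using the equality case of \Cref{p: dom bound} for $v\in\Par{L^\prime}$. Applying the inequality of \Cref{p: dom bound} to $v+\pi(v_j)$ gives
\[
v(\#)+2+\card{B_j}\leq v(\#)+v_j(\#)+2,
\]
which is exactly $\card{B_j}\leq v_j(\#)$.
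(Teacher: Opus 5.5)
Your proof is correct and is essentially the paper's argument: duplicate the sub-tree with value $v_j$, move the atoms of $B_j$ in the copy to fresh atoms by a permutation fixing $\supp{R^\prime}$, and compare the $\#$-count with the domain size using \Cref{p: dom bound}. Your Step~1 shows $\#\in\supp{R^\prime}$, so that the $\#$-count really increases by exactly $v_j(\#)$; this makes explicit a point the paper's proof uses without comment.
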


\begin{proof}
Since the atoms in $B_j$ lie outside $\supp{R^\prime}$,
we may apply a permutation fixing $\supp{R^\prime}$
and moving all atoms of $B_j$ to fresh atoms
outside $\dom{v}$.
Duplicate the sub-tree of $v_j$
and apply this permutation to the new copy.

By~\Cref{p: tree op},
the new data vector lies in $\Par{L}$.
Notice that the number of occurrences of $\#$
has increased exactly by $v_j(\#)$.
While the size of the domain
has increased by the number of atoms in $B_j$.

By~\Cref{p: dom bound}, the number of atoms in $B_j$
is at most $v_j(\#)$.
\end{proof}

If  $A_j$ and $\supp {R^\prime}$ are disjoint,
then $A_j\subseteq \dom{v_j}\setminus \supp{R^\prime}=B_j$.
In this case, $A_j=B_j$, because by the previous two claims
$\card{A_j}\,\geq v_j(\#)\geq\, \card{B_j}$.
Since the sets $A_j$ are pairwise disjoint
and $\card{\supp{R^\prime}}\,\leq M_0$,
at most $M_0$ of the sets $A_j$ intersect $\supp{R^\prime}$.
Hence, at least $n-M_0=M_0+2$
of them are disjoint from $\supp{R^\prime}$.

For each such $j$,
let $t_j$ be the largest index such that $\tau_{t_j}\in A_j$.
Since $\tau_{t_j}$ is saturated by $v_j$,
by~\Cref{c: sat implies neighbors} 
$\tau_{t_j+1}\in \dom{v_j}$.
Thus,
\[
\tau_{t_j+1}\in \dom{v_j}\setminus A_j
=\dom{v_j}\setminus B_j=\supp{R^\prime}\, .
\]

Distinct values of $j$ yield distinct indices $t_j$,
because the sets $A_j$ are pairwise disjoint.
Therefore, the support of $R^\prime$ contains at least $M_0+1$
distinct atoms,
contradicting $\card{R^\prime}\leq M_0$.

This contradiction shows
that no rational expression
can define $\Par{L}$.
Thus $\Par{L}$ is irrational,
which concludes the proof of~\Cref{t: irrational}.

\section{Context-free grammars are not Parikh-equivalent to automata}
\label{a: par cfg neq par fma}

In this section, we show that the class of commutative
images of quasi context-free languages
is strictly larger than the class
of commutative images of quasi-regular languages.

To establish this separation,
we define a finite-memory context-free grammar
with three registers that generates a language $L$
whose Parikh image cannot be matched by any
finite-memory automaton.

Let $\bG_3$ be a three-register
finite-memory context-free grammar with nonterminals $S,A,B,C,D,E$
and the following production rules:
\begin{align*}
S(x,y,z)&\to \#\#\#A(x,y,z),\\
A(x,y,z)&\to
xyz\,|\,\,
xyzA(x,y,z)\,|\,\,
xyzB(x,y,z)C(x,y,z),\\
B(x,y,z)&\to \#\#A(x,y^\prime,z^\prime) \, |\,
\# D(x,y^\prime,z) ,\\
C(x,y,z)&\to \#\#A(x^\prime,y^\prime,z) \, |\, 
\# E(x,y^\prime,z)\, ,\\
D(x,y,z)&\to xy\, | \, xy D(x,y,z) \, ,\\
E(x,y,z)&\to yz\, | \, yz E(x,y,z) \, .
\end{align*}

\begin{theorem}
\label{t: bg3 neq fma}
For every finite-memory automaton
$\bA$ we have $\Par{L(\bA)}\neq \Par{\langof{\bG_3}}$.  
\end{theorem}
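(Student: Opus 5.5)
The plan is to adapt the commutative-stability argument of Section~\ref{s: par fma neq rat} from linear block sequences to trees, with the fixed finite support of a star sub-expression replaced by the bounded register memory of the automaton. Suppose, for contradiction, that $\bA$ is an $r$-register automaton with constants $D$ and $\Par{L(\bA)}=\Par{\langof{\bG_3}}$.

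\textbf{Step 1: tree-shaped words and a domain bound.} A derivation of $\bG_3$ builds a binary tree of blocks. An internal block is $(xyz)^{n}$; its $B$-child keeps $x$ and its $C$-child keeps $z$, and leaves are $(xy)^{n}$ or $(yz)^{n}$. Consecutive blocks share exactly one atom, and the number of $\#$'s determines the shape of the tree. As in \Cref{p: dom bound}, I will prove a bound $\card{\dom{v}}\le f(v(\#))$ for every $v\in\Par{\langof{\bG_3}}$, with $f$ affine. Equality should hold exactly when all atoms not forced equal by the grammar are pairwise distinct.

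\textbf{Step 2: commutative stability for trees.} Take a complete binary derivation tree of depth $d$ with all atoms distinct, and choose block exponents $C\cdot 4^{i}$ for a fixed ordering of the blocks that makes all atom values well separated. Let $v$ be its Parikh vector. I would prove the tree analogue of \Cref{l: comm stable}: if $w\le v$, $w(\#)=0$, $\size{w}<C$ and $v+w\in\Par{\langof{\bG_3}}$, then every witness of $v+w$ has the same tree of blocks, with exponents only increased. The proof would peel off blocks in decreasing order of value, exactly as in the linear case. The analogue of \Cref{c: comm stable} then says that whenever $w$ increases a non-shared atom, it increases every atom of that block.

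\textbf{Step 3: the automaton side.} Fix an accepting run of $\bA$ on a word $\bsigma$ with $\Par{\bsigma}=v$, and cut it at any position, with intermediate configuration $c$. Call an atom \emph{crossing} if it occurs both before and after the cut. Suppose some crossing atom $a$ is neither in $D$ nor in the registers of $c$. Then, by \Cref{p: permutation word fma} applied to the suffix run with a permutation fixing $D$ and the registers of $c$, the occurrences of $a$ in the suffix can be renamed to a fresh atom. This yields an accepted word with the same number of $\#$'s and one more atom, contradicting equality in Step~1. Hence at every cut at most $r+\card{D}$ atoms are crossing. Separately, along a long run segment in which the register contents stay inside a fixed small set of atoms, a configuration repeats up to a permutation fixing that set. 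Pumping the resulting loop adds a vector $w$ with $\dom{w}$ contained in the currently crossing or register atoms. If $w(\#)=0$ and $\size{w}<C$, Step~2 applies and forces $\dom{w}$ to contain a whole block.

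\textbf{Step 4: the main obstacle.} The hard part is combining Steps~2 and~3 into a contradiction, and this is where I expect the real difficulty. Stability forces pumpable loops to cover whole blocks, so the atoms of a block, and through the shared atoms $x$ and $z$ those of its neighbouring blocks, must be simultaneously ``open'' across many positions of the run. I would try a counting argument in the style of Claim~\ref{c: many sep}. Because each shared atom propagates to both children, the set of atoms that must be open at some cut grows with the depth $d$, a cutwidth-type lower bound for complete binary trees. Taking $d$ large enough then gives a cut with more than $r+\card{D}$ crossing atoms, contradicting Step~3. The delicate points are:
\begin{itemize}
\item controlling loops whose pumped vector contains $\#$, which I would handle by choosing $C$ much larger than the number of configurations up to permutation;
\item ensuring that saturation propagates across the tree and not only along a single path. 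The $\#\#$ versus $\#$ delimiters in $B$ and $C$ are designed to separate the two children, so the tree stability lemma of Step~2 should make the propagation rigid in both directions.
\end{itemize}
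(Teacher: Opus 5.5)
Your Steps 1--3 assemble the same ingredients as the paper. These are the bound $\card{\dom{v}}\leq 1+v(\#)$ with equality for maximal-distinct words, and stability of the derivation tree under small $\#$-free perturbations, with an anchor forcing the rest of its node to increase. The third is the renaming argument: an atom $a$ stays in a register throughout the span $I_a$ between its first and last occurrence in $\bsigma$, so at most $r$ spans meet at any position once the atoms avoid $D$.

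Step~4, which you flag yourself, is where the proof actually happens, and two concrete facts are missing. The first is the exact local output of pumping: for every node with left atom $x$, right atom $z$ and anchor $y$, you need $I_x\cap I_z\neq\emptyset$. The paper marks all $v(\#)$ occurrences of $\#$ and the first and last occurrences of $x$ and $z$. Since $v(y)\geq C$ and $C$ exceeds roughly $(k+5)$ times the pumping threshold $N_A(k+1)^{r}$, some unmarked infix contains many $y$'s. From two of them the paper builds a short loop: same state, registers compatible with respect to $[\bsigma]$, middle part shortened, then iterated up to the order of the connecting permutation. This loop uses only letters of that infix. Hence $w(\#)=0$, $\size{w}<C$ and $w(y)>0$. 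Stability then forces $x$ and $z$ into the loop, hence into the infix, which therefore lies in $I_x\cap I_z$. This marking is what disposes of loops containing $\#$; taking $C$ large relative to the number of configurations only makes the pigeonhole possible.

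The second missing fact is the global statement turning these local intersections into a point covered by many spans. You cannot prove that ``the atoms of a block and of its neighbours are simultaneously open'' across positions. A cutwidth bound for the binary tree of blocks is also not the right statement. Each non-anchor atom is shared along a whole left or right spine, and the only available constraint is that the two label atoms of each node have meeting spans. The paper's Theorem~\ref{t: low int tree} covers exactly this. For a full binary tree of depth $d\geq n^2+2$ with this inheritance rule, any interval assignment with the local intersection property has a point lying in $n+1$ intervals. It is proved by induction on $n$, using that all nodes on the left spine of the root share one atom and that the intervals of any subtree have an interval as their union. Taking $d\geq r^2+2$ then puts $r+1$ distinct atoms in the registers at once, which is the contradiction.

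Your cutwidth intuition can be salvaged on the right graph. The spans form a path decomposition of the graph whose vertices are the atoms and whose edges are the label pairs $(\ell_v,r_v)$ of the nodes. So their maximal depth is at least its pathwidth plus one, and that grows with $d$. Either way, the local intersections must be established first. The counting of Claim~\ref{c: many sep} concerns parsing trees of rational expressions and has no counterpart for automata.
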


We precede the proof
by extending the notion of
commutative stability to context-free languages,
specifically to $\langof{\bG_3}$.

\subsection{Commutative stability in context-free grammars}

Every generation of a word in $\langof{\bG_3}$ naturally
induces a binary tree,
where internal nodes are labeled by three
atoms $(x,y,z)$ and a counter $n$ 
that is associated with the number
of repetitions $A$ made until it branches
to $BC$.
A child of left branching
corresponds to either $A$ or $D$.
If it is $A$, it is again an internal node;
if it is $D$, it is a leaf labeled with two atoms $x,y$
and a counter for the number of repetitions that $D$ makes.
Symmetrically for right children with $E$ and $y,z$.

These trees satisfy the following local inheritance rule:
\begin{itemize}
\item if a node is labeled $(a,b,c)$,
then its left child is labeled with $(a,\cdot)$
and its right child is labeled with $(\cdot,c)$.
\end{itemize}

\begin{example}
Let $a,b,\ldots,h$ be eight distinct atoms,
different from $\#$
and $n_1,n_2,\ldots,n_5$ be some positive integers.
Consider the following productions:
\begin{align*}
\conf{S}{\tuple{a,b,c}}&\Longrightarrow \#^3
\conf{A}{\tuple{a,b,c}}\, , \\
\conf{A}{\tuple{a,b,c}}&\Longrightarrow^\ast 
(abc)^{n_1} \conf{B}{\tuple{a,b,c}}\conf{C}{\tuple{a,b,c}}\, , \\
\conf{B}{\tuple{a,b,c}}&\Longrightarrow
\# \conf{D}{\tuple{a,d,c}} \Longrightarrow^\ast 
\# (ad)^{n_2} \, , \\
\conf{C}{\tuple{a,b,c}}&\Longrightarrow
\#^2 \conf{A}{\tuple{e,f,c}} \, . \\
\conf{A}{\tuple{e,f,c}}&\Longrightarrow^\ast 
(efc)^{n_3} \conf{B}{\tuple{e,f,c}}\conf{C}{\tuple{e,f,c}}\, . \\
\conf{B}{\tuple{e,f,c}}&\Longrightarrow
\# \conf{D}{\tuple{e,g,c}} \Longrightarrow^\ast 
\# (eg)^{n_4} \, , \\
\conf{C}{\tuple{e,f,c}}&\Longrightarrow
\# \conf{E}{\tuple{e,h,c}} \Longrightarrow^\ast 
\# (hc)^{n_5} \, ,
\end{align*}
It generates the following word,
\[
\btau= \#^3 (abc)^{n_1} \# (ad)^{n_2}
\#^2 (efc)^{n_3} \# (eg)^{n_4}
\# (hc)^{n_5} \, .
\]
It has the following Parikh vector:
\begin{align*}
\Par{\btau} =\ &8 \#+(n_1+n_2) a
+n_1 b+(n_1+n_3+n_5) c\\
&+n_2 d+(n_3+n_4) e+n_3 f+n_4 g+n_5 h\, .
\end{align*}
Note that each counter $n_i$
is the value of some atom.
The associated tree of $\btau$ is depicted in \Cref{fig: gen tree}.
\end{example}

\begin{figure}[h]
\centering
\begin{tikzpicture}[scale=0.8,
level distance=16mm,
sibling distance=28mm,
every node/.style={font=\small, align=center, inner sep=2pt},
nt/.style={rectangle, draw, rounded corners, minimum width=18mm},
term/.style={rectangle, draw, inner sep=2pt},
dots/.style={font=\large}
]

\node[nt]{$\colorbox{yellow}{a},b,\colorbox{cyan}{c}$\\ $n_1$}
child { node [nt]{$\colorbox{yellow}{a},d$\\$n_2$}}
child {
node[nt]{$\colorbox{lime}{e},f,\colorbox{cyan}{c}$\\$n_3$}
child { node[nt]{$\colorbox{lime}{e},g$\\$n_4$}}
child { node[nt]{$h,\colorbox{cyan}{c}$\\$n_5$} }
};

\end{tikzpicture}
\caption{Natural tree associated with the generation
of $\btau$.}
\label{fig: gen tree}
\end{figure}

We say that node $u$
is a left-most ancestor of node $u^\prime$
if there is a path $u_0,u_1,\ldots,u_n$
such that $u_0=u$, $u_n=u^\prime$,
and $u_{i+1}$ is the left child of $u_i$
for $i=0,1,\ldots,n-1$.
Symmetrically, we have right-most ancestors.
For an internal node labeled $(a,b,c)$,
we call the middle atom $b$
the \emph{anchor} of this node.

\begin{lemma}
\label{l: bg3 tree}
Let $\btau$ be a word in $\langof{\bG_3}$
with $T$ its tree and $v=\Par{\btau}$
its Parikh vector.
Then,
\begin{enumerate}
\item If $u$ is a  left-most ancestor of $u^\prime$,
then the left atom of $u$ and the left atom of $u^\prime$ are equal.
Symmetrically for right-most ancestors.
\item $\card{\dom{v}}\, \leq 1+v(\#)$.
\item If $\card{\dom{v}}\,= 1+v(\#)$, then
every two
atoms that appear in nodes are equal if and only if
one of them is the left-most ancestor of the other
or the right-most ancestor.
That is, two atoms in labels of nodes are equal,
if and only if they are both colored with the same color.
\item The number of nodes in $T$ is at most $3v(\#)$.
\end{enumerate}
\end{lemma}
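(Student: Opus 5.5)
The plan is to read off the recursive shape of a derivation of $\bG_3$ and do two parallel counts over its associated tree $T$: one of the separator symbols $\#$, and one of the atom occurrences that are forced to be \emph{new}, meaning not inherited.

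First I will describe $T$ precisely. Every internal node $u$ comes from a maximal block of $A$-productions $A(x,y,z)\Rightarrow^\ast (xyz)^{n}B(x,y,z)C(x,y,z)$ with $n\geq 1$. Its two children come from $B$ and $C$:
\begin{itemize}
\item $B$ yields either $\#\#A(x,y',z')$, an internal left child with left atom $x$, or $\#D(x,y',z)$, a leaf labelled $(x,y')$.
\item $C$ yields either $\#\#A(x',y',z)$, an internal right child with right atom $z$, or $\#E(x,y',z)$, a leaf labelled $(y',z)$.
\end{itemize}
The recursions of $D$ and $E$ keep their registers unchanged.

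Item~1 then follows by induction on the length of a left-most path. By the inheritance rule just listed, a left child always carries the left atom of its parent, whether it is an internal $A$-node or a $D$-leaf. The right-most case is symmetric.

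Second, I will note that $T$ is a full binary tree, since every internal node has exactly two children. So if $I$ denotes the number of internal nodes, $T$ has $I+1$ leaves and $2I+1$ nodes. The symbol $\#$ is emitted as follows:
\begin{itemize}
\item three times at the root, from $S\to\#\#\#A$;
\item twice for each non-root internal node, from the $\#\#A$ productions;
\item once for each leaf, from the $\#D$ or $\#E$ productions.
\end{itemize}
Hence $v(\#)=3+2(I-1)+(I+1)=3I+2$. Item~4 follows immediately, since $2I+1\leq 3(3I+2)$.

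For item~2, I will bound the number of distinct atoms other than $\#$ by counting the atoms that are not inherited. The root contributes three atoms. Each non-root internal node inherits one atom from its parent and contributes at most two new ones. Each leaf inherits one atom and contributes at most one new one. Every counter is at least $1$, so every atom in a label really occurs in $\btau$, and atoms occur only in labels. Therefore
\[
\card{\dom{v}}\leq 1+3+2(I-1)+(I+1)=3I+3=1+v(\#).
\]
Here the extra $1$ accounts for $\#$ itself. I will take $\#$ to be a distinguished symbol that is never a register value, as in \Cref{s: par fma neq rat}; if it could be, this only lowers the count.

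Item~3 is the step needing care, and I expect it to be the main obstacle. Equality in the count above holds exactly when every atom counted as ``new'' is distinct from all other new atoms and from $\#$. So two atom positions in node labels hold the same atom if and only if they are linked by a chain of inheritance steps. I then need to check that these chains are exactly the left-most and right-most ancestor relations, and not some mixture of the two. The check I will do is as follows:
\begin{itemize}
\item Inheritance only ever passes a left atom to a left child in the left position, and a right atom to a right child in the right position.
\item Middle atoms (anchors) and the fresh atom at a leaf are never inherited.
\item Hence each inheritance class is a maximal left-most chain or a maximal right-most chain.
\item A left chain and a right chain cannot merge. Their atoms were introduced as distinct new atoms, and within the root label the three registers are pairwise distinct.
\end{itemize}
This is exactly the colouring description given in the statement.
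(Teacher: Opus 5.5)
Your route is the paper's own, made explicit. Item~1 follows from inheritance along left-most and right-most paths. Items~2--4 come from the accounting ``a step emitting $j$ copies of $\#$ introduces at most $j$ new atoms''. Item~3 is the equality case of that count, i.e.\ injectivity of the new-atom slots, followed by your colouring argument, which is sound. There is, however, a concrete omission in your description of the tree.

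\textbf{The omitted case.} The production $A(x,y,z)\to xyz$ lets an $A$-block end without branching. So a child produced by $B\to\#\#A(x,y',z')$ or $C\to\#\#A(x',y',z)$ can be a \emph{leaf} carrying three atoms. Even the root can be such a leaf, e.g.\ for $\#\#\#(abc)^n\in\langof{\bG_3}$. Consequently three of your claims are false in general:
\begin{itemize}
\item that such an $A$-child is always internal;
\item that every leaf comes from $\#D$ or $\#E$;
\item that $v(\#)=3I+2$.
\end{itemize}
Your description forces the root to branch, so it misses $\#\#\#abc$ entirely. Your formula would give $v(\#)=2$ there, while the true value is $3$. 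As written, your count does not cover all words of $\langof{\bG_3}$.

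\textbf{The repair.} It is immediate and stays inside your framework. A non-root three-atom leaf emits two $\#$'s and introduces at most two new atoms, exactly like a non-root internal node. At the root it emits three $\#$'s and introduces three atoms. Let $L_A$ be the number of such leaves. Then:
\begin{itemize}
\item $T$ is still full binary with $2I+1$ nodes.
\item $v(\#)=3I+2+L_A\geq 3I+2$, so item~4 holds.
\item The number of new-atom slots still equals $v(\#)$, so item~2 holds.
\item In the equality case, such a leaf only starts a one-node left or right chain at its non-inherited outer atom, so your item~3 argument is unchanged.
\end{itemize}
The paper's own tree description makes the same omission. Its proof avoids the problem by counting per production rather than via a global formula in $I$. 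Phrasing your count that way would make the argument robust.
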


\begin{proof}
Left atoms are inherited by left children,
and right atoms are inherited by right children,
which proves $1$.
For $2$ and $3$,
notice that each production of k $\#$'s
in a single production can introduce at most $k$ new atoms,
except for the atom $\#$.
Moreover, each creation of a new node
comes from a production that produced
at most three $\#$'s which prove $4$.
\end{proof}

We restrict attention to words
that use the maximum possible
number of distinct atoms,
that is $\card{\dom{v}}\,= 1+v(\#)$,
such words shall be called
\emph{maximal-distinct} words.
In particular, in a maximal-distinct word
every anchor occurs only once
in a label of a node in the tree.

For a word $w\in\{0,1\}^\ast$, let $\mathrm{slex}(w)$
be the position of $w$
in the length-lexicographical order (shortlex)
of all strings over $\{0,1\}^\ast$,
that is,
\[
\mathrm{slex}(\epsilon)=0, \ \mathrm{slex}(0)=1, \ \mathrm{slex}(1)=2,
\ \mathrm{slex}(00)=3,\ \mathrm{slex}(01)=4, \ \ldots
\]

Fix constants $C,M$ and $d$. 
Let $n_w=C\cdot M^{\mathrm{slex}(w)+1}$
be the counters for each vertex by the shortlex ordering.
That is, $n_\varepsilon=C\cdot M$, $n_0=C\cdot M^2$, $n_1=C\cdot M^3$, etc.

Let $\btau\in \langof{\bG_3}$ be a
maximal-distinct word
whose tree is a full binary tree
of depth $d$ with the counters $n_w$.
Let $v=\Par{\btau}$ and
$k=\,\card{\dom{v}}\,\leq 3\cdot 2^{d+1}<2^{d+3}$.
Notice, from maximal-distinctness, $v(\#)=k-1$.
Let $T$ be the tree of $\btau$.

\begin{example}
The tree
for $d=2$ is depicted in~\Cref{fig: gen tree btau},
where atoms marked by the same color are equal.

\begin{figure}[h]
\centering
\begin{tikzpicture}[scale=0.6,
level distance=16mm,
level 1/.style={sibling distance=80mm},
level 2/.style={sibling distance=36mm},
every node/.style={font=\small, align=center, inner sep=2pt},
nt/.style={rectangle, draw, rounded corners, minimum width=18mm},
term/.style={rectangle, draw, inner sep=2pt},
dots/.style={font=\large}
]

\node[nt]{$\colorbox{yellow}{$a_\epsilon$},b_\epsilon
,\colorbox{cyan}{$c_\epsilon$}$\\ $n_\epsilon$}
child { node [nt]{$\colorbox{yellow}{$a_0$},b_0,\colorbox{pink}{$c_{0}$}$\\$n_0$}
child { node [nt]{$\colorbox{yellow}{$a_{00}$},b_{00}$\\ $n_{00}$}}
child { node [nt]{$b_{01}
,\colorbox{pink}{$c_{01}$}$\\ $n_{01}$}}}
child {
node[nt]{$\colorbox{lime}{$a_{1}$},b_1,\colorbox{cyan}{$c_1$}$\\$n_1$}
child { node[nt]{$\colorbox{lime}{$a_{10}$},b_{10}$\\$n_{10}$}}
child { node[nt]{$b_{11},\colorbox{cyan}{$c_{11}$}$\\$n_{11}$} }
};
\end{tikzpicture}
\caption{Natural tree associated with the generation
of $\btau$.}
\label{fig: gen tree btau}
\end{figure}
\end{example}

We claim that for sufficiently large constants,
small perturbations in $v$
preserve the tree of $\btau$.
That is, $\btau$ is commutatively stable
up to $C$,
similar to \Cref{l: comm stable}.

\begin{lemma}
\label{l: comm stable tree}
Let $\btau$ and $v$ be as above with $M>3k+1$.

If $w$ is a data vector such that $w\leq v$,
$w(\#)=0$, $\size{w}<C$,
and there is a word $\btau^\prime\in L(\bG)$
such that $\Par{\btau^\prime}=v+w$,
then $\btau^\prime$ has the same tree as $\btau$
up to reordering of the sub-trees and 
reordering the labels within the vertices.
\end{lemma}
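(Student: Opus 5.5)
Throughout, $T^\prime$ denotes the tree of $\btau^\prime$. The proof mirrors \Cref{l: comm stable}: use the $\#$-count and the domain bound to force $\btau^\prime$ to be maximal-distinct on the same atoms, then recover the tree from the strictly separated atom values.

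\textbf{Setup.} Since $w(\#)=0$ and $w\leq v$, we have $(v+w)(\#)=v(\#)=k-1$ and $\dom{v+w}=\dom{v}$, so $\card{\dom{v+w}}=k=1+(v+w)(\#)$. By \Cref{l: bg3 tree}(2),(3), $\btau^\prime$ is maximal-distinct on exactly the same atoms. Atoms are then identified by the coloring classes of $T^\prime$. Each class is an anchor, or a maximal left-most chain, or a maximal right-most chain. Its value is the sum of the counters of the nodes in that class (leaves counted once, internal nodes via $x,y,z$). By \Cref{l: bg3 tree}(4), $T^\prime$ has at most $3v(\#)<3k$ nodes. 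In $\btau$ each counter $n_u=CM^{\mathrm{slex}(u)+1}$ equals the value of the anchor (middle atom) of $u$. Every atom value in $v$ is a sum of distinct counters, i.e.\ $C$ times a $0/1$ combination of distinct powers of $M$ from a set of at most $3k$ exponents.

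\textbf{Key numerical step.} Adding $w$ with $\size w<C$ perturbs each value by less than $C$, while $M>3k+1$. Hence every value $(v+w)(a)$ decodes uniquely as $C\sum_{u\in S_a}M^{\mathrm{slex}(u)+1}$ plus an error in $[0,C)$. I will argue by strong induction, processing counters of $T^\prime$ from the largest value downward. The largest counter in $T^\prime$ must occur in some atom whose value is at least that counter. The sum of fewer than $3k$ counters each below $CM^j$ is less than $CM^{j+1}/(\ldots)$, so the leading $M$-power of any atom value is determined by its largest counter. This matches counters of $T^\prime$ with counters of $T$: the multiset of counters of $T^\prime$ is $\set{n_u+\epsilon_u}$ with $0\le \epsilon_u<C$, and the node $u^\prime$ of $T^\prime$ carrying $\approx n_u$ has atom set equal to the set of atoms whose decoding contains $u$. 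This is exactly the label set of $u$ in $T$: a triple for internal nodes, a pair for leaves.

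\textbf{Recovering the structure.} Once labels-as-sets agree, the parent/child relation is recovered from shared atoms. A child shares exactly its left or right atom with its parent, and by maximal-distinctness no other node shares it except along left-most or right-most chains. So the inheritance rule plus \Cref{l: bg3 tree}(3) pins down the tree, up to swapping children and permuting labels within a vertex, which is exactly the allowed ambiguity. Internal versus leaf status is read off from label size. The anchor (the atom with only one counter) is distinguished from the chain atoms.

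\textbf{Main obstacle.} The main obstacle is the decoding step: ruling out that one counter of $T^\prime$ is, for example, split or merged across several counters of $T$, i.e.\ that $T^\prime$ has a different number of nodes. First I will count: anchors are atoms occurring in exactly one node. Each node contributes exactly one anchor, or for leaves the non-inherited atom. Since $\dom{}$ is the same, the number of nodes is forced, and the anchor values $n_u+\epsilon_u$, which are all distinct powers up to $C$, give a bijection of nodes. The remaining gap is verifying that sums of at most $3k$ distinct $CM^j$ terms with errors below $C$ cannot collide. This is where $M>3k+1$ is used.
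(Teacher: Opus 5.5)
Your setup matches the paper's, but your step ``Recovering the structure'' is false, and so is the lemma read literally: the label sets (as sets) do not determine the parent/child relation. Take $d=2$ in Figure~\ref{fig: gen tree btau}, so that $a_0=a_{00}=a_\epsilon$, $c_{01}=c_0$, $a_{10}=a_1$ and $c_1=c_{11}=c_\epsilon$. Build $T'$ as follows. The root is $(a_\epsilon,b_0,c_0)$. Its left child (via $B\to\#\#A$) is $(a_\epsilon,b_\epsilon,c_\epsilon)$, and its right child is the leaf $(b_{01},c_0)$. The node $(a_\epsilon,b_\epsilon,c_\epsilon)$ gets the left leaf $(a_\epsilon,b_{00})$ and the right child $(a_1,b_1,c_\epsilon)$ (via $C\to\#\#A$). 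That node keeps its leaves $(a_1,b_{10})$ and $(b_{11},c_\epsilon)$, and every counter is unchanged. This tree obeys the inheritance rule, so it is the tree of some $\btau'\in\langof{\bG_3}$. Every atom lies in exactly the same labels as in $T$, and there are again $3+2+2+4=11$ symbols $\#$. Hence $\Par{\btau'}=v$, i.e.\ $w=\zerovector$. Yet the root of $T'$ has a leaf child, so $T'$ is not $T$ up to swapping children; the same rotation at the root works for every $d\geq 2$. The paper's proof makes the identical leap (``the structure of the tree is determined''). What is true, and all that Corollary~\ref{c: comm stable tree anchors} uses, is weaker. There is a bijection $\phi$ from the nodes of $T'$ to those of $T$ that preserves label sets, with the counter of $u'$ equal to $n_{\phi(u')}+w(b_{\phi(u')})$. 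Here $b_u$ is the unique atom of $u$ occurring in no other node of $T$.

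Even for this weaker statement your decoding step has two problems. First, it is circular: it treats the counters of $T'$ as perturbed powers of $M$, which is what has to be shown. Second, your claim that the common domain forces the number of nodes is wrong. The production $A\to xyz$ creates triple-labelled leaves, which have two private atoms and change the relation between atoms, $\#$'s and nodes.

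The missing observation is that every node $u'$ of $T'$ has an atom $p(u')$ occurring in no other node. So its counter equals $(v+w)(p(u'))=CN_{p(u')}+w(p(u'))$, where $N_a=\sum_{u\in S_a}M^{\mathrm{slex}(u)+1}$ and $S_a$ is the set of nodes of $T$ containing $a$. Let $S'_a$ be the set of nodes of $T'$ containing $a$. The equality $(v+w)(a)=\sum_{u'\in S'_a}(v+w)(p(u'))$ together with $\size{w}<C$ gives the exact identity $N_a=\sum_{u'\in S'_a}N_{p(u')}$. Since $\card{S'_a}<3k<M$, there are no carries, so $S_a$ is the disjoint union of the sets $S_{p(u')}$. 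This, not a bound on colliding sums, is where $M>3k+1$ is used. The rest follows in four steps:
\begin{enumerate}
\item Each $b_u$, having $S_{b_u}=\set{u}$, is private in $T'$.
\item $T'$ has no triple-labelled leaves. Two private atoms of one node would have equal values and hence equal $N$'s, whereas distinct atoms of $\btau$ have distinct sets $S_a$.
\item Therefore $v(\#)=3B+2$, where $B$ is the number of triple-labelled nodes. So $T'$ has as many nodes as $T$, and its private atoms are exactly the $b_u$.
\item Consequently $S'_a=\phi^{-1}(S_a)$ for every atom $a$, which is the weakened statement above.
\end{enumerate}
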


\begin{proof}
Since $w\leq v$, $\dom{w}\subset\dom{v}$ and $\dom{v+w}=\dom{v}$.
Moreover, $(v+w)(\#)=v(\#)+w(\#)=v(\#)$.
In particular, $\btau^\prime$ is maximal-distinct as well.

Let $T^\prime$ be the tree of $\btau^\prime$.
There are at most $3\cdot (v+w)(\#)=3v(\#)<3k$
nodes in $T^\prime$.

First, we contend that there is a leaf
that contains $b_{1^d},c_{1^d}$
together.
That's because they are the heaviest,
so they cannot be with another atom.
In this case, $b_{1^d}$ must be the anchor,
this sets the values of $b_{1^d}$
and all of the weight it contributed to $c_{1^d}$.

The next highest values are of $a_{1^{d-1}0},b_{1^{d-1}0}$
which must be together as a leaf.
Because no other atom can be with them in a triplet.

In the end, we obtain that there are $2^d$ leaves in $T^\prime$
which are the same leaves as in $T$.
Therefore, the structure of the tree is determined,
and the internal nodes are also determined.
\end{proof}

Since $\btau$ is maximal-distinct word,
anchors appear only once in labels of nodes in the trees.
Therefore, if the value of some anchor has increased, the neighbors
of that anchor have their
value increased as well.

\begin{corollary}
\label{c: comm stable tree anchors}
In the prerequisites of~\Cref{l: comm stable tree},
if $b$ is an anchor of the triple $(a,b,c)$
and $w(b)>0$, then $w(a)>0$ and $w(c)>0$.
\end{corollary}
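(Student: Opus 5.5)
The plan is to derive the statement directly from \Cref{l: comm stable tree}, in the same way that \Cref{c: comm stable} was derived from \Cref{l: comm stable}. Let $\btau^\prime\in\langof{\bG_3}$ be a witness with $\Par{\btau^\prime}=v+w$, and let $T^\prime$ be its tree. By \Cref{l: comm stable tree}, $T^\prime$ coincides with $T$ up to reordering sub-trees and reordering labels within vertices. Moreover, $\btau^\prime$ is maximal-distinct, as observed in the proof of that lemma. Hence the multiset of node labels, viewed as sets of atoms, is unchanged. In particular, there is a node of $T^\prime$ whose atom set is $\set{a,b,c}$, and by \Cref{l: bg3 tree}(3) the atom $b$ occurs in no other node label of $T^\prime$, exactly as in $T$.

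Next I would express atom values through the counters. In any word of $\langof{\bG_3}$, the value of an atom equals the sum of the counters of the nodes whose labels contain it: an internal node labeled $(x,y,z)$ with counter $m$ contributes $m$ to each of $x,y,z$, and a leaf labeled $(x,y)$ with counter $m$ contributes $m$ to each of $x,y$. The $\#$'s are not counted here. Since $b$ appears only in the node $u^\prime$ of $T^\prime$ carrying $\set{a,b,c}$, we get
\[
(v+w)(b)=m_{u^\prime}\quad\text{and}\quad v(b)=n_u ,
\]
where $u$ is the corresponding node of $T$. Thus $w(b)>0$ forces $m_{u^\prime}>n_u$.

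The main obstacle is to conclude that $w(a)>0$, since $a$ may also lie in other nodes along its left-most chain. This requires that the counters of those other nodes do not decrease when passing from $T$ to $T^\prime$. I would argue this by induction over nodes, in the same spirit as the proof of \Cref{l: comm stable tree}. Each node has an anchor that appears in it alone, so its counter equals the value of that anchor. For a leaf, I would take as anchor its non-inherited atom (its middle atom). Because $w\geq 0$, every anchor value weakly increases, so every counter satisfies $m_{u^{\prime\prime}}\geq n_{u^{\prime\prime}}$ under the node correspondence. Then
\[
(v+w)(a)=\sum_{u^{\prime\prime}\ni a} m_{u^{\prime\prime}}>\sum_{u^{\prime\prime}\ni a} n_{u^{\prime\prime}}=v(a),
\]
where the strict inequality comes from the node $u$. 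Hence $w(a)>0$, and the same argument gives $w(c)>0$.

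A remaining subtlety is the phrase ``reordering the labels within the vertices''. I would check that the node with atom set $\set{a,b,c}$ still has $b$ as its anchor, or at least as an atom unique to that node, so that its counter is still read off from $b$. Maximal-distinctness guarantees this, because the uniqueness of $b$ depends only on the atom sets of the labels.
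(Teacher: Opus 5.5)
Your proposal is correct and follows essentially the same argument as the paper: by \Cref{l: comm stable tree} and maximal-distinctness, each node has an atom unique to it whose value equals that node's counter, so $w(b)>0$ strictly increases the counter of the node carrying $(a,b,c)$, and $w\geq 0$ keeps every other counter from decreasing; hence $a$ and $c$, whose values are sums of counters of the nodes containing them, strictly increase. Your version is more explicit than the paper's (leaf anchors, relying only on the atom sets of labels), and the step you call an induction is in fact a direct argument that needs no induction.
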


\begin{proof}
Both $\btau$ and $\btau^\prime$ are maximal-distinct words.
Moreover, by~\Cref{l: comm stable tree}, $\btau$ has the same tree
as $\btau^\prime$ up to reordering.
In particular, they have the same anchors,
and each anchor appears exactly once in a node.
Thus, if the value of $b$ increased,
necessarily the counter of the node
that contained $b$
increased as well.
Furthermore, the value of the rest of the anchors
has increased or remained the same,
thus, the other counters have either increased or remained
the same.
Since atoms that are not anchors are equal
to the sum of values of some subset of anchors,
it follows that the value of $a$ and $c$ have increased as well.
\end{proof}

\subsection{Proof strategy}

The proof of~\Cref{t: bg3 neq fma}
is again by reductio ad absurdum.
Assume, to the contrary, that there is
a finite-memory automaton $\bA$
with $\Par{L(\bA)}= \Par{\langof{\bG_3}}$.
For the remainder of this section, 
let $r_{A}$ and $N_A$ be the number of registers and states in $\bA$.

For sufficiently large constants $C,M$ and $d$
that depend only on $r_{A}$ and $N_A$,
let $\btau$ be a maximal-distinct word as in the previous section,
whose atoms are all fresh, i.e.,
not appearing in the automaton description.

Since $\Par{L(\bA)}= \Par{\langof{\bG_3}}$,
there is a word $\bsigma\in L(A)$
with the same Parikh image as $\btau$.
Let
$\brho=\conf{s_0}{\br_0},
\conf{s_1}{\br_1},\ldots,\conf{s_m}{\br_m}$
be an accepting run of $\bA$
on $\bsigma$.

For an atom $a\in [\bsigma]\setminus\set{\#}$,
let $I_a$
be the interval between the first and last
appearances of $a$ in $\bsigma$.
We shall show the following three facts:
\begin{enumerate}
\item For every atom $a\in[\bsigma]\setminus\set{\#}$,
the automaton stores $a$ in one of its register
along $I_a$.
\item For every triple of atoms $(a,b,c)$, 
$I_a$ and $I_c$ have a common intersection point,
i.e., $I_a\cap I_c\neq \emptyset$.
\item For sufficiently large trees
there must be a common intersection point
for $r_A+1$ distinct intervals.
\end{enumerate}

However, if there is
a common intersection point
for $r_A+1$ distinct intervals,
the automaton stores more than $r_A$ distinct symbols,
which is impossible.

\subsection{Intervals preserve their atoms}

\begin{lemma}
For all $a\in [\bsigma]\setminus\set{\#}$ and $i\in I_a$,
$a\in [\br_i]$.
\end{lemma}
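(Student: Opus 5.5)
The plan is a renaming argument: if the automaton ``forgets'' $a$ somewhere inside $I_a$, we rename the later occurrences of $a$ to a fresh atom. This produces an accepted word whose Parikh vector has one atom too many for the number of $\#$'s, contradicting \Cref{l: bg3 tree}.

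Suppose, towards a contradiction, that for some $a\in[\bsigma]\setminus\set{\#}$ and some $i\in I_a$ we have $a\notin[\br_i]$. Since $i$ lies in $I_a$, we may split the run at configuration $i$ and write $\bsigma=\bsigma_1\bsigma_2$ with
\[
\conf{s_0}{\br_0}\transword{\sbsigma_1}\conf{s_i}{\br_i}\transword{\sbsigma_2}\conf{s_m}{\br_m}\, ,
\]
where $a$ occurs in both $\bsigma_1$ and $\bsigma_2$. If $i$ is an endpoint of $I_a$, then $a$ was just read into, or is about to be compared with, a register. In that case $a\in[\br_i]$ follows directly from the restricted shape of transitions, or is handled by the same argument at the adjacent configuration; I expect only this indexing convention to need care.

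Now pick a fresh atom $a'$ that does not occur in $\bsigma$, in $D$, in $\br_0,\ldots,\br_m$, and differs from $\#$. Let $\pi$ be the transposition of $a$ and $a'$. Since all atoms of $\btau$ (and hence of $\bsigma$) were chosen fresh with respect to the automaton description, $a\notin D$, so $\pi$ is a $D$-permutation. Moreover $\pi$ fixes $\br_i$, because $a,a'\notin[\br_i]$, and it fixes $\#$. By \Cref{p: permutation word fma},
\[
\conf{s_i}{\br_i}=\pi(\conf{s_i}{\br_i})\transword{\pi(\sbsigma_2)}\pi(\conf{s_m}{\br_m})\, ,
\]
and $\pi(\conf{s_m}{\br_m})$ is still accepting because its state is $s_m$. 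Hence $\bsigma'=\bsigma_1\pi(\bsigma_2)\in L(\bA)$, and so $\Par{\bsigma'}\in\Par{\langof{\bG_3}}$.

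Compare $v'=\Par{\bsigma'}$ with $v=\Par{\bsigma}=\Par{\btau}$. We have $v'(\#)=v(\#)$. On the other hand, $\dom{v'}=\dom{v}\cup\set{a'}$, because $a$ still occurs in $\bsigma_1$ and $a'$ now occurs in $\pi(\bsigma_2)$. So $\card{\dom{v'}}=\card{\dom{v}}+1=2+v(\#)=2+v'(\#)$, using that $\btau$ is maximal-distinct. This contradicts item 2 of \Cref{l: bg3 tree}, which requires $\card{\dom{v'}}\leq 1+v'(\#)$ for every $v'\in\Par{\langof{\bG_3}}$. The only delicate points are the endpoint/indexing convention for $I_a$ and confirming that freshness of the atoms of $\btau$ gives $a\notin D$. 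The core swap argument is routine.
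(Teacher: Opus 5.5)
Your proof is the paper's proof: assuming $a\notin[\br_i]$, you swap $a$ with a fresh non-constant atom on the part of $\bsigma$ read after configuration $i$, use invariance under $D$-permutations to get an accepted word with one more atom but the same number of $\#$'s, and contradict item 2 of the tree lemma via maximal-distinctness.

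The endpoint aside should be dropped. Restricted form was only defined for one-register block automata, so it says nothing about the arbitrary $r_A$-register automaton $\bA$. No fallback is needed either: the argument, exactly as in the paper, only uses that $a$ occurs at or before position $i$ and also after it. That holds for every configuration from the one right after the first occurrence of $a$ up to the one right before its last occurrence. At the configuration right after the last occurrence, the swap argument gives nothing, because $\bA$ may discard $a$ in the very transition that reads it.
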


If the automaton forgets $a$ during this interval
exchange all subsequent appearances of $a$
to some fresh atom, thus
obtaining a word with strictly larger domain,
but with the same number of $\#$, which is impossible.

For the proof, recall that $\bsigma_I$
is the word that is composed of all letters
of $\bsigma$ in positions from $I$.
That is, for $\bsigma=\sigma_1\sigma_2\cdots\sigma_n$
and $I=[i,j]$,
$\bsigma_I=\sigma_i\sigma_{i+1}\cdots \sigma_j$.

\begin{proof}
Toward contradiction, assume there is $i\in I_a$
such that $a\notin[\br_i]$.
Let $a^\prime$ be a fresh atom,
that does not appear in $[\bsigma]\cup [\br_i]\cup D$.
Let $\alpha$ be the permutation that swaps $a$ with $a^\prime$,
it is a $D$-permutation, since neither of them are constants.
Define
$\bsigma^\prime = \bsigma_{[0,i]}
\alpha(\bsigma_{[i+1,m]})$.
Therefore,
\[
\conf{s_0}{\br_0} \transword{\bsigma_{[0,i]}}
\conf{s_i}{\br_i} \transword{\bsigma_{[i+1,m]}}
\conf{s_m}{\br_m}\, .
\]
From invariance of finite-memory automata
under permutations~(\Cref{p: permutation word fma}),
\[
\alpha(\conf{s_i}{\br_i})
\transword{\alpha(\bsigma_{[i+1,m]})}
\alpha(\conf{s_m}{\br_m})\, .
\]
Since $a,a^\prime\notin [\br_i]$,
$\alpha(\conf{s_i}{\br_i})=\conf{s_i}{\br_i}$.
Therefore there is an accepting run of $\bA$
on $\bsigma^\prime$.

Since $i\in I_a$,
there is an appearance of $a$
before and after position $i$,
hence, both $a$ and $a^\prime$ appear in $\bsigma^\prime$.
Moreover, all other atoms of $[\bsigma]$
remain unchanged, thus 
\[
[\dom{\bsigma^\prime}] =  [\bsigma]\cup\set{a^\prime}\, .
\]

Since $\bsigma$ is a maximal-distinct word,
\[
\card{\dom{\bsigma^\prime}} \ = 
\  \card{\dom{\bsigma}} + 1 = 2+(\Par{\bsigma})(\#)
= 2+(\Par{\bsigma^\prime})(\#)\, .
\]
However, we assumed $\Par{\langof{\bG_3}}=\Par{\langof{\bA}}$,
but $\Par{\bsigma^\prime}\in \Par{\langof{\bA}}$
contradicts item $2$ from~\Cref{l: bg3 tree}
for the property of words in $\bG_3$.
\end{proof}

\subsection{Intervals of blocks intersect}

The goal of this section is to show
that for sufficiently large constants,
intervals of atoms
that appear together in a label in $T$
intersect.

\begin{lemma}
\label{l: intval intsect}
There is a function $f(N_A,r_A,k)$
such that for $\btau$ with $C>f(N_A,r_A,k)$
the following property holds.

For every triple of atoms $a,b,c$
that appear together as a label in $T$,
$I_a\cap I_c \neq\emptyset$.
\end{lemma}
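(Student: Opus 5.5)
The plan is to argue by contradiction. Suppose $a,b,c$ label a common internal node of $T$, with $b$ its anchor, and suppose $I_a\cap I_c=\emptyset$, say $I_a$ lies entirely to the left of $I_c$. I would pump a short loop of the run $\brho$ that reads $b$ but misses at least one of $a,c$, and contains no $\#$. That produces a vector $v+w\in\Par{\langof{\bA}}=\Par{\langof{\bG_3}}$ with $w\leq v$, $w(\#)=0$, $\size{w}<C$ and $w(b)>0$, while $w(a)=0$ or $w(c)=0$. This contradicts \Cref{c: comm stable tree anchors}.

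\emph{Step 1 (locating a good stretch).} The positions of $\bsigma$ split into at most five consecutive regions: before $I_a$, inside $I_a$, between $I_a$ and $I_c$, inside $I_c$, and after $I_c$. Any infix contained in a single region misses $c$, or misses $a$, or misses both. I would refine this partition by the $v(\#)=k-1$ positions of $\#$, giving at most $5+k$ regions, each free of $\#$ and each avoiding $a$ or $c$. Since $b$ is an anchor, $v(b)$ equals its node's counter, which is at least $C$. Hence some region contains at least $C/(k+5)$ occurrences of $b$.

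\emph{Step 2 (pumping modulo permutations).} Consider the configurations $\conf{s_i}{\br_i}$ at positions of this region, up to $(D\cup\dom v)$-permutations. The atoms of $\dom v$ stored in registers are then fixed exactly, and the remaining register atoms are fresh. There are at most $g=N_A\cdot(k+\card{D}+r_A)^{r_A}$ such classes. So among $g+1$ occurrences of $b$ there are positions $i<j$ with $\conf{s_j}{\br_j}=\pi(\conf{s_i}{\br_i})$ for some $\pi$ fixing $D\cup\dom v$. Let $u=\bsigma_{[i+1,j]}$. By \Cref{p: permutation word fma}, $\conf{s_j}{\br_j}\transword{\pi(u)}\pi(\conf{s_j}{\br_j})$. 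After that the run can continue on $\pi(\bsigma_{[j+1,m]})=\bsigma_{[j+1,m]}$, reaching the accepting state $s_m$. Every letter of $\bsigma$ lies in $\dom v$, so $\pi(u)=u$ as a word. The new word $\bsigma_{[0,j]}\,u\,\bsigma_{[j+1,m]}$ is therefore accepted, and its Parikh vector is $v+w$ with $w=\Par{u}\leq v$. Moreover $w(b)>0$, $w(\#)=0$, and $w(a)=0$ or $w(c)=0$.

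\emph{Main obstacle: the bound $\size{w}<C$.} It is needed to invoke \Cref{l: comm stable tree}, but nothing so far bounds the distance between two $b$-occurrences with equivalent configurations; the counters of other nodes are far larger than $C$. My first attempt would be a two-level pigeonhole. Inside the chosen stretch, first excise (conceptually) maximal sub-loops that avoid $b$, so that between consecutive $b$-occurrences the number of distinct configuration classes visited is at most $g$. Then look for a loop of length at most $(g+1)^2$ that still contains $b$. Taking $f(N_A,r_A,k)$ larger than $(k+5)(g+1)^2$ would then suffice. If that fails, the fallback is to strengthen \Cref{l: comm stable tree} so that it tolerates $w$ as large as a fixed fraction of $v(b)$. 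The exponential gaps $M>3k+1$ between counters suggest the ordering argument there survives perturbations of size up to about $C M/(3k)$.
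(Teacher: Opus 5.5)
Your outline is the paper's proof. The paper also cuts $\bsigma$ at every $\#$ and at the first and last occurrences of $a$ and $c$ into $k+5$ infixes. It picks an infix with many occurrences of the anchor $b$, pumps a loop that starts with $b$ and uses only letters of that infix, and concludes with \Cref{c: comm stable tree anchors}. The paper argues directly (the loop must contain $a$ and $c$, so the infix lies inside $I_a\cap I_c$); you argue the contrapositive. Your Step~2 is slightly cleaner than \Cref{l: pump one}. Since $\pi$ fixes every letter of $\bsigma$, one insertion followed by the $\pi$-image of the suffix suffices, with no need to iterate $\bomega_1$ up to the order of $\pi$. Also, requiring $\pi$ to fix all of $D\cup\dom{v}$ is what makes resuming the suffix legitimate; the paper's \Cref{l: short const} uses compatibility relative to $[\bphi]$ only.

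The gap is the step you flag yourself: nothing in Steps~1--2 bounds $\size{w}$, and that bound is the real content of the lemma. Your fallback cannot close it. The segment $u$ between two equivalent $b$-positions may contain huge numbers of occurrences of atoms from other nodes, whose values dwarf $v(b)$. So $\size{\Par{u}}$ is not bounded by any fraction of $v(b)$, and no stability statement keyed to $v(b)$ applies.

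What is needed is the shortening principle, which the paper isolates as \Cref{l: short const}. If $\conf{s_q}{\br_q}=\pi(\conf{s_p}{\br_p})$ with $\pi$ fixing $D\cup\dom{v}$, you may delete $\bsigma_{[p+1,q]}$ and continue with $\pi^{-1}$ applied to the rest of the run; this leaves all letters unchanged. With it, the paper's route is simpler than your two-level pigeonhole:
\begin{itemize}
\item Fix the pair $i<j$ of equivalent $b$-positions first and keep only the leading letter $\sigma_{i+1}=b$.
\item From the rest of $u$, repeatedly excise any segment between two positions in the same class, whether or not it contains $b$.
\item This leaves $b\,u''$ with $\size{u''}\le g$ and $\Par{b\,u''}\le\Par{u}$, still leading from $\conf{s_i}{\br_i}$ to a configuration equivalent to it.
\item Inserting it at position $i$ gives $v+w'$ with $\size{w'}\le g+1$, so $C>(k+5)(g+1)$ suffices.
\end{itemize}

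Your first attempt can also be made to work, but you must do three things.
\begin{enumerate}
\item Justify each excision by the reverse of your Step~2 argument.
\item Replace ``at most $g$ classes are visited'', which is always true, by ``no class repeats, so each gap between consecutive $b$'s has length at most $g$''.
\item Insert the short loop into the original $\bsigma$, not into the excised word. This is legitimate because every configuration of the excised run is the image of the corresponding original one under a permutation fixing $D\cup\dom{v}$. Without this step the resulting Parikh vector is not of the form $v+w'$, and \Cref{c: comm stable tree anchors} does not apply.
\end{enumerate}
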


The proof of~\Cref{l: intval intsect}
is mostly technical and based on the following lemma.
\begin{lemma}
\label{l: pump one}
There are functions $f_2(N_A,r_A,k),f_3(N_A,r_A,k)$
such that the following property holds.

Let $\tau\in \Sigma$
with $\Par{\bsigma}(\tau)>f_2(N_A,r_A,k)$.
Then there is a decomposition
$\bsigma=\bsigma_1\bsigma_2\bsigma_3$
and a word $\bomega\in\Sigma^\ast$
such that
\begin{enumerate}
\item $\Par{\bomega}(\tau)>0$,
\item $|\bomega|\leq f_3(N_A,r_A,k)$, and
\item $\bsigma_1\bsigma_2\bomega\bsigma_3\in\langof{\bA}$.
\end{enumerate}
\end{lemma}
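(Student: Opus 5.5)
We plan a pumping argument on the accepting run $\brho$ of $\bA$ on $\bsigma$. The key observation is that $\bsigma$ uses at most $k$ distinct atoms, where $k$ is the number of atoms in $\dom{v}$ (plus $\#$). Consequently, the configurations along $\brho$, taken up to the action of $[\bsigma]\cup D$-permutations, range over a set whose size is bounded in terms of $N_A$, $r_A$ and $k$. Concretely, a configuration $\conf{s}{\br}$ is determined up to a permutation fixing $[\bsigma]\cup D$ pointwise by two pieces of data: its state, and, for each register, either which atom of $[\bsigma]\cup D$ it holds or the fact that it holds an atom outside this set (together with the equality pattern among such registers). We will not need to quotient by permutations at all: call two configurations \emph{equivalent} if they have the same state and the same register contents restricted to $[\bsigma]$, with registers outside $[\bsigma]\cup D$ treated as "fresh". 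The number of equivalence classes is at most $K=N_A\cdot(k+\card{D}+r_A+1)^{r_A}$, and $\card{D}$ is part of the automaton, so $K$ depends only on the automaton and $k$.

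Fix $\tau$ with $\Par{\bsigma}(\tau)>f_2$, where we take $f_2=K+1$ (or a mild multiple of it), and mark the positions of $\bsigma$ carrying the letter $\tau$. By pigeonhole among the first $K+1$ occurrences of $\tau$, we find two positions $i<j$ that are both occurrences of $\tau$ and whose configurations $c_i$, $c_j$ (the configurations just before reading those letters) are equivalent. Moreover, we choose $i,j$ to be consecutive in the sense that no equivalent repetition occurs between the intermediate marked occurrences; this gives $j-i$ bounded. This bound is not automatic, since the gaps between occurrences of $\tau$ could be long. To fix this, we choose the repeated pair by pigeonholing simultaneously on positions, which yields a bounded-length loop containing an occurrence of $\tau$: look at the run segment starting at some occurrence of $\tau$ and apply pigeonhole to all configurations in a window of length $K+1$ after it, but that window might not contain a loop containing the $\tau$-position. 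The approach we will pursue instead is as follows. Since equivalent configurations differ only by a permutation $\alpha$ fixing $[\bsigma]\cup D$ pointwise (mapping fresh register values to fresh ones), Proposition~\ref{p: permutation word fma} lets us insert a copy of the segment. We take $\bomega=\alpha(\bsigma_{[i,j)})$, or more simply we note that the segment $\bsigma_{[i,j)}$ read from $c_j$ leads to a configuration equivalent to $c_j$ again, and we then continue with an $\alpha$-translate of the suffix. Since $\alpha$ fixes all atoms of $[\bsigma]$, the suffix is unchanged, so $\bsigma_1\bsigma_2\bomega\bsigma_3\in\langof{\bA}$ with $\bsigma_1\bsigma_2=\bsigma_{[0,j)}$ and $\bsigma_3=\bsigma_{[j,m]}$. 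In addition, $\bomega$ contains $\tau$ at its first letter.

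The main obstacle is bounding $\size{\bomega}$ by $f_3$. The plan is to shorten the loop. Inside the segment from $c_i$ to $c_j$, whenever two configurations at positions $i<p<q<j$ are equivalent, we excise the subsegment between them: since the configurations are equivalent up to a permutation fixing $[\bsigma]$, the remainder of the loop, after applying that permutation, still connects $c_p$ onward. Iterating this excision, while always keeping the first letter (the $\tau$ at position $i$) intact, yields a loop from $c_i$ to a configuration equivalent to $c_j$ of length at most $K+1$, still beginning with $\tau$. The subtle point is that after each excision the tail is renamed by a permutation, so the endpoint is only equivalent to $c_j$, not equal to it. This is harmless, because equivalence is all we need to reattach $\bsigma_3$ through one further permutation fixing $[\bsigma]\cup D$. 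This gives $f_3=K+1$.
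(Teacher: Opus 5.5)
Your proof is correct. It shares the paper's skeleton: pigeonhole over the configurations just before the first occurrences of $\tau$, then shortening the resulting loop by excising repeated configurations as in Lemma~\ref{l: short const}. Where it differs is in how the loop is closed.

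\begin{itemize}
\item \textbf{The paper} returns to the exact configuration. It asserts that the shortened run ends exactly at $c'$. It then pumps $\bomega=\bomega_1^{d}$ with $d=\mathrm{ord}(\pi)\leq(2r)!$, so that $\pi^{d+1}(c)=\pi(c)$ and the original suffix run can be resumed.
\item \textbf{You} track configurations only up to permutations fixing $[\bsigma]\cup D$ and insert a single copy of the loop. You absorb the leftover discrepancy by renaming the suffix run. This leaves the suffix word unchanged and preserves acceptance, since acceptance depends only on the state. It is the same device the paper uses when proving that intervals preserve their atoms.
\end{itemize}

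Your route gives $f_3=K+1$, with no factorial. It is also more robust. The permutation in the paper's Lemma~\ref{l: short const} fixes only $[\bphi]$, so it need not fix atoms outside $[\bphi]$ held in the registers of $c'$. Nor need it fix the constants of $D$, which Proposition~\ref{p: permutation word fma} requires. So the exact-endpoint claim there is not justified as written, and your argument never needs it. The price is that $f_2,f_3$ also depend on $\card{D}$. This is harmless, because $\bA$ is fixed; alternatively, the dependence can be removed by moving the constants into extra registers.

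Three minor points:
\begin{itemize}
\item Drop the exploratory claim that choosing consecutive $i,j$ bounds $j-i$. It is false, and you do not use it.
\item When you insert at position $j$, first transport the shortened loop from $c_i$ to $c_j=\alpha(c_i)$ via $\alpha$. Inserting at position $i$ avoids this step.
\item Your $\bomega$ is a scattered subword of $\bsigma_{[i,j)}$. Hence the argument works verbatim inside any infix with more than $K$ occurrences of $\tau$. That local form is exactly what the proof of Lemma~\ref{l: intval intsect} uses.
\end{itemize}
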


First, we use~\Cref{l: pump one}
to prove~\Cref{l: intval intsect}

\begin{proof}[Proof of~\Cref{l: intval intsect}]
Choose $f(N_A,r_A,k)=(k+5)f_2(N_A,r_A,k)+f_3(N_A,r_A,k)$.

Let $(a,b,c)$ be a triple of atoms that appear together
in some node in $T$.
In $\bsigma$ mark all the letters that are $\#$,
also mark the first and last appearance
of $a$ and $c$.
This partition $\bsigma$ into $k+5$
infixes.

Since $\Par{\bsigma}(b)= v(b)\geq C>f(N_A,r_A,k)$,
there is an infix with at least $f_2(N_A,r_A,k)$
appearances of $b$.
By~\Cref{l: pump one} we can extend
this infix with $\bomega$
to obtain a new word in the language $\langof{\bA}$.
Notice, $\Par{\bomega}(b)>0$
and $|\bomega|\leq f_3(N_A,r_A,k)<C$.
From commutative stability~(\Cref{c: comm stable tree anchors}),
the value of $b$ has increased,
therefore,
the values of $a$ and $c$ have increased as well.

Hence, the selected infix
contains $a$ and $c$,
which are neither their first nor their last appearance.
Therefore, their intervals intersect.
\end{proof}

In order to prove~\Cref{l: pump one}
we use several pumping techniques
from~\cite{Danieli26}.

\begin{definition}
\label{d: permutation}
The order of a permutation $\alpha\in\Gr$
is the smallest positive integer $k$ such that 
$\alpha^k$ is the identity permutation:
$\alpha^{k} = \id$,
if no such $k$ exists, the order of $\alpha$
is infinite.
\end{definition}

\begin{definition}
\label{d: comp}
Let $\br,\br^\prime\in \Reg$ be two register
valuations and $A\subset\atoms$ a finite set of atoms,
we say that $\br,\br^\prime$
are compatible with respect to $A$,
if for every $a\in A$
the registers agree on the atom $a$,
that is,
either
\[
a\notin [\br]\cup [\br^\prime]\, ,
\]
or
\[
a=\br_t=\br^\prime_t\, .
\]
\end{definition}

If $\br,\br^\prime$ are compatible with respect to $A$,
then there is an $A$-permutation $\pi$
which acts only on atoms from $[\br]\cup [\br^\prime]$
such that $\pi(\br)=\br^\prime$
and $\pi^{-1}(\br^\prime)=\br$.
In particular, the order of $\pi$ is at most $(2r)!$.

\begin{lemma}
\label{l: comp}
Let $\br_1,\br_2,\ldots,\br_m$
be a sequence of register valuations
and $A$ is a finite set of atoms.
If $m> (\card{A}+1)^r$
then there are $i<i^\prime$ such that
$\br_i,\br_{i^\prime}$ are compatible with 
respect to $A$.
\end{lemma}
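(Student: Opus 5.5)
Since $r$ is fixed, we use the pigeonhole principle on a finite classification of the valuations. To each valuation $\br_i=(r_{i,1},\ldots,r_{i,r})$ we assign a signature $\theta_i\in(A\cup\set{\bot})^r$. Its $t$-th coordinate is $r_{i,t}$ if $r_{i,t}\in A$, and $\bot$ otherwise. There are exactly $(\card{A}+1)^r$ possible signatures. If $m>(\card{A}+1)^r$, two indices $i<i^\prime$ share a signature, so $\theta_i=\theta_{i^\prime}$.

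It remains to show that equal signatures imply compatibility with respect to $A$. Fix $a\in A$. Suppose first that $a\in[\br_i]$, say $a=r_{i,t}$. Then the $t$-th coordinate of $\theta_i$ is $a$, so the $t$-th coordinate of $\theta_{i^\prime}$ is also $a$, which gives $r_{i^\prime,t}=a$. Hence $a=\br_t=\br^\prime_t$, as required. The case $a\in[\br_{i^\prime}]$ is symmetric. If $a$ lies in neither $[\br_i]$ nor $[\br_{i^\prime}]$, then $a\notin[\br_i]\cup[\br_{i^\prime}]$, which is the first alternative of \Cref{d: comp}. In every case the two valuations agree on $a$, so they are compatible.

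Valuations in $\Reg$ have pairwise distinct entries, so an atom $a$ occupies at most one register. The index $t$ is therefore well defined, and the two cases above are exhaustive. There is no real obstacle here: the argument is a direct counting of signatures. The one point to handle carefully is that all atoms outside $A$ are collapsed to the single symbol $\bot$. This is harmless, because compatibility constrains only atoms of $A$.
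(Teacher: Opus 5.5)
Your proposal is correct and follows essentially the same route as the paper: you classify each valuation by the $(\card{A}+1)^r$ patterns recording, register by register, whether it holds a given atom of $A$ or some atom outside $A$, and then apply the pigeonhole principle. You also spell out why equal patterns imply compatibility, which the paper's proof states in one line without verifying.
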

\begin{proof}
For $i=1,2,\ldots,m$
and $j=1,2,\ldots,r$
the value $(r_i)_j$
is either an atom from $A$
or a distinct atom.
There are $\card{A}\,+1$ options for each register,
and $(\card{A}+1)^r$ in total.
If $i<i^\prime$ agree on these, then they
are compatible with respect to $A$.
\end{proof}

\begin{lemma}
\label{l: short const}
There is a function $f_4(N_A,r_A,k)$
such that the following property holds.

Let $\bphi\in\Sigma^\ast$ be a word with $\card{[\bphi]}\ \leq k$ and
$c,c^\prime$ be a pair of configurations
with $c\transword{\bphi}c^\prime$.
Then there is $\bomega\in\Sigma^\ast$
such that
\begin{enumerate}
\item $\Par{\bomega}\leq \Par{\bphi}$.
\item $c\transword{\bomega}c^\prime$.
\item $|\bomega|\leq f_4(N_A,r_A,k)$.
\end{enumerate}
\end{lemma}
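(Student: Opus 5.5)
We argue by iterated cycle removal, in the spirit of the pumping arguments from~\cite{Danieli26}, with the number of atoms controlled by \Cref{l: comp}. Let $A=[\bphi]\cup D\cup[\br]\cup[\br^\prime]$, where $c=\conf{s}{\br}$ and $c^\prime=\conf{s^\prime}{\br^\prime}$; then $\card{A}\leq k+\card{D}+2r_A$. Note that $\card{D}$ is a constant of $\bA$, so we absorb it into the dependence on the automaton. Fix a run $c=c_0\trans{\phi_1}c_1\trans{\phi_2}\cdots\trans{\phi_n}c_n=c^\prime$ on $\bphi$. Call two configurations $c_i=\conf{s_i}{\br_i}$ and $c_j=\conf{s_j}{\br_j}$ \emph{equivalent} if $s_i=s_j$ and $\br_i,\br_j$ are compatible with respect to $A$.

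\Cref{l: comp} together with pigeonhole on states shows that among any $N_A(\card{A}+1)^{r_A}+1$ configurations of the run, two are equivalent. Suppose $i<j$ are equivalent. Let $\pi$ be the $A$-permutation with $\pi(\br_j)=\br_i$ that acts only on atoms of $[\br_i]\cup[\br_j]$. It fixes $D$, so by \Cref{p: permutation word fma} we get $\pi(c_j)\transword{\pi(\phi_{j+1}\cdots\phi_n)}\pi(c_n)$. Since $\pi(c_j)=c_i$, and $\pi$ fixes $[\br^\prime]\subseteq A$ so that $\pi(c_n)=c^\prime$, we obtain the run
\[
c\transword{\phi_1\cdots\phi_i}c_i\transword{\pi(\phi_{j+1}\cdots\phi_n)}c^\prime .
\]
This run is strictly shorter.

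The key issue is condition~1, namely $\Par{\bomega}\leq\Par{\bphi}$. The suffix after deletion is renamed by $\pi$, so letters may change. However, $\pi$ fixes every atom of $A\supseteq[\bphi]$. The only atoms moved are those of $[\br_i]\cup[\br_j]$ lying outside $A$, and by the choice of $A$ such atoms never occur in $\bphi$. Hence $\pi(\phi_{j+1}\cdots\phi_n)=\phi_{j+1}\cdots\phi_n$ letterwise. The new word is therefore exactly $\bphi$ with the infix $\phi_{i+1}\cdots\phi_j$ deleted, and its Parikh vector is at most $\Par{\bphi}$. Renaming does affect the later configurations, but they are only images under $\pi$ of the old ones, and they still satisfy the same property with respect to the unchanged set $A$: the contents of the word stay inside $[\bphi]$ and the endpoints are unchanged.

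Iterate this deletion until no two configurations of the run are equivalent. The procedure terminates because the length strictly decreases. The final run has at most $N_A(\card{A}+1)^{r_A}$ configurations, hence length at most $f_4(N_A,r_A,k):=N_A(k+\card{D}+2r_A+1)^{r_A}$. This gives conditions~1--3.

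The main obstacle is the verification just described that the renaming $\pi$ does not alter the Parikh vector. That is exactly why $A$ must contain $[\bphi]$ and the endpoint registers, and why compatibility is taken with respect to $A$ and not merely as equality of equality types. If the paper's model includes $\varepsilon$-transitions or block transitions, the same argument applies to the sequence of configurations between blocks, with each block contributing at most $k$ letters to the length bound.
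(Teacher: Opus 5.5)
Your proof is correct and follows the same route as the paper: pigeonhole over states and compatibility classes of register valuations (\Cref{l: comp}), then cut out the loop between two equivalent configurations and conjugate the suffix by an $A$-permutation that fixes every letter of $\bphi$. Your enlarged set $A=[\bphi]\cup D\cup[\br]\cup[\br^\prime]$, instead of the paper's $A=[\bphi]$, is what actually secures two steps the paper's choice does not by itself guarantee: that $\pi$ is a $D$-permutation, so \Cref{p: permutation word fma} applies, and that $\pi(c_n)=c^\prime$. The only costs are that the bound also depends on $\card{D}$, which is harmless since $D$ is a fixed parameter of $\bA$, and that $[\br]$ could be dropped.
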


\begin{proof}
Choose $f_4(N_A,r_A,k)=\,\, N_A(k+1)^r$.
Assume $\size{\bphi}> \, N_A(k+1)^r$.

Let 
$c=\conf{s_0}{\br_0},\conf{s_1}{\br_1},\ldots,
\conf{s_{|\sbphi|}}{\br_{|\sbphi|}}=c^\prime$
be the sequence of configurations.
Define $A=[\bphi]$,
note $\card{A}\ \leq k$.

From pigeon-hole principle,
there are $0\leq i<j<\size{\bphi}$
such that $s_i=s_j$ and $\br_i,\br_j$
are compatible with respect to $A$.

Let $\pi$ be an $A$-permutation such that $\pi(\br_j)=\br_i$.
By invariance of finite-memory
automata under permutations, we shorten $\bphi$ as follows,
\begin{align*}
&\conf{s_0}{\br_0}
\transword{\varphi_1\varphi_2\cdots \varphi_i}
\conf{s_i}{\br_i}=
\conf{s_j}{\pi(\br_j)}\\
&\conf{s_j}{\pi(\br_j)}
\transword{\pi(\varphi_{j+1}\varphi_{j+2}\cdots\varphi_m)}
\conf{s_{|\sbphi|}}{\pi(\br_{|\sbphi|})}=
\conf{s_{|\sbphi|}}{\br_{|\sbphi|}}
\, .
\end{align*}

Since $\pi$ preserves $[\bphi]$,
we obtain for
$\bomega=\varphi_1\varphi_2\cdots
\varphi_i \varphi_{j+1}\varphi_{j+2}\cdots\varphi_m$,
\[
c\transword{\bomega}c^\prime\, .
\]
\end{proof}

\begin{proof}[Proof of~\Cref{l: pump one}]
Choose $f_2(N_A,r_A,k)=N_A(k+1)^r$ and
$f_3(N_A,r_A,k)=(2r_A!)(1+f_4(N_A,r_A,k)) $.

Let $\conf{s_0}{\br_0},\conf{s_1}{\br_1},\ldots,
\conf{s_{|\sbsigma|}}{\br_{|\sbsigma|}}$
be the accepting run of $\bA$ on $\bsigma$.

Let $0< i_1<i_2<\cdots<
i_{f}\leq m$
be the first $f=f_2(N_A,r_A,k)$ appearances of $\tau$ in $\bsigma$.
Consider the configuration that precedes every
appearance of $\tau$,
that is $d_j=\conf{s_{i_j-1}}{\br_{i_j-1}}$
for $j=1,2,\ldots,f$.
Let $A=[\bsigma]$, from pigeonhole
principle there are $1\leq j< k \leq f$
such that $s_{i_j-1}=s_{i_k-1}$
and
$\br_{i_j-1},\br_{i_k-1}$
are compatible with respect to $A$.
In particular, there is an $A$-permutation $\pi$
such that, $\pi(\br_{i_j-1})=\br_{i_k-1}$.

Therefore,
\begin{equation}
\label{eq: run jk two}
\conf{s_{i_j-1}}{\br_{i_j-1}}
\trans{\sigma_{i_j}=\tau}
\conf{s_{i_j}}{\br_{i_j}}\, .
\end{equation}
and
\begin{equation}
\label{eq: run jk three}
\conf{s_{i_j}}{\br_{i_j}}
\transword{\sigma_{i_j+1}\sigma_{i_j+2}\cdots\sigma_{i_k-1}}
\conf{s_{i_k-1}}{\br_{i_k-1}}=
\pi(\conf{s_{i_j-1}}{\br_{i_j-1}})\, .
\end{equation}

However, from \Cref{l: short const},
for $c=\conf{s_{i_j}}{\br_{i_j}}$,
$\bphi=\sigma_{i_j+1}\sigma_{i_j+2}\cdots\sigma_{i_k-1} $,
and $c^\prime=\conf{s_{i_k-1}}{\br_{i_k-1}}$.
There is a word $\bomega^\prime$
such that
\begin{enumerate}
\item $\Par{\bomega^\prime}\leq 
\Par{\sigma_{i_j+1}\sigma_{i_j+2}\cdots\sigma_{i_k-1}}
\leq \Par{\bsigma}$.
\item $\conf{s_{i_j}}{\br_{i_j}}
\transword{\bomega^\prime}
\conf{s_{i_k-1}}{\br_{i_k-1}}$, and
\item $\size{\bomega^\prime}\leq f_4(N_A,r_A,k)$.
\end{enumerate}

Together with \eqref{eq: run jk two} and \eqref{eq: run jk three},
we obtain that
\[
\conf{s_{i_j-1}}{\br_{i_j-1}}
\trans{\tau}
\conf{s_{i_j}}{\br_{i_j}}
\transword{\bomega^\prime}
\conf{s_{i_k-1}}{\br_{i_k-1}}=
\pi(\conf{s_{i_j-1}}{\br_{i_j-1}})\, .
\]

Specifically for $\bomega_1=\tau\cdot \bomega^\prime$,
we have that,
\begin{equation}
\label{eq: run omega 0}
\conf{s_{i_j-1}}{\br_{i_j-1}}
\transword{\bomega_1}
\pi(\conf{s_{i_j-1}}{\br_{i_j-1}})\, .
\end{equation}

From invariance of finite-memory automata
under permutations, we obtain
that for every integer $p\geq 0$,
\[
\pi^{p}(\conf{s_{i_j-1}}{\br_{i_j-1}})
\transword{\pi^p(\bomega_1)}
\pi^{p+1}(\conf{s_{i_j-1}}{\br_{i_j-1}})\, .
\]
However, $\pi$ preserve symbols of $[\bsigma]$,
which contains $\tau$ and the symbols of $\bomega$,
therefore it simplifies as follows,
\begin{equation}
\label{eq: run omega p}
\pi^{p}(\conf{s_{i_j-1}}{\br_{i_j-1}})
\transword{\bomega_1}
\pi^{p+1}(\conf{s_{i_j-1}}{\br_{i_j-1}})\, .
\end{equation}

Define $\bomega=\bomega_1^d$
where $d$ is the order of $\pi$, $d\leq (2r)!$.
Thus, we obtain a run of $\bsigma_1 \bomega \bsigma_2$
as follows,
\begin{align*}
&\conf{s_0}{\br_0}
\transword{\sigma_1\sigma_2\cdots\sigma_{i_k-1}}
\pi(c)\\
&\pi(c)
\transword{\bomega_1}
\pi^2(c)
\transword{\bomega_1}
\pi^3(c)
\rightarrow
\cdots
\transword{\bomega_1}
\pi^{d+1}(c)=\pi(c)=c^\prime\\
&c^\prime
\transword{\sigma_{i_k}\sigma_{i_k+1}\cdots\sigma_m}
\conf{s_m}{\br_m}\, .
\end{align*}
\end{proof}

\subsection{Intervals on binary trees}

In this section, we show it is impossible
to linearly traverse a full binary tree
using only bounded memory.

Fix an integer $d\geq 1$.
Let $T_d$ be the full rooted binary tree of depth $d$,
the root is at depth $1$;
each internal vertex has two children;
depth $t$ has $2^{t-1}$ vertices;
there are $2^d-1$ vertices total.

Each leaf $l$ is labeled by an
ordered pair of unique labels $(\ell_v,r_v)$
drawn from a label set $\mathcal{L}$.
Labels propagate upward according to the following
inheritance rule:
\begin{itemize}
\item If a vertex $w$ is the parent of left children
labeled $(\ell_v,r_v)$ and right children labeled $(\ell_u,r_u)$,
then $w$ is labeled with $(\ell_w,r_w)=(\ell_v,r_u)$.
\end{itemize}

For each label $k\in\mathcal{L}$, we associate a (real)
closed interval $I_k\subseteq \mathbb{R}$.
We require the \emph{local intersection property}
must hold at every vertex:
for every vertex $v$ labeled $(\ell_v,r_v)$,
the intervals $I_{\ell_v},I_{r_v}$ intersect.

We assume that $\mathcal{L}$ contains exactly the labels
that appear in $T$.
\begin{remark}
\label{r: union of iv}
The union of all intervals associated with labels
appearing in any fixed sub-tree of $T$ is itself an interval.
\end{remark}

For an instance $\mathcal{I}$ (a labeling and choice of intervals),
let $M(\mathcal{I})$ be the maximum number of distinct
intervals that intersect.


Define $f(d)$ as the minimal $M(\mathcal{I})$
over all instances $\mathcal{I}$ of depth~$d$.

Clearly, $f(d)\geq 2$ for $d\geq 1$ since
the intervals of the root must intersect.
Moreover, one can verify that $f(2)=f(3)=2$.
An example for depth $3$ is illustrated below.

\begin{figure}[h]
\centering
\begin{tikzpicture}[scale=0.9,>=stealth, every node/.style={font=\small}]

\definecolor{Acol}{RGB}{220,20,60}
\definecolor{Bcol}{RGB}{30,144,255}
\definecolor{Ccol}{RGB}{34,139,34}
\definecolor{Dcol}{RGB}{255,140,0}
\definecolor{Ecol}{RGB}{148,0,211}
\definecolor{Fcol}{RGB}{0,128,128}
\definecolor{Gcol}{RGB}{139,69,19}
\definecolor{Hcol}{RGB}{255,105,180}

\node (v1) at (0,0)      {$(A,B)$};
\node (v2) at (-3,-2)    {$(A,C)$};
\node (v3) at ( 3,-2)    {$(D,B)$};
\node (v4) at (-4,-4)    {$(A,E)$};
\node (v5) at (-2,-4)    {$(F,C)$};
\node (v6) at ( 2,-4)    {$(D,G)$};
\node (v7) at ( 4,-4)    {$(H,B)$};

\draw[thick] (v1) -- (v2);
\draw[thick] (v1) -- (v3);
\draw[thick] (v2) -- (v4);
\draw[thick] (v2) -- (v5);
\draw[thick] (v3) -- (v6);
\draw[thick] (v3) -- (v7);

\end{tikzpicture}
\caption{Labels for $T_3$.}
\label{fig:tree 3}
\end{figure}

\begin{figure}[h]
\centering
\begin{tikzpicture}[scale=1,>=stealth, every node/.style={font=\small}]

\definecolor{Acol}{RGB}{220,20,60}
\definecolor{Bcol}{RGB}{30,144,255}
\definecolor{Ccol}{RGB}{34,139,34}
\definecolor{Dcol}{RGB}{255,140,0}
\definecolor{Ecol}{RGB}{148,0,211}
\definecolor{Fcol}{RGB}{0,128,128}
\definecolor{Gcol}{RGB}{139,69,19}
\definecolor{Hcol}{RGB}{255,105,180}

\draw[Acol, ultra thick] (2.0,4.0) -- (6.0,4.0);
\node[left] at (4,4.0) [above] {$A$};

\draw[Bcol, ultra thick] (5.0,3.5) -- (9.0,3.5);
\node[left] at (7,3.5) [above] {$B$};

\draw[Ccol, ultra thick] (1.0,3.5) -- (2.75,3.5);
\node[left] at (2,3.5) [above] {$C$};

\draw[Ecol, ultra thick] (3.25,3.5) -- (3.75,3.5);
\node[left] at (3.5,3.5) [above] {$E$};

\draw[Fcol, ultra thick] (0.5,3) -- (1.5,3);
\node[left] at (1,3) [above] {$F$};

\draw[Dcol, ultra thick] (8,3) -- (10,3);
\node[left] at (9,3) [above] {$D$};

\draw[Hcol, ultra thick] (6.5,3) -- (7.5,3);
\node[left] at (7,3) [above] {$H$};

\draw[Gcol, ultra thick] (9.5,2.5) -- (11.5,2.5);
\node[left] at (10.5,2.5) [above] {$G$};

\end{tikzpicture}
\caption{Orientation of intervals of $T_3$ with at most
two intersecting intervals.}
\label{fig: int t3}
\end{figure}

\begin{lemma}
$f$ is a monotonically non-decreasing function.
\end{lemma}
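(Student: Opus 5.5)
We must show $f(d)\le f(d+1)$, i.e.\ every instance of depth $d+1$ already yields, in some sense, an instance of depth $d$ with no more intersecting intervals. The natural plan is to take an arbitrary instance $\mathcal{I}$ of depth $d+1$ achieving $M(\mathcal{I})=f(d+1)$ and extract from it an instance $\mathcal{I}'$ of depth $d$ with $M(\mathcal{I}')\le M(\mathcal{I})$. Then $f(d)\le M(\mathcal{I}')\le M(\mathcal{I})=f(d+1)$.

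The extraction I would use is restriction to a sub-tree. Take the left child $w$ of the root of $T_{d+1}$; the sub-tree rooted at $w$ is a full binary tree of depth $d$. Its leaves carry labels from the original leaves, and by the inheritance rule the labels of its internal vertices are obtained exactly by the same upward propagation as in $T_{d+1}$, since the label of a vertex depends only on the leaf labels below it. Let $\mathcal{L}'$ be the set of labels appearing in this sub-tree, and keep the same intervals $I_k$ for $k\in\mathcal{L}'$. Leaf labels remain pairwise distinct, since they were unique in the larger tree. The local intersection property holds at every vertex of the sub-tree because it held at the same vertex in $T_{d+1}$. Thus $\mathcal{I}'$ is a legitimate instance of depth $d$.

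Finally, any family of pairwise distinct intervals of $\mathcal{I}'$ with a common point is also such a family in $\mathcal{I}$, because $\{I_k:k\in\mathcal{L}'\}\subseteq\{I_k:k\in\mathcal{L}\}$. Hence $M(\mathcal{I}')\le M(\mathcal{I})$.

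The only point needing care, rather than a real obstacle, is the convention that $\mathcal{L}$ contains exactly the labels appearing in the tree. This is why $\mathcal{L}'$ must be restricted to labels occurring in the sub-tree rather than inheriting all of $\mathcal{L}$; restriction only shrinks the interval family, so the inequality is unaffected. One should also note that $f(d+1)$ is attained, being a minimum of positive integers, or else argue directly with an arbitrary instance, which gives the same conclusion.
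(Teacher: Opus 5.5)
Your argument is correct: restricting an instance of depth $d+1$ to the sub-tree rooted at a child of the root yields a valid depth-$d$ instance whose interval family is a subfamily of the original, so $f(d)\le f(d+1)$. The paper states this lemma without proof. Your sub-tree restriction is exactly the implicit reasoning it relies on, and it is the same device the paper uses when it treats sub-trees as instances in the proof of \Cref{t: low int tree}.
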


\begin{theorem}
\label{t: low int tree}
If $d\geq n^2+2$ for some $n\geq 1$, then $f(d)\geq n+1$.
\end{theorem}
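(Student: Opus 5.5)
We argue by induction on $n$, proving the stronger, localized statement: in any instance of depth $d\ge n^2+2$, some point $x$ lies in at least $n+1$ distinct intervals. The base case $n=1$ follows from $f(d)\ge 2$, because the two root intervals intersect and the root has two distinct labels. For the inductive step we use the tree structure as follows. Fix an instance $\mathcal{I}$ of depth $d\ge (n+1)^2+2$. Along the leftmost branch of the root, every vertex carries the same left label $\ell=\ell_{\mathrm{root}}$. Along the rightmost branch, every vertex carries the same right label $r$. Since the children of a vertex along the left spine have right labels that are the right labels of subtrees, the key observation is this: the right labels of the left-spine vertices at depths $1,\dots,d$ are pairwise distinct (labels of leaves are unique). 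By the local intersection property, each of these intervals meets $I_\ell$. The same holds on every spine of every subtree.

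The plan is to find a long chain of nested subtrees, each contributing one new interval that stays alive at a common point. We look at the left spine from the root, $u_1,u_2,\dots$, whose right labels $\rho_1,\rho_2,\dots$ are distinct and all meet $I_\ell$. Consider the right child $w_t$ of $u_t$; its subtree has depth $d-t$. By \Cref{r: union of iv}, the union $U_t$ of intervals in the subtree of $w_t$ is an interval, and it contains $I_{\rho_{t}}$, hence it meets $I_\ell$. Two cases follow. In the first case, some point of $I_\ell$ lies in many of the $U_t$ ($t\le n+1$). Then we descend into a subtree whose depth is still at least $n^2+2$, apply induction there to get $n+1$ intervals meeting at one point, and add $I_\ell$. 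For this, we must arrange that the inductive point can be chosen inside $I_\ell$, which is why the induction should be formulated relative to a prescribed interval $J$ that meets the subtree's root pair. In the second case, the $U_t$ are spread along $I_\ell$. We then use the symmetric right spine, together with the fact that consecutive $U_t$ are linked through the left-spine intervals $I_{\rho_t}$, all of which contain points of $I_\ell$. We pigeonhole on their left-to-right order to force $n+1$ of the $I_{\rho_t}$ to share a point; the distinctness of the $\rho_t$ makes these intervals distinct.

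So the key steps, in order, are these. First, strengthen the hypothesis to ``for any interval $J$ meeting $I_{\ell_{\mathrm{root}}}\cup I_{r_{\mathrm{root}}}$ appropriately, there is a point in $J$ covered by $n+1$ intervals''. Second, extract the spine sequences and their distinct labels. Third, run an Erd\H{o}s--Szekeres style argument on the positions of the intervals $I_{\rho_t}$ inside $I_\ell$: among $n^2+1$ of them, either $n+1$ are nested/overlapping at a common point, or there is a monotone run of length $n+1$ whose subtrees are disjointly placed. This step accounts for the quadratic $n^2$ in the depth bound, and each inductive step consumes about $2n+1$ levels, so $\sum (2i+1)\approx n^2$. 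Fourth, in the monotone case, recurse into the extreme subtree with the depth it retains.

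I expect the main obstacle to be the third step: making the dichotomy precise, so that the ``spread out'' case genuinely yields a subtree whose interval union is forced to overlap $I_\ell$ at a point already covered by $I_\ell$ and a fresh label, while leaving enough depth for induction. If the straightforward pigeonhole fails, I would first try the following: choose $x$ as the endpoint of $I_\ell$ closest to the far side of the monotone run. Then I would argue, using connectivity from \Cref{r: union of iv}, that every intermediate $U_t$ must cross $x$.
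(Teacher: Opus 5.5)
You have the right scaffolding (the left spine with constant left label $\ell$, the subtrees hanging at the right children $w_t$, their union intervals $U_t$ meeting $I_\ell$ by \Cref{r: union of iv}, and an induction with a depth budget). But the step that actually produces the $n+1$ intervals is missing, and the mechanisms you offer for it do not work.

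The claim in your second case, that pigeonholing on the left-to-right order forces $n+1$ of the $I_{\rho_t}$ to share a point, is false. The only constraint tying the subtree at $w_t$ to the rest of the instance is $I_{\rho_t}\cap I_\ell\neq\emptyset$. So each hanging subtree can be translated and shrunk independently, and at every depth the $I_{\rho_t}$ can be pairwise disjoint subintervals of $I_\ell$ (already in \Cref{fig: int t3} the spine labels $B,C,E$ have pairwise disjoint intervals). In such instances the large clique must come from inside a single hanging subtree, together with $I_\ell$. Everything then hinges on forcing it to lie in $I_\ell$, which your plan never does.

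The dichotomy in your third step (either $n+1$ of the $I_{\rho_t}$ meet, or $n+1$ are pairwise disjoint) is true, but neither branch closes. A common point of the $I_{\rho_t}$ need not lie in $I_\ell$. In the disjoint branch the corresponding $U_t$ can still stick out of $I_\ell$, and those subtrees can be shallow. Your fallback, that every intermediate $U_t$ crosses an endpoint of $I_\ell$, is false, since $U_t$ may lie inside $I_\ell$.

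The ``relative to $J$'' strengthening is never stated. In its natural reading it is false: enlarging $I_\ell$ preserves the local intersection property and creates a stretch covered by $I_\ell$ alone.

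Finally, your first case is misdirected. If a point of $I_\ell$ lies in many $U_t$, you are done directly, with no descent. Each $U_t$ is a finite union of closed intervals and so contributes an interval through that point, and labels from different hanging subtrees are distinct and differ from $\ell$.

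The missing idea is to classify the $U_t$ against the endpoints of $I_\ell$. Since $U_t$ meets $I_\ell$, either $U_t\subseteq I_\ell$, or $U_t$ contains the left endpoint of $I_\ell$, or it contains the right one. As $I_\ell\cap I_r\neq\emptyset$, one root interval contains an endpoint of the other. So after mirroring the tree and the line you may assume $I_r$ contains the left endpoint of $I_\ell$.

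For the target $n+1$ with $d\geq n^2+2$, use the $d-2$ subtrees hanging at left-spine depths $2,\dots,d-1$; their labels are distinct from each other and from $\ell,r$. The left endpoint lies in $I_\ell$, in $I_r$, and in an interval from each $U_t$ of the second kind. The right endpoint lies in $I_\ell$ and in an interval from each $U_t$ of the third kind. So either you finish at an endpoint, or at most $(n-2)+(n-1)$ subtrees stick out. That leaves at least $d-2n+1\geq (n-1)^2+2$ subtrees with $U_t\subseteq I_\ell$.

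Their depths $d-t$ are pairwise distinct, so one has depth at least $(n-1)^2+2$. The plain induction hypothesis then yields $n$ of its intervals with a common point, which automatically lies in $U_t\subseteq I_\ell$; adding $I_\ell$ gives $n+1$. No localized hypothesis is needed, and the quadratic bound is simply the per-step budget $(n^2+2)-((n-1)^2+2)=2n-1$, not an Erd\H{o}s--Szekeres effect.

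This is the paper's argument, with one caveat. The paper treats $I_r\subseteq I_\ell$ (its $I_B\subseteq I_A$) as a separate case by recursing into the root's right subtree. Its assertion there that the resulting $n$ intervals meet inside $I_\ell$ is not justified. The mirror reduction above is the sound way to dispose of that case.
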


\begin{corollary}
For all $d\geq 1$, $f(d)\geq \sqrt{d-2}$.
\end{corollary}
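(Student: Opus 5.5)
The plan is to derive the bound directly from \Cref{t: low int tree} by choosing $n$ as large as the hypothesis $d\geq n^2+2$ allows. Set
\[
n=\left\lfloor \sqrt{d-2}\right\rfloor
\]
whenever $d\geq 2$. Then $n^2\leq d-2$, so $d\geq n^2+2$.

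For $d\geq 3$ we have $d-2\geq 1$, hence $n\geq 1$, and \Cref{t: low int tree} applies. It gives
\[
f(d)\geq n+1=\left\lfloor \sqrt{d-2}\right\rfloor+1>\sqrt{d-2}.
\]
The strict inequality holds because $\lfloor x\rfloor+1>x$ for every real $x$. This is stronger than what the corollary asks for.

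The remaining small cases are handled by the trivial bound $f(d)\geq 2$, which holds for every $d\geq 1$ because the two intervals of the root's label must intersect. For $d=2$ we have $\sqrt{d-2}=0\leq 2\leq f(2)$. For $d=1$ the expression $\sqrt{d-2}$ is not a real number, so the statement is either vacuous or read with the convention $\sqrt{x}=0$ for $x<0$; under that convention it again follows from $f(1)\geq 2$.

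I expect no real obstacle, since all the combinatorial content sits in \Cref{t: low int tree}. The only points needing care are that $n\geq 1$ is required to invoke that theorem, and the degenerate values $d\leq 2$. Monotonicity of $f$ is not needed, because $n$ is chosen so that $d$ itself already satisfies the theorem's hypothesis.
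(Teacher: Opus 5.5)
Your proof is correct and is exactly the intended derivation. The paper states this corollary without proof, as an immediate consequence of Theorem~\ref{t: low int tree}. It comes from taking $n=\lfloor\sqrt{d-2}\rfloor$ for $d\geq 3$ and falling back on the trivial bound $f(d)\geq 2$ for $d\leq 2$, and your care about needing $n\geq 1$ and about those degenerate cases is appropriate.
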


\begin{proof}[Proof of~\Cref{t: low int tree}]
By way of induction on $n$.
For $n=1$, the claim holds because $f(d)\geq 2$
for all $d\geq 1$.

Consider $n>1$.
Let $(A,B)$ be the labels of the root.
Consider the intervals $I_A,I_B$, they have to intersect.
If $I_B\subseteq I_A$, the right sub-tree of children of $B$,
is of depth $d-1$.
Note
\[
d-1\geq n^2+2-1=(n-1)^2+2(n-1)\geq (n-1)^2+2\, ,
\]
therefore, it has $(n-1)+1=n$
intersecting intervals, their intersection lies in $I_A$,
thus, there are $n+1$ intersecting intervals.

Otherwise, $I_B$ must contain at least one end-point
of $I_A$.
Without loss of generality,
assume that $I_B$ contain the left end-point
of $I_A$.

Let $v_2,v_3,\ldots,v_d$
be the left children of the root,
where $v_i$ is lies at depth $i$
and is labeled $(A,\cdot)$.

For each $i=2,3,\ldots,d-1$, the right child of $v_i$
is a root of full binary tree $T_i$ of depth $d-i$.
Let $A_i$ be the union of all intervals that
associate with labels in $T_i$,
in fact, $A_i$ is an interval by \Cref{r: union of iv}.

Each $A_i$ falls into one of three types:
it is contained in $I_A$,
it contains the left end-point of $I_A$,
or it contains the right end-point of $I_A$.
Let $l,m,r$ be the number of indices $i$ of each type,
hence
$l+m+r=d-2$.

Note, that we have $l+2$ intervals
that contain the left end-point and $r+1$
intervals that contain the right end-point.
Therefore, if $l\geq n-1$ or $r\geq n$, we finish.

Thus, we may assume $l\leq n-2$ and $r\leq n-1$, then
\[
m=d-2-l-r\geq n^2-2n+3 = (n-1)^2+2\, .
\]
Hence, among the sub-trees $T_i$
whose associated intervals lie inside $I_A$,
there is one of depth at least $m$.
By the induction hypothesis,
this sub-tree contains $n$ mutually
intersecting intervals,
whose intersection lies inside $I_A$.
Together with $I_A$, this yields $n+1$
intersecting intervals, as required.
\end{proof}

\section{Concluding remarks}
\label{s: conc}

Several natural questions remain open.
Most notably, the case of \emph{two registers}
remains unresolved,
both for finite-memory automata
and for context-free grammars.
At present, no counterexample
is known for a commutatively stable
language generated by either model.
It is therefore unclear whether these
models always admit rational Parikh images
or whether they generate irrational images
that nevertheless coincide.
More generally, given a commutatively stable
language, the techniques developed in this paper
appear robust enough to establish irrationality and Parikh
in-equivalence between
grammars and automata,
should suitable counterexamples exist.

We mention several decision problems.
For one-register automata,
star-height zero of the Parikh image
is easily decidable, as it coincides
with boundness.
This raises
the question of whether it is decidable if
the Parikh image has star-height one.
A positive answer would yield a complete
decision procedure
for determining the exact star-height
of Parikh images of one-register automata.

\bibliography{references}

\end{document}